\documentclass[
  aps,
  prx,
  reprint,
  superscriptaddress,
  amsfonts,amssymb,amsmath
]{revtex4-2}

\usepackage{bm}
\usepackage{latexsym}
\usepackage{mathrsfs}
\usepackage{physics}
\usepackage{graphicx}
\usepackage{xcolor}
\usepackage{tikz}
\usepackage{hyperref}
\usepackage{comment}
\usetikzlibrary{arrows.meta,calc,decorations.markings,matrix}
\hypersetup{
  setpagesize  = false,
  colorlinks   = true,
  urlcolor     = blue,
  linkcolor    = blue,
  citecolor    = blue
}

\newcommand{\sket}[1]{\lvert{#1}\rangle\!\rangle}
\newcommand{\sbra}[1]{\langle\!\langle{#1}\rvert}
\newcommand{\sbk}[2]{\langle\!\langle{#1}\vert{#2}\rangle\!\rangle}

\newcommand{\gtensor}{\mathbin{\widehat\otimes}}

\DeclareMathOperator{\Real}{Re}

\usepackage{amsthm}
\theoremstyle{plain}
\newtheorem{theorem}{Theorem}
\newtheorem{lemma}[theorem]{Lemma}

\theoremstyle{definition}
\newtheorem{definition}[theorem]{Definition}

\begin{document}

\title{Quantum Computational Resources and Conformal Field Theory: \\ Unifying Spins, Bosons, and Fermions}

\author{Ryota Matsuda}
\email{r-matsuda918@g.ecc.u-tokyo.ac.jp}
\affiliation{Department of Physics, The University of Tokyo, 7-3-1 Hongo, Bunkyo-ku, Tokyo 113-0033, Japan}

\author{Masahiro Hoshino}
\email{hoshino-masahiro921@g.ecc.u-tokyo.ac.jp}
\affiliation{Department of Physics, The University of Tokyo, 7-3-1 Hongo, Bunkyo-ku, Tokyo 113-0033, Japan}

\author{Yuto Ashida}
\email{ashida@phys.s.u-tokyo.ac.jp}
\affiliation{Department of Physics, The University of Tokyo, 7-3-1 Hongo, Bunkyo-ku, Tokyo 113-0033, Japan}
\affiliation{Institute for Physics of Intelligence, The University of Tokyo, 7-3-1 Hongo, Bunkyo-ku, Tokyo 113-0033, Japan}

\date{\today}

\begin{abstract}
  Characterizing a quantum state through the lens of quantum resources provides an information-theoretic perspective on many-body systems. While quantum entanglement serves as the paradigmatic example of a quantum resource, recent studies have shown that quantum magic, a resource for universal quantum computation, can capture aspects of many-body states complementary to those described by entanglement.
  For instance, in spin systems, conformal field theory (CFT) analysis of the stabilizer R\'enyi entropy has revealed universal features of nonstabilizerness that are qualitatively distinct from entanglement. In bosonic and fermionic systems, however, a comparable formulation for their computational resource, non-Gaussianity, has yet to be established. In this work, we introduce a unified measure, the magic R\'enyi entropy (MRE), to quantify computational resources in spins, bosons, and fermions on an equal footing. 
  We show that the MRE is a resource monotone under stabilizer and Gaussian protocols that can involve measurements and feedforward operations.
  The MRE allows us to reveal common universal aspects of nonstabilizerness and non-Gaussianity in critical many-body states. In particular, our CFT analysis shows that the universal contribution to the MRE appears as the size-independent term determined by the Affleck-Ludwig boundary entropy. We find that non-Gaussianity can continuously renormalize this universal contribution or drive a boundary phase transition through bulk-induced boundary renormalization-group flows.
  As a concrete demonstration, we present a detailed CFT analysis of non-Gaussianity in interacting spinless fermions described by the Tomonaga-Luttinger liquid, showing boundary transitions at the Luttinger parameters $K=1/3$ and $K=3$. We perform numerical calculations that confirm our field-theoretical predictions.
  These results provide a unified field-theoretical understanding of many-body magic across spins, bosons, and fermions.
\end{abstract}

\maketitle


\section{Introduction}

\subsection{Background and motivation}

Quantum resource theory~\cite{chitambar2019quantum} provides a way of characterizing information-theoretic properties of many-body states that cannot be captured by conventional local observables.
Entanglement is the paradigmatic example of such a resource~\cite{amico2008entanglement,eisert2010colloquium}, and it has been used to characterize topological phases of matter~\cite{chen2010local,zeng2019quantum}, exotic nonequilibrium phases~\cite{skinner2019measurementinduced}, and even black-hole thermodynamics through holography~\cite{ryu2006holographic}.
In particular, it has been known that a one-dimensional critical state exhibits entanglement entropy that scales logarithmically with subsystem size.
This universal scaling is governed by the central charge of the underlying conformal field theory (CFT)~\cite{osterloh2002scaling,osborne2002entanglement,vidal2003entanglement,holzhey1994geometric,pasqualecalabrese2004entanglement,calabrese2009entanglement}.
Importantly, entanglement entropy is defined without reference to a particular choice of microscopic degrees of freedom and has been applied to spins~\cite{osborne2002entanglement,vidal2003entanglement}, bosons~\cite{frerot2016entanglement,kim2024entanglement}, and fermions~\cite{wolf2006violation,gioev2006entanglement} on an equal footing.

Besides entanglement, quantum magic has recently attracted much attention as a resource associated with universal quantum computation and also as a complementary probe of many-body phenomena~\cite{liu2022manybody}.
Magic quantifies the departure of a given state from classically simulable states, e.g., stabilizer states for spins~\cite{gottesman1998heisenberg} and Gaussian states for bosons~\cite{bartlett2002efficient} and fermions~\cite{valiant2002quantum,terhal2002classical,bravyi2004lagrangian}.
Thus, nonstabilizerness in spin systems and non-Gaussianity in bosonic or fermionic systems are necessary resources for achieving quantum computational advantage beyond the corresponding efficiently simulable models~\cite{bravyi2005universal,lloyd1999quantum,hebenstreit2019all}.
In studies of nonstabilizerness in many-body systems, the stabilizer R\'enyi entropy (SRE)~\cite{leone2022stabilizer} has become a useful measure because it is amenable to both numerical computation~\cite{tarabunga2023manybody,tarabunga2024nonstabilizerness,lami2023nonstabilizerness} and analytical treatment~\cite{iannotti2025entanglement,montanalopez2024exact,trino2026stabilizershannon}.
For instance, the SRE has been used to characterize phase transitions~\cite{niroula2024phase,moca2025nonstabilizerness}, chaotic systems including SYK models~\cite{goto2022probing,bera2025nonstabilizerness,jasser2025stabilizer,zhang2026stabilizer}, and dynamical phenomena such as magic spreading and anticoncentration~\cite{turkeshi2025magic,tirrito2025anticoncentration}. Moreover, its critical behavior has been studied using CFT methods~\cite{white2021conformal,hoshino2026stabilizer,hoshino2026stabilizera}.
Recent work has also investigated non-Gaussianity in many-body systems by introducing measures such as fermionic antiflatness~\cite{sierant2026fermionic} and extending the analysis to hybrid systems~\cite{crew2026magic,sarkis2025magic}, lattice gauge theories~\cite{santra2025quantum}, random states~\cite{ares2026nongaussianity}, and disordered systems~\cite{falcao2026fermionic}.

While the specific definition of magic depends on a choice of the microscopic degrees of freedom, nonstabilizerness and non-Gaussianity share closely related mathematical formulations.
A first indication comes from Hudson's theorem and its discrete analogue; in suitable quasiprobability representations, both bosonic Gaussian pure states and stabilizer states are characterized by the same absence of negativity~\cite{hudson1974when,soto1983when,gross2006hudsons}.
Thus, stabilizer states can be regarded as discrete counterparts of bosonic Gaussian states.
Recent frameworks have made this connection more concrete from different perspectives; quadratic tensors can describe stabilizer and Gaussian states within a common algebraic formalism~\cite{bauer2026quadratic}, and a diagrammatic framework for Clifford and matchgate circuits has been proposed~\cite{kang20252d}.
The structural similarity also appears operationally in the free-state testing problem, where the convolution operation plays a key role~\cite{bu2025stabilizer,bu2025efficient,lyu2024fermionic,coffman2025measuring,hahn2025measuring}.
Repeated convolution on copies of quantum states drives them toward corresponding free limits, such as Gaussian or stabilizer states~\cite{cushen1971quantummechanical,bu2023quantum,bu2023discrete,mehrabi2026central}, much like how classical convolution drives independent random variables toward Gaussian distributions in the central limit theorem.
This stability of free states provides a natural testing criterion based on correlation generation. Namely, convolution of a free pure state induces no correlations among the copies, while a nonstabilizer or non-Gaussian pure state acquires intercopy correlations.
These common structures indicate that nonstabilizerness and non-Gaussianity should admit a parallel formulation despite their apparently distinct definitions.

The central motivation of this study is to uncover the universal aspects of many-body magic shared across spins, bosons, and fermions, and to establish a unified theoretical framework that captures them.
In spin systems, the universal behavior of many-body magic has been studied by a CFT analysis of the SRE~\cite{hoshino2026stabilizer,hoshino2026stabilizera}.
In bosonic and fermionic systems, however, the universal behavior of non-Gaussianity remains much less understood because a comparably tractable measure and field-theoretical formulation have been lacking.
The contrast with entanglement entropy is instructive; at criticality, CFT describes its universal contribution largely independently of whether the microscopic system is built from spins, bosons, or fermions.
To establish a unified understanding of many-body magic, one therefore needs a common language for nonstabilizerness and non-Gaussianity.
These considerations lead to the following two questions:
\begin{itemize}
  \item[(A)]{Is there a magic measure that applies to spins, bosons, and fermions in a unified manner?}
  \item[(B)]{How does such a measure behave in quantum many-body systems? Does it contain universal data at criticality, and how can it be analyzed by a field theory?}
\end{itemize}

To this end, we introduce the magic R\'enyi entropy (MRE) that treats nonstabilizerness and non-Gaussianity on an equal footing, building on convolution-based measures~\cite{bu2025stabilizer,bu2025efficient,lyu2024fermionic,coffman2025measuring,hahn2025measuring}.
This construction answers question (A) affirmatively by providing a unified magic measure.
More specifically, the MRE is defined from the purity loss produced by convolution of replicated states, and it satisfies the basic requirements for a magic measure.
In spin systems, the MRE becomes equivalent to the SRE when the convolution is designed appropriately~\cite{bu2025stabilizer}.
In bosonic and fermionic systems, the MRE measures non-Gaussianity based on a replica-mixing unitary~\cite{bu2025efficient,lyu2024fermionic,coffman2025measuring,hahn2025measuring}.
While covariance matrix-based measures such as the fermionic antiflatness~\cite{sierant2026fermionic} also provide computational diagnostics of non-Gaussianity, the MRE captures higher-order correlations beyond those two-point functions.
Thus, the MRE provides a common quantitative basis for discussing magic in various many-body systems consisting of different microscopic degrees of freedom.

Furthermore, we formulate the MRE within a field-theoretical framework and show that its universal contribution at criticality is determined by the infrared data of the convolution-induced boundary condition.
This addresses question (B) by rewriting the MRE as a partition function of the replicated theory in the Euclidean path-integral formalism.
In a one-dimensional critical system described by a (1$+$1)-dimensional CFT, this boundary condition flows to an infrared conformal boundary condition under boundary renormalization.
Consequently, the MRE acquires a universal size-independent term characterized by the Affleck-Ludwig boundary entropy, or equivalently, the $g$ factor of the infrared boundary state~\cite{affleck1991universal}.

We validate these general results by studying a fermionic many-body system through both analytical and numerical calculations.
Specifically, we consider an interacting spinless-fermion chain whose low-energy behavior is described by the Tomonaga--Luttinger liquid (TLL), a compactified free-boson CFT~\cite{haldane1981luttinger,giamarchi2003quantum}.
We derive the universal contribution to the MRE near the free-fermion point, where the non-Gaussianity renormalizes the boundary entropy through the bulk-boundary operator product expansion (OPE)~\cite{fredenhagen2007bulkinduced}.
When the non-Gaussianity is strong enough, it can induce a boundary phase transition, where a relevant nonidentity boundary channel drives the boundary to a different infrared fixed point. We confirm these field-theoretical predictions by exact-diagonalization calculations with high accuracy.
Altogether, these results demonstrate the usefulness of the field-theoretical formulation of the unified magic measure of many-body systems.

\subsection{\label{subsec:overview}Overview of the key results}

\begin{figure*}[t]
  \centering
  \includegraphics[width=\linewidth,clip]{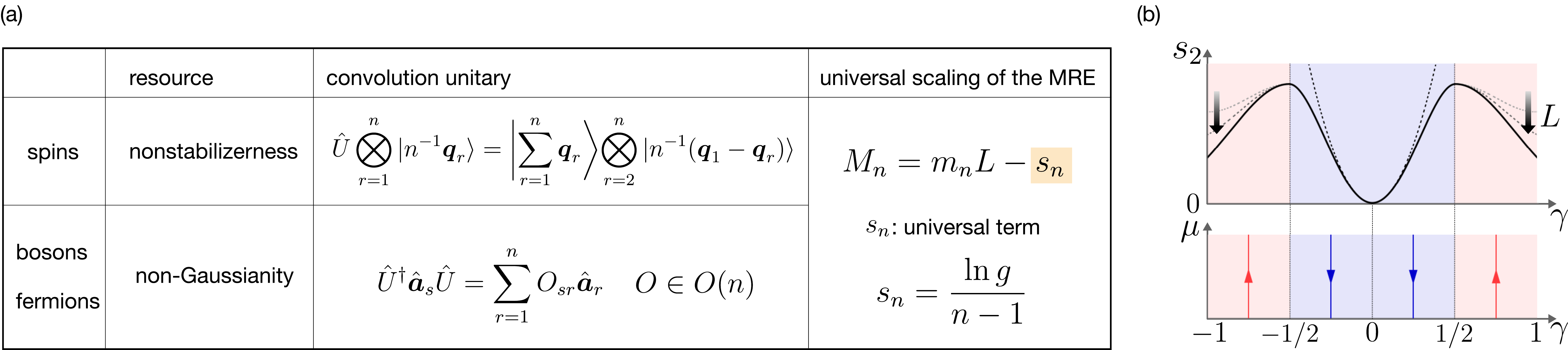}
  \caption{Summary of the main results.
    (a) Unified structure of the quantum computational resources for spins, bosons, and fermions. The MRE $M_n$ with order $n$ is constructed by a convolution unitary in all cases.
    In spins with a local Hilbert-space dimension $d$, the MRE reduces to the SRE, a measure of nonstabilizerness. In bosons and fermions, the MRE quantifies non-Gaussianity.
    In one-dimensional critical systems, the MRE can exhibit the same universal behavior across all cases, where a size-independent term $s_n$ is determined by the $g$ factor of the infrared boundary condition in the replicated theory.
    (b) Upper panel: Universal behavior of non-Gaussianity of interacting fermions described by the TLL. The universal contribution $s_2$ to the second-order MRE is plotted as a function of $\gamma=(K-1)/(K+1)$ with $K$ being the TLL parameter.
    The free-fermion point corresponds to $\gamma=0$, i.e., $K=1$. The dashed curve centered at $\gamma=0$ represents the analytical perturbative result $s_2=3 \gamma^2/2$.
    Lower panel: Boundary RG flows.
    The vertical axis $\mu$ represents the most relevant boundary channel.
    The boundary CFT analysis predicts the transitions at $\gamma_c=\pm 1/2$, i.e., $K_c=1/3,3$.
    When $|\gamma|<1/2$, the non-Gaussianity continuously changes the value of $s_2$ as an even function of $\gamma$, reflecting the duality $K\leftrightarrow K^{-1}$.
    Outside this region, the bulk induces a relevant boundary channel and drives the boundary transition, leading to a decrease of $s_2$ as the system size $L$ increases. This behavior is consistent with the $g$ theorem.
  }
  \label{fig:overview}
\end{figure*}

Before going into details, we present a nontechnical summary of the main results.
Our central goal is to construct a measure of many-body magic that treats spins, bosons, and fermions in a common language, and to identify the universal behavior of this measure through a field-theoretical analysis (see Fig.~\ref{fig:overview}(a)).
Here, stabilizer states are the free states for spins, whereas Gaussian states are the free states for bosons and fermions.
Our construction keeps this distinction explicit, but packages it into a common framework so that the subsequent field-theoretical analysis can be carried out in a unified manner.

The first main result is the construction of the MRE as a convolution-based measure of magic. Specifically, given $n$ copies of a pure state $\rho$, we apply the convolution unitary $U$ that mixes the $n$ copies, keep one output replica, and trace out the other $n-1$ output replicas:
\begin{align}
  \mathcal{C}_{n}(\rho)
  =\tr_{1^{\rm c}}\!\left[U\rho^{\otimes n}U^{\dag}\right],
\end{align}
where $1^c$ represents the remaining replicas $\{2,3,\ldots,n\}$.
The MRE is defined as the logarithm of the purity of this output state,
\begin{align}
  M_{n}(\rho)
  =\frac{1}{1-n}\ln\tr\!\left[\mathcal{C}_{n}(\rho)^{2}\right].
\end{align}
The convolution unitary is chosen so that a free input state remains pure after convolution, whereas the output of a resourceful state becomes mixed.

We then show that the MRE satisfies the key properties required for a resource measure.
In spin systems with a local Hilbert-space dimension $d$, we show that the MRE reduces to the SRE for any replica number $n$ that is coprime to $d$. Thus, the MRE inherits the known properties of the SRE, including monotonicity under stabilizer protocols for $n\geq 2$~\cite{leone2024stabilizer,turkeshi2025magic}. In bosons and fermions, our construction yields a faithful, additive, and invariant quantifier of bosonic/fermionic non-Gaussianity. After appropriately choosing the convolution unitary, we prove that the bosonic MRE with $n\geq 2$ and the fermionic MRE with $n=2$ become monotones under Gaussian protocols, which can involve measurements and feedforward. Moreover, we also prove the strong monotonicity for their linear versions, $M_n^{\rm lin}=1-\tr\!\left[\mathcal{C}_{n}(\rho)^{2}\right]$.
Such monotonicity properties are important, since measurements and feedforward are necessary ingredients for implementing a universal gate set using magic states~\cite{bravyi2005universal,gottesman2001encoding,hebenstreit2019all}.
These properties establish the MRE as the natural analogue of the SRE for quantifying non-Gaussianity. In addition, this common formulation allows us to describe nonstabilizerness and non-Gaussianity in a unified field-theoretical framework.

The second key result is a theoretical understanding of universal aspects of the MRE in many-body systems.
The crucial observation is that, when the convolution is written in replicated Liouville space, the MRE can be represented as the Euclidean path integral with a defect localized at an imaginary-time slice.
This defect encodes the convolution unitary, the trace over the discarded replicas, and the final purity measurement, which together define a single projection operator $P_n$ acting on the replicas.
The path integral with this defect insertion leads to
\begin{align}
  M_n(\rho)=\frac{1}{1-n}\ln\frac{Z_n}{Z^{2n}},
\end{align}
where $Z_n$ is the replicated partition function with the insertion $P_n$, and $Z^{2n}$ is the reference normalization by decoupled replicas.
After folding the path integral across this defect, the geometry becomes a cylinder with a boundary condition imposed by the convolution operation \cite{oshikawa1996defect,oshikawa1997boundary}.
Using the boundary CFT techniques, the universal behavior of the MRE is shown to take the form
\begin{align}
  M_{n}=m_{n}L-s_{n}+o(1),
  \qquad
  s_{n}=\frac{\ln g_{\mathcal{J}}}{n-1}.
\end{align}
Here, $L$ is a total system size, $m_n$ is a nonuniversal line contribution from short-distance physics, and the size-independent term $s_n$ is a universal contribution determined by the Affleck-Ludwig boundary entropy, or equivalently, the logarithm of the $g$ factor $g_{\cal J}$.

Besides this direct boundary CFT picture, the same replicated partition function has an equivalent rotated-bulk representation in which the boundary is kept in the elementary sewing form.
This equivalence follows from the freedom to let the convolution unitary act either on the boundary state or on the bulk Hamiltonian.
When the rotated bulk Hamiltonian remains close to the original bulk theory, the difference can be treated as a bulk perturbation, and the bulk-boundary OPE determines which boundary channels are generated as the corresponding bulk operator approaches the sewing boundary.
If the bulk-boundary OPE generates only the identity channel and irrelevant operators, the sewing boundary remains stable, and the universal term changes continuously through the renormalized $g$ factor.
If a relevant channel is generated, the boundary can instead flow to another infrared fixed point.
This boundary transition leads to a nonanalytic change of the $g$ factor, which can be detected by the finite-size scaling of the MRE.

The TLL realized in the half-filled spinless interacting fermion chain provides an ideal example in which the rotated-bulk picture gives analytic control over both the $g$ factor and its boundary stability (see Fig.~\ref{fig:overview}(b)).
In the critical regime, the interaction strength is encoded in the single universal parameter known as the TLL parameter $K$. Here $K=1$ corresponds to the free-fermion point, and the deviation from $K=1$ indicates non-Gaussianity.
We demonstrate the above field-theoretical results by studying the TLL through both analytical and numerical calculations.
For instance, the universal term of the second-order MRE can be evaluated perturbatively around the Gaussian point $K=1$ as follows:
\begin{align}
  g_{\mathcal{J}}
  &=1+\frac{3}{2}\gamma^{2}+O(\gamma^{4}),\\
  \gamma&=\frac{K-1}{K+1},
\end{align}
leading to
\begin{align}
  M_{2}
  =m_{2}L-\frac{3}{2}\gamma^{2}+O(\gamma^{4})+o(1).
\end{align}
This result shows that the size-independent term of the fermionic magic is fully characterized by the TLL parameter, and the even dependence on $\gamma$ reflects the duality $K\leftrightarrow K^{-1}$ of the corresponding boundary condition.
The calculations can be extended to the most general two-replica convolution unitary.
To diagnose the stability of the boundary, we also examine the leading nonidentity boundary channel generated by the bulk-boundary OPE and derive its scaling dimension as
\begin{align}
  h_{\mathrm{min}}(K)=\frac{4\min(K,1)}{K+1}=2(1-|\gamma|).
\end{align}
This channel becomes relevant when $h_{\mathrm{min}}(K)<1$, which gives the stable regime $1/3<K<3$ for the boundary.

These analytical predictions are confirmed by our exact-diagonalization calculations.
The extracted universal constants agree with the perturbative expression near the free-fermion point, reproduce the predicted dependence on a choice of the two-replica convolution, and confirm the duality under $K\leftrightarrow K^{-1}$.
A finite-size crossing analysis based on the leading boundary correction locates the transition at $K_c\simeq3.05$, close to the field-theoretical prediction $K_c=3$.
All in all, these results demonstrate that the universal part of the MRE is governed by the boundary CFT data and can reflect its interaction-driven changes governed by boundary RG flows.

The remainder of our paper is organized as follows.
Section~\ref{sec:preliminaries} reviews the notions of magic for qubits, bosons, and fermions in a parallel manner and sets up the free-state structures used throughout the paper.
Section~\ref{sec:MRE} defines the MRE, specifies the convolution unitaries for the different microscopic degrees of freedom, and discusses its basic resource-theoretic properties.
Sections~\ref{sec:field_theory_MRE} and~\ref{sec:conformal_invariance} develop the field-theoretical formulation; the former rewrites the MRE as a replicated Liouville-space path integral, while the latter analyzes when the convolution boundary defines a conformal boundary condition and how bulk interactions can induce boundary flows.
Section~\ref{sec:fermion_TLL} applies the framework to the TLL realized in an interacting spinless-fermion chain, derives the universal constant, examines boundary stability, and compares the predictions with exact diagonalization.
Finally, Sec.~\ref{sec:discussions} summarizes the results and discusses open problems.

\section{\label{sec:preliminaries}Preliminaries}

In this section, we first summarize the key concepts necessary to formulate our results. Section~\ref{subsec:qubit} focuses on qubits, while Sec.~\ref{subsec:boson-fermion} treats bosons and fermions using a unified notation. Crucially, the magic structures in these three systems can be discussed in a parallel manner, and this structural similarity naturally leads to our unified field-theoretical description in the subsequent sections.

\subsection{\label{subsec:qubit}Qubits}

We define the single-qubit Pauli operators by
\begin{align}
  \sigma(q,p)
  = i^{qp}X^{q}Z^{p},
  \label{eq:Pauli_single}
\end{align}
where $q,p\in\{0,1\}$, and $X$ and $Z$ are the usual Pauli operators.
Writing out the four Pauli operators explicitly gives
\begin{align}
  \sigma(0,0)=I,\;
  \sigma(0,1)=Z,\;
  \sigma(1,0)=X,\;
  \sigma(1,1)=Y.
  \label{eq:Pauli_explicit}
\end{align}
For an $L$-qubit system, we denote the indices as $\bm{u}=(\bm{q},\bm{p})\in\{0,1\}^{2L}$ with $\bm{q}=(q_{1},q_{2},\ldots,q_{L})$ and $\bm{p}=(p_{1},p_{2},\ldots,p_{L})$, and define the corresponding Pauli string by
\begin{align}
  \sigma(\bm{u})=:\bigotimes_{i=1}^{L}\sigma(q_{i},p_{i}) = i^{\bm{q}\cdot\bm{p}}\bigotimes_{i=1}^{L}X_i^{q_i}Z_i^{p_i}.
  \label{eq:Pauli_string}
\end{align}
Here, $\bm{q}\cdot\bm{p}=\sum_{i=1}^{L}q_{i}p_{i}$ is the inner product of bit strings.
The characteristic function of a state $\rho$ is given by
\begin{align}
  \chi_{\rho}(\bm{u})
  = \tr[\sigma(\bm{u})\rho].
  \label{eq:chi_qubit}
\end{align}
The Pauli strings form an orthogonal operator basis with respect to the Hilbert--Schmidt inner product:
\begin{align}
  \tr[\sigma(\bm{u})\sigma(\bm{v})]
  = 2^{L}\delta_{\bm{u},\bm{v}}.
  \label{eq:Pauli_orthogonality}
\end{align}
Using this orthogonality and the characteristic function, any state can be expanded in the Pauli basis as
\begin{equation}\label{eq:rho_expansion_qubit}
  \rho = \frac{1}{2^{L}}\sum_{\bm{u}}\chi_{\rho}(\bm{u})\sigma(\bm{u}).
\end{equation}

The free states of the magic resource theory for qubits are the stabilizer states, which are defined as the Clifford orbit of the computational basis states.
We first define the $L$-qubit Pauli group as
\begin{equation}\label{eq:Pauli_group_qubit}
  \mathcal{P}_{L} = \qty{c\,\sigma(\bm{u})\mid c\in\{\pm 1,\pm i\},\,\bm{u}\in\{0,1\}^{2L}},
\end{equation}
and the Clifford group as the set of unitaries that leave $\mathcal{P}_{L}$ invariant under conjugation,
\begin{equation}\label{eq:Clifford_group_qubit}
  \mathcal{C}_{L} = \qty{U\mid UPU^{\dag}\in\mathcal{P}_{L} \text{ for all } P\in\mathcal{P}_{L}}.
\end{equation}
The set of stabilizer states is then given by the Clifford orbit of the computational basis,
\begin{equation}\label{eq:Stab_states_qubit}
  \mathrm{Stab}_{L} = \qty{U_{C}\ket{0}^{\otimes L}\mid U_{C}\in\mathcal{C}_{L}}.
\end{equation}
According to the Gottesman--Knill theorem, any quantum computation that starts with a stabilizer state and consists of Clifford unitaries and computational-basis measurements can be efficiently simulated on a classical computer.
Thus, in order to achieve quantum computational advantage, one needs to use nonstabilizer states, which are often called magic states.

The stabilizer R\'enyi entropy (SRE) is a widely used measure of magic for many-body qubit systems.
It is defined as the classical R\'enyi entropy of the probability distribution defined by the squared characteristic function normalized by $2^{L}$:
\begin{equation}
  \mathrm{SRE}_{n}(\rho)
  = \frac{1}{1-n}\ln\left[\frac{1}{2^{L}}\sum_{\bm{u}}\left[\chi_{\rho}(\bm{u})\right]^{2n}\right].
  \label{eq:SRE}
\end{equation}
Here, $n=2,3,\ldots$ is the R\'enyi index and the subtraction of $L\ln 2$ ensures that the SRE of a pure stabilizer state is zero.
For a pure state $\rho$, the SRE satisfies the following basic properties:
\begin{enumerate}
  \item Faithfulness:
        $\mathrm{SRE}_{n}(\rho)\ge 0$, with equality holding if and only if $\rho\in\mathrm{Stab}_{L}$.
  \item Invariance under Clifford unitaries:
        $\mathrm{SRE}_{n}(U_{C}\rho U_{C}^{\dag})=\mathrm{SRE}_{n}(\rho)$ for any $U_{C}\in\mathcal{C}_{L}$.
  \item Additivity:
        $\mathrm{SRE}_{n}(\rho_{1}\otimes\rho_{2})=\mathrm{SRE}_{n}(\rho_{1})+\mathrm{SRE}_{n}(\rho_{2})$.
\end{enumerate}
Moreover, $\mathrm{SRE}_{n}$ with $n\geq2$ satisfies monotonicity under stabilizer protocols~\cite{leone2024stabilizer}, which is a requirement for the consistency with the resource theory of magic~\cite{chitambar2019quantum}.

Another related quantity is the linear stabilizer entropy, 
\begin{equation}
  \mathrm{SRE}_{n}^{\rm lin}(\rho)
  = 1 - \frac{1}{2^{L}}\sum_{\bm{u}}\left[\chi_{\rho}(\bm{u})\right]^{2n},
\end{equation}
which can be also expressed as $\mathrm{SRE}_{n}^{\rm lin}(\rho)=1-\exp\qty[(1-n)\mathrm{SRE}_{n}(\rho)]$.
While the linear SRE does not satisfy the additivity, it is a strong monotone under stabilizer protocols for $n\geq 2$~\cite{leone2024stabilizer}.
We note that these definitions can be naturally extended to magic resource theories of qudits (see Appendix~\ref{app:qudit} for details).
The stabilizer protocols and the precise notions of the monotone and strong monotone are defined in Sec.~\ref{sec:mono}, where spins, bosons, and fermions are treated in a unified manner.

\subsection{\label{subsec:boson-fermion}Bosons and fermions}

We next describe bosonic and fermionic systems on an equal footing.
Let $\eta=+1$ for bosons and $\eta=-1$ for fermions, and write the commutation relations as $[A,B]_{\eta}\equiv AB-\eta BA$.
The canonical creation and annihilation operators obey
\begin{align}
  [a_{i},a_{j}^{\dag}]_{\eta}=\delta_{ij},\qquad [a_{i},a_{j}]_{\eta}
  =[a_{i}^{\dag},a_{j}^{\dag}]_{\eta}=0.
  \label{eq:CCR_CAR_prelim}
\end{align}
We can define the displacement operators labeled by a phase-space variable $\bm{u}$, which play the role of Pauli strings in the bosonic and fermionic settings.

For $L$-mode bosons, we write the continuous variables as $\bm{u}=(\bm{q},\bm{p})\in\mathbb{R}^{2L}$, with $\bm{q}=(q_{1},q_2,\ldots,q_{L})$ and $\bm{p}=(p_{1},p_2,\ldots,p_{L})$.
We also define the standard symplectic matrix $\Omega=\left(\begin{smallmatrix}0 & I_{L}\\ -I_{L} & 0\end{smallmatrix}\right)$.
Introducing
\begin{align}
  q_{i} & = \frac{a_{i}+a_{i}^{\dag}}{\sqrt{2}},\qquad
  p_{i} = \frac{a_{i}-a_{i}^{\dag}}{i\sqrt{2}},
  \nonumber \\
  \bm{r} & =(q_{1},q_2,\ldots,q_{L},p_{1},p_2,\ldots,p_{L}),
  \label{eq:boson_ops}
\end{align}
which satisfy $[r_{j},r_{k}]=i\Omega_{jk}$, we define the bosonic displacement operator by~\cite{cahill1969ordered}
\begin{align}
  D_{\mathrm{b}}(\bm{u})
  = \exp(-i\bm{u}^{\rm T}\Omega\bm{r}),
  \label{eq:disp_boson}
\end{align}
and the characteristic function of a state $\rho$ is written as
\begin{align}
  \chi_{\rho}(\bm{u})
  = \tr\!\qty[D_{\mathrm{b}}(\bm{u})\rho].
  \label{eq:chi_boson}
\end{align}
Since $\bm{u}^{\rm T}\Omega\bm{r}$ is Hermitian for real phase-space labels $\bm{u}$, the displacement operator satisfies
\begin{align}
  D_{\mathrm{b}}(\bm{u})^{\dag}
  = \exp(i\bm{u}^{\rm T}\Omega\bm{r})
  = D_{\mathrm{b}}(-\bm{u}).
  \label{eq:disp_dagger_boson}
\end{align}
The bosonic displacement operators form a self-dual continuous operator basis:
\begin{align}
  \tr\!\qty[D_{\mathrm{b}}(\bm{u})^{\dag}D_{\mathrm{b}}(\bm{v})]
  = (2\pi)^{L}\delta(\bm{u}-\bm{v}).
  \label{eq:disp_orth_boson}
\end{align}
Consequently, any state can be expressed as
\begin{equation}\label{boson_cha}
  \rho=\int \!\!\frac{d\bm{u}}{(2\pi)^{L}}\, \chi_{\rho}(-\bm{u})D_{\mathrm{b}}(\bm{u}).
\end{equation}

For $L$-mode fermions, we use the same block notation $\bm{u}=(\bm{q},\bm{p})$, but now $q_i$ and $p_i$ are real Grassmann variables.
They anticommute with one another and with the fermionic operators.
We introduce Majorana operators in direct analogy with the bosonic quadratures,
\begin{align}
  \gamma_{i}
   & = a_{i}+a_{i}^{\dag},\qquad
  \bar{\gamma}_{i}
  = i\qty(a_{i}-a_{i}^{\dag}),
  \nonumber \\
  \bm{r}
   & =(\gamma_{1},\gamma_2,\ldots,\gamma_{L},
  \bar{\gamma}_{1},\bar{\gamma}_2,\ldots,\bar{\gamma}_{L}),
  \label{eq:Majorana}
\end{align}
which satisfy $[r_{j},r_{k}]_{-}=2\delta_{jk}$.
The fermionic displacement operator is~\cite{cahill1999density}
\begin{align}
  D_{\mathrm{f}}(\bm{u})
  = \exp(\tfrac{1}{2}\bm{r}^{\rm T}\bm{u}),
  \label{eq:disp_fermion}
\end{align}
and the corresponding characteristic function is
\begin{align}
  \chi_{\rho}(\bm{u})
  = \tr\!\qty[D_{\mathrm{f}}(\bm{u})\rho],
  \label{eq:chi_fermion}
\end{align}
where we note that the Grassmann components in $\bm{r}^{\rm T}\bm{u}$ are ordered to the right of the Majorana operators.
Unlike the bosonic case, the displacement operator is not self-dual under the trace.
We therefore introduce the dual kernel $E_{\mathrm{f}}(\bm{u})$ by the trace pairing
\begin{align}
  \tr\!\qty[D_{\mathrm{f}}(\bm{u})
   E_{\mathrm{f}}(-\bm{v})]
  = \tr\!\qty[E_{\mathrm{f}}(-\bm{v})
   D_{\mathrm{f}}(\bm{u})]
  = (i/2)^L \delta_{\mathrm{f}}(\bm{u}-\bm{v}),
  \label{eq:disp_dual_fermion}
\end{align}
where $\delta_{\mathrm{f}}(\bm{u}-\bm{v})\equiv(u_1-v_1)\cdots(u_{2L}-v_{2L})$ is the Grassmann delta function. As in the bosonic case~\eqref{boson_cha}, a fermionic state can be expressed as
\begin{equation}
  \rho=\int d{\bm u}\,\chi_\rho(-{\bm u})E_f({\bm u}),
\end{equation}
where $d{\bm u}\equiv du_{2L}\cdots du_1$.
Throughout the paper, we impose the fermion-parity superselection; every pure state has definite fermionic parity, and thus every density operator is even-parity.

The free states of the magic resource theory for bosons and fermions are Gaussian states.
Including the mixed-state case, a bosonic Gaussian state with mean value $\bar{\bm{u}}$ and covariance matrix $\sigma$ is expressed as
\begin{equation}
  \chi_{\rho_G}(\bm{u}) = \exp(-\frac{1}{4} \bm{u}^{\rm T}\Omega^{\rm T} \sigma \Omega \bm{u}+i\bar{\bm{u}}^{\rm T}\Omega \bm{u}).
  \label{eq:chi_gaussian_boson}
\end{equation}
Similarly, a fermionic Gaussian state with covariance matrix $\sigma$ is given by
\begin{equation}
  \chi_{\rho_G}(\bm{u}) = \exp(\frac{i}{8}\bm{u}^{\rm T}\sigma \bm{u}).
  \label{eq:chi_gaussian_fermion}
\end{equation}
We denote by $\mathrm{Gauss}_{L}$ the set of pure Gaussian states, namely the pure states whose characteristic functions take the form of Eq.~\eqref{eq:chi_gaussian_boson} or \eqref{eq:chi_gaussian_fermion}.

For bosons, a Gaussian unitary is generated by an at-most-quadratic Hermitian operator,
\begin{align}
  U_{G} = \exp\left[i\qty(\frac{1}{2}\bm{r}^{\rm T} H\bm{r}
   +\bar{\bm{r}}^{\rm T}\bm{r})\right],
  \label{eq:BGU}
\end{align}
with $H\in\mathbb{R}^{2L\times 2L}$, $H=H^{\rm T}$, and $\bar{\bm{r}}\in\mathbb{R}^{2L}$.
It acts on the quadratures by an affine symplectic transformation
\begin{align}
  U_{G}^\dagger \bm{r}U_{G}
  &= S\bm{r}+\bm{d},\nonumber\\
  S = e^{-\Omega H},\quad
  \bm{d} &= -\qty(\int_{0}^{1}\!e^{-s\Omega H}ds)\Omega\bar{\bm{r}},
  \label{eq:symplectic}
\end{align}
and therefore forms a group which we denote by $\mathcal{G}_{L}$.
For fermions, a Gaussian unitary is generated by a quadratic Majorana operator,
\begin{align}
  U_{G}
  = \exp\!\left(\frac{1}{4}\bm{r}^{\rm T} A\bm{r}\right),
  \quad
  A\in\mathbb{R}^{2L\times 2L},\quad A^{\rm T}=-A,
  \label{eq:FGU}
\end{align}
and acts linearly on the Majorana operators by a special orthogonal transformation,
\begin{align}
  U_{G}^\dagger \bm{r}U_{G}
  = O\bm{r},\qquad
  O = e^{A}.
  \label{eq:FGU_action}
\end{align}
The fermionic Gaussian unitaries also form a group, which we denote by $\mathcal{G}_{L}$ as well.
Importantly, for bosons and even-parity fermions, the pure Gaussian states are the states generated by Gaussian unitaries from the vacuum, i.e., $\qty{U_{G}\ket{0}^{\otimes L} \mid U_{G}\in\mathcal{G}_{L}}$, analogous to Eq.~\eqref{eq:Stab_states_qubit}.
For fermionic Gaussian states with odd parity, a similar expression holds by choosing an odd-parity Gaussian state as the reference state.

Gaussian states play the role of free states because Gaussian quantum processes are efficiently classically simulable.
For bosonic continuous-variable systems, the analogue of the Gottesman--Knill theorem states that processes initialized in Gaussian states and composed of Gaussian unitaries, Gaussian measurements, and Gaussian feed-forward can be simulated efficiently on a classical computer.
Similarly, fermionic Gaussian dynamics, also known as fermionic linear optics, admits an efficient classical simulation in terms of covariance matrices or equivalent Grassmann Gaussian representations.
Therefore, non-Gaussian states/operations constitute the necessary resource for quantum computational advantage in bosonic and fermionic systems.

\section{\label{sec:MRE}Magic R\'enyi entropy}

Here we introduce the magic R\'enyi entropy (MRE) to quantify the amount of quantum computational resources in spins, bosons, and fermions. As illustrated in Fig.~\ref{fig:convolution}, the MRE is defined through a convolution operation of replicas, namely a unitary mixing $n$ copies followed by a partial trace leaving a single copy.
The construction is designed so that free pure states remain pure under the convolution, whereas resourceful pure states become mixed.
This purity loss gives a common measure of qubit/qudit nonstabilizerness and bosonic/fermionic non-Gaussianity.

\begin{figure}[t]
  \centering
  \includegraphics[width=\columnwidth]{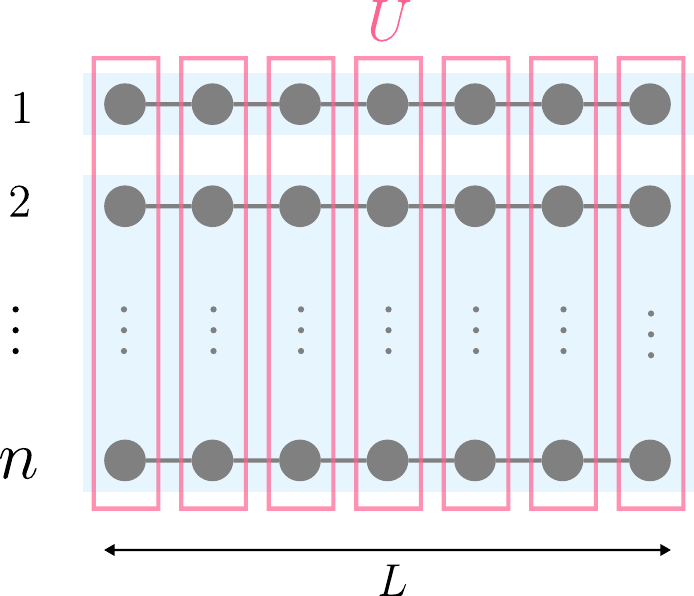}
  \caption{Schematic illustration of quantum convolution. The replica-mixing unitary $U$ is applied to $n$ copies of the many-body state, after which replicas $2,3,\ldots,n$ are traced out.
    The unitary $U$ generates correlations among the replicas only for resourceful states, so that the output replica left after the partial trace becomes mixed.
    The MRE detects the magic by measuring the purity of the resulting convolved state.}
  \label{fig:convolution}
\end{figure}

\subsection{Definition}

Let $U$ be an $n$-replica unitary acting on $\mathscr{H}^{\otimes n}$, where $\mathscr{H}$ is the Hilbert space of the original $L$-mode system. At this stage, we keep $U$ abstract for the sake of generality, leaving its explicit form to be specified below.
For $n$ input states $\rho_{1},\rho_2,\ldots,\rho_{n}$ on $\mathscr{H}$, we define the convolution $\mathcal{C}_n$ by
\begin{align}
  \mathcal{C}_{n}(\rho_{1},\rho_2,\ldots,\rho_{n})
  = \tr_{1^{\rm c}}\qty[
   U\qty(\rho_{1}\otimes\rho_2\otimes\cdots\otimes\rho_{n})U^{\dag}],
  \label{eq:convolution_general}
\end{align}
where $\tr_{1^{\rm c}}$ denotes the partial trace over the remaining output replicas $\{2,3,\ldots,n\}$.

When all inputs are identical, we write the self-convolution as
\begin{align}
  \mathcal{C}_{n}(\rho)
  \equiv \mathcal{C}_{n}(\underbrace{\rho,\ldots,\rho}_n)
  = \tr_{1^{\rm c}}\qty[
   U\,\rho^{\otimes n}U^{\dag}].
  \label{eq:self_convolution}
\end{align}
The convolution unitary $U$ is chosen so that it generates no correlations across replicas if and only if $\rho$ is free.
Thus, for a pure state $\rho$, the output state $\mathcal{C}_{n}(\rho)$ is pure if and only if $\rho$ is free, and its purity can detect the magic of $\rho$.
This observation motivates us to introduce the convolution purity
\begin{align}
  \nu_n(\rho_{1},\rho_2,\ldots,\rho_{n}) = \tr\!\qty[\mathcal{C}_{n}(\rho_{1},\rho_2,\ldots,\rho_{n})^{2}],
  \label{eq:convolution_purity}
\end{align}
and, for identical inputs, we abbreviate it as
\begin{equation}
  \nu_n(\rho)\equiv\nu_n(\underbrace{\rho,\ldots,\rho}_n)
\end{equation}
Then, the MRE is defined as
\begin{align}
  M_{n}(\rho)
  = \frac{1}{1-n}\ln\nu_n(\rho),
  \label{eq:MRE_def}
\end{align}
which is defined by the logarithm of the purity of the self-convolved state.
The prefactor $1/(1-n)$ is chosen so that $M_{n}$ agrees with the SRE in the qubit/qudit case as discussed below.
We also define the linear MRE by
\begin{align}
  M_{n}^{\rm lin}(\rho) = 1 - \nu_n(\rho),
\end{align}
which is related to $M_{n}$ via $M_{n}^{\rm lin}=1-\exp\qty[(1-n)M_{n}]$.

Said differently, both the MRE and linear MRE use the second-order R\'enyi entanglement entropy to
quantify the correlation between replica 1 and the other replicas, which is generated by the convolution unitary $U$.
While we focus on pure states in this work, we remark that the R\'enyi entanglement entropy is no longer a correlation measure for mixed states. Nevertheless, the convolution-based magic measure can still be defined for mixed states provided that an appropriate correlation measure, such as the mutual information, is used \cite{hahn2025measuring,bu2025efficient,lyu2024fermionic}.

We formulate the basic properties of the MRE as the following theorem:
\begin{theorem}
\label{thm:MRE-properties}
For a pure state $\rho=\ketbra*{\psi}{\psi}$, the MRE satisfies the following properties:
\begin{enumerate}
  \item Faithfulness: $M_{n}(\rho)\ge 0$, with equality holding if and only if $\ket{\psi}\in\mathrm{Stab}_{L}$ (for qubits) or $\ket{\psi}\in\mathrm{Gauss}_{L}$ (for bosons and fermions).
  \item Invariance under free unitaries: $M_{n}(U_{F}\rho U_{F}^{\dag})=M_{n}(\rho)$ for any Clifford unitary $U_{F}\in\mathcal{C}_{L}$ (for qubits) or any Gaussian unitary $U_{F}\in\mathcal{G}_{L}$ (for bosons and fermions).
  \item Additivity: $M_{n}(\rho_{1}\otimes\rho_{2})=M_{n}(\rho_{1})+M_{n}(\rho_{2})$.
\end{enumerate}
\end{theorem}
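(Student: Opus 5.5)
We work in the characteristic-function (phase-space) representation and exploit the linear structure of the convolution unitary. In every case, $U$ is taken to be a free unitary on $n$ replicas whose action on displacement operators is a linear map $\bm{u}_1,\ldots,\bm{u}_n\mapsto$ combinations. The relevant maps are a mod-$d$ matrix for qudits, an orthogonal (hence symplectic) beam-splitter-type matrix for bosons, and a real orthogonal matrix for Majoranas in the fermionic case. The first step is to rewrite $\nu_n(\rho)$ in terms of $\chi_\rho$. The output of replica 1 has characteristic function $\chi_{\mathcal{C}_n(\rho)}(\bm{u})=\prod_{j}\chi_\rho(c_j\bm{u})$, where $(c_1,\ldots,c_n)$ is the first row of the mixing matrix. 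This follows from the fact that tracing out the other replicas sets their displacement labels to zero. Plancherel then gives $\nu_n(\rho)=\int d\mu(\bm{u})\prod_j|\chi_\rho(c_j\bm{u})|^2$, with the appropriate normalized measure (sum over Pauli labels, Lebesgue, or Berezin). For qubits or qudits with $n$ coprime to $d$, we choose $c_j$ so that this reduces to $d^{-L}\sum_{\bm{u}}|\chi_\rho(\bm{u})|^{2n}$ (up to the phase conventions), which identifies $M_n$ with the SRE. All three properties then follow from the stated SRE properties.

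For bosons and fermions we prove the three properties directly from this formula.

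\textbf{Additivity.} For $\rho_1\otimes\rho_2$, the mixing acts identically on each subsystem because $U=U^{(1)}\otimes U^{(2)}$ mode-wise. Hence $\mathcal{C}_n(\rho_1\otimes\rho_2)=\mathcal{C}_n(\rho_1)\otimes\mathcal{C}_n(\rho_2)$, purity is multiplicative, and the logarithm adds. For fermions we need the even-parity superselection, so that the graded tensor product does not produce sign obstructions when we reorder the replicas.

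\textbf{Invariance.} A free unitary $U_F$ acts on each replica by $(S,\bm{d})$ or $O$. Because the mixing matrix acts on the replica index while $S$ or $O$ acts on the mode index, $U_F^{\otimes n}$ commutes with $U$ up to displacements in the bosonic case. Indeed, $U(U_F^{\otimes n})U^\dagger=U_F^{\otimes n}$ times a displacement on replicas $2,\ldots,n$, provided the row sums are chosen appropriately; otherwise we get a local unitary on replica 1. In either case the result is a local unitary on replica 1 times a unitary on $1^{\rm c}$, which leaves the purity of the reduced state unchanged.

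\textbf{Faithfulness.} The bound $M_n\ge0$ is immediate because the purity satisfies $\nu_n\le1$ and the prefactor $1/(1-n)$ is negative for $n\ge2$. For $n<1$ we would use the analogous inequality. If $\rho$ is Gaussian, the Gaussian form of $\chi$ (quadratic exponent) together with $\sum_j c_j^2=1$ shows that $\chi_{\mathcal{C}_n(\rho)}=\chi_{\rho'}$ for a Gaussian $\rho'$ with the same covariance, hence pure. The mean $\bar{\bm{u}}$ is absorbed by invariance.

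The converse is the main obstacle: a pure $\rho$ with pure $\mathcal{C}_n(\rho)$ must be Gaussian. Purity of the output means that $U\ket{\psi}^{\otimes n}$ is a product between replica 1 and the rest. We would then invoke the quantum analogue of the Cram\'er/Darmois--Skitovich characterization, as in the convolution-based Gaussianity tests cited above~\cite{bu2025efficient,lyu2024fermionic,hahn2025measuring}. Concretely, iterating the construction (or using $\chi_{\mathcal{C}_n(\rho)}(\bm{u})=\prod_j\chi_\rho(c_j\bm{u})$ together with the equality case $|\chi_{\mathcal{C}}|$ of the Plancherel purity) yields a functional equation for $\log\chi_\rho$ whose only solutions near $\bm{u}=0$ are quadratic polynomials. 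For fermions the same argument applies to the Grassmann-valued $\chi$, where the functional equation forces $\chi$ to be the exponential of a quadratic form. Pure states with quadratic log-characteristic function are exactly those in $\mathrm{Gauss}_L$.
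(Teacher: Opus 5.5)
Your treatment of qubits (via $M_n=\mathrm{SRE}_n$), additivity, and invariance is correct, and it agrees in substance with the paper. For invariance you commute $U_F^{\otimes n}$ through the replica rotation at the cost of local displacements on the output replicas; the paper phrases this as the commutation of Gaussian channels with $U$ (Lemmas~\ref{lem:boson_commute} and~\ref{lem:fermion_commute}). Your bound $M_n\ge0$ and the argument that $\nu_n=1$ for Gaussian inputs via $\sum_j c_j^2=1$ are also correct. The genuine gap is the converse of faithfulness for bosons and fermions, which is the only nontrivial part of the statement. For $n=2$ your appeal to the cited beam-splitter results is exactly Lemma~\ref{lem:BS} and suffices. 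For $n\ge3$ and a general orthogonal $O$, your sketch does not establish it.

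The ``concrete'' derivation you offer cannot yield a functional equation. The identity $\chi_{\mathcal{C}_n(\rho)}(\bm{u})=\prod_j\chi_\rho(c_j\bm{u})$ plus purity of $\mathcal{C}_n(\rho)$ is only the single scalar condition $\nu_n=1$. (For fermions, moreover, $\nu_n$ is the coefficient norm of the product, Eq.~\eqref{eq:purity_fermion_chi}, not an integral of $\prod_j|\chi_\rho(c_j\bm{u})|^2$.) A functional equation only comes from the full $1|1^{\rm c}$ factorization of $U\rho^{\otimes n}U^\dagger$, namely $\prod_r\chi_\rho(c_r\bm{u}+\bm{w}_r)=\prod_r\chi_\rho(c_r\bm{u})\prod_r\chi_\rho(\bm{w}_r)$ for all $\bm{w}$ orthogonal to $(c_1,\ldots,c_n)$ in replica space.

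Even with that equation, the bosonic case is not closed. Solving for $\ln\chi_\rho$ near $\bm{u}=0$ does not give global Gaussianity, because $\chi_\rho$ can have zeros, even on open sets (e.g., compactly supported wave functions). In addition, the classical Cram\'er/Darmois--Skitovich argument finishes with Marcinkiewicz-type steps that use that $\chi$ is the Fourier transform of a probability measure, which fails for Wigner functions. The quantum result you can actually cite is the two-input statement of Lemma~\ref{lem:BS}, valid for arbitrary, possibly mixed, inputs. Restricting $\bm{w}$ to two replicas $1,s$ does not reduce your equation to it, since the factor $\prod_{k\neq1,s}\chi_\rho(c_k\bm{u})$ survives on both sides and may vanish. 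For fermions, ``solutions near zero'' has no meaning; you would need an explicit degree-by-degree Grassmann argument, which you do not give. Your sketch also never uses a nontriviality hypothesis such as $0<|O_{11}|<1$, without which the converse is false (e.g., for a permutation $O$).

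The missing idea is an operator-level reduction to the two-input lemma. After relabeling so that $0<|O_{1n}|<1$, the paper factors $O=V^{\rm T}R_{1n}(\zeta)V'$ (Lemma~\ref{lem:factorize}). Here $V$ acts on replicas $2,\ldots,n$, $V'$ acts on replicas $1,\ldots,n-1$, and $-\sin\zeta=O_{1n}$. The last factor cannot create $1|1^{\rm c}$ correlations, and replica $n$ is still an uncorrelated copy of $\rho$ after $V'$. Tracing out replicas $2,\ldots,n-1$ then shows that a nontrivial beam splitter maps $\rho_A\otimes\rho$ to a product state (Lemma~\ref{lem:tripartite}), so $\rho$ is Gaussian by Lemma~\ref{lem:BS}.
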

The proofs for qubits and qudits can be found in Ref.~\cite{bu2025stabilizer}, and the proofs for bosons and fermions are presented in Appendix~\ref{app:MRE_properties}.
Since the linear MRE is a monotonically increasing function of the MRE, it also satisfies the faithfulness and the invariance, while it is not additive, as in the case of the linear SRE.
These properties serve as the natural counterparts to the basic properties of the SRE and the linear SRE outlined in Sec.~\ref{subsec:qubit}.
In addition, the MRE satisfies monotonicity and the linear MRE satisfies strong monotonicity under free operations, which we formulate as theorems in Sec.~\ref{sec:mono} after specifying the convolution unitaries.
We focus on the MRE in the analysis of magic in many-body systems presented later, since MRE scales extensively with the system size owing to the additivity.

The construction~\eqref{eq:MRE_def} unifies several resource measures that have so far been discussed independently. As mentioned above, the MRE for qubits/qudits includes the SRE~\eqref{eq:SRE} as a special case. For qubits with an odd replica number $n\geq 3$, this has been originally shown in Ref.~\cite{bu2025stabilizer} in the context of stabilizer testing; we review this construction in Sec.~\ref{subsec:MRE_qubit}. In Appendix~\ref{app:qudit}, we extend this construction and prove the equivalence between the MRE and SRE for general qudits.
For bosons and fermions, a related quantity has been discussed in the $n=2$ case with the beam-splitter unitary~\cite{hahn2025measuring,bu2025efficient,lyu2024fermionic,coffman2025measuring}. As detailed later, the MRE presented here allows for a more general construction, accommodating an arbitrary replica number $n$ and generic orthogonal rotations within the replica space.
This extension makes the non-Gaussianity measure structurally parallel to the SRE and provides the natural starting point for the field-theoretical analysis below.

\subsection{\label{subsec:MRE_qubit}Qubits}
We begin with the qubit case by reviewing the convolution unitary presented in Ref.~\cite{bu2025stabilizer}, which yields the identification $M_{n}=\mathrm{SRE}_{n}$.
This serves as the benchmark for our general formulation and makes explicit the parallel with the MRE construction for bosonic and fermionic systems in Sec.~\ref{subsec:MRE_BF}.

For an odd replica number $n\ge 3$, the qubit convolution unitary $U$ is defined by its action on the computational basis,
\begin{align}
  U\,\bigotimes_{r=1}^{n}\ket{\bm{q}_{r}}
   & = \ket{\sum_{r=1}^{n}\bm{q}_{r}}
  \!\bigotimes_{r=2}^{n}\ket{\bm{q}_{1}+\bm{q}_{r}},
  \nonumber \\
  U^{\dag}\,\bigotimes_{r=1}^{n}\ket{\bm{q}_{r}}
   & = \ket{\sum_{r=1}^{n}\bm{q}_{r}}
  \!\bigotimes_{r=2}^{n}\ket{\sum_{s\ne r}\bm{q}_{s}},
  \label{eq:Udag_qubit}
\end{align}
where $r\in\{1,2,\ldots,n\}$ labels replicas, $\bm{q}_{r}=(q_{r,1},q_{r,2},\ldots,q_{r,L})\in\{0,1\}^{L}$ labels the computational basis of replica $r$, and all additions are componentwise modulo two.
Conjugating a Pauli string on the retained output replica gives
\begin{align}
  U^{\dag}\qty[\sigma_1(\bm{u})\otimes I^{\otimes(n-1)}]U
   = (-1)^{\frac{n-1}{2}\bm{q}\cdot\bm{p}}\bigotimes_{r=1}^{n}\sigma_{r}(\bm{u}),
  \label{eq:Pauli_convolution_duality_operator}
\end{align}
where $\sigma_{r}(\bm{u})$ denotes the Pauli string acting on replica $r$.
Equation~\eqref{eq:Pauli_convolution_duality_operator} implies the convolution--multiplication duality:
\begin{align}
  \chi_{\mathcal{C}_{n}(\rho_{1},\ldots,\rho_{n})}(\bm{u})
  = (-1)^{\frac{n-1}{2}\bm{q}\cdot\bm{p}}\prod_{r=1}^{n}\chi_{\rho_{r}}(\bm{u}).
  \label{eq:duality_qubit_general}
\end{align}
In the self-convolution, this becomes
\begin{align}
  \chi_{\mathcal{C}_{n}(\rho)}(\bm{u})
  = (-1)^{\frac{n-1}{2}\bm{q}\cdot\bm{p}}\qty[\chi_{\rho}(\bm{u})]^{n},
  \label{eq:duality_qubit_self}
\end{align}
which allows us to evaluate the MRE in a closed form. Specifically, expanding the convolved state in the Pauli basis gives
\begin{align}
  \mathcal{C}_{n}(\rho)
  = \frac{1}{2^{L}}\sum_{\bm{u}}
  \chi_{\mathcal{C}_{n}(\rho)}(\bm{u})\,\sigma(\bm{u}),
  \label{eq:rhoC_Pauli}
\end{align}
and the orthogonality of Pauli strings yields
\begin{align}
  \nu_n(\rho)
  = \frac{1}{2^{L}}\sum_{\bm{u}}
  \qty[\chi_{\mathcal{C}_{n}(\rho)}(\bm{u})]^{2}
  = \frac{1}{2^{L}}\sum_{\bm{u}}
  \qty[\chi_{\rho}(\bm{u})]^{2n}.
  \label{eq:purityC_qubit}
\end{align}
Therefore, for qubits, the MRE and the linear MRE coincide with the SRE and the linear SRE for every odd integer $n\ge 3$, respectively.
More generally, we prove the equivalence between the MRE and SRE for qudits with local dimension $d$ that is coprime to $n\geq 2$ (see Appendix~\ref{app:qudit}).

We remark that the SRE defined in Eq.~\eqref{eq:SRE} itself is defined also for even $n$, while the convolution-multiplication duality~\eqref{eq:duality_qubit_general} cannot be satisfied in this case.
To see this, let us assume that a unitary satisfying Eq.~\eqref{eq:Pauli_convolution_duality_operator} exists for all $n\geq2$.
For simplicity, we set $L=1$.
Setting $\sigma_{1}(\bm{u})$ in Eq.~\eqref{eq:Pauli_convolution_duality_operator} to the Pauli operators $X$ and $Z$ yields
\begin{align}
  U^{\dag}(X_{1}\otimes I^{\otimes n-1})U & =c_xX^{\otimes n},\nonumber \\
  U^{\dag}(Z_{1}\otimes I^{\otimes n-1})U & =c_zZ^{\otimes n},\label{eq:unitary_x_z}
\end{align}
where $c_x$ and $c_z$ are phase factors.
Since the anticommutation relation $X_{1}Z_{1}=-Z_{1}X_{1}$ is preserved under unitary transformation, we have
\begin{equation}
  c_xc_zX^{\otimes n}Z^{\otimes n} = -c_xc_zZ^{\otimes n}X^{\otimes n}.
\end{equation}
Meanwhile, applying the anticommutation relation in each replica gives $X^{\otimes n}Z^{\otimes n}=(-1)^n Z^{\otimes n}X^{\otimes n}$.
Since these two relations are consistent only for odd $n$, there is no unitary satisfying Eq.~\eqref{eq:unitary_x_z} for even $n$.

\subsection{\label{subsec:MRE_BF}Bosons and fermions}

In bosonic and fermionic cases, we define the convolution unitary $U$ by Gaussian rotations within the replica space.
To this end, we label the $n$ replicas by $r\in\{1,2,\ldots,n\}$ and collect the bosonic or fermionic annihilation operators with modes $i={1,2,\ldots,L}$ into the vector:
\begin{equation}
  \bm{a}_{r}\equiv(a_{r,1},a_{r,2},\ldots,a_{r,L})^{\rm T}.\label{avecdef}
\end{equation}
We define the convolution unitary $U$ by its action on these operators as
\begin{align}
  U^\dagger \,\bm{a}_{s}\,U= \sum_{r=1}^{n}O_{sr}\,\bm{a}_{r},
  \label{eq:U_BF}
\end{align}
where $O$ is an $n\times n$ real orthogonal matrix.
Thus, $U$ is a Gaussian unitary on the $nL$-mode replicated system that mixes among the replicas and acts uniformly on all sites within each replica.
The Gaussian unitary is also passive, meaning that it preserves the total particle number across all replicas.

For any orthogonal $O$ with $0<|O_{11}|<1$, the corresponding MRE satisfies faithfulness, invariance, and additivity for pure states, as shown in Appendix~\ref{app:MRE_properties}.
Since these properties hold for any such $O$, a whole family of orthogonal rotations gives valid non-Gaussianity measures, with different choices of $O$ corresponding to distinct yet closely related probes of the magic resource.
Note that the condition $0<|O_{11}|<1$ excludes the trivial situation in which the retained output replica is not genuinely mixed with the remaining replicas.
Below we consider two natural representatives of this family.

\paragraph*{Case 1: two-replica beam splitter.}
For $n=2$, it suffices to consider the rotational unitary,
\begin{align}
  O(\zeta) = \begin{pmatrix}
               \cos\zeta & -\sin\zeta \\
               \sin\zeta & \cos\zeta
             \end{pmatrix}.
  \label{eq:O_case1}
\end{align}
with $0<\zeta\leq\pi/4$, since any other orthogonal matrix $O$ yields an MRE equivalent to that of some $O(\zeta)$ in this range.
The action of the corresponding unitary $U(\zeta)$ is that of a beam splitter with reflectivity $\sin^{2}\zeta$ that mixes the two replicas.
The MRE for this choice has been proposed and analyzed in Refs.~\cite{lyu2024fermionic,bu2025efficient,coffman2025measuring,hahn2025measuring}.
In Sec.~\ref{sec:fermion_TLL}, we provide a detailed analysis of the universal contribution to the MRE defined by this convolution.

\paragraph*{Case 2: Helmert Gaussian unitary.}
Another useful choice is the orthogonal transformation whose first row gives the center-of-mass mode of all replicas and the remaining rows span the relative-coordinate subspace.
This transformation is generated by the Helmert matrix defined as follows:
\begin{align}
  O_{1r} = \frac{1}{\sqrt{n}},\quad
  O_{sr} = \begin{cases}
             \dfrac{1}{\sqrt{s(s-1)}} & r<s, \\[6pt]
             -\sqrt{\dfrac{s-1}{s}}   & r=s, \\[8pt]
             0                        & r>s,
           \end{cases}\quad 2\le s\le n,
  \label{eq:O_case2}
\end{align}
where $r$ and $s$ are the input and output replica labels, respectively.
We call the corresponding Gaussian unitary the Helmert Gaussian unitary.
For $n=2$, this yields the same MRE as the balanced (i.e., $\zeta=\pi/4$) beam splitter in Eq.~\eqref{eq:O_case1}.

The key property of the Helmert Gaussian unitary is that the retained output mode $s=1$ is the normalized center-of-mass combination of the $n$ input replicas:
\begin{equation}
  U^{\dag}\bm{a}_{1}U=\frac{1}{\sqrt{n}}\sum_{r=1}^{n}\bm{a}_{r}.
\end{equation}
Consequently, it realizes a direct bosonic and fermionic analogue of the qubit convolution-multiplication duality~\eqref{eq:duality_qubit_general}.
Indeed, since the convolution unitary transforms the characteristic function as
\begin{equation}
  \chi_{U(\rho_{1}\otimes\cdots\otimes\rho_{n})U^{\dag}}(\bm{u}_{1},\ldots,\bm{u}_{n})=\prod_{r=1}^{n}\chi_{\rho_{r}}\!\left(\sum_{s=1}^{n}O_{sr}\bm{u}_{s}\right),
  \label{eq:conv_char_transform}
\end{equation}
tracing out replicas $2,3,\ldots,n$ by setting $\bm{u}_{2}=\bm{u}_3=\cdots=\bm{u}_{n}={\bm 0}$ and using $O_{1r}=1/\sqrt{n}$ yields the characteristic function of the convolved state as
\begin{align}
  \chi_{\mathcal{C}_{n}(\rho_{1},\ldots,\rho_{n})}(\bm{u})
  = \prod_{r=1}^{n}
  \chi_{\rho_{r}}\!\qty(\frac{\bm{u}}{\sqrt{n}}),
  \label{eq:duality_BF}
\end{align}
where $\bm{u}$ is either a $2L$-dimensional real-valued vector in the bosonic case or a $2L$-dimensional real Grassmann vector in the fermionic case (see Eqs.~\eqref{eq:boson_ops} and \eqref{eq:Majorana}). 
As we show below, this equality allows us to derive a closed form of the MRE for bosons and fermions.
It turns out that such expressions are useful for proving the monotonicity of MRE under Gaussian protocols (see Sec.~\ref{sec:mono} and Appendix~\ref{app:MRE_properties} for details).

For bosons, the self-duality relation~\eqref{eq:disp_orth_boson} together with Eq.~\eqref{eq:duality_BF} yields the convolution purity in a closed form:
\begin{align}
  \nu_{n}(\rho_{1},\ldots,\rho_{n})
  = \int\!\!\frac{d\bm{u}}{(2\pi)^{L}}\,
  \prod_{r=1}^{n}\qty|\chi_{\rho_{r}}\!\qty(\frac{\bm{u}}{\sqrt{n}})|^{2}.
  \label{eq:purity_boson_chi}
\end{align}
Setting all inputs to $\rho$ and rescaling $\bm{u}$, the MRE is written as a higher moment of the characteristic function:
\begin{align}
  M_{n}(\rho)
  = \frac{1}{1-n}\ln \left[n^L \int\!\!\frac{d\bm{u}}{(2\pi)^{L}}\,
  \qty|\chi_{\rho}(\bm{u})|^{2n}\right].
  \label{eq:MRE_boson_chi}
\end{align}
This expression makes explicit the parallel with the qubit MRE obtained from Eq.~\eqref{eq:purityC_qubit}, where the discrete sum over Pauli labels is replaced by a continuous phase-space integral.
The prefactor $n^{L}$ ensures that pure Gaussian states have zero MRE.
We remark that a related quantity has appeared in Ref.~\cite{crew2026magic} in the context of magic in hybrid spin-boson models.

For fermions, because the displacement operator $D_{\mathrm{f}}(\bm{u})$ is not self-dual under the trace, the MRE does not admit the same reduction to a simple higher moment of $\chi_{\rho}$.
Instead, the closed form involves the Grassmann-monomial coefficients of the characteristic function. Specifically, for an ordered subset $A=\{l_{1}<l_{2}<\cdots<l_{m}\}\subset\{1,2,\ldots,2L\}$, we write the Grassmann monomial as $\bm{u}_{A}=u_{l_{1}}u_{l_{2}}\cdots u_{l_{m}}$ and the Majorana monomial as $\Gamma_{A}=r_{l_{1}}r_{l_{2}}\cdots r_{l_{m}}$, with the convention $\bm{u}_{\emptyset}=1$ and $\Gamma_{\emptyset}=I$.
Since $D_{\mathrm{f}}(2\bm{u})=\exp(\bm{r}^{\rm T}\bm{u})$, the characteristic function has the Majorana-monomial expansion
\begin{equation}
  \chi_{X}(2\bm{u})=\sum_{A}\kappa_{X}(A)\,\bm{u}_{A},\qquad
  \kappa_{X}(A)=\tr[\Gamma_{A}^{\dagger}X].
  \label{eq:chi_fermion_expansion}
\end{equation}
For a Grassmann polynomial $f=\sum_{A}f_{A}\bm{u}_{A}$, we define the coefficient norm by $\norm{f}^{2}=\sum_{A}|f_{A}|^{2}$.
Since the Majorana monomials are orthogonal under the trace, $\tr[\Gamma_{A}^{\dagger}\Gamma_{B}]=2^{L}\delta_{A,B}$, the coefficient norm yields the squared Hilbert-Schmidt norm of any operator as $\tr[X^{\dagger}X]=2^{-L}\norm{\chi_{X}(2\bm{u})}^{2}$.
Combining this identity with Eq.~\eqref{eq:duality_BF}, we obtain the closed form of the fermionic convolution purity,
\begin{equation}
  \nu_{n}(\rho_{1},\ldots,\rho_{n})=\frac{1}{2^L}\norm{\prod_{r=1}^{n}\chi_{\rho_{r}}\!\qty(\frac{2\bm{u}}{\sqrt{n}})}^{2},
  \label{eq:purity_fermion_chi}
\end{equation}
which gives the fermionic counterpart of Eq.~\eqref{eq:purity_boson_chi}.
Therefore, the closed-form expression of the fermionic MRE reads
\begin{equation}
  M_{n}(\rho)
  = \frac{1}{1-n}\ln \left[\frac{1}{2^L}\left\|\chi_{\rho}\!\left(\frac{2\bm{u}}{\sqrt{n}}\right)^n\right\|^{2}\right],
\end{equation}
analogous to Eqs.~\eqref{eq:SRE} and \eqref{eq:MRE_boson_chi}.

\subsection{\label{sec:mono}Monotonicity theorems}

We now turn to monotonicity of the MRE and the linear MRE under free protocols, treating spins, bosons, and fermions in a unified manner.
In all cases, a free protocol is constructed from any composition of the following six elements:
(i) free unitaries,
(ii) partial traces,
(iii) free measurements,
(iv) additions of free ancillas,
(v) any of (i)--(iv) conditioned on measurement outcomes,
and
(vi) any of (i)--(iv) conditioned on classical random variables.
For spins, the free unitaries, measurements, and ancillas are Clifford unitaries, computational-basis measurements, and $\ket{0}$ ancillas, respectively, and the resulting protocols are called stabilizer protocols.
For bosons, they are Gaussian unitaries, homodyne measurements, and pure Gaussian ancillas, and the resulting protocols are called bosonic Gaussian protocols.
A fermionic Gaussian protocol is defined in the same way as the bosonic one, except that the Gaussian measurements are replaced by occupation-number measurements.

We say that a resource measure is a {\it monotone} if it does not increase under any free protocol that maps a pure input \(|\psi\rangle\) deterministically to a pure output \(|\varphi\rangle\).
Similarly, a resource measure is called a {\it strong monotone} if it does not increase on average under any free protocol that maps a pure input \(|\psi\rangle\) randomly to pure outputs \(|\varphi_\lambda\rangle\) with certain probabilities $p_\lambda$.
Importantly, the intermediate states in a free protocol do not need to be pure states, as long as the final output is pure.

In spins, the MRE with any replica number $n\geq 2$ coprime to the local Hilbert-space dimension $d$ coincides with the SRE, and its linear version coincides with the linear SRE (see Sec.~\ref{subsec:MRE_qubit} and Appendix~\ref{app:qudit}).
Thus, both monotonicity of the MRE and strong monotonicity of the linear MRE follow from the known results under stabilizer protocols~\cite{leone2024stabilizer,turkeshi2025magic}.

For bosons and fermions, we prove the following theorems:

\begin{theorem}
\label{thm:boson-protocol-monotonicity}
For every integer \(n\geq2\), the bosonic MRE \(M_n\) associated with the Helmert convolution is a monotone, while its linear version \(M_n^{\rm lin}\) is a strong monotone.  
\end{theorem}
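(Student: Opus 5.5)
Our plan follows the strategy used for the SRE in Ref.~\cite{leone2024stabilizer}. We work with the closed form~\eqref{eq:MRE_boson_chi}, namely
\begin{equation}
  \nu_n(\rho)=n^L\int\frac{d\bm{u}}{(2\pi)^L}\,|\chi_\rho(\bm{u})|^{2n},
\end{equation}
and check that each elementary ingredient (i)--(vi) of a bosonic Gaussian protocol cannot decrease the convolution purity of a pure state, or cannot decrease it on average when the outcome is random. Strong monotonicity of $M_n^{\rm lin}=1-\nu_n$ is then the statement $\sum_\lambda p_\lambda\nu_n(\varphi_\lambda)\ge\nu_n(\psi)$. Monotonicity of $M_n$ for deterministic pure-to-pure protocols follows from the same inequality with a single outcome, because $\ln$ is monotone and $1/(1-n)<0$.

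\textbf{Steps (i) and (iv).} Invariance under Gaussian unitaries is item 2 of Theorem~\ref{thm:MRE-properties}. In the characteristic-function language this is the substitution $\bm{u}\mapsto S\bm{u}$, which has unit Jacobian, together with a phase from the displacement, which drops out of $|\chi|$. Appending a pure Gaussian ancilla multiplies $\nu_n$ by the ancilla's own $\nu_n$, which equals $1$ by faithfulness and additivity.

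\textbf{Steps (ii), (iii) and (v).} These carry the real content, and we treat partial traces and homodyne measurements together. Since the protocol's final output is pure, we would first reduce to the case where the whole protocol is a Gaussian unitary, followed by homodyne measurement of some modes and discarding of the rest, with classical feedforward Gaussian unitaries. Such feedforward corrections are free by step (i). For a pure input $\psi$ on system $A\cup B$, suppose measuring the quadratures $\bm{q}_B$ with outcome $\bm{x}$ yields the pure state $\varphi_{\bm{x}}$ on $A$ with density $p(\bm{x})$. We would then prove the key inequality
\begin{equation}
  \int d\bm{x}\,p(\bm{x})\,\nu_n(\varphi_{\bm{x}})\;\ge\;\nu_n(\psi).
\end{equation}
To do so, write $\chi_{p(\bm{x})\varphi_{\bm{x}}}(\bm{u}_A)$ as a Fourier transform in the conjugate variable $\bm{p}_B$ of $\chi_\psi(\bm{u}_A,\bm{q}_B=0,\bm{p}_B)$. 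The right-hand side is then an $L^{2n}$-norm of $\chi_\psi$ over the full phase space, while the left-hand side is an average of $L^{2n}$-norms of partially Fourier-transformed slices, weighted by $p^{1-2n}$ after normalization.

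The matching inequality should come from H\"older's inequality in the outcome variable combined with a Hausdorff--Young-type bound. Gaussian states saturate both, which produces the $n^{L}$ prefactors. For example, for $n=2$ one can convolve the Helmert output directly: measuring the retained replica's $B$-quadratures commutes with the Helmert rotation, since homodyne is Gaussian and the rotation acts uniformly on modes. This is why we expect the Helmert choice, whose first row is the center of mass, to be essential. We would therefore aim for an operational version that avoids analysis. Apply homodyne on $B$ in all $n$ replicas \emph{before} the Helmert convolution. Use that the Helmert unitary maps the product of homodyne projections on $B$ to a projection onto the center-of-mass outcome tensored with relative-coordinate outcomes. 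Then use that the purity of the reduced state of replica $1$ cannot decrease under local projective measurement of replicas $2,\ldots,n$ combined with conditioning. The term $\sum p\,\nu_n(\varphi)$ should then appear as the average conditional purity, and the inequality becomes purity non-decrease under local measurement on the purified complement.

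\textbf{Step (vi).} For classical randomness, convexity of purity would suffice, since in the pure-output setting the average over random branches is simply averaged.

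\textbf{Main obstacle.} We expect the hardest part to be making the measurement step rigorous. This requires controlling the normalization $p(\bm{x})^{n}$ versus $p(\bm{x})$ across $n$ replicas when homodyne outcomes are continuous and not all replicas share the same outcome. It also requires the reduction of an arbitrary Gaussian protocol to the ``unitary--measure--discard'' normal form. We would handle the continuous outcomes by first approximating homodyne with finitely squeezed Gaussian (heterodyne-like) POVMs and then taking the limit.
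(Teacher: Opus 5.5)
You have a genuine gap in the measurement step, which is the only nontrivial part of the proof. Your ``operational'' route gives the right first half. Write $|\Psi\rangle=\int d\bm{x}\,\sqrt{p(\bm{x})}\,|\eta_{\bm{x}}\rangle_R|\bm{x}\rangle_Q$ in each replica. The Helmert unitary sends $\bigotimes_r|\bm{x}_r\rangle_Q$ to the eigenstate of the rotated variables $\vec{\bm{x}}'=(O_H\otimes I)\vec{\bm{x}}$ with unit Jacobian. Tracing out $Q_{1^{\rm c}}$ then decoheres the state in $\vec{\bm{x}}'_{1^{\rm c}}$, and convexity of purity gives a first bound.

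The retained replica's own variable $\bm{x}'_1$ is not a measurement on the complement, so it needs separate handling. The conditional state on $R_1\cdots R_nQ_1$ is pure, so its purity on $R_1Q_1$ equals its purity on $R_{1^{\rm c}}$. Tracing $Q_1$ then decoheres in $\bm{x}'_1$. Carrying this out yields
\[
\nu_n(\Psi)\le\int\prod_{r=1}^n\bigl[d\bm{x}_r\,p(\bm{x}_r)\bigr]\,\nu_n(\eta_{\bm{x}_1},\ldots,\eta_{\bm{x}_n}).
\]
This is an average of \emph{cross} convolution purities, with independent outcomes in different replicas. It is not $\int d\bm{x}\,p(\bm{x})\,\nu_n(\eta_{\bm{x}})$. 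Your claim that ``$\sum p\,\nu_n(\varphi)$ should then appear as the average conditional purity'' is therefore false. The issue you flag (``not all replicas share the same outcome'') is structural, not a regularity problem. Approximating homodyne by finitely squeezed POVMs does not address it.

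The missing idea is H\"older applied to the multi-input closed form $\nu_n(\rho_1,\ldots,\rho_n)=\int\frac{d\bm{u}}{(2\pi)^L}\prod_r|\chi_{\rho_r}(\bm{u}/\sqrt{n})|^2$, with every exponent equal to $n$. This gives $\nu_n(\rho_1,\ldots,\rho_n)\le\prod_r\nu_n(\rho_r)^{1/n}$. The outcome integral then factorizes into $\bigl[\int d\bm{x}\,p(\bm{x})\,\nu_n(\eta_{\bm{x}})^{1/n}\bigr]^n$. Jensen (convexity of $t\mapsto t^n$) bounds this by $\int d\bm{x}\,p(\bm{x})\,\nu_n(\eta_{\bm{x}})$. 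This is exactly where the Helmert choice enters: its uniform first row evaluates all inputs at the same phase-space point, which is what makes the H\"older step work. Your Fourier/Hausdorff--Young idea is not needed.

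A secondary issue is your reduction to a ``unitary--measure--discard'' normal form, with each ingredient checked on pure states. Adaptive feedforward and mixed intermediate states make this awkward. It is simpler to purify branchwise:
\begin{itemize}
\item Keep discarded modes as an environment register, so each branch stays pure and partial traces become relabelings.
\item Gaussian unitaries and pure Gaussian ancillas preserve $\nu_n$.
\item Apply the measurement inequality at each homodyne step.
\item At the end, a pure physical output forces $|\Omega_\lambda\rangle=|\varphi_\lambda\rangle\otimes|\xi_\lambda\rangle$. Multiplicativity together with $\nu_n(\xi_\lambda)\le 1$ then gives $\nu_n(\Omega_\lambda)\le\nu_n(\varphi_\lambda)$.
\end{itemize}
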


\begin{theorem}
\label{thm:fermion-protocol-monotonicity}
The second-order fermionic MRE \(M_2\) associated with the Helmert convolution is a monotone, while its linear version \(M_2^{\rm lin}\) is a strong monotone. 
\end{theorem}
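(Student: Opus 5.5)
The statement to prove is Theorem~\ref{thm:fermion-protocol-monotonicity}. I will follow the strategy used for stabilizer entropies in Ref.~\cite{leone2024stabilizer}, adapted to the closed form~\eqref{eq:purity_fermion_chi}. For $n=2$ the Helmert unitary is the balanced beam splitter, so $\nu_2(\rho)=2^{-L}\norm{\chi_\rho(\sqrt2\bm{u})^2}^2$. The first step is a bilinear rewriting of this expression: $\nu_2(\rho)=\tr[(\rho\otimes\rho)\,Q\,(\rho\otimes\rho)]$-type expression, or more usefully $\nu_2(\rho)=\tr[(U\rho^{\otimes2}U^\dagger)\,(\mathcal{C}_2(\rho)\otimes I)]$.

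Writing $\Pi=U^\dagger(\mathrm{SWAP}_{1,1'})U$ on four copies gives $\nu_2(\rho)=\tr[\Pi\,\rho^{\otimes4}]$. The plan is to show that $\Pi$, viewed as an operator on $\mathscr{H}^{\otimes4}$, is a Gaussian-invariant observable, and to use three of its properties:
\begin{itemize}
\item[(a)] it commutes with $U_G^{\otimes4}$ for every fermionic Gaussian unitary, since a passive replica rotation commutes with identical Gaussian transformations on each replica;
\item[(b)] it factorizes over modes, or more precisely over tensor-product splittings that respect parity;
\item[(c)] it behaves well under partial trace and projective occupation measurements.
\end{itemize}
Invariance (a) already yields item~2 of Theorem~\ref{thm:MRE-properties}. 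Steps (ii), (iv) and (vi) are then handled as follows. Adding a Gaussian ancilla leaves $M_2$ unchanged by additivity and faithfulness. Classical randomness is handled by convexity of the linear version. Partial traces fall under the general requirement that the output be pure.

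The core step is the measurement step~(iii) together with feedforward~(v). Because Gaussian unitaries are free, the argument reduces to the following. Let $\ket\psi$ be a pure state on $L$ modes. Measure the occupation $n_L$ of mode $L$ with outcome $k$ and probability $p_k$, leaving a pure postmeasurement state $\ket{\varphi_k}$ on $L-1$ modes. The target inequality is
\[
\textstyle\sum_k p_k\,\nu_2(\varphi_k)\ \ge\ \nu_2(\psi),
\]
which is exactly strong monotonicity of $M_2^{\rm lin}$.

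To establish it, I will expand $\chi_\psi$ in the last-mode Grassmann pair $(u_L,u_{2L})$. With $\rho=\sum_{j,k}\ketbra{\varphi_j}{\varphi_k}\otimes\ketbra{j}{k}$, weighted by the coefficients $\sqrt{p_jp_k}$ up to parity signs, the even-parity constraint restricts which terms survive. The beam splitter then acts mode-by-mode on the coefficient norm. After the Grassmann bookkeeping, $\nu_2(\psi)$ should become a sum of diagonal terms $\tfrac12\sum_k p_k^2\nu_2(\varphi_k)$-like pieces and cross terms of the form $\nu_2(\varphi_0,\varphi_1)$-like overlaps. The cross terms I would bound by Cauchy--Schwarz on the coefficient norm, $\norm{fg}^2\le\ldots$. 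The hope is to land on $\nu_2(\psi)\le\sum_kp_k\nu_2(\varphi_k)$, which mirrors the qubit proof where the $Z$-measurement splits the Pauli sum.

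I expect this step to be the main obstacle, and it is why only $n=2$ is claimed. Grassmann cross terms pick up signs, and the coefficient norm is not multiplicative. It would therefore help to use the reverse form $\nu_2(\rho)=\tr[\mathcal{C}_2(\rho)^2]$ with $\mathcal{C}_2(\rho)$ computed explicitly for the one-mode decomposition. If Cauchy--Schwarz fails, the first thing I would try instead is to express the measurement as a Gaussian channel on the replicated state and use that partial trace on $\mathcal{C}_2$ does not decrease purity.

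Once strong monotonicity of $M_2^{\rm lin}$ holds, deterministic monotonicity of $M_2$ follows at once. For a deterministic pure output, $M_2^{\rm lin}$ does not increase, and $M_2=-\ln(1-M_2^{\rm lin})$ is monotone in $M_2^{\rm lin}$.
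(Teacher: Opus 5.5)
\paragraph{The measurement step is not established.}
Your overall reduction matches the paper's: resolve everything into branches, show that every non-measurement element preserves $\nu_2$, show that a single occupation measurement cannot lower $\nu_2$ on average, and read off monotonicity of $M_2$ from strong monotonicity of $M_2^{\rm lin}$. However, the step that carries all the content, $\nu_2(\Psi)\le p\,\nu_2(\eta_0)+q\,\nu_2(\eta_1)$ for $\ket{\Psi}=\sqrt{p}\,\ket{\eta_0}\gtensor\ket{0}+\sqrt{q}\,\ket{\eta_1}\gtensor\ket{1}$, is only hoped for.

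Cauchy--Schwarz on the cross terms cannot deliver it. The inequality is saturated at every Gaussian $\Psi$, so each estimate in the chain must be tight there. Set $\eta_+\equiv\eta_0$ (even) and $\eta_-\equiv\eta_1$ (odd). Let $\tilde\chi_{st}$ be the rescaled characteristic function of $\ket{\eta_s}\!\bra{\eta_t}$, and $n_{st,s't'}=\norm{\tilde\chi_{st}\tilde\chi_{s't'}}^{2}$. The expansion then gives the quartic
$2^{L+1}\nu_2(\Psi)=\sum_{k}\binom{4}{k}c_k p^k q^{4-k}$,
with $c_4=2n_{++,++}$, $c_3=4n_{++,+-}$, $c_2=4n_{++,--}$, $c_1=4n_{--,+-}$, and $c_0=2n_{--,--}$. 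Note that the diagonal pieces come with $p^4$ and $q^4$, not $p^2$.

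At a Gaussian state all $c_k$ coincide, so $n_{++,+-}=\tfrac12 n_{++,++}$, whereas $\sqrt{n_{++,++}n_{++,--}}=n_{++,++}/\sqrt{2}$. The natural Cauchy--Schwarz bound is therefore already lossy.

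What you need are exact identities for the cross inner products:
\begin{itemize}
\item $\braket{\tilde\chi_{++}^{2}}{\tilde\chi_{++}\tilde\chi_{--}}=\braket{\tilde\chi_{++}^{2}}{\tilde\chi_{+-}\tilde\chi_{-+}}=n_{++,+-}$;
\item $\braket{\tilde\chi_{++}\tilde\chi_{--}}{\tilde\chi_{+-}\tilde\chi_{-+}}=0$ and $n_{+-,-+}=n_{++,--}$;
\item $\braket{\tilde\chi_{++}\tilde\chi_{+-}}{\tilde\chi_{--}\tilde\chi_{+-}}=n_{++,--}$.
\end{itemize}
The paper derives these from four ingredients: the product rule $\tilde\chi_X\tilde\chi_Y=\chi_{\tr_2[U(X\gtensor Y)U^\dagger]}$; the relation $USU^\dagger=\Pi_2$ between the graded swap and the replica parity; the parity splitting of $U\ket{\eta_0,\eta_1}$; and a four-replica swap intertwining $S_{14}W'=W'S_{24}$.

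With these identities the $\Real\braket{\tilde\chi_{++}^{2}}{\tilde\chi_{--}^{2}}$ terms cancel. Every second difference $c_{k+2}-2c_{k+1}+c_k$ then becomes a squared norm, for example $c_2-2c_1+c_0=2\norm{\tilde\chi_{--}^{2}-\tilde\chi_{++}\tilde\chi_{--}+\tilde\chi_{+-}\tilde\chi_{-+}}^{2}$. These squares vanish exactly at Gaussian states. Hence the quartic is convex in $p$ and lies below its chord, which is the claim.

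Your fallback does not help: partial trace can decrease purity (take any entangled pure state), so ``partial trace on $\mathcal{C}_2$ does not decrease purity'' is false.

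\paragraph{Partial traces and parity.}
There is a second, smaller gap. The protocol only requires the final output to be pure. A partial trace in the middle leaves a mixed state on which later measurements act. Neither your measurement inequality nor $\nu_2$ as a resource quantifier applies to such a state. So ``partial traces fall under the requirement that the output be pure'' does not dispose of them.

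The paper never traces. It keeps discarded modes as an environment register, so each branch stays pure with definite parity. Each measurement then acts on a state of the form above, with the environment included inside $\eta_0,\eta_1$. At the end, purity of the physical output forces $\ket{\Omega_\lambda}=\ket{\varphi_\lambda}\gtensor\ket{\xi_\lambda}$. Multiplicativity together with $\nu_2(\xi_\lambda)\le 1$ then gives $\nu_2(\Omega_\lambda)\le\nu_2(\varphi_\lambda)$.

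The same branchwise bookkeeping handles classical randomness: every selected non-measurement operation preserves $\nu_2$ exactly, with no appeal to convexity.

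Finally, for fermions you must place the measured mode last in the graded ordering. You must also treat an odd-parity parent separately, for example by appending an occupied ancilla mode and using multiplicativity.
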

Proofs of these statements are given in Appendix~\ref{app:MRE_properties}.
We summarize these properties in Table~\ref{MRE_properties}.

\begin{table}[t]
  \centering
  \includegraphics[width=0.45\textwidth]{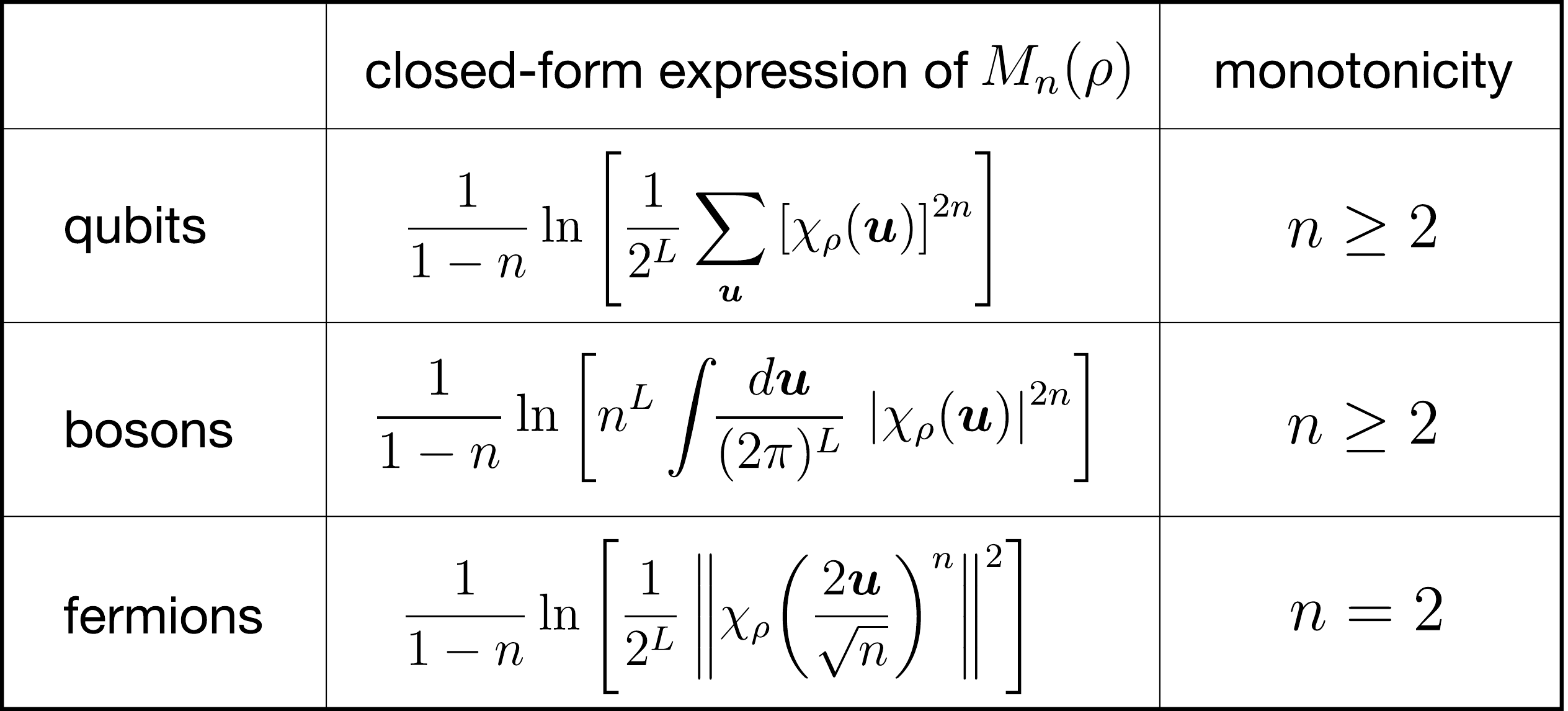}
  \caption{Monotonicity of the MRE under free protocols for spins, bosons, and fermions. In qubits, choosing the appropriate convolution unitary (see Eq.~\eqref{eq:Udag_qubit}), the MRE reduces to the SRE and inherits the monotonicity~\cite{leone2024stabilizer}.
  Such monotonicity has also been extended to qudits with prime local dimensions $d$~\cite{turkeshi2025magic}.
  In bosons and fermions, when the convolution unitary is chosen to be the Helmert rotation~\eqref{eq:O_case2}, the MRE admits a closed-form expression analogous to the SRE.
  With this choice, the bosonic MRE with $n\geq2$ and the fermionic MRE with $n=2$ satisfy monotonicity, and the linear MRE satisfies strong monotonicity under Gaussian protocols.}
  \label{MRE_properties}
\end{table}

We remark that the MRE defined by a general convolution unitary still satisfies monotonicity under a pure-to-pure Gaussian channel, which consists of adding vacuum ancillas, Gaussian unitaries, and tracing out a subsystem (see Appendix~\ref{app:MRE_properties}).
Even when restricted to pure states, this covers an important class of quantum operations, including isometries and the replacer channel~\cite{belzig2025reversetype,davies1976quantum}.

\section{\label{sec:field_theory_MRE}Field theory of the magic R\'enyi entropy}

To formulate a field-theoretical approach to the MRE, we recast the replica construction in Liouville space.
The definition of the MRE contains a sequence of operations: the convolution unitary first mixes the replicas, the output replicas are traced out, and the purity of the remaining state is evaluated.
Vectorizing operators as superkets converts traces and partial traces into overlaps with maximally entangled states (MESs).
The partition function, the purity, and the MRE are then represented by the same doubled Euclidean path integral, distinguished only by the boundary state that sews the replicas.
In this formulation, the convolution unitary $U$ is localized as a line defect.
After folding, this insertion can be assigned either to a nontrivial boundary state or to inter-replica couplings in the bulk.
As a result, we identify the universal constant term of the MRE by the Affleck-Ludwig boundary entropy of the corresponding boundary conformal field theory (BCFT).

\subsection{\label{subsec:liouville}Liouville-space formulation}

We use the Liouville-space conventions of Refs.~\cite{gyamfi2020fundamentals,schmutz1978realtime}.
The construction begins by fixing a basis of the original Hilbert space $\mathscr{H}$.
To be concrete, below we use the language of bosons and fermions, where the Fock basis $\ket{\bm{m}}=\ket{m_{1},m_2,\ldots,m_{L}}$ associated with the operators of Sec.~\ref{subsec:boson-fermion} is used; however, the formulation can be extended to qubits and qudits by interpreting $\ket*{\bm{m}}$ as the computational basis.

Once the basis is fixed, each operator on $\mathscr{H}$ can be viewed as a vector.
The resulting Liouville space, denoted by $\bar{\mathscr{H}}$, is the corresponding vector space of operators, equipped with the Hilbert-Schmidt inner product.
\begin{align}
  \sbk{A}{B} \equiv \tr\!\qty[A^{\dag}B].
  \label{eq:HS_inner}
\end{align}
We denote the vectorized form of $A$ by the superket $\sket{A}$, and call linear operators acting on $\bar{\mathscr{H}}$ superoperators.
We define the elementary superkets as follows:
\begin{align}
  \sket{\bm{m},\bm{n}}
  \equiv \sket{\,\ket{\bm{m}}\!\bra{\bm{n}}\,},
  \label{eq:Liouville_basis}
\end{align}
which form an orthonormal basis of $\bar{\mathscr{H}}$,
\begin{align}
  \sbk{\bm{m},\bm{n}}{\bm{m}',\bm{n}'}
  = \delta_{\bm{m}\bm{m}'}\delta_{\bm{n}\bm{n}'}.
  \label{eq:Liouville_orth}
\end{align}
Consequently, any superket admits the following expansion:
\begin{align}
  \sket{A}
   & = \sum_{\bm{m},\bm{n}}\bra{\bm{m}}A\ket{\bm{n}}\,
  \sket{\bm{m},\bm{n}},
  \nonumber \\
  \sbk{\bm{m},\bm{n}}{A}
   & = \bra{\bm{m}}A\ket{\bm{n}}.
  \label{eq:superket_expand}
\end{align}

We define the superoperators $a_{i}$ and $\tilde{a}_{i}$ by their action on the basis~\eqref{eq:Liouville_basis}:
\begin{align}
  a_{i}\sket{\bm{m},\bm{n}}
   & \equiv \sket{a_{i}\ket{\bm{m}}\!\bra{\bm{n}}},
  \label{eq:superop_a} \\
  \tilde{a}_{i}\sket{\bm{m},\bm{n}}
   & \equiv \eta^{\mu+1}
  \sket{\,\ket{\bm{m}}\!\bra{\bm{n}}a_{i}^{\dag}\,},
  \label{eq:superop_atilde}
\end{align}
where $\mu=\sum_{i=1}^{L}(m_{i}-n_{i})$ is the particle-number difference between the left and right sectors of the basis vector, and the superoperator $a_{i}$ acting on the Liouville space is denoted by the same symbol as the original operator for notational simplicity.
The adjoint superoperators $a_{i}^{\dag}$ and $\tilde{a}_{i}^{\dag}$ are determined through the inner product~\eqref{eq:HS_inner}.
These superoperators satisfy the following algebra:
\begin{align}
  [a_{i},a_{k}^{\dag}]_{\eta}
   & = [\tilde{a}_{i},\tilde{a}_{k}^{\dag}]_{\eta}
  = \delta_{ik},
  \label{eq:superalg_diag} \\
  [a_{i},\tilde{a}_{k}]_{\eta}
   & = [a_{i}^{\dag},\tilde{a}_{k}]_{\eta}
  = [a_{i},\tilde{a}_{k}^{\dag}]_{\eta}
  = [a_{i}^{\dag},\tilde{a}_{k}^{\dag}]_{\eta}
  = 0,
  \label{eq:superalg_cross} \\
  [a_{i},a_{k}]_{\eta}
   & = [a_{i}^{\dag},a_{k}^{\dag}]_{\eta}
  = [\tilde{a}_{i},\tilde{a}_{k}]_{\eta}
  = [\tilde{a}_{i}^{\dag},\tilde{a}_{k}^{\dag}]_{\eta}
  = 0.
  \label{eq:superalg_same}
\end{align}
Thus $\bar{\mathscr{H}}$ can be regarded as the Hilbert space of $2L$ degrees of freedom with the same statistics as the original system.
Since the Fock basis vectors are $\ket{\bm{m}}=a_{j_{1}}^{\dag}\cdots a_{j_{N_{\bm{m}}}}^{\dag}\ket{0}$ and $\ket{\bm{n}}=a_{i_{1}}^{\dag}\cdots a_{i_{N_{\bm{n}}}}^{\dag}\ket{0}$, the basis~\eqref{eq:Liouville_basis} of $\bar{\mathscr{H}}$ is generated from the supervacuum $\sket{0,0}\equiv\sket{\,\ket{0}\!\bra{0}\,}$ as follows:
\begin{align}
  \sket{\bm{m},\bm{n}}
  = \eta^{N_{\bm{n}}(N_{\bm{n}}-1)/2}
  \!\left(\prod_{l=1}^{N_{\bm{m}}}a_{j_{l}}^{\dag}\right)
  \!\left(\prod_{k=1}^{N_{\bm{n}}}\tilde{a}_{i_{k}}^{\dag}\right)
  \!\sket{0,0}.
  \label{eq:Fock_Liouville}
\end{align}
The vectorized form of the identity operator is then expressed as the MES on the Liouville space $\bar{\mathscr{H}}$,
\begin{align}
  \sket{I}
  \equiv \sum_{\bm{m}}\sket{\bm{m},\bm{m}}
  = \exp\!\left(\sum_{i=1}^{L}a_{i}^{\dag}\tilde{a}_{i}^{\dag}\right)
  \!\sket{0,0}.
  \label{eq:MES}
\end{align}
Using the MES, the trace of any operator $A$ can be written as
\begin{align}
  \tr[A] = \sbk{I}{A}.
  \label{eq:trace_MES}
\end{align}

More generally, the superoperator that acts by left multiplication with an operator $A$ is denoted by the same symbol $A$.
The auxiliary-sector superoperator $\tilde{A}$ is obtained by applying the same construction to $A^{*}$ in the tilde sector, where $A^{*}$ denotes the complex conjugation.
For particle-number-conserving operators $A$ and $B$ the following identities hold:
\begin{align}
  \sket{A}
   & = A\sket{I}
  = \tilde{A}^{\dag}\sket{I},
  \label{eq:id_A} \\
  \sket{AB}
   & = A\sket{B},\qquad
  \sket{BA}
  = \tilde{A}^{\dag}\sket{B}.
  \label{eq:id_AB}
\end{align}
We refer to Ref.~\cite{schmutz1978realtime} for a non-particle-conserving case.
Equations~\eqref{eq:id_A} and~\eqref{eq:id_AB} immediately imply the following conjugation identity for any particle-number-conserving unitary $U$
\begin{align}
  \sket{UAU^{\dag}}
  = U\tilde{U}\sket{A}.
  \label{eq:id_conjugation}
\end{align}

When several replicas are present, the creation and annihilation superoperators within each replica satisfy the same algebra as the single-replica case, while operators from different replicas $r$ and $s$ ($r\ne s$) commute or anticommute for bosons or fermions, respectively:
\begin{align}
  [a_{r,i},a_{s,j}]_{\eta}
  = [a_{r,i},\tilde{a}_{s,j}]_{\eta}
  = [\tilde{a}_{r,i},\tilde{a}_{s,j}]_{\eta}
  = 0.
  \label{eq:inter_replica}
\end{align}
Using the two-replica Liouville space $\bar{\mathscr{H}}^{\otimes 2}$, the purity of $\rho$ can be expressed as
\begin{align}
  \tr\!\qty[\rho^{2}]
  = \sbk{\rho}{\rho}
  = \sbra{I}_{1\tilde{2}}\sbra{I}_{\tilde{1}2}\,\qty(\sket{\rho}_{1}\otimes\sket{\rho}_{2}).
  \label{eq:purity_replica}
\end{align}
Here $\sket{I}_{r\tilde{s}}$ and $\sket{I}_{\tilde{r}s}$ denote the MESs pairing the original sector of one replica with the auxiliary sector of the other:
\begin{align}
  \sket{I}_{r\tilde{s}}
   & \equiv \exp\!\left(\sum_{i=1}^{L}
  a_{r,i}^{\dag}\tilde{a}_{s,i}^{\dag}\right)
  \!\sket{0,0}_{r\tilde{s}},
  \nonumber \\
  \sket{I}_{\tilde{r}s}
   & \equiv \exp\!\left(\sum_{i=1}^{L}
  \tilde{a}_{r,i}^{\dag}a_{s,i}^{\dag}\right)
  \!\sket{0,0}_{\tilde{r}s}.
  \label{eq:cross_MES}
\end{align}
This cross-replica MES will play a central role in the folded picture of the path-integral formulation below.

\subsection{\label{subsec:path_integral}Path integral in Liouville space}

We next translate the Liouville-space identities into the Euclidean path-integral language.
This step fixes the reference boundary condition associated with the MES before the convolution unitary is inserted.
We consider the Gibbs state
\begin{align}
  \rho
  = \frac{1}{Z}\,e^{-\beta H},\qquad
  Z = \tr\!\qty[e^{-\beta H}],
  \label{eq:Gibbs}
\end{align}
which reduces to the ground-state density operator in the limit $\beta\to\infty$. While we will ultimately use this limit to extract the universal constant term, keeping finite $\beta$ makes the folding structure transparent.

\subsubsection{\label{subsubsec:Z}Partition function}

The Liouville-space representation of Eq.~\eqref{eq:Gibbs} follows from Eq.~\eqref{eq:id_A}:
\begin{align}
  \sket{\rho}
  = \frac{1}{Z}\,
  e^{-(\beta/2)H}\otimes e^{-(\beta/2)\tilde{H}}
  \sket{I}.
  \label{eq:rho_Liouville}
\end{align}
The trace condition $1=\tr[\rho]=\sbk{I}{\rho}$ gives
\begin{align}
  Z
  = \sbra{I}\,
  e^{-(\beta/2)H}\otimes e^{-(\beta/2)\tilde{H}}\,
  \sket{I}.
  \label{eq:Z_Liouville}
\end{align}
Equation~\eqref{eq:Z_Liouville} is a path integral on the strip $\tau\in[0,\beta/2]$ with the MES boundary state placed at both ends.
Compared with the ordinary thermal path integral on $\tau\in[0,\beta]$, the time interval is halved and the field content is doubled.
The two copies are governed by the additive Liouville-space bulk Hamiltonian $H\otimes I+I\otimes\tilde H$.
Equivalently, this is the standard folding of the identity defect of the original theory.
The state $\sket{I}$ should therefore be understood as the boundary state that sews the original and auxiliary fields along the folded defect.
The explicit sewing conditions are provided in Appendix~\ref{app:bc_MES}.
The upper panel of Fig.~\ref{fig:Liouville_W_folded} illustrates this folded-strip representation of Eq.~\eqref{eq:Z_Liouville}.

\subsubsection{\label{subsubsec:purity}Purity}

The same folded picture also gives a useful representation of the purity.
For the Gibbs state~\eqref{eq:Gibbs}, one has
\begin{align}
  \tr\!\qty[\rho^{2}]
  = \frac{1}{Z^{2}}\,W,\qquad
  W = \tr\!\qty[e^{-2\beta H}],
  \label{eq:purity_Z}
\end{align}
and the nontrivial path integral corresponds to $W$.
In Liouville space, this numerator is
\begin{align}
  W
  = \sbra{I}\,
  e^{-\beta H}\otimes e^{-\beta\tilde{H}}\,
  \sket{I},
  \label{eq:W_Liouville}
\end{align}
which is the same folded path integral as Eq.~\eqref{eq:Z_Liouville}, but now on the longer interval $\tau\in[0,\beta]$.
To calculate the MRE, it is useful to fold this expression once more so that the purity itself is represented by a sewing between two Liouville replicas.
Specifically, using Eq.~\eqref{eq:purity_replica}, we obtain
\begin{align}
  W
  = \sbra{I}_{1\tilde{2}}\sbra{I}_{\tilde{1}2}\,
  \qty(
  e^{-(\beta/2)H}\otimes e^{-(\beta/2)\tilde{H}}
  \sket{I})^{\otimes 2},
  \label{eq:W_folded}
\end{align}
which is a path integral on $\tau\in[0,\beta/2]$ in the two-replica Liouville space.
The boundary at one end is the product $\sket{I}^{\otimes 2}=\sket{I}_{1\tilde{1}}\sket{I}_{2\tilde{2}}$ of the boundary states within the same replicas, while the other end is sewn by the cross-replica MESs $\sket{I}_{1\tilde{2}}\sket{I}_{\tilde{1}2}$ (cf. Eq.~\eqref{eq:cross_MES}).
Thus, taking a purity amounts to replacing the ordinary MES boundary by a boundary state that exchanges the original sector of one replica with the auxiliary sector of the other.
The lower panel of Fig.~\ref{fig:Liouville_W_folded} shows the folding structure that gives Eq.~\eqref{eq:W_folded}.

\begin{figure}[b]
  \centering
  \includegraphics[width=\columnwidth]{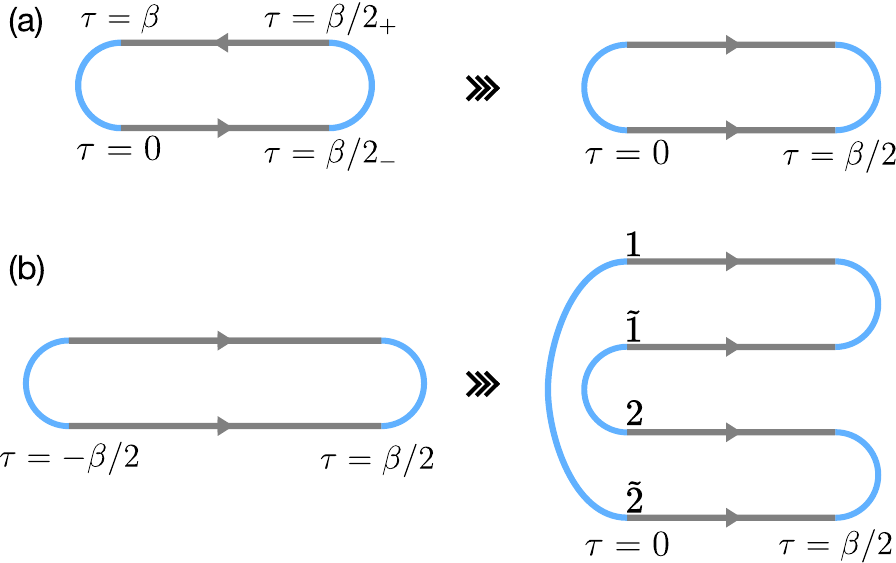}
  \caption{Folding representations of the thermal partition function and the purity numerator. The upper panel (a) shows the ordinary thermal path integral (left) and its Liouville-space folded-strip representation (right) in Eq.~\eqref{eq:Z_Liouville}. The lower panel (b) shows the corresponding purity geometry and its second folding into the two-replica Liouville-space expression in Eq.~\eqref{eq:W_folded}.}
  \label{fig:Liouville_W_folded}
\end{figure}

\subsection{Field theory of the MRE}
\label{subsec:MRE_field_theory}

We now present the path-integral formulation of the self-convolution~\eqref{eq:self_convolution} that defines the MRE~\eqref{eq:MRE_def}.
The new ingredient relative to the purity is the unitary $U$ that mixes the input replicas before replicas $2,3,\ldots,n$ are traced out.
Specifically, the convolved state~\eqref{eq:self_convolution} is represented in Liouville space as
\begin{align}
  \sket{\mathcal{C}_{n}(\rho)}
  = \sbra{I}_{1^{\rm c}}\,
  U\otimes\tilde{U}\,\sket{\rho}^{\otimes n},
  \label{eq:Crho_Liouville}
\end{align}
where $\sbra{I}_{1^{\rm c}}\equiv\bigotimes_{r=2}^{n}\sbra{I}_{r\tilde{r}}$ implements the partial trace over the discarded output replicas.
Substituting Eq.~\eqref{eq:Crho_Liouville} into Eq.~\eqref{eq:MRE_def} gives
\begin{align}
  M_{n}(\rho)
   & = \frac{1}{1-n}\ln\sbk{\mathcal{C}_{n}(\rho)}{\mathcal{C}_{n}(\rho)}\notag \\
   & = \frac{1}{1-n}\ln\sbra{\rho}^{\otimes n}P_{n}\sket{\rho}^{\otimes n},
  \label{eq:MRE_projector}
\end{align}
where the whole trace-and-purity operation is encoded in the replica projector
\begin{align}
  P_{n}
  = \qty(U\otimes\tilde{U})^{\!\dag}
  \qty(
  I_{1\tilde 1}\otimes
  \sket{I}_{1^{\rm c}}\sbra{I}_{1^{\rm c}}
  )
  \qty(U\otimes\tilde{U}).
  \label{eq:Pn_def}
\end{align}
Here, $I_{1\tilde 1} = I_1\otimes \tilde I_{\tilde 1}$ denotes the identity superoperator on the retained output replica 1, and $\sket{I}_{1^{\rm c}}$ is the product of the MESs within each replica $2,3,\ldots,n$, which is used to trace out the discarded replicas.
For a thermal state~\eqref{eq:Gibbs}, the matrix element in Eq.~\eqref{eq:MRE_projector} becomes a ratio of Euclidean partition functions,
\begin{align}
  \sbra{\rho}^{\otimes n}P_{n}
  \sket{\rho}^{\otimes n}
  = \frac{Z_{n}}{Z^{2n}},
  \label{eq:MRE_partition_ratio}
\end{align}
where $Z=\tr[e^{-\beta H}]$ is the ordinary partition function while
\begin{align}
  Z_{n}
   & = \sbra{I}^{\otimes n}\,
  \qty(e^{-(\beta/2)H}\otimes
  e^{-(\beta/2)\tilde{H}})^{\!\otimes n}\,
  \,
  \nonumber \\
   & \quad\times\, P_{n}\,
  \qty(e^{-(\beta/2)H}\otimes
  e^{-(\beta/2)\tilde{H}})^{\!\otimes n}\,
  \sket{I}^{\otimes n}
  \label{eq:Zn_def}
\end{align}
is a path integral on $[-\beta/2,\beta/2]$ in the $n$-replica Liouville space, with the operator $P_n$ inserted at $\tau=0$ and the reference MES boundary $\sket{I}^{\otimes n}$ placed at the two outer ends [Fig.~\ref{fig:MRE_BCFT} (a)].
The inserted operator $P_n$ can be seen as a line defect that first rotates the replica fields by $U\otimes\tilde{U}$ and then imposes the MES sewing associated with the partial trace.

The partition function $Z_n$ can be analyzed by folding the theory at the line defect $P_n$, i.e., by folding the two sides of $\tau=0$ against each other in the same manner as Eq.~\eqref{eq:W_folded}.
We write the replicas on the two sides of the insertion as $\mathsf{R}=(1,2,\ldots,n)$ and $\mathsf{R}'=(1',2',\ldots,n')$, respectively, and similarly denote their tilde copies by $\tilde{\mathsf{R}}=(\tilde{1},\tilde{2},\ldots,\tilde{n})$ and $\tilde{\mathsf{R}}'=(\tilde{1}',\tilde{2}',\ldots,\tilde{n}')$.
After this folding, the MRE can be written as
\begin{align}
  M_{n}(\rho)
  = \frac{1}{1-n}
  \ln\sbra{\mathcal{J}}\,\mathcal{U}\,
  \sket{\rho}^{\otimes 2n},
  \label{eq:MRE_folded}
\end{align}
where the unitary acting on the 4-folded replica space is
\begin{align}
  \mathcal{U}
   & = \qty(U_{\mathsf{R}}\otimes
  \tilde{U}_{\tilde{\mathsf{R}}})
  \otimes
  \qty(U_{\mathsf{R}'}\otimes
  \tilde{U}_{\tilde{\mathsf{R}}'}),
  \label{eq:U2n_def}
\end{align}
and the inter-replica boundary state is
\begin{align}
  \sket{\mathcal{J}}
  = \sket{I}_{1\tilde{1}'}\!\otimes\!
  \sket{I}_{\tilde{1}1'}\!\otimes\!
  \qty[\bigotimes_{r=2}^{n}
    \qty(\sket{I}_{r\tilde{r}}\!\otimes\!\sket{I}_{r'\tilde{r}'})].
  \label{eq:J2n_def}
\end{align}
The first two MESs in Eq.~\eqref{eq:J2n_def} sew the retained output replicas that enter the final expression of the purity, while the remaining MESs implement the partial traces over the other replicas.
The path-integral representation of Eq.~\eqref{eq:MRE_folded}, shown in Fig.~\ref{fig:MRE_BCFT} (b), is a strip of imaginary-time length $\beta/2$, where the bulk can be viewed as a $4n$-sheet folded theory.
Here, one end of the strip carries the reference boundary state $\sket{\mathcal{I}}\equiv\sket{I}^{\otimes 2n}$, which is the product of $2n$ elementary MESs.
At the other end, the cross-replica boundary state $\sket{\mathcal{J}}$ in Eq.~\eqref{eq:J2n_def} implements the purity and trace sewings, and the adjacent line defect ${\cal U}$ encodes the convolution rotation.

\begin{figure}[t]
  \centering
  \includegraphics[width=\columnwidth]{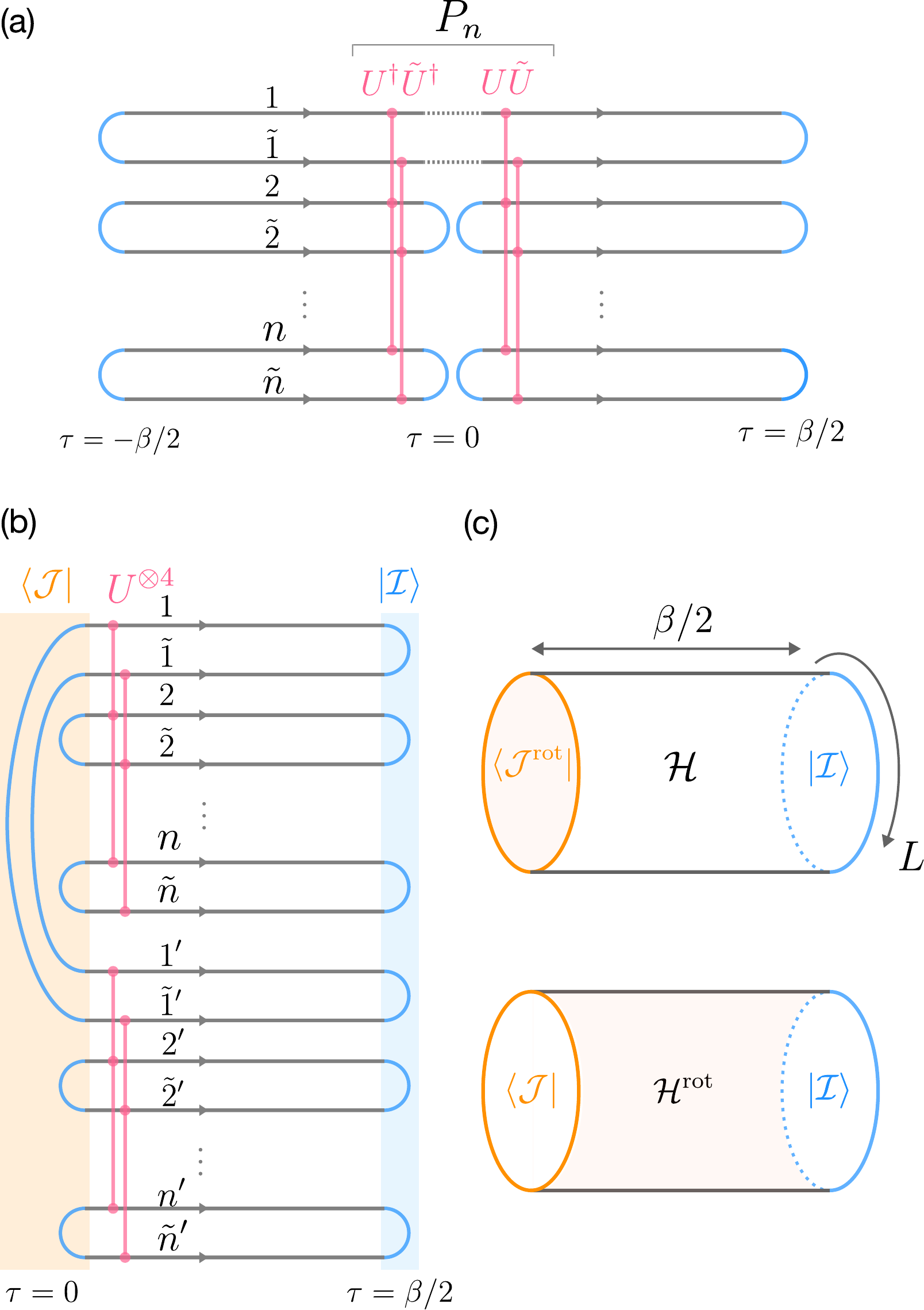}
  \caption{Euclidean path-integral representation of the MRE in Eq.~\eqref{eq:MRE_folded}.
    (a) Folded theory with $2n$ sheets of size $L\times \beta/2$. (b) $4n$-sheet replicated theory obtained after an additional folding.
    In both panels, the spatial direction is abbreviated for simplicity.
    (c) Two interpretations of the MRE.
    The upper panel describes the rotated-boundary picture.
    While the bulk theory $\cal H$ consists of the product of each decoupled replica, an application of the convolution unitary to the edge leads to the rotated boundary state $\sket{\mathcal{J}^{\text{rot}}}=\mathcal{U}^{\dag}\sket{\mathcal{J}}$.
    The lower panel describes the rotated-bulk picture.
    The convolution unitary is applied to the bulk and leads to the rotated bulk Hamiltonian $\mathcal{H}^{\text{rot}}={\cal U}{\cal H}{\cal U}^\dagger$ with inter-replica couplings, while the boundary is kept as the unrotated state $\sket{\mathcal{J}}$.}
  \label{fig:MRE_BCFT}
\end{figure}

\subsubsection*{Rotated-boundary picture}

There are two equivalent ways to evaluate Eq.~\eqref{eq:MRE_folded}.
The first method is to keep the bulk theory as a tensor product of $4n$ sheets and absorb the convolution unitary into the boundary at $\tau=0$ (see the upper panel of Fig.~\ref{fig:MRE_BCFT}(c)).
Defining
\begin{align}
  \sket{\mathcal{J}^{\text{rot}}}
  = \mathcal{U}^{\dag}\sket{\mathcal{J}},
  \label{eq:Gamma2n_def}
\end{align}
the replica partition function becomes
\begin{align}
  Z_{n}
  = \sbra{\mathcal{J}^{\text{rot}}}\,
  e^{-(\beta/2)\mathcal{H}}\,
  \sket{\mathcal{I}},
  \label{eq:Zn_boundary}
\end{align}
where $\mathcal{H} = \sum_{r=1}^{2n}\qty(H_{r}\otimes I+I\otimes\tilde{H}_{r})$ denotes the total Hamiltonian of the $4n$-folded tensor-product theory, in which the bulk is decoupled across all sheets.
Equation~\eqref{eq:Zn_boundary} is a path integral on a strip between the rotated boundary state $\sket{\mathcal{J}^{\text{rot}}}$ and the product MES state $\sket{\mathcal{I}}$.
The former, $\sket{\mathcal{J}^{\text{rot}}}$, incorporates the action of the convolution unitary, and its conformal invariance is not \textit{a priori} guaranteed because $\cal U$ does not generally act as a simple linear transformation in the IR limit.
We address this question of conformal invariance in Sec.~\ref{sec:conformal_invariance} for a class of density-density interacting models.
We note that the latter, $\sket{\mathcal{I}}=\sket{I}^{\otimes 2n}$, corresponds to the folded identity-defect boundary and defines a conformal boundary state in the IR limit.

When both ends define conformal boundary states, the low-temperature limit $\beta\gg L$ is controlled by the closed-channel ground state of the replicated CFT.
The logarithm of $Z_{n}$ then contains the bulk contributions, the nonuniversal boundary line energies, and the Affleck-Ludwig $g$ factor~\cite{affleck1991universal}.
After including the reference factor $Z^{2n}$, the bulk term cancels and the remaining large-$L$ behavior takes the form
\begin{align}
  -\ln\frac{Z_{n}}{Z^{2n}}
  = b L - \ln g + o(1),
  \label{eq:free_energy_BCFT}
\end{align}
where $bL$ is a nonuniversal line energy, and $\ln g$ is the universal boundary entropy.
In a closed-channel normalization, this entropy factor can be written as
\begin{align}
  \ln g
   & = \ln g_{\mathcal{J}}-\ln g_{\mathcal{I}},
  \label{eq:rel_g} \\
  g_{\mathcal{J}}
   & = \sbk{\mathcal{J}^{\text{rot}}}{\mathrm{CFT}},\qquad
  g_{\mathcal{I}}
  = \sbk{\mathcal{I}}{\mathrm{CFT}},
  \label{eq:lng_def}
\end{align}
where $\sket{\mathrm{CFT}}$ is the CFT ground state of the $4n$-folded tensor-product bulk theory.
The subtraction by $g_{\mathcal{I}}$ accounts for the fact that the MRE measures the convolution boundary against the same MES boundary that normalizes the ordinary partition functions $Z^{2n}$.
Throughout this paper, we normalize the artificial boundary state $\sket{\mathcal{I}}$ created by the folding so that $g_{\mathcal{I}}=1$, and hence the relative $g$ factor in Eq.~\eqref{eq:rel_g} is denoted simply by $g_{\mathcal{J}}$.
Combining Eqs.~\eqref{eq:MRE_partition_ratio} and~\eqref{eq:free_energy_BCFT} with Eq.~\eqref{eq:MRE_folded}, we obtain the ground-state scaling of the MRE as
\begin{align}
  M_{n}(\rho)
  = m_{n} L - s_{n} + o(1),\qquad
  s_{n} = \frac{\ln g_{\mathcal{J}}}{n-1},
  \label{eq:Mn_BCFT}
\end{align}
where the size-independent term $s_n$ is determined by the universal factor $g_{\mathcal{J}}$, while the slope $m_{n}=b/(n-1)$ is nonuniversal.

The rotated-boundary picture explained here is useful when one can identify the IR field configuration of the rotated boundary state $\sket{\mathcal{J}^{\text{rot}}}$.
A concrete example is the Ising critical state, for which the MRE reduces to the SRE, and the $g$ factor of $\sket{\mathcal{J}^{\text{rot}}}$ has been derived as \cite{hoshino2026stabilizer,hoshino2026stabilizera}
\begin{equation}
  g_{{\cal J},\text{Ising}}=\sqrt{n}.
\end{equation}
More generally, for qubits, one can use the convolution-multiplication relation~\eqref{eq:duality_qubit_general} to express $P_n$ as
\begin{equation}
  P_n = \frac{1}{2^L} \sum_{\bm{u}} (\sket{\sigma(\bm{u})}\sbra{\sigma(\bm{u})})^{\otimes n}.\label{eq:bell_measurement}
\end{equation}
Since the Liouville-space expressions of the Pauli strings are Bell states between $\mathscr{H}$ and $\tilde{\mathscr{H}}$, $P_n$ imposes the field configuration specified by these Bell measurements.
Once this field configuration is determined, the conformal boundary condition imposed by $\sket{\mathcal{J}^{\text{rot}}}$ may be inferred from the corresponding replicated theory.

Nonetheless, several points often remain unclear within this point of view.
First, it is typically not obvious how one can determine the IR field configuration specified by $P_n$, even when we know its microscopic expression such as Eq.~\eqref{eq:bell_measurement}.
Second, such a microscopic expression does not allow us to discuss the RG flow of the boundary state. In a field-theoretic analysis of systems under quantum measurements, one can treat the measurement strength as the coupling constant of the boundary action.
Equation~\eqref{eq:bell_measurement}, for instance, corresponds to a projective measurement, and this coupling constant becomes infinitely large, which in general obstructs discussion of RG properties~\cite{ashida2024systemenvironment, hoshino2025entanglement}.
Third, it is hard to find an analogous microscopic expression of $P_n$ for fermions.
These issues motivate an alternative interpretation introduced below.

\subsubsection*{Rotated-bulk picture}

Another way to interpret Eq.~\eqref{eq:MRE_folded} is to keep the boundary at $\tau=0$ in the elementary MES form $\sket{\mathcal{J}}$ and instead rotate the bulk evolution (see the lower panel of Fig.~\ref{fig:MRE_BCFT}(c)).
This follows by inserting $\mathcal{U}^{\dag}\mathcal{U}=1$ next to $\sket{\mathcal{I}}$ and using the identity $\mathcal{U}^{\dag}\sket{\mathcal{I}}=\sket{\mathcal{I}}$, which is the replicated version of the MES invariance under $U\otimes\tilde{U}$ (cf. Eq.~\eqref{eq:U2n_def}).

To write down the rotated theory, we recall that the bulk is the folded $4n$-replica theory that consists of the four replica sectors $\mathsf{R},\tilde{\mathsf{R}},\mathsf{R}',\tilde{\mathsf{R}}'$.
With this notation, we represent the replicated Hamiltonian in each sector by
\begin{align}
  H_{\mathsf{R}}
   & \equiv \sum_{r=1}^{n} H_{r},\qquad
  \tilde{H}_{\tilde{\mathsf{R}}}
  \equiv \sum_{r=1}^{n} \tilde{H}_{\tilde{r}},
  \nonumber \\
  H_{\mathsf{R}'}
   & \equiv \sum_{r=1}^{n} H_{r'},\qquad
  \tilde{H}_{\tilde{\mathsf{R}}'}
  \equiv \sum_{r=1}^{n} \tilde{H}_{\tilde{r}'}.
  \label{eq:replica_H_def}
\end{align}
The rotated Hamiltonian of the full $4n$-folded theory, $\mathcal{H}^{\mathrm{rot}}=\mathcal{U}\,\mathcal{H}\,\mathcal{U}^{\dag}$, then decomposes across the four sectors as
\begin{align}
  \mathcal{H}^{\mathrm{rot}}
   & =
  U_{\mathsf{R}}H_{\mathsf{R}}U_{\mathsf{R}}^{\dag}
  +\tilde{U}_{\tilde{\mathsf{R}}}\tilde{H}_{\tilde{\mathsf{R}}}\tilde{U}_{\tilde{\mathsf{R}}}^{\dag}
  \nonumber \\
   & \quad
  +U_{\mathsf{R}'}H_{\mathsf{R}'}U_{\mathsf{R}'}^{\dag}
  +\tilde{U}_{\tilde{\mathsf{R}}'}\tilde{H}_{\tilde{\mathsf{R}}'}\tilde{U}_{\tilde{\mathsf{R}}'}^{\dag}.
  \label{eq:Hrot_2n}
\end{align}
The partition function can then be written as
\begin{align}
  Z_{n}
  = \sbra{\mathcal{J}}\,
  e^{-(\beta/2)\mathcal{H}^{\mathrm{rot}}}\,
  \sket{\mathcal{I}}.
  \label{eq:Zn_bulk}
\end{align}
In the zero-temperature limit, we have the same scaling as Eq.~\eqref{eq:Mn_BCFT}, for which the boundary entropy can be expressed as
\begin{equation}
  g_{\mathcal{J}}=\sbk{\mathcal{J}}{\mathrm{CFT}^{\rm rot}}.
  \label{eq:g_rot_bulk}
\end{equation}
Here, $\sket{{\mathrm{CFT}^{\rm rot}}}$ is the ground state of the rotated bulk theory~\eqref{eq:Hrot_2n}.
Since the unitary rotation leaves the spectrum unchanged, the rotated bulk has the same universal finite-size spectrum as the original tensor-product CFT.
When the microscopic Hamiltonian is not quadratic, Eq.~\eqref{eq:Hrot_2n} in general generates inter-replica interactions within each of the replica sectors $\mathsf{R},\tilde{\mathsf{R}},\mathsf{R}',\tilde{\mathsf{R}}'$, and the resulting ground state $\sket{{\mathrm{CFT}^{\rm rot}}}$ is no longer a simple product of the replicas. The advantage of this picture, however, is that the boundary state $\sket{\mathcal{J}}$ retains the simple sewing form as in Eq.~\eqref{eq:J2n_def}; this makes the corresponding conformal boundary condition transparent and provides a practical starting point for perturbative calculations of the $g$ factor, as illustrated for the TLL in Sec.~\ref{sec:fermion_TLL}.

\section{\label{sec:conformal_invariance}Conformal invariance of the boundary state}
As shown above, the universal contribution to the MRE is controlled by a boundary condition in the folded theory. To apply the BCFT formalism, we need to examine whether or not this boundary condition preserves conformal invariance even after applying the convolution unitary.
In the rotated-bulk picture, absorbing the convolution unitary into the bulk evolution leaves the boundary state in the elementary sewing form $\sket{\mathcal{J}}$ in Eq.~\eqref{eq:J2n_def}, while the unitary replaces the tensor-product bulk Hamiltonian by the coupled-replica Hamiltonian~\eqref{eq:Hrot_2n}.
Consequently, the conformal invariance of $\sket{\mathcal{J}}$ cannot be assumed \textit{a priori}, but must instead be carefully examined as a boundary condition within the resulting coupled-replica theory.
While addressing this issue for arbitrary theories is not straightforward, we here provide a useful criterion for a class of models with density-density interactions, which sets up the BCFT analysis of the TLL in Sec.~\ref{sec:fermion_TLL}.

\subsection{Interacting bosonic/fermionic Hamiltonians}

We consider a number-conserving lattice Hamiltonian with a general quadratic part and a density-density interaction,
\begin{align}
  H(\bm{a}^{\dag},\bm{a})
  = \sum_{ij}a_{i}^{\dag}T_{ij}a_{j}
  + \frac{1}{2}\sum_{ij}V_{ij}n_{i}n_{j},
  \quad
  n_{i}=a_{i}^{\dag}a_{i},
  \label{eq:density_density_H}
\end{align}
where $T_{ij}$ and $V_{ij}$ are real symmetric matrices, and the operators $a_{i}$ can be either bosonic or fermionic. Since the convolution unitary acts identically and independently on each of the four folded sectors $\mathsf{R},\tilde{\mathsf{R}},\mathsf{R}',\tilde{\mathsf{R}}'$, it suffices to consider a single sector. To be concrete, we focus on the sector $\mathsf{R}$ including $n$ replicas, whose unrotated Hamiltonian is
\begin{align}
  H_{\mathsf{R}}
  = \sum_{r=1}^{n}H\qty(\bm{a}_{r}^{\dag},\bm{a}_{r}).
  \label{eq:Htot_density}
\end{align}
Here we consider the most general case where the convolution unitary induces the orthogonal transformation among the $n$ replicas as in Eq.~\eqref{eq:U_BF}.
The density interaction can be split into a part that is invariant under the convolution rotation and a part that is not.
It is then useful to introduce the total (uniform) component of the replica-density vector,
\begin{align}
  N_{i}
   & = \frac{1}{\sqrt{n}}\sum_{r=1}^{n}n_{r,i},
  \label{eq:NDelta_density}
\end{align}
where $n_{r,i}=a_{r,i}^{\dag}a_{r,i}$, and $n-1$ relative densities $\Delta_{m,i}$ ($m=2,3,\ldots,n$) form an arbitrary orthonormal basis of the subspace orthogonal to $N_{i}$.
We note that the particular choice of relative densities is not important; only the split between the total-density direction and its orthogonal complement enters below.
Taken together, the single-sector Hamiltonian can be written as
\begin{align}
  H_{\mathsf{R}}
   & = H_{0}+H_{\Delta},
  \nonumber \\
  H_{0}
   & = \sum_{r=1}^{n}\sum_{ij}a_{r,i}^{\dag}T_{ij}a_{r,j}
  +\frac{1}{2}\sum_{ij}V_{ij}N_{i}N_{j},
  \nonumber \\
  H_{\Delta}
   & = \frac{1}{2}\sum_{m=2}^{n}\sum_{ij}
  V_{ij}\Delta_{m,i}\Delta_{m,j}.
  \label{eq:H0_HDelta_density}
\end{align}
The total-density part $H_{0}$ is invariant under the convolution unitary while the relative-density part $H_{\Delta}$ is not.
We thus write the rotated Hamiltonian as
\begin{align}
  H_{\mathsf{R}}^{\mathrm{rot}}
  = UH_{\mathsf{R}}U^{\dag}
  = H_{0}+H_{\Delta}^{\mathrm{rot}},
  \quad
  H_{\Delta}^{\mathrm{rot}}
  = UH_{\Delta}U^{\dag},
  \label{eq:Htilde_density}
\end{align}
where $H_{\Delta}^{\mathrm{rot}}$ is the rotated relative-density interaction.
Summing the four rotated sectors $\mathsf{R},\tilde{\mathsf{R}},\mathsf{R}',\tilde{\mathsf{R}}'$ gives the full $\mathcal{H}^{\mathrm{rot}}$ in Eq.~\eqref{eq:Hrot_2n}.

\subsection{Conformal invariance}

To examine the conformal invariance of $\sket{\mathcal{J}}$ within this coupled-replica theory, we must determine how the rotated interaction affects the bulk CFT and its boundary condition \cite{fredenhagen2007symmetries}. Since the rotated Hamiltonian is unitarily equivalent to the original one, its spectrum and bulk universality class are fixed by the original CFT.
To be concrete, we thus focus on the situations where $H_{0}$, $H_{\mathsf{R}}$, and $H_{\mathsf{R}}^{\mathrm{rot}}$ lie on the same conformal manifold \footnote{In general, if $H_{0}$ and $H_{\mathsf{R}}^{\mathrm{rot}}$ are not known to lie on the same conformal manifold, the infrared bulk fixed point should first be identified before the boundary CFT analysis can be applied.}; indeed, this condition is satisfied in the examples considered later.
The universal part of $H_{\Delta}^{\mathrm{rot}}$ is then represented by an exactly marginal bulk deformation of the fixed point associated with $H_{0}$, possibly accompanied by irrelevant corrections.
The resulting Euclidean action for the rotated bulk Hamiltonian $ H_{\mathsf{R}}^{\mathrm{rot}}$ may be written as \cite{komatsu2026continuous}
\begin{align}
  \mathcal{S}
  = \mathcal{S}_{0}+\mathcal{S}_{\lambda},
  \qquad
  \mathcal{S}_{\lambda}
  = -\lambda\int d^{2}z\,\mathcal{O}(z,\bar{z}),
  \label{eq:S_lambda_boundary}
\end{align}
where $\mathcal{S}_{0}$ is the fixed-point action associated with $H_{0}$, $\mathcal{O}$ is an exactly marginal bulk operator resulting from $H_{\Delta}^{\mathrm{rot}}$, and $\lambda$ is an effective coupling strength that vanishes in the Gaussian limit of the microscopic Hamiltonian~\eqref{eq:density_density_H}.

As detailed in Appendix~\ref{app:bulk_boundary}, when $\mathcal{O}$ is brought close to the boundary, it can induce boundary perturbations. A nontrivial question is whether or not this bulk deformation induces relevant boundary operators through the bulk-boundary operator product expansion (OPE)~\cite{fredenhagen2007bulkinduced}.
More specifically, a possible boundary flow can be characterized by the following bulk-induced boundary action:
\begin{align}
  \delta\mathcal{S}_{\partial}
  = \int dx\sum_{a}\mu_{a}(l)\,
  \Lambda_{\text{c}}^{1-h_{a}}\psi_{a}(x),
  \label{eq:boundary_action_general}
\end{align}
where $\psi_{a}$ is a boundary operator of scaling dimension $h_{a}$, $\Lambda_{\text{c}}$ is a running momentum cutoff, and $l=\ln(\Lambda_{\mathrm{UV}}/\Lambda_{\text{c}})$ is the logarithmic RG scale.
To leading order in the boundary couplings $\mu_a$, the boundary RG equation is given by (see Appendix~\ref{app:bulk_boundary})
\begin{align}
  \frac{d\mu_{a}}{dl}
  = (1-h_{a})\mu_{a}+C_{a}(\lambda)
  +O(\lambda\mu,\mu^{2}),
  \label{eq:boundary_RG_general}
\end{align}
where $C_{a}(\lambda)$ is the source term determined by the bulk-boundary OPE of $\mathcal{O}$.
We can infer that there are three main possible scenarios:
\begin{enumerate}
  \item All generated boundary operators $\psi_a$ are irrelevant, i.e., $h_{a}>1$ $\forall a$. This is a rather trivial case where the boundary condition remains the same, and the $g$ factor retains the value associated with the unperturbed theory.
  \item
        The identity channel having $h_{\boldsymbol{1}}=0$ is generated, while all the other channels are irrelevant. In this case, the boundary condition itself still remains conformally invariant and unchanged, but the $g$ factor and hence the universal contribution $s_n$ to the MRE can change from the unperturbed value.
  \item
        A nonidentity relevant operator with $h_{a}<1$ is generated. This is a case where the bulk deformation can induce a boundary RG flow to a different boundary fixed point. According to the $g$ theorem \cite{friedan2004boundary}, the $g$ factor in the IR limit should have a value less than the
        UV value, indicating the monotonic decrease of $s_n$ along the boundary RG flows.
\end{enumerate}
Consequently, the boundary condition imposed by $\sket{\mathcal{J}}$ in Eq.~\eqref{eq:J2n_def} remains stable and conformally invariant provided that the bulk-boundary OPE does not generate any relevant nonidentity boundary operators. When such a relevant operator is generated, it can drive a boundary transition toward a distinct infrared fixed point.
Below we demonstrate the appearance of both regimes of boundary stability and instability depending on the microscopic interaction strength in the TLL.

\section{\label{sec:fermion_TLL}Magic R\'enyi entropy in the TLL}

We now apply the BCFT formulation developed above to analyze the second-order MRE, $M=-\ln\nu_{2}(\rho)$, of the TLL realized in an interacting spinless-fermion chain. For the sake of notational simplicity, throughout this section we set $n=2$ and drop the index $n$, writing $M$ and $s$ for $M_2$ and $s_2$. The purpose of this section is to analytically evaluate Eq.~\eqref{eq:g_rot_bulk} and determine the universal contribution $s=\ln g_{\mathcal{J}}$. We also examine the scaling dimensions of the bulk-induced boundary operators, which predicts the occurrence of the boundary transitions at the TLL parameters $K=1/3,3$. Finally, we confirm these results by numerical calculations.

\subsection{\label{subsec:fermion_TLL_model}Bosonized rotated Hamiltonian}

We consider the half-filled interacting spinless-fermion chain
\begin{align}
  H
  = -t\sum_{i=1}^{L}
  \left(a_{i}^{\dag}a_{i+1}+\mathrm{H.c.}\right)
  + V\sum_{i=1}^{L}n_{i}n_{i+1}
  \label{eq:spinless_H}
\end{align}
with $n_{i}=a_{i}^{\dag}a_{i}$.
In the critical regime $-2t<V<2t$, its low-energy theory is governed by the TLL with the following velocity and TLL parameter
\begin{align}
  v
  = \pi t\,
  \frac{\sqrt{1-(V/2t)^{2}}}{\arccos(V/2t)},
  \qquad
  K
  = \frac{\pi}{2[\pi-\arccos(V/2t)]}.
  \label{eq:TLL_vK}
\end{align}
The free-fermion point is $K=1$, and repulsive (attractive) interactions give $K<1$ ($K>1$).

To be specific, we consider the balanced rotation acting on the two replicas $r=1,2$,
\begin{equation}
  U^\dag \begin{pmatrix}
    \bm{a}_1 \\ \bm{a}_2
  \end{pmatrix}U = \frac{1}{\sqrt{2}}\begin{pmatrix}
    1 & -1 \\
    1 & 1
  \end{pmatrix} \begin{pmatrix}
    \bm{a}_1 \\ \bm{a}_2
  \end{pmatrix},
  \label{eq:balanced_U_fermion}
\end{equation}
though we will generalize the analysis to an arbitrary rotation angle later. We recall that the vector $\bm{a}_r$ encapsulates all the modes $i=1,2,\ldots,L$ of replica $r$ (see Eq.~\eqref{avecdef}).
Applying the general decomposition~\eqref{eq:H0_HDelta_density} to this model, the Hamiltonian splits as $H_{\mathsf{R}}=H_{0}+H_{\Delta}$ with
\begin{align}
  H_{0}
  & = -t\sum_{i=1}^{L}\sum_{r=1}^{2}
  \left[a_{r,i}^{\dag}a_{r,i+1}+\mathrm{H.c.}\right]
  +V\sum_{i=1}^{L}N_{i}N_{i+1},\label{eq:H0micro}\\
  H_{\Delta}
  &= V\sum_{i=1}^{L}\Delta_{i}\Delta_{i+1},
  \label{eq:lattice_H_Delta_n2}
\end{align}
where $N_{i} = (n_{1,i}+n_{2,i})/\sqrt{2}$ and $\Delta_{i} = (n_{1,i}-n_{2,i})/\sqrt{2}$ are the uniform and relative densities with $n_{r,i}=a_{r,i}^{\dag}a_{r,i}$.
The convolution unitary leaves $H_{0}$ invariant and rotates only the relative density $H_{\Delta}$ to $H_{\Delta}^{\mathrm{rot}}$, namely
\begin{equation}
  H_{\Delta}^{\mathrm{rot}}
  = V\sum_{i=1}^{L}\Delta_{i}^{\mathrm{rot}}\Delta_{i+1}^{\mathrm{rot}},
  \qquad
  \Delta_{i}^{\mathrm{rot}} \equiv U\Delta_{i}U^{\dag}.
  \label{eq:lattice_rotated_H_n2}
\end{equation}
For the rotation~\eqref{eq:balanced_U_fermion}, the relative operator becomes an inter-replica hopping $\Delta_{i}^{\mathrm{rot}} = [a_{1,i}^{\dag}a_{2,i}+a_{2,i}^{\dag}a_{1,i}]/\sqrt{2}$.
This means that the convolution converts the relative-density interaction $H_{\Delta}$ into the inter-replica current-current interaction $H_{\Delta}^{\mathrm{rot}}$.

At low energies, the continuum fermion field can be decomposed into right and left movers as
\begin{align}
  a_{r}(x)
  \simeq e^{ip_{F}x}a_{rR}(x)
  +e^{-ip_{F}x}a_{rL}(x),
  \label{eq:fermion_chiral_decomp}
\end{align}
where we denote the continuum limit of ${a}_{r,i}$ by $a_{r}(x)$ with $r=1,2$ being the replica index.
We bosonize the chiral fields by
\begin{align}
  a_{rR}(x)
   & = \frac{\eta_{rR}}{\sqrt{L}}
  :e^{i[\theta_{r}(x)-\phi_{r}(x)-\pi x/L]}:,
  \nonumber \\
  a_{rL}(x)
   & = \frac{\eta_{rL}}{\sqrt{L}}
  :e^{i[\theta_{r}(x)+\phi_{r}(x)+\pi x/L]}:,
  \label{eq:fermion_bosonization}
\end{align}
where $\eta_{r\chi}$ are Klein factors, and the bosonic fields obey $\phi_{r}(x)=\phi_{r}(x+L)$ and $\theta_{r}(x)=\theta_{r}(x+L)$; throughout this paper, we set the lattice constant $a=1$.
The background shifts are included to implement the Neveu-Schwarz sector for the chiral fermions, i.e., $a_{r\chi}(x)=-a_{r\chi}(x+L)$, and the compactification is
\begin{align}
  \phi_{r}
   & \sim\phi_{r}+\pi Q_{r},
  \qquad
  \theta_{r}\sim\theta_{r}+\pi J_{r}. \label{eq:compact_cond}
\end{align}
Here, the winding numbers $Q_r,J_r\in\mathbb{Z}$ satisfy the selection rule
\begin{equation}
  Q_{r}
  \equiv J_{r}\pmod{2}.
  \label{eq:TLL_compactification}
\end{equation}
It is useful to introduce symmetric and antisymmetric bosonic fields by
\begin{align}
  \theta_{\pm}
  = \frac{\theta_{1}\pm\theta_{2}}{\sqrt{2}},
  \qquad
  \phi_{\pm}
  = \frac{\phi_{1}\pm\phi_{2}}{\sqrt{2}}.
  \label{eq:TLL_pm_fields}
\end{align}
The low-energy effective field theory of the replica-singlet part $H_0$ in Eq.~\eqref{eq:H0micro} can then be obtained by
\begin{align}
  H_{0}
  \simeq \sum_{s=\pm}
  \frac{v_{s}}{2\pi}\int dx
  \left[
    K_{s}(\partial_{x}\theta_{s})^{2}
    +\frac{1}{K_{s}}(\partial_{x}\phi_{s})^{2}
    \right],
  \label{eq:TLL_H0_bosonized}
\end{align}
which is the Gaussian boson CFT with the following parameters for each sector
\begin{align}
   & v_{+}=v,\qquad
  v_{-}=\frac{K+K^{-1}}{2}v,
  \nonumber \\
   & K_{+}=K,\qquad
  K_{-}=1.
  \label{eq:TLL_parameters_rotated}
\end{align}
In contrast, the bosonization of the relative inter-replica interaction in Eq.~\eqref{eq:lattice_rotated_H_n2} gives the non-Gaussian contribution that depends only on the antisymmetric sector:
\begin{align}
  H_{\Delta}^{\mathrm{rot}}
   & \simeq -\frac{\lambda}{L^{2}}\int dx
  \left[
    :\cos(2\sqrt{2}\phi_{-}):
    +:\cos(2\sqrt{2}\theta_{-}):
    \right],
  \nonumber \\
  \lambda
   & = \frac{v(K-K^{-1})}{4\pi},
  \label{eq:TLL_HDelta_bosonized}
\end{align}
where $:\; :$ represents the normal ordering.
This perturbation is exactly marginal as it can be derived from the SU(2) rotation of the current-current interaction of the TLL (see Appendix~\ref{app:current_rotation}).

\subsection{\label{subsec:fermion_TLL_boundary}Boundary state of the folded theory}

After folding, the replicated theory for calculating $M$ involves $8$ bosonic sheets in total (corresponding to the $n=2$ case in Fig.~\ref{fig:MRE_BCFT}).
Following the labeling of Eq.~\eqref{eq:J2n_def}, we organize these sheets into the four replica sectors $\mathsf{R}=(1,2)$, $\mathsf{R}'=(1',2')$, $\tilde{\mathsf{R}}=(\tilde{1},\tilde{2})$, and $\tilde{\mathsf{R}}'=(\tilde{1}',\tilde{2}')$, where the tilde (prime) denotes the auxiliary (purity) partner.
The boundary state $\sket{\mathcal{J}}$ in Eq.~\eqref{eq:J2n_def} can be written as
\begin{equation}
  \sket{\mathcal{J}}
  = \sket{I}_{1\tilde{1}'}
  \otimes\sket{I}_{\tilde{1}1'}
  \otimes\sket{I}_{2\tilde{2}}
  \otimes\sket{I}_{2'\tilde{2}'}.
\end{equation}
Each MES imposes the sewing condition on the fermionic field (cf. Eq.~\eqref{eq:sewing_fermion}).
After bosonization, the boundary state $\sket{\mathcal{J}}$ thus imposes the sewing conditions on the bosonic fields as~\footnote{While the overall sign appearing in Eq.~\eqref{eq:sewing_fermion} shifts the bosonic fields by $\pi$, this can be absorbed into the compactification condition.}
\begin{align}
  \phi_{1} & =\phi_{\tilde{1}'},\;\;\;\;
  \phi_{\tilde{1}}=\phi_{1'},\;\;\;\;
  \phi_{2}=\phi_{\tilde{2}},\;\;\;\;
  \phi_{2'}=\phi_{\tilde{2}'},
  \label{eq:J_phi_TLL} \\
  \theta_{1} & =-\theta_{\tilde{1}'},\;\;
  \theta_{\tilde{1}}=-\theta_{1'},\;\;
  \theta_{2}=-\theta_{\tilde{2}},\;\;
  \theta_{2'}=-\theta_{\tilde{2}'},
  \label{eq:J_theta_TLL}
\end{align}
which define the so-called mixed Dirichlet-Neumann conditions for the compact bosons. Here, the replica retained after the convolution is sewn across the two purity branches while the discarded replica is sewn to its auxiliary copy. This imposes a nontrivial boundary condition when the bulk becomes the coupled-replica theory, and the universal constant in the MRE is controlled by the boundary entropy of this gluing condition.

The zero-mode lattice is easiest to describe after grouping the $8$ sheets into the $4$ replica sectors and separating each sector into symmetric and antisymmetric fields.
Relabeling the sectors $\mathsf{R},\mathsf{R}',\tilde{\mathsf{R}},\tilde{\mathsf{R}}'$ by the index $m=1,2,3,4$ in this order, we define the linear combinations $\phi_{m,\pm}$ by
\begin{align}
  \phi_{1,\pm}
   & = \frac{\phi_{1}\pm\phi_{2}}{\sqrt{2}},
   &
  \phi_{2,\pm}
   & = \frac{\phi_{1'}\pm\phi_{2'}}{\sqrt{2}},
  \nonumber \\
  \phi_{3,\pm}
   & = \frac{\phi_{\tilde{1}}\pm\phi_{\tilde{2}}}{\sqrt{2}},
   &
  \phi_{4,\pm}
   & = \frac{\phi_{\tilde{1}'}\pm\phi_{\tilde{2}'}}{\sqrt{2}},
  \label{eq:four_sector_pm_phi}
\end{align}
and also introduce $\theta_{m,\pm}$ in the same manner.
The canonical compact-boson variables are obtained by absorbing the TLL parameters $K_{\pm}$ into the compactification radii.
This rescaling moves all dependence on the compactification radii into the zero-mode lattice, which allows us to extract the boundary entropy from the lattice volume~\cite{oshikawa2010boundary,furukawa2011entanglement,ashida2024systemenvironment}.
We thus use the following rescaled $8$-component fields:
\begin{align}
  \bm{\Phi}
   & =
  (\Phi_{1,+},\Phi_{1,-},\ldots,\Phi_{4,+},\Phi_{4,-})^{\rm T},
  \nonumber \\
  \bm{\Theta}
   & =
  (\Theta_{1,+},\Theta_{1,-},\ldots,\Theta_{4,+},\Theta_{4,-})^{\rm T},
  \nonumber \\
  \Phi_{m,\pm}
   & =
  \frac{\phi_{m,\pm}}{\sqrt{K_{\pm}}},
  \qquad
  \Theta_{m,\pm}
  =\sqrt{K_{\pm}}\theta_{m,\pm}.
  \label{eq:PhiTheta_TLL}
\end{align}
For the boundary-state construction, we abbreviate the velocity factors $v_\pm$ because they do not affect the gluing condition; the velocities will be restored only when the open-channel zero-point energy is evaluated later.
With this convention, the Gaussian part of the replicated Hamiltonian is (see Eq.~\eqref{eq:TLL_H0_bosonized})
\begin{align}
  \mathcal{H}_{0}
  = \int\frac{dx}{2\pi}
  \left[
    (\partial_{x}\bm{\Phi})^{2}
    +(\partial_{x}\bm{\Theta})^{2}
    \right].
  \label{eq:eight_boson_H0}
\end{align}

We can expand the bosonic fields in terms of the oscillator modes and the zero modes generated by the windings along the spatial direction.
The oscillatory part is written as
\begin{align}
  \bm{A}_{\ell}^{\pm}(x)
  =
  \bm{a}_{\ell,L}e^{-ik_{\ell}x}
  \pm\bm{a}_{\ell,R}e^{ik_{\ell}x}
  +\mathrm{H.c.},
  \label{eq:oscillator_modes_TLL}
\end{align}
where $\bm{a}_{\ell,L(R)}$ is a vector of annihilation operators of left- (right-) moving oscillator modes having quantum number $\ell$ and $ k_{\ell}=2\pi\ell/L$.
The fields are then expanded as
\begin{align}
  \bm{\Phi}(x)
   & = \bm{\Phi}_{0}
  +\frac{2\pi}{L}\bm{T}x
  +\sum_{\ell>0}\frac{1}{\sqrt{4\ell}}
  \bm{A}_{\ell}^{+}(x),
  \nonumber \\
  \bm{\Theta}(x)
   & = \bm{\Theta}_{0}
  +\frac{2\pi}{L}\bm{W}x
  +\sum_{\ell>0}\frac{1}{\sqrt{4\ell}}
  \bm{A}_{\ell}^{-}(x),
  \label{eq:PhiTheta_modes_TLL}
\end{align}
where $\bm{\Phi}_{0}$ and $\bm{\Theta}_{0}$ are the zero-mode variables, and $\bm{W}$ and $\bm{T}$ are their conjugates. These operators satisfy the commutation relations
\begin{align}
  [(\boldsymbol{\Phi}_{0})_{i},(\boldsymbol{W})_{j}]&=\frac{i}{2}\delta_{ij},\;\;[(\boldsymbol{\Theta}_{0})_{i},(\boldsymbol{T})_{j}]=\frac{i}{2}\delta_{ij},\\
  [(\boldsymbol{a}_{n,\chi})_{i},(\boldsymbol{a}_{m,\zeta}^{\dagger})_{j}]&=\delta_{nm}\delta_{\chi\zeta}\delta_{ij},
\end{align}
where $\chi,\zeta\in\{ L,R\}$, and $n,m\in\mathbb{N}$.
Using these mode expansions, the Hamiltonian becomes
\begin{align}
  \mathcal{H}_{0}
  = \frac{2\pi}{L}
  \left[
    \bm{T}^{2}
    +\bm{W}^{2}
    +\sum_{\ell=1}^{\infty}\sum_{\chi=L,R}
    \ell\,\bm{a}_{\ell,\chi}^{\dag}\cdot\bm{a}_{\ell,\chi}
    -\frac{2}{3}
    \right],
  \label{eq:eight_boson_H0_modes}
\end{align}
where the constant $-2/3$ comes from the zero-point energy $-c_{\rm tot}/12$ of the eight compact bosons having the total central charge $c_{\rm tot}=8$.

The MES boundary conditions become an orthogonal gluing condition for the chiral compact bosons.
Specifically, using $\bm{\Phi}_{R/L}=\bm{\Phi}\mp\bm{\Theta}$,
the boundary conditions~\eqref{eq:J_phi_TLL} and~\eqref{eq:J_theta_TLL} can be summarized as
\begin{align}
  \partial_{x}\left(\bm{\Phi}_{L}-G\bm{\Phi}_{R}\right)\sket{\mathcal{J}}=0
  \qquad
  \text{for all }x,
  \label{eq:J_gluing_O}
\end{align}
where $G$ is a symmetric orthogonal matrix fixed by the gluing conditions and compactification data.
The matrix $G$ tells us which linear combinations obey Neumann or Dirichlet conditions.
The compactification lattice of the zero modes, denoted by $\Lambda$, must be compatible with the gluing condition~\eqref{eq:J_gluing_O} and the selection rule~\eqref{eq:TLL_compactification}, which constrains the allowed winding numbers.
We can now explicitly construct the boundary state as
\begin{align}
  \sket{\mathcal{J}}
  = g_{\mathcal{J}}\,
  \exp\!\left[
   \sum_{\ell=1}^{\infty}
   \bm{a}_{\ell,L}^{\dag}\cdot G\bm{a}_{\ell,R}^{\dag}
   \right]
  \sum_{\bm{\lambda}\in\Lambda}\sket{\bm{\lambda}},
  \label{eq:J_boundary_state_TLL}
\end{align}
where $\bm{\lambda}=(\bm{T},\bm{W})^{\rm T}$ is a composite vector of the winding numbers. Explicit forms of $G$ and $\Lambda$ for the boundary conditions~\eqref{eq:J_phi_TLL} and \eqref{eq:J_theta_TLL} are given in Appendix~\ref{app:compact}.

In general, a class of states obeying Eq.~\eqref{eq:J_gluing_O} defines conformal boundary states of free-boson CFTs. Indeed, it is easy to explicitly check their conformal invariance by
\begin{align}
  \left[T(x)-\bar{T}(x)\right]\sket{\mathcal{J}}
   & =0,
  \label{eq:J_conformal_free}
\end{align}
where $T
    =(\partial_{x}\bm{\Phi}_{R})^{2}
$ and
$ \bar{T}=(\partial_{x}\bm{\Phi}_{L})^{2}$ are the stress tensors.
Consequently, the Gaussian part $\mathcal{H}_0$ of the rotated theory reduces to a standard BCFT problem, for which $\sket{\mathcal{J}}$ remains conformally invariant, and its universal $g$ factor can be evaluated exactly.
This boundary state remains stable in the presence of the non-Gaussian bulk perturbation as long as the bulk-induced boundary OPE generates no nonidentity relevant channels. Nevertheless, the non-Gaussian perturbation in Eq.~\eqref{eq:TLL_HDelta_bosonized} can modify the normalization factor $g_{\mathcal{J}}$ through the identity channel in a nontrivial manner as shown below.

\subsection{\label{subsec:fermion_TLL_gfactor}Universal contribution to the MRE}
To analytically evaluate the value of $g_{\mathcal{J}}$, we here consider a cylinder geometry of the 8-component replicated theory with the same boundary condition $\mathcal{J}$ at both ends.
This auxiliary partition function is not the MRE partition function itself, but it gives a useful way to normalize the boundary state because the closed-channel amplitude naturally contains the factor of $g_{\mathcal{J}}^{2}$.
After a modular transformation, this $\mathcal{J}\mathcal{J}$ cylinder amplitude must be interpretable as an open-channel trace with the identity operator as the leading state.
Equating the two descriptions fixes the normalization of $\sket{\mathcal{J}}$, which gives the universal $g$ factor entering the MRE~\eqref{eq:Mn_BCFT}.

\subsubsection*{Duality and perturbative analysis}

The partition function of the cylinder theory is
\begin{align}
  Z_{\mathcal{J}\mathcal{J}}
  = \sbra{\mathcal{J}}
  e^{-(\beta/2)\mathcal{H}^{\mathrm{rot}}}
  \sket{\mathcal{J}},
  \qquad
  \mathcal{H}^{\mathrm{rot}}
  = \mathcal{H}_{0}+\mathcal{H}_{\Delta}^{\mathrm{rot}},
  \label{eq:ZJJ_TLL}
\end{align}
where $\mathcal{H}_{\Delta}^{\mathrm{rot}}$ is the replicated version of Eq.~\eqref{eq:TLL_HDelta_bosonized}, for which we shall label each sector of the replicas by $(m,-)$ with $m=1,2,3,4$ as consistent with the notation used in Eqs.~\eqref{eq:four_sector_pm_phi} and \eqref{eq:PhiTheta_TLL}.
Our strategy is to first evaluate the partition function exactly for the Gaussian part $ \mathcal{H}_{0}$ and then include the corrections due to the non-Gaussian part $\mathcal{H}_{\Delta}^{\mathrm{rot}}$ perturbatively.

For this purpose, it is noteworthy that the $\mathcal{J}\mathcal{J}$ cylinder geometry satisfies the duality under $K\leftrightarrow K^{-1}$, which constrains the perturbative expansion around $K=1$. To show this, it suffices to see that the boundary conditions~\eqref{eq:J_phi_TLL} and~\eqref{eq:J_theta_TLL} are invariant under $K\leftrightarrow K^{-1}$ and $\phi\leftrightarrow\theta$, up to the particle-hole symmetry transformation $(\phi,\theta)\to(-\phi,-\theta)$ on four of the sheets.
This observation shows that $Z_{\mathcal{J}\mathcal{J}}$ and $g_{\mathcal{J}}$ are even functions of the following parameter
\begin{align}
  \gamma
  \equiv \frac{K-1}{K+1}.
  \label{eq:gamma_TLL}
\end{align}
This duality thus forbids odd terms in $\gamma$ in the boundary entropy, which, in particular, indicates that the leading universal change of $g_{\mathcal{J}}$ is $O(\gamma^{2})$.

More specifically, we can expand the partition function as
\begin{equation}
  Z_{\mathcal{J}\mathcal{J}}=Z_{\mathcal{J}\mathcal{J}}^{(0)}+\delta Z_{\mathcal{J}\mathcal{J}}+O(\gamma^{4}),\label{eq:zggp}
\end{equation}
where the Gaussian part is given by
\begin{equation}
  Z_{\mathcal{J}\mathcal{J}}^{(0)}= \sbra{\mathcal{J}}e^{-\frac{\beta}{2}\mathcal{H}_{0}} \sket{\mathcal{J}},
\end{equation}
and the leading $O(\gamma^2)$ correction due to the non-Gaussian part is
\begin{align}
  \delta Z_{\mathcal{J}\mathcal{J}}&=\frac{1}{2}\int_{0}^{\beta/2}d\tau_{1}d\tau_{2} \nonumber\\
  \times&
  \sbra{\mathcal{J}}e^{-\frac{\beta}{2}\mathcal{H}_{0}}{\cal T}_{\tau}\left[\mathcal{H}_{\Delta}^{\mathrm{rot}}(\tau_{1})\mathcal{H}_{\Delta}^{\mathrm{rot}}(\tau_{2})\right] \sket{\mathcal{J}}.
  \label{eq:delta_Z}
\end{align}
Here, the imaginary-time Heisenberg representation is defined by
\begin{equation}
  A(\tau)\equiv e^{\tau\mathcal{H}_{0}}Ae^{-\tau\mathcal{H}_{0}},
\end{equation}
and ${\cal T}_\tau$ represents the imaginary-time ordering.

\subsubsection*{Gaussian part}
We can exactly evaluate the Gaussian part of the partition function by
\begin{align}
  Z_{\mathcal{J}\mathcal{J}}^{(0)}
  = \frac{g_{\mathcal{J}}^{2}}{\eta(q)^{8}}
  \sum_{\bm{\lambda}\in\Lambda}
  q^{\bm{\lambda}^{2}/2},
  \label{eq:ZJJ_gaussian}
\end{align}
where $q=e^{-2\pi\beta/L}$, and $\eta(q)=q^{1/24}\prod_{n=1}^{\infty}\left(1-q^{n}\right)$ is the Dedekind eta function; the factor $\eta(q)^{-8}$ is the oscillator determinant of the 8 compact bosons, while the sum over $\Lambda$ corresponds to the zero-mode contributions.

This Gaussian contribution already leads to a $K$-dependent normalization factor. To show this, we use the Poisson resummation to obtain
\begin{align}
  Z_{\mathcal{J}\mathcal{J}}^{(0)}
  = \frac{g_{\mathcal{J}}^{2}}{\mathcal{V}_{\Lambda}\eta(\tilde q)^{8}}
  \sum_{\bm{\lambda}^{*}\in\Lambda^{*}}
  \tilde q^{\bm{\lambda}^{*2}/2}
  \label{eq:ZJJ_poisson}
\end{align}
with $ \tilde q=e^{-2\pi L/\beta}.$
Here, the unit-cell volume of the compactification lattice is (see Appendix~\ref{app:compact})
\begin{align}
  \mathcal{V}_{\Lambda}
  = \frac{(K+1)^{2}}{4K}
  = \frac{1}{1-\gamma^{2}},
  \label{eq:VLambda_TLL}
\end{align}
and the factor $\mathcal{V}_{\Lambda}^{-1}$ appears because the Poisson resummation converts the closed-channel zero-mode sum into an open-channel sum over the dual lattice.

The long-cylinder limit separates the open-channel energy from the boundary entropy.
In the modular-transformed regime $1\ll\beta\ll L$, and after restoring the velocities in the zero-point energy, Eq.~\eqref{eq:ZJJ_poisson} becomes
\begin{align}
  Z_{\mathcal{J}\mathcal{J}}^{(0)}
  \simeq
  \frac{g_{\mathcal{J}}^{2}}{\mathcal{V}_{\Lambda}}
  \exp\!\left[
   \frac{\pi L}{3\beta}
   \left(\frac{1}{v_{+}}+\frac{1}{v_{-}}\right)
   \right].
  \label{eq:ZJJ_gaussian_long}
\end{align}
The multiplicative constant contributes to the boundary entropy, while the exponential in Eq.~\eqref{eq:ZJJ_gaussian_long} is the open-channel Casimir contribution.

\subsubsection*{Non-Gaussian correction}

The leading correction from the non-Gaussian term $\mathcal{H}_{\Delta}^{\mathrm{rot}}$ is second order in $\gamma$.
Since the coupling coefficient $\lambda$ in Eq.~\eqref{eq:TLL_HDelta_bosonized} is already proportional to $K-K^{-1}$, it suffices to evaluate the second-order correlators in $\delta Z_{\mathcal{J}\mathcal{J}}$ at $K=1$.
Performing the perturbative calculations within the $K=1$ Gaussian theory, we get (see Appendix~\ref{app:perturbation})
\begin{equation}
  \frac{\delta Z_{\mathcal{J}\mathcal{J}}}{Z_{\mathcal{J}\mathcal{J}}^{(0)}}
  \simeq \gamma^{2}
  \left[
    -\frac{2\pi\beta}{3L}E_{2}(q)
    +\frac{8\pi^{2}\beta^{2}}{L^{2}}
    \left(2q\frac{d}{dq}\ln\vartheta_{3}(q)\right)^{2}
    \right],
  \label{eq:deltaZ_second_order_TLL}
\end{equation}
where
\begin{align}
  E_{2}(q)
  = 1-24\sum_{\ell=1}^{\infty}\frac{\ell q^{\ell}}{1-q^{\ell}},
  \qquad
  \vartheta_{3}(q)
  = \sum_{m\in\mathbb{Z}}q^{m^{2}/2}.
  \label{eq:E2_theta_TLL}
\end{align}
The first term in Eq.~\eqref{eq:deltaZ_second_order_TLL} comes from oscillator modes and is governed by the Eisenstein series $E_{2}$, while the second term corresponds to the zero-mode contribution expressed by the theta function $\vartheta_{3}$.

The boundary entropy is obtained from the constant part that remains after the open-channel energies are removed.
To isolate this part, we need to perform a modular transformation on Eq.~\eqref{eq:deltaZ_second_order_TLL}. To this end,
we use
\begin{align}
  E_{2}(\tilde q)
  = -\frac{\beta^{2}}{L^{2}}E_{2}(q)
  +\frac{6}{\pi}\frac{\beta}{L},
  \qquad
  \vartheta_{3}(\tilde q)
  = \sqrt{\frac{\beta}{L}}\,\vartheta_{3}(q),
  \label{eq:modular_E2_theta_TLL}
\end{align}
and obtain the leading contribution as
\begin{align}
  \frac{\delta Z_{\mathcal{J}\mathcal{J}}}{Z_{\mathcal{J}\mathcal{J}}^{(0)}}
  \simeq
  \gamma^{2}
  \left(
  -2+\frac{2\pi L}{3\beta}
  \right)
  \label{eq:deltaZ_long_TLL}
\end{align}
in the long-cylinder limit $1\ll\beta\ll L$.
The constant term $-2\gamma^2$ is the correction to the boundary entropy, while the term proportional to $L/\beta$ is an open-channel energy correction.

\subsubsection*{Universal constant term}
From Eqs.~\eqref{eq:zggp}, \eqref{eq:ZJJ_gaussian_long}, and \eqref{eq:deltaZ_long_TLL}, we obtain
\begin{align}
  Z_{\mathcal{J}\mathcal{J}}
  \simeq
  g_{\mathcal{J}}^{2}(1-3\gamma^{2})
  \exp\!\left(\frac{2\pi L}{3\beta}\right),
  \label{eq:ZJJ_combined_TLL}
\end{align}
where we use the expansions $\mathcal{V}_{\Lambda}^{-1}\simeq1-\gamma^{2}$ and $v_{-}\simeq v(1+2\gamma^{2})$. We note that all $\gamma^{2}L/\beta$ terms precisely cancel, which is a useful consistency check that the final result is a genuine boundary quantity without a leftover Casimir-energy correction. Meanwhile, the open-channel picture gives
\begin{equation}
  Z_{\mathcal{J}\mathcal{J}}={\rm tr}\left[e^{-L\mathcal{H}_{\mathcal{J}\mathcal{J}}}\right]\stackrel{1\ll\beta\ll L}{\simeq} \exp\!\left(\frac{2\pi L}{3\beta}\right),
  \label{eq:open_channel_identity_TLL}
\end{equation}
where $\mathcal{H}_{\mathcal{J}\mathcal{J}}$ is the 8-component theory of length $\beta/2$ with the boundary $\mathcal{J}$ at both ends, and we used the fact that the zero-point energy of the segment of length $\beta/2$ is $-c_{\rm tot}\pi/(12\beta)$ with $c_{\rm tot}=8$.
Comparing Eqs.~\eqref{eq:ZJJ_combined_TLL} and \eqref{eq:open_channel_identity_TLL}, we have
\begin{align}
  g_{\mathcal{J}}
  = 1+\frac{3}{2}\gamma^{2}+O(\gamma^{4}).
  \label{eq:gGamma_TLL}
\end{align}
Therefore, the universal constant term of the second-order MRE is given as
\begin{align}
  s = \frac{3}{2}\gamma^{2}+O(\gamma^{4}).
  \label{eq:s2_TLL_universal}
\end{align}
It is now evident that the size-independent term is solely determined by the TLL parameter $K$.

\subsubsection*{Convolution with arbitrary rotation angle}

The above perturbative analysis can be extended to a general two-replica rotation angle.
Specifically, we consider the two-replica beam splitter $U(\zeta)$ introduced in Eq.~\eqref{eq:O_case1}, with the mixing angle $\zeta$ now kept arbitrary,
\begin{align}
  U^\dag(\zeta)
  \begin{pmatrix}
    a_{1\chi} \\
    a_{2\chi}
  \end{pmatrix}
  U(\zeta)
  =
  \begin{pmatrix}
    \cos\zeta & -\sin\zeta \\
    \sin\zeta & \cos\zeta
  \end{pmatrix}
  \begin{pmatrix}
    a_{1\chi} \\
    a_{2\chi}
  \end{pmatrix}.
  \label{eq:general_angle_U_TLL}
\end{align}
The resulting leading contribution to the $g$ factor is (see Appendix~\ref{app:general_angle})
\begin{align}
  g_{\mathcal{J}}(\zeta)
  = 1+\gamma^{2}
  \left[
    2\sin^{2}(2\zeta)
    -\frac{1}{2}\sin^{4}(2\zeta)
    \right]
  +O(\gamma^{4}),
  \label{eq:gGamma_angle_TLL}
\end{align}
which leads to
\begin{equation}
  s(\zeta) = \gamma^{2}
  \left[
    2\sin^{2}(2\zeta)
    -\frac{1}{2}\sin^{4}(2\zeta)
    \right]
  +O(\gamma^{4}).\label{eq:s2_TLL_universal_angle}
\end{equation}
The balanced convolution corresponds to $\zeta=\pi/4$, for which Eq.~\eqref{eq:s2_TLL_universal_angle} reduces to Eq.~\eqref{eq:s2_TLL_universal}.

\subsection{\label{subsec:fermion_TLL_stability}Boundary stability}

The final step is to use the criterion discussed in Sec.~\ref{sec:conformal_invariance} to examine whether the boundary condition $\mathcal{J}$ remains the infrared boundary fixed point of the perturbed theory.
At the boundary, the perturbing operator ${\cal O}\propto\sum_{m=1}^{4}[\cos(2\sqrt{2}\phi_{m,-})+\cos(2\sqrt{2}\theta_{m,-})]$ carries a zero-mode charge as follows (see Appendix~\ref{app:perturbation}):
\begin{equation}
  {\cal O}\propto\sum_{m=1}^{4}\sum_{\sigma,\tilde{\sigma}=\pm}e^{i\sqrt{2}\boldsymbol{q}_{m}^{\sigma\tilde{\sigma}}\cdot\boldsymbol{\Phi}_{R}},
\end{equation}
where we represent the vertex operator as a sum of chiral vertex operators with charges
\begin{align}
  \bm{q}_{m}^{\sigma\tilde{\sigma}}
  = \sigma\bm{e}_{m}+\tilde{\sigma}\tilde{\bm{e}}_{m},
  \qquad
  \tilde{\bm{e}}_{m}=G^{\rm T}\bm{e}_{m},
  \qquad
  \sigma,\tilde{\sigma}=\pm1.
  \label{eq:boundary_charge_element_TLL}
\end{align}
Here, $\bm{e}_{m}$ is the unit vector for the $(m,-)$ component, and the explicit form of the gluing orthogonal matrix $G$ is given in Appendix~\ref{app:compact}.
In general, a total charge of a boundary operator $\psi_a$ generated at any perturbative order takes the form
\begin{align}
  \bm{Q}_{a}
  & = \sum_{m=1}^{4}
  \left(t_{m}\bm{e}_{m}+u_{m}\tilde{\bm{e}}_{m}\right),
\end{align}
where we introduce the integers
\begin{equation}
  t_{m},u_{m}\in\mathbb{Z},\qquad
  t_{m}\equiv u_{m}\pmod{2}.
  \label{eq:boundary_charge_TLL}
\end{equation}
The boundary scaling dimension of $\psi_a$ is then given by
\begin{align}
  h_{a}=\bm{Q}_{a}^{2}.
  \label{eq:boundary_dimension_TLL}
\end{align}
At even perturbative orders, charge-neutral combinations
$Q_{\boldsymbol{1}}=0$ can appear, which gives rise to the identity channel with $h_{\boldsymbol{1}}=0$ and can renormalize the boundary entropy. This indicates that the $g$ factor must be an even function of $\lambda$, which is consistent with the duality $K\leftrightarrow K^{-1}$ discussed above. Note that the current channel $h_{\partial\boldsymbol{\Phi}}=1$ is forbidden by the symmetry under $\boldsymbol{\Phi}_{\chi}\to-\boldsymbol{\Phi}_{\chi}$ of the Gaussian bulk theory, the sewing boundary, and the perturbing operator. Thus, the neutral sector contributes only to the identity or the operators with higher-order derivatives having $h_{a}\geq 2$.

The stability criterion is then controlled by the smallest nonzero scaling dimension in the charged sectors, which can be given by
\begin{align}
  h_{\mathrm{min}}(K)
  = \frac{4\min(K,1)}{K+1}
  = 2(1-|\gamma|).
  \label{eq:hmin_TLL}
\end{align}
Thus, all nonidentity boundary operators induced by the bulk perturbation are irrelevant when
\begin{align}
  h_{\mathrm{min}}(K)>1
  \quad\Longleftrightarrow\quad
  \frac{1}{3}<K<3
  \quad\Longleftrightarrow\quad
  |\gamma|<\frac{1}{2}.
  \label{eq:TLL_stability_window}
\end{align}
Within this window, the identity channel can continuously change $g_{\mathcal{J}}$ as in Eq.~\eqref{eq:gGamma_TLL} while preserving the boundary condition $\mathcal{J}$.
In contrast, when $|\gamma|>1/2$, a relevant boundary operator can be generated, which can destabilize the sewing boundary $\mathcal{J}$, leading to a boundary RG flow to a different boundary fixed point. We thus expect the boundary transitions occurring at $K=1/3,3$. In the present microscopic model~\eqref{eq:spinless_H}, the lower threshold $K=1/3$ lies beyond the bulk transition at $K=1/2$, while the upper threshold $K=3$ is reachable on the attractive side.
The marginal endpoint $h_{\mathrm{min}}=1$ requires a separate analysis, but the symmetry argument above shows that the potentially dangerous current channel is absent in the present model.

\begin{figure}[b]
  \centering
  \includegraphics[width=0.45\textwidth]{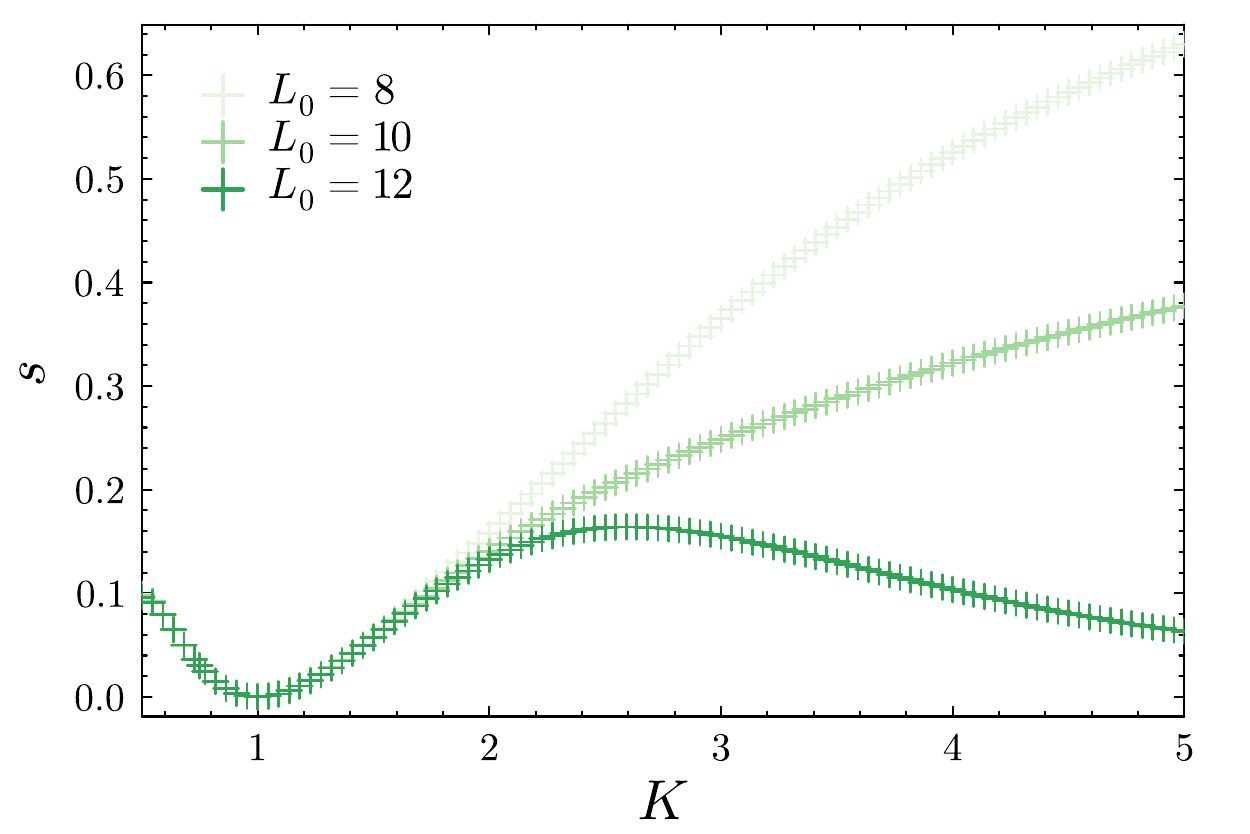}
  \caption{Numerical results of the universal constant $s(K,L_0)$ obtained by fitting the numerical data of $M$ to the scaling form~\eqref{eq:M2_fitting_TLL} with $L=L_0-2,L_0,L_0+2$ with $L_0=8,10,12$.}
  \label{fig:mre_original}
\end{figure}
\subsection{\label{subsec:fermion_TLL_numerical}Numerical results}

To numerically confirm the above analytical predictions, we compute the second-order MRE for the lattice model~\eqref{eq:spinless_H} using exact diagonalization.
We first determine the ground state by the Lanczos method, with the boundary condition taken to be antiperiodic (periodic) when the particle number $N=L/2$ at half filling is even (odd).
We then construct the convolved state by expressing two replicas as a single chain of length $2L$ and applying the convolution unitary on sites $i$ and $i+L$ with $i=1,2,\dots,L$. Finally, we compute $M$ as the second-order R\'enyi entanglement entropy between the two halves of the chain partitioned at its center, and determine the universal contribution $s$ by fitting $M$ to a scaling form
\begin{equation}
  M = m L - s + \frac{d}{L}.
  \label{eq:M2_fitting_TLL}
\end{equation}
We perform the fit over three consecutive even sizes $L$ centered at $L_0$ by Eq.~\eqref{eq:M2_fitting_TLL} to determine $s$ for each system size $L_0$ and denote the resulting universal constant by $s(K,L_0)$.
The obtained numerical results are shown in Fig.~\ref{fig:mre_original}. They exhibit the vanishing entropy at the free-fermion point $K=1$ and a nonmonotonic behavior of $s$ as a function of the TLL parameter $K$.
Building on this numerical result, below we examine our field-theoretical arguments in more detail from several perspectives, including the invariance under the duality $K\leftrightarrow K^{-1}$, perturbative analysis near the free-fermion point $K=1$, and the predicted boundary transition at $K_c=3$.

\subsubsection*{Duality}

To examine the invariance of $s$ under the duality $K\leftrightarrow K^{-1}$, we compute $s$ for $1/2<K<2$ and plot $s$ against $|\gamma|$ in Fig.~\ref{fig:mre_duality}.
Note that $|\gamma|=|(K-1)/(K+1)|$ is invariant under the duality transformation, and thus $s$ must be a single-valued function of $|\gamma|$ if the duality holds.
In Fig.~\ref{fig:mre_duality}, we see that the data points collapse on a single curve, except for the region close to the bulk transition point $|\gamma|=1/3$ (i.e., $K=1/2$), where a substantial finite-size effect is expected. Thus, Fig.~\ref{fig:mre_duality} validates our nonperturbative argument that $s$ is invariant under the duality.
\begin{figure}[b]
  \centering
  \includegraphics[width=0.45\textwidth]{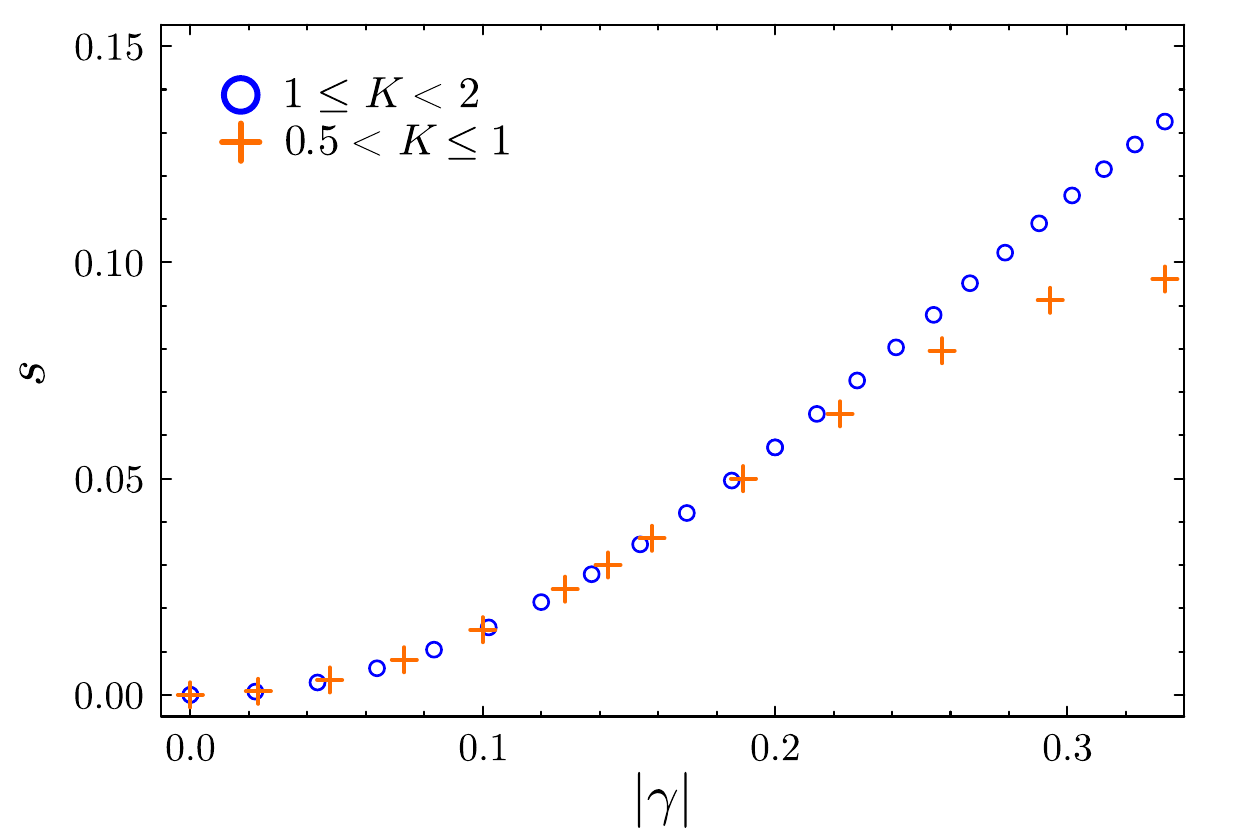}
  \caption{The universal constant term $s$ for $1/2<K<2$ against $|\gamma|=|K-1|/|K+1|$, which is invariant under the duality transformation $K\leftrightarrow K^{-1}$. The data points are obtained by fitting the numerical data of $M$ to the scaling form~\eqref{eq:M2_fitting_TLL} with $L=10,12,14\;(L_0=12)$.}
  \label{fig:mre_duality}
\end{figure}

\subsubsection*{Perturbative formula}
We now test our perturbative prediction~\eqref{eq:s2_TLL_universal} numerically.
Since Eq.~\eqref{eq:s2_TLL_universal} is obtained from lowest-order perturbation theory, we focus on the vicinity of the free-fermion point $K=1$.
The numerical result, shown in Fig.~\ref{fig:mre_perturbation}, is well converged near $K=1$ and in excellent agreement with the analytical prediction, confirming Eq.~\eqref{eq:s2_TLL_universal}.
As $K$ moves away from $K=1$ and the lowest-order perturbation ceases to be accurate, the numerical data also require increasingly large system sizes to converge, and they do not converge within the accessible sizes.
This behavior is associated with the bulk phase transition at $K=1/2$ and the boundary phase transition at $K=3$.

\begin{figure}[t]
  \centering
  \includegraphics[width=0.45\textwidth]{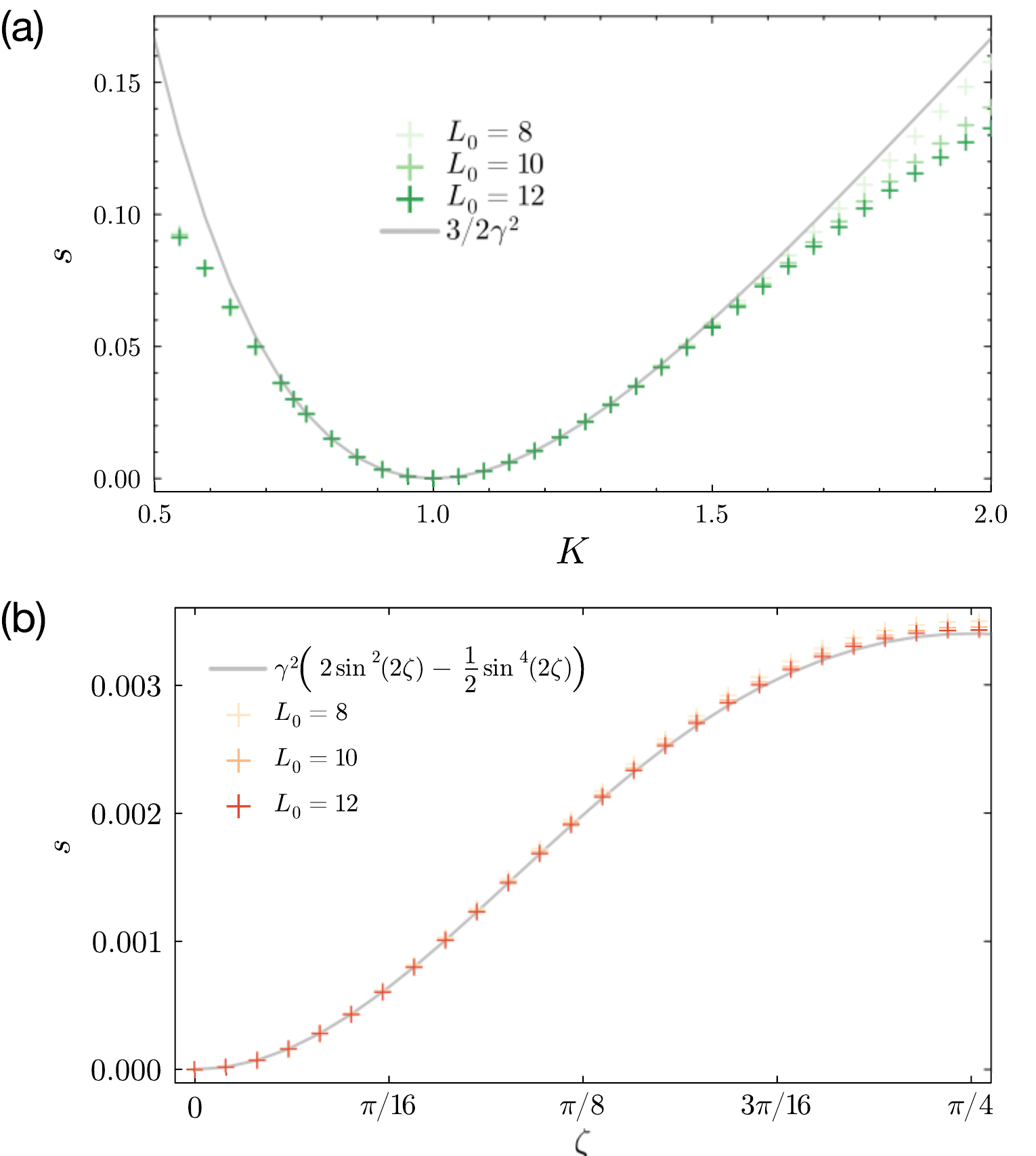}
  \caption{(a) The universal constant term $s$ against $K$ near the free-fermion point $K=1$.
    The solid curve represents the perturbative result~\eqref{eq:s2_TLL_universal}.
    (b) The universal constant term $s$ against the convolution angle $\zeta$ at $K=1.1$.
    The solid curve represents the perturbative result~\eqref{eq:s2_TLL_universal_angle}.
    In both panels, the data points are obtained by fitting the numerical data of $M$ to the scaling form~\eqref{eq:M2_fitting_TLL} with $L=L_0-2,L_0,L_0+2$ with $L_0=8,10,12$.
  }
  \label{fig:mre_perturbation}
\end{figure}

As a further check, in Fig.~\ref{fig:mre_perturbation} we also examine the dependence of the universal contribution $s$ on the rotation angle of the convolution unitary.
Again, in the vicinity of $K=1$, our prediction~\eqref{eq:s2_TLL_universal_angle} shows excellent agreement with the numerical data, confirming the validity of our perturbative analysis for general convolution angles.

\subsubsection*{Boundary phase transition}

Finally, we perform the finite-size scaling analysis to confirm our prediction of the boundary transition at $K_c=3$.
In the original data shown in Fig.~\ref{fig:mre_original}, the nonmonotonic behavior is visible only at the largest size $L=12$ accessible by exact diagonalization, which makes a direct identification of the transition difficult.
Nonetheless, we can estimate the transition point numerically and perform data collapse by using boundary RG analysis.
Specifically, we use the fact that the solution of the boundary RG flow equation~\eqref{eq:boundary_RG_general} for the dimensionless
coupling $\mu_a$ is given by
\begin{equation}
  \mu_{a}(l)=\frac{C_{a}(e^{(1-h_{a})l}-1)}{1-h_{a}},
\end{equation}
subject to the initial condition that no boundary couplings exist at the UV scale $l=0$.
Suppose that all the nonidentity operators are irrelevant $h_{a}>1$,
and let $h_{{\rm min}}$ be the minimum scaling dimension among them,
which gives the leading finite-size corrections. In practice, the
above RG solution suggests that numerical data for the boundary entropy $s=\ln g$ at size
$L\sim e^{l}$ may obey
\begin{equation}
  s(K,L_i)-s(K,L_j)=a(K)\frac{L_i^{1-h_{{\rm min}}(K)}-L_j^{1-h_{{\rm min}}(K)}}{1-h_{{\rm min}}(K)}
  \label{eq:s_KL_fit}
\end{equation}
with a coefficient $a$.
Suppose that we have numerical data $(L_{i},s(K,L_{i}))$
at different system sizes $L_{1}>L_{2}>L_{3}$, and let us introduce the
ratio
\begin{equation}
  r(K)=\frac{s(K,L_{1})-s(K,L_{2})}{s(K,L_{2})-s(K,L_{3})}.
\end{equation}
We expect that the boundary transition occurs when $h_{{\rm min}}-1\to+0$,
and the transition point $K_{c}$ may thus be inferred from the crossing
point:
\begin{equation}
  r(K_{c})=\frac{\ln(L_{1}/L_{2})}{\ln(L_{2}/L_{3})}.
  \label{eq:transition_crossing}
\end{equation}

Using the universal constant $s$ obtained at each system size as in the preceding analyses, we compute the ratio $r(K)$ from three system sizes $L_1,L_2,L_3$ and locate the transition point $K_c$ as the solution of $r(K)=r(K_c)$.
The result is shown in Fig.~\ref{fig:mre_transition}, which clearly indicates the transition point. The obtained value $K_c=3.05$ agrees with the predicted boundary transition point $K_c=3$.

\begin{figure}[t]
  \centering
  \includegraphics[width=0.45\textwidth]{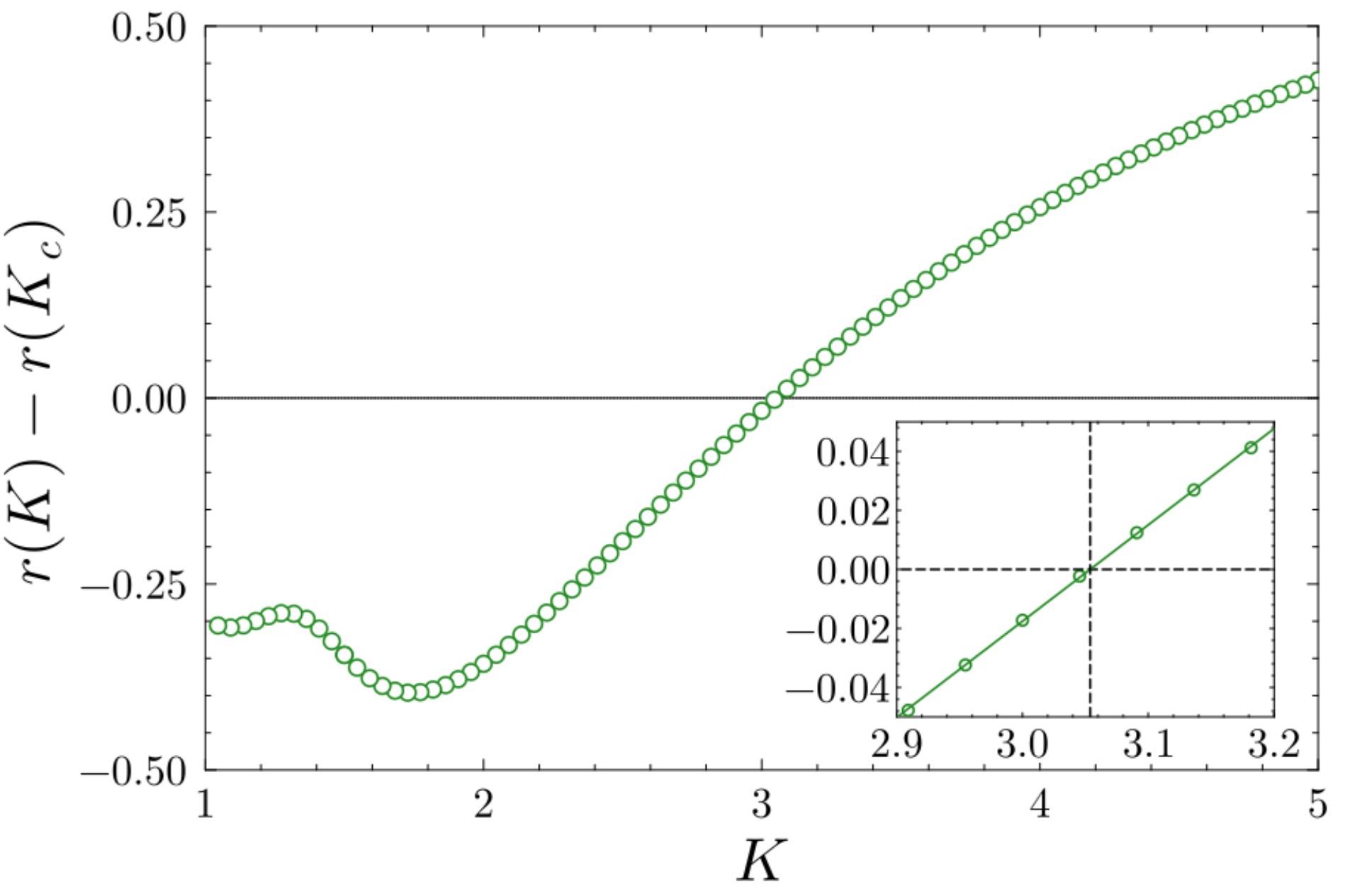}
  \caption{Finite-size scaling for locating the boundary phase transition point.
    For each central size $L_0$, the universal constant $s$ is obtained by fitting $M$ at $L_0-2,L_0,L_0+2$, and the crossing analysis uses $(L_1,L_2,L_3)=(12,10,8)$.
    Inset: zoom close to the transition.}
  \label{fig:mre_transition}
\end{figure}

\section{Summary and Outlook}\label{sec:discussions}

We have shown that quantum computational resources can be discussed on an equal footing across spins, bosons, and fermions by constructing a magic measure together with its field-theoretical framework.
Specifically, we have introduced the magic R\'enyi entropy (MRE), a unified measure that quantifies nonstabilizerness for qubits/qudits and non-Gaussianity for bosons/fermions.
We have also formulated the MRE of a many-body state within the Euclidean path-integral formalism and described two equivalent interpretations of the resulting replicated partition function; in the rotated-boundary picture, the convolution unitary is absorbed into the boundary state, whereas in the rotated-bulk picture, it is absorbed into the bulk Hamiltonian and the boundary remains in its elementary sewing form.
These interpretations show that, at a $(1+1)$-dimensional critical point, the universal contribution to the MRE is fixed by the Affleck-Ludwig $g$ factor of the corresponding conformal boundary state, and that non-Gaussianity can renormalize this $g$ factor or induce a relevant boundary RG flow.
As a concrete application, we have analyzed the MRE of interacting spinless-fermions described by the TLL, where our CFT analysis gives a perturbative formula of the universal term near the free-fermion point and predicts the boundary phase transitions at $K=1/3,3$. These analytical results have been checked numerically.

While we have proved several key properties of the MRE as a magic measure, its complete resource-theoretic understanding remains open.
Regarding monotonicity, extensions of the fermionic statements to a higher-order MRE with $n>2$ remain open.
Also, it is worthwhile to examine whether or not monotonicity under adaptive protocols extends beyond the Helmert convolution.
Another important question concerns the operational meaning of the MRE.
For qubits, the SRE quantifies how distinguishable a state is from stabilizer states and how indistinguishable it is from Haar-random states, depending on the R\'enyi index $n$~\cite{bittel2026operational}.
For bosons and fermions, however, an operational interpretation related to Gaussian testing is so far known only for the two-replica ($n=2$) case~\cite{bu2025efficient,hahn2025measuring,lyu2024fermionic,coffman2025measuring}.
Thus, clarifying the operational meaning of the MRE with $n>2$ for bosons and fermions is a natural next step.

To gain further insights into many-body physics, it is worthwhile to explore the quantitative relation between the MRE and other quantifiers of computational complexity.
For example, the SRE has been shown to provide a lower bound~\cite{leone2022stabilizer} of stabilizer nullity~\cite{beverland2020lower,jiang2023lower} and the robustness of magic~\cite{howard2017application}, which characterize the cost of state preparation and classical simulation.
Moreover, it is connected to out-of-time-order correlation functions, a diagnostic of quantum chaos.
As quantum chaos cannot be efficiently simulated by classical computers~\cite{leone2021quantum}, understanding the role of magic in quantum chaos may provide useful insight into its complexity.
We expect that generalizing these connections to the MRE will deepen our understanding of the computational complexity of quantum many-body systems.

At the same time, developing an efficient numerical method to compute the MRE is important in investigating quantum magic in many-body systems.
The SRE is particularly notable in this context, as many numerical schemes have been developed to efficiently compute its behavior even in dynamics and higher-dimensional systems~\cite{tarabunga2023manybody,tarabunga2024nonstabilizerness,huang2026fast,xiao2026exponentially,ding2025evaluating,lami2023nonstabilizerness,sticlet2025nonstabilizerness,liu2025nonequilibrium}.
Constructing efficient numerical methods for the bosonic and fermionic MRE, for instance using tensor-network or Monte Carlo methods, would therefore be a crucial step that carries the study of many-body magic well beyond spin systems.

While we have focused on ground states, we expect that the common structure across spins, bosons, and fermions can also manifest in dynamics.
For instance, in one-dimensional systems, both the nonstabilizerness of spin systems and the non-Gaussianity of fermionic systems are numerically found to saturate on a timescale linear in system size under Hamiltonian dynamics and proportional to $\ln L$ under Haar-random dynamics~\cite{tirrito2024magic,sierant2026fermionic}.
Such common behavior may be captured by formulating the dynamics of the MRE using a real-time path integral.
Moreover, while phase transitions of the SRE have been reported in dynamical settings~\cite{bejan2024dynamical,leone2024phase,niroula2024phase,sierant2026theory,tirrito2024magic,tirrito2025magic,wang2025magic}, their non-Gaussian counterparts remain largely unexplored.
With the common structure of the MRE in mind, it is natural to ask whether analogous transitions occur for non-Gaussianity, and how they can be understood within a single theoretical framework.

From a broader perspective, studies of many-body magic may also shed light on the computational complexity of quantum gravity through the holographic principle.
While such connections have been explored for nonstabilizerness~\cite{goto2022probing,white2021conformal,malvimat2026multipartite,cao2024nontrivial,cao2025gravitational}, the role of non-Gaussian magic remains to be understood.
A useful testbed would be the Sachdev--Ye--Kitaev (SYK) model, a quantum many-body model that admits a holographic gravitational dual and is solvable in the large-$N$ limit~\cite{sachdev1993gapless,jha2025introduction,kitaev2014hidden,chowdhury2022sachdevyekitaev}.
Viewed as a qubit model, its nonstabilizerness has already been investigated in several settings~\cite{malvimat2026multipartite,zhang2026stabilizer,jasser2025stabilizer,russomanno2025nonstabilizerness,sun2026connecting,bera2025nonstabilizerness,iannotti2026nonstabilizerness}, unveiling a novel phase transition intrinsic to the SRE~\cite{zhang2026stabilizer} and characterizing nonstabilizerness in chaotic systems~\cite{iannotti2026nonstabilizerness}.
Because the SYK model is intrinsically fermionic, the MRE provides a natural route to studying its non-Gaussianity, potentially revealing aspects of quantum magic complementary to those captured by qubit nonstabilizerness and clarifying the role of magic in both quantum many-body systems and quantum gravity.

We hope that our work stimulates further studies toward a systematic understanding of quantum magic in many-body systems.


\begin{acknowledgments}
  We thank Shunsuke Furukawa, Yuxuan Guo, Masaki Oshikawa, Shinsei Ryu, and Ryuji Takagi for helpful discussions.
  R.M. and M.H. were supported by FoPM, WINGS Program, the University of Tokyo.
  The work of M.H. was also supported by JSPS KAKENHI Grant No.~26KJ0910.
  Y.A. acknowledges support from JST FOREST Program (Grant No. JPMJFR222U), JST CREST (Grant No. JPMJCR23I2), and JST [Moonshot R\&D] (Grant No. JPMJMS256J).
\end{acknowledgments}

\appendix

\section{Properties of the MRE}\label{app:MRE_properties}

In this appendix we prove the theorems on the bosonic and fermionic MRE presented in the main text.
In Sec.~\ref{app:basic}, we prove the basic properties stated in Theorem~\ref{thm:MRE-properties} for an arbitrary replica number $n\ge 2$ and an arbitrary passive orthogonal rotation $O$ with $0<|O_{11}|<1$.
In Sec.~\ref{app:monotone}, we specialize to the Helmert convolution and prove Theorems~\ref{thm:boson-protocol-monotonicity} and \ref{thm:fermion-protocol-monotonicity} on monotonicity under Gaussian protocols involving measurements and feedforward.

For notational simplicity, throughout this section we redefine the fermionic displacement operator as $\exp(\bm{r}^{\rm T}\bm{u})=D_{\mathrm f}(2\bm{u})$, thereby changing the definition of the characteristic function.
Here $D_{\mathrm f}(\bm{u})$ is the original displacement operator used in the main text.
The reason is that, while $D_{\mathrm f}(\bm{u})$ takes a form parallel to the bosonic one and is convenient for discussions based on coherent states~\cite{cahill1969ordered,cahill1999density}, fermionic linear optics is often formulated in terms of $D_{\mathrm f}(2\bm{u})$~\cite{bravyi2004lagrangian}, which is the main ingredient of this section.
With this definition, a fermionic Gaussian state with covariance matrix $\sigma$ is expressed as
\begin{equation}
  \chi_{\rho_G}(\bm{u}) = \exp(\frac{i}{2}\bm{u}^{\rm T}\sigma \bm{u}),
  \label{eq:chi_gaussian_fermion_app}
\end{equation}
and the closed form~\eqref{eq:purity_fermion_chi} of the convolution purity obtained from the convolution-multiplication duality reads
\begin{equation}
  \nu_{n}(\rho_{1},\ldots,\rho_{n})=2^{-L}\norm{\prod_{r=1}^{n}\chi_{\rho_{r}}\!\qty(\frac{\bm{u}}{\sqrt{n}})}^{2}.
  \label{eq:purity_fermion_app}
\end{equation}

\subsection{Basic properties}\label{app:basic}

We first establish the basic properties listed in Theorem~\ref{thm:MRE-properties}, namely faithfulness, invariance under Gaussian unitaries, and additivity.
While we focused on pure states in the main text, for the sake of generality, we here introduce a slightly extended version of the MRE so that it can be applied also to mixed states, following Refs.~\cite{lyu2024fermionic,bu2025efficient, hahn2025measuring}.
To this end, we denote an arbitrary measure of correlation between $A$ and $B$ as $C_{A,B}(\rho_{AB})$, satisfying the following properties:
\begin{enumerate}
  \item $C_{A,B}(\rho_{AB})\geq0$, with equality holding if and only if $\rho_{AB}=\rho_A\otimes \rho_B$.
  \item $C_{A,B}((\Phi_A\otimes \Phi_B) (\rho_{AB})) \leq C_{A,B}(\rho_{AB})$, where $\Phi_A\otimes \Phi_B$ is a local operation in each system $A,B$.
  \item $C_{A,B}(\rho_{AB}\otimes \sigma_{AB}) = C_{A,B}(\rho_{AB}) + C_{A,B}(\sigma_{AB})$.
\end{enumerate}
Note that the invariance under a local unitary transformation follows from the second condition since a unitary is reversible.
Using $C_{A,B}$, we consider the following quantity as a measure of magic:
\begin{equation}
  N_C(\rho) = C_{1, 1^c}(U \rho^{\otimes n} U^\dagger),
\end{equation}
where $1^c$ denotes the subsystem including replicas
$2,3,\dots, n$.
For pure states, correlation is equivalent to entanglement, which is measured by R\'enyi entanglement entropy
\begin{align}
  C_{A,B}(\rho_{AB}) = \frac{1}{1-k} \ln \tr_A[(\tr_{B} \rho_{AB})^k].
\end{align}
Then, a special case $k=2$ precisely corresponds to the MRE discussed in the main text, except for an additional constant factor of $(n-1)^{-1}$.
For mixed states, we may use mutual information
\begin{equation}
  C_{A,B}(\rho_{AB}) = S(\rho_A)+S(\rho_B)-S(\rho_{AB}),
\end{equation}
where $\rho_{A(B)} = \tr_{B(A)} \rho_{AB}$ is a reduced density operator and $S(\rho) = -\tr[\rho \ln \rho]$.
For $N_C(\rho)$, we will prove the following theorem:
\begin{theorem}
\label{thm:NC-properties}
$N_C(\rho)$ satisfies the following properties:
\begin{enumerate}
  \item Faithfulness: $N_C(\rho)\geq0$, with equality holding if and only if $\rho$ is a Gaussian state.
  \item Channel monotonicity: $N_C(\Phi(\rho))\leq N_C(\rho)$ under a Gaussian channel $\Phi$.
  \item Additivity: $N_C(\rho\otimes \sigma) = N_C(\rho)+N_C(\sigma)$.
\end{enumerate}
\end{theorem}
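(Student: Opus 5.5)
We prove the three properties of Theorem~\ref{thm:NC-properties} in order of increasing difficulty: additivity, then channel monotonicity, then faithfulness. Throughout, $U$ is the passive Gaussian unitary of Eq.~\eqref{eq:U_BF} with orthogonal $O$, $0<|O_{11}|<1$. Its characteristic-function action is Eq.~\eqref{eq:conv_char_transform}, namely $\chi\mapsto\prod_r\chi_{\rho}(\sum_s O_{sr}\bm{u}_s)$.

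\emph{Additivity.} Because $U$ acts identically and independently on every mode, it factorizes as $U=U^{(A)}\otimes U^{(B)}$ when the $L$ modes are split into those carrying $\rho$ and those carrying $\sigma$. In the fermionic case this factorization holds up to reordering, and the reordering is harmless because $U$ is even. We then have
\[
U(\rho\otimes\sigma)^{\otimes n}U^\dagger\cong \bigl(U^{(A)}\rho^{\otimes n}U^{(A)\dagger}\bigr)\otimes\bigl(U^{(B)}\sigma^{\otimes n}U^{(B)\dagger}\bigr),
\]
and the bipartition $1|1^c$ respects this tensor structure. Property 3 of $C_{A,B}$ then gives additivity directly. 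For fermions, parity superselection ensures that each factor is even, so the fermionic tensor product behaves like an ordinary one for the correlation measure.

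\emph{Channel monotonicity.} By Stinespring dilation within the Gaussian class, any Gaussian channel has the form $\Phi(\rho)=\tr_E[V(\rho\otimes\omega_E)V^\dagger]$, with $V$ a Gaussian unitary and $\omega_E$ a Gaussian state. The key identity we need is
\[
U\,\Phi(\rho)^{\otimes n}U^\dagger=(\Phi'_1\otimes\Phi'_{1^c})\bigl(U(\rho\otimes\omega)^{\otimes n}U^\dagger\bigr),
\]
where $\Phi'_1$ acts on replica $1$ only and $\Phi'_{1^c}$ acts on replicas $1^c$ only. We derive it from two facts. First, $U$ commutes with $V^{\otimes n}$, because $U$ acts by the replica matrix $O$ while $V^{\otimes n}$ acts by $S\otimes I_n$ (or $O_V\otimes I_n$); these commute, and the displacements are handled by the fact that $\sum_r O_{sr}$ need not be $1$, so we absorb them into local displacements. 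Second, the partial trace over $E$ commutes with $U$ on the system modes. Combining these with property 2 of $C_{A,B}$ yields
\[
N_C(\Phi(\rho))\le C_{1,1^c}\bigl(U(\rho\otimes\omega)^{\otimes n}U^\dagger\bigr)=N_C(\rho)+N_C(\omega),
\]
where the last equality uses additivity. It remains to show $N_C(\omega)=0$ for Gaussian $\omega$, which is the ``if'' direction of faithfulness.

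\emph{Faithfulness.} Nonnegativity is immediate from property 1 of $C_{A,B}$. For a Gaussian $\omega$ with Gaussian characteristic function~\eqref{eq:chi_gaussian_boson} or~\eqref{eq:chi_gaussian_fermion}, orthogonality of $O$ shows that $\omega^{\otimes n}$ has a replica-diagonal quadratic form $\sum_r\bm{u}_r^{\rm T}M\bm{u}_r$, and this form is invariant under $O$. The linear mean terms are rotated but remain a sum of separate single-replica terms. The output characteristic function therefore factorizes across replicas, giving a product state and hence $N_C=0$.

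The converse is the main obstacle. Suppose $U\rho^{\otimes n}U^\dagger$ is a product state across $1|1^c$. We write $\bm{v}=\sum_s O_{s\cdot}\bm{u}_s$ and obtain a Cramér/Darmois–Skitovich-type functional equation $\prod_r\chi_\rho(\sum_s O_{sr}\bm{u}_s)=f(\bm{u}_1)\,g(\bm{u}_2,\dots,\bm{u}_n)$. We reduce it to $n=2$ by restricting $\bm{u}_s$ to a two-dimensional replica subspace that contains direction $1$ and some $r$ with $O_{1r}\neq0,\pm1$; such an $r$ exists because $0<|O_{11}|<1$.

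For bosons, we take logarithms near $\bm{u}=0$, where $\chi_\rho\neq0$, and apply mixed partial derivatives in $\bm{u}_1$ and $\bm{u}_2$. This shows that $\ln\chi_\rho$ has a vanishing third-and-higher-order structure, so it is a quadratic polynomial locally, and the quadratic form extends by analyticity and the Bochner/positivity argument. This is the Darmois–Skitovich/quantum central-limit argument of Refs.~\cite{cushen1971quantummechanical,hahn2025measuring}, which we follow.

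For fermions, $\chi_\rho$ is a finite Grassmann polynomial with $\chi_\rho(0)=1$, so $\ln\chi_\rho$ is a well-defined even polynomial. We expand it as $\sum_k P_k$, with $P_k$ of degree $k$, and compare the degree-$k$ components in the mixed variables. Under the rotation with $c=O_{11}$, $s=O_{12}$, the cross terms coupling $\bm{u}_1$ and $\bm{u}_2$ at degree $k\ge4$ are proportional to nonvanishing polynomials in $c$ and $s$, so they must vanish and hence $P_k=0$. This leaves $\ln\chi_\rho$ quadratic, which is Eq.~\eqref{eq:chi_gaussian_fermion}, so $\rho$ is Gaussian, following Ref.~\cite{lyu2024fermionic}. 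Showing that these coefficients vanish only at $cs=0$ is the delicate step; we would handle it by extracting a single mixed monomial, such as the term linear in $\bm{u}_2$ and of degree $k-1$ in $\bm{u}_1$, whose coefficient is $k\,c^{k-1}s$ times a coefficient of $P_k$.
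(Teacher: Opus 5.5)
Your additivity and channel-monotonicity arguments are correct and match the paper's. The relation $UV^{\otimes n}U^{\dagger}=\bigotimes_s V'_s$, with the displacement in replica $s$ rescaled by $\sum_r O_{sr}$, is the paper's commutation lemma. Chaining the dilation, the commutation of $\tr_E$ with $U$, local monotonicity of $C$, additivity, and $N_C(\omega)=0$ is the paper's three-step argument in compressed form.

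The gap is in the converse of faithfulness. Restricting $\vec{\bm{u}}$ to the span of two output replicas $1,s_0$ does not reduce the problem to $n=2$. You are left with $\prod_{r=1}^{n}\chi_\rho(O_{1r}\bm{u}_1+O_{s_0r}\bm{u}_{s_0})=f(\bm{u}_1)\,\tilde g(\bm{u}_{s_0})$. This still has up to $n$ factors with a non-orthogonal coefficient pattern, so it is not a beam-splitter equation. (Your text also conflates the input index $r$ with the output index $s$.) The paper gets a genuine two-mode beam splitter only through the factorization $O=V^{\rm T}R_{1n}(\zeta)V'$ together with its tripartite lemma. That beam splitter then acts on $\rho_A\otimes\rho$, where $\rho_A$ is the generally mixed replica-$1$ marginal after $U_{1,\ldots,n-1}$. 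This is why the paper needs a Gaussianity result for a beam splitter with two \emph{different} inputs, a form your identical-input equation never reaches.

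Independently of the reduction, your fermionic `delicate step' fails as written. Both input factors contribute to the monomial linear in $\bm{u}_2$ and of degree $k-1$ in $\bm{u}_1$. For $O(\zeta)$, the degree-$k$ part of the logarithm of the output characteristic function is $P_k(c\bm{u}_1+s\bm{u}_2)+P_k(-s\bm{u}_1+c\bm{u}_2)$. That coefficient is therefore $cs\,(c^{k-2}-s^{k-2})$ times a fixed polarization of $P_k$, not a multiple of $c^{k-1}s$, and it vanishes at the balanced point $\zeta=\pi/4$. In the $n$-replica form, the coefficient $\sum_r O_{1r}^{k-1}O_{s_0r}$ vanishes identically for the Helmert matrix, since $O_{1r}$ is constant and the rows are orthogonal. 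So your chosen monomial carries no information for exactly the convolutions the paper uses.

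A working replacement is the monomial quadratic in $\bm{u}_{s_0}$ and of degree $k-2\ge 2$ in $\bm{u}_1$. Its coefficient $\sum_r O_{1r}^{k-2}O_{s_0r}^{2}$ is a sum of nonnegative terms for even $k$. It is strictly positive once $O_{s_01}\neq0$, and such an $s_0\ge2$ exists because $|O_{11}|<1$. The corresponding second polarization of a nonzero $P_k$ cannot vanish, because $\bm{u}_1$ and $\bm{u}_{s_0}$ are independent Grassmann generators. This handles all $n$ at once with no reduction.

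For bosons, ``mixed partial derivatives'' and ``extension by analyticity'' presuppose a regularity that $\chi_\rho$ of a general state lacks. You have three options:
\begin{itemize}
\item use finite differences plus a nonvanishing argument, as in the classical Darmois--Skitovich proof;
\item pass to the Husimi $Q$ distribution, which the passive convolution transforms by an orthogonal change of variables, so the classical theorem applies directly;
\item cite the beam-splitter result, as the paper does.
\end{itemize}
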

Then, the theorem~\ref{thm:MRE-properties} follows as a special case.
Again, note that the invariance under a Gaussian unitary $U$ follows from channel monotonicity.
We also remark that the additivity of MRE can be translated into multiplicativity of convolution purity
\begin{equation}
  \nu_n(\rho\otimes \sigma) = \nu_n(\rho)\nu_n(\sigma),
  \label{eq:nu_multiplicative}
\end{equation}
which we utilize in Sec.~\ref{app:monotone}.

\subsubsection*{Vanishing $N_C$ implies a Gaussian state}

Here we prove the nontrivial part of the faithfulness statement, namely that a state satisfying $N_C(\rho)=0$ is Gaussian.
To this end we factor out a beam splitter from the convolution unitary and exploit its properties.
We extend the two-replica beam splitter $U(\zeta)$ of Eq.~\eqref{eq:O_case1} to one acting on replicas $r$ and $s$ among the $n$ replicas, denoted $U_{rs}(\zeta)$, which transforms the operators as
\begin{align}
  U_{rs}^{\dag}(\zeta)
  \begin{pmatrix}
    \bm{a}_{r} \\
    \bm{a}_{s}
  \end{pmatrix}
  U_{rs}(\zeta)
  =
  \begin{pmatrix}
    \cos\zeta & -\sin\zeta \\
    \sin\zeta & \cos\zeta
  \end{pmatrix}
  \begin{pmatrix}
    \bm{a}_{r} \\
    \bm{a}_{s}
  \end{pmatrix},
  \label{eq:BS_def}
\end{align}
and acts as the identity on all other replicas.

Because $0<|O_{11}|<1$, the first row of $O$ has an off-diagonal entry with $0<|O_{1r}|<1$ for some $2\le r\le n$.
The discarded replicas $2,\ldots,n$ are interchangeable in the self-convolution, so we relabel them to place this entry last and ensure $0<|O_{1n}|<1$.

\begin{lemma}\label{lem:factorize}
  Any convolution unitary $U$ defined by Eq.~\eqref{eq:U_BF} can be written as
  \begin{align}
    U = U_{2,\ldots,n}\,U_{1n}(\zeta)\,U_{1,\ldots,n-1},
    \label{eq:factorize}
  \end{align}
  where $U_{2,\ldots,n}$ is a Gaussian unitary acting only on replicas $2,\ldots,n$, $U_{1,\ldots,n-1}$ is a Gaussian unitary acting only on replicas $1,\ldots,n-1$, and $\zeta$ satisfies $-\sin\zeta=O_{1n}$.
\end{lemma}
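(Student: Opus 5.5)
The factorization reduces to a statement about real orthogonal matrices. The map $O\mapsto U$ defined by Eq.~\eqref{eq:U_BF} is a group homomorphism up to phase. Indeed, if $U_A$ and $U_B$ implement $A$ and $B$, then $(U_AU_B)^\dagger\bm a_s(U_AU_B)=U_B^\dagger U_A^\dagger\bm a_s U_AU_B=\sum_r A_{sr}\,U_B^\dagger\bm a_rU_B=\sum_{r,t}A_{sr}B_{rt}\bm a_t$. So $U_AU_B$ implements $AB$. A passive Gaussian unitary is fixed by its action on the modes up to a global phase, and phases are irrelevant for $\mathcal C_n$. It therefore suffices to show
\begin{equation}
  O = \begin{pmatrix}1&0\\0&O'\end{pmatrix} R_{1n}(\zeta)\begin{pmatrix}O''&0\\0&1\end{pmatrix},
\end{equation}
with $O'\in O(n-1)$ acting on replicas $2,\ldots,n$, $O''\in O(n-1)$ acting on replicas $1,\ldots,n-1$, and $R_{1n}(\zeta)$ the rotation of Eq.~\eqref{eq:BS_def} in the $(1,n)$ plane. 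The condition $\det O=-1$ is absorbed into $O'$ or $O''$, since those blocks may be reflections. The passive Gaussian unitary realizing a reflection, e.g.\ $\bm a_r\to-\bm a_r$, is $(-1)^{N_r}$, so every block is still a Gaussian unitary on the stated replicas.

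To construct the blocks, first match the first row. The left factor does not change row 1, and row 1 of $R_{1n}(\zeta)$ is $\cos\zeta\,\bm e_1^{\rm T}-\sin\zeta\,\bm e_n^{\rm T}$. Row 1 of the product is therefore $\cos\zeta\,(O''_{1\cdot},0)-\sin\zeta\,\bm e_n^{\rm T}$. Choose $\zeta$ with $-\sin\zeta=O_{1n}$ and $\cos\zeta=\sqrt{1-O_{1n}^2}>0$; this is possible because $|O_{1n}|<1$. Then define the first row of $O''$ as $(O_{11},\ldots,O_{1,n-1})/\cos\zeta$, which is a unit vector since row 1 of $O$ has norm one. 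Complete it to an orthogonal $O''$ by Gram--Schmidt. With this choice, $M\equiv R_{1n}(\zeta)(O''\oplus1)$ is orthogonal and has the same first row as $O$.

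Next fix the remaining rows. Set $O'' {}'\equiv OM^{-1}$. It is orthogonal, and its first row is $\bm e_1^{\rm T}$, because $O$ and $M$ share their first row and $MM^{\rm T}=I$. Orthogonality then forces its first column to be $\bm e_1$ as well. Hence $OM^{-1}=1\oplus O'$ with $O'\in O(n-1)$, which gives the desired decomposition.

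The only delicate point is this first-row matching step. It needs $|O_{1n}|<1$ for $\cos\zeta\neq0$, which the hypothesis supplies after the relabeling described just above the lemma. The condition $O_{1n}\neq0$ is not needed for the algebra; it matters only later, so that the beam splitter is nontrivial. Lifting back to unitaries via the homomorphism yields Eq.~\eqref{eq:factorize} up to an irrelevant global phase.
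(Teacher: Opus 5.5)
Your proof is correct and is essentially the paper's argument in transposed form. The paper left-multiplies $O$ by a block rotation $V$ on replicas $2,\ldots,n$ so that the $n$th column of $VO$ coincides with that of $R_{1n}(\zeta)$, whence $R_{1n}(\zeta)^{\rm T}VO$ fixes replica $n$; you instead match the first row using a right factor on replicas $1,\ldots,n-1$ and conclude that the left remainder fixes replica $1$, and your explicit checks of the homomorphism $O\mapsto U$, the realizability of reflections by $(-1)^{N_r}$, and the irrelevant global phase fill in details the paper leaves implicit.
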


\begin{proof}
  Let $R_{rs}(\zeta)$ be the $n\times n$ matrix representing the beam splitter~\eqref{eq:BS_def} on the replica index, i.e.\ the matrix whose rows and columns $r,s$ form the block $\left(\begin{smallmatrix}\cos\zeta & -\sin\zeta\\ \sin\zeta & \cos\zeta\end{smallmatrix}\right)$ while the remaining diagonal entries equal one.
  The $n$th column of $O$ is nonzero because $O_{1n}=-\sin\zeta\neq 0$.
  Choose an $n\times n$ orthogonal matrix $V$ with $V_{11}=1$ (so that it only mixes replicas $2,\ldots,n$) such that the $n$th column of $V O$ equals $(-\sin\zeta,0,\ldots,0,\cos\zeta)^{\rm T}$, i.e., $V$ collects all nonzero entries $O_{rn}$ with $2\le r\le n$ into the single element $(VO)_{nn}=\cos\zeta$.
  On the other hand, it can also be expressible as $V O=R_{1n}(\zeta)V'$ with the beam splitter $R_{1n}(\zeta)$ and the orthogonal matrix $V'$, whose $n$th column is $(0,\ldots,0,1)^{\rm T}$.
  Since $V'$ is real orthogonal, its $n$th row must also be $(0,\ldots,0,1)$, which shows that the unitary represented by $V'$ acts as the identity on replica $n$.
  Therefore $O=V^{\rm T} R_{1n}(\zeta)V'$, which lifts to the operator factorization~\eqref{eq:factorize} with $U_{2,\ldots,n}$ and $U_{1,\ldots,n-1}$ represented by $V^{\rm T}$ and $V'$, respectively.
\end{proof}

Next we use two facts about beam splitters acting on a state.
We call a beam splitter $U_{rs}(\zeta)$ \emph{nontrivial} when it cannot be written as a tensor product of unitaries acting separately on the two replicas, i.e., when $0<|\cos\zeta|<1$ and $0<|\sin\zeta|<1$.

\begin{lemma}\label{lem:BS}
  For an input $\rho_r\otimes\sigma_s$, the state obtained after applying a nontrivial beam splitter $U_{rs}(\zeta)$ is again a product state only if both $\rho$ and $\sigma$ are Gaussian.
\end{lemma}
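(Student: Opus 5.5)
The plan is to work entirely with characteristic functions and reduce the claim to a functional equation of Cramér--Darmois--Skitovich type. Using the transformation rule~\eqref{eq:conv_char_transform} for the two-replica rotation $O=R_{rs}(\zeta)$, the output $U_{rs}(\zeta)(\rho_r\otimes\sigma_s)U_{rs}^\dagger(\zeta)$ has characteristic function
\begin{equation}
  X(\bm{u}_r,\bm{u}_s)=\chi_\rho(c\,\bm{u}_r+s'\bm{u}_s)\,\chi_\sigma(-s'\bm{u}_r+c\,\bm{u}_s),
\end{equation}
with $c=\cos\zeta$ and $s'=\sin\zeta$. The output is a product state if and only if $X$ factorizes as $X(\bm{u}_r,\bm{u}_s)=X(\bm{u}_r,\bm{0})X(\bm{0},\bm{u}_s)$, because the marginals are obtained by setting the other argument to zero. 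Writing $f=\chi_\rho$ and $g=\chi_\sigma$, this gives
\begin{equation}
  f(c\bm{u}+s'\bm{v})\,g(-s'\bm{u}+c\bm{v})=f(c\bm{u})g(-s'\bm{u})\,f(s'\bm{v})g(c\bm{v}),
\end{equation}
which must hold for all $\bm{u},\bm{v}$, with $cs'\neq0$ by nontriviality.

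\textbf{Bosonic case.} I would follow the classical Skitovich--Darmois (or Bernstein) argument. Near the origin $f$ and $g$ are continuous with $f(\bm{0})=g(\bm{0})=1$, so I can take logarithms $F=\ln f$ and $G=\ln g$ on a neighbourhood. The equation then becomes additive:
\begin{equation}
  F(c\bm{u}+s'\bm{v})+G(-s'\bm{u}+c\bm{v})=A(\bm{u})+B(\bm{v}).
\end{equation}
Apply the finite-difference trick: take a difference in $\bm{u}$ along the direction that annihilates the $G$ argument's $\bm{v}$-dependence, and similarly in $\bm{v}$. Iterating this shows that the second mixed differences of $F$ and $G$ are constant, so $F$ and $G$ are quadratic polynomials near the origin.

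To extend beyond a neighbourhood, I would use the standard argument that $f$ has no zeros: the factorized equation propagates nonvanishing from a ball to all of $\mathbb{R}^{2L}$ by scaling. Alternatively, one can appeal to the fact that a characteristic function that equals a Gaussian near zero and satisfies this equation everywhere is Gaussian, by a bootstrap over dyadic scalings. A quadratic $\ln\chi$ is exactly the Gaussian form~\eqref{eq:chi_gaussian_boson}. Positivity of the density operator forces the covariance condition, but this is not needed since $\rho$ is given to be a state.

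\textbf{Fermionic case.} Here the argument is purely algebraic. By parity superselection, $f$ and $g$ are even Grassmann polynomials with constant term $1$, so $F=\ln f$ and $G=\ln g$ exist as even polynomials in the $2L$ generators. Products of even elements commute, so the additive equation holds exactly. I would then expand $F=\sum_{k\ge1}F_{2k}$ in homogeneous degree. The key point is that the $2k$-form $F_{2k}(c\bm{u}+s'\bm{v})$ contains mixed terms of bidegree $(j,2k-j)$ that must cancel against the corresponding terms of $G_{2k}(-s'\bm{u}+c\bm{v})$, since the right-hand side has no mixed terms.

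The $(2k-1,1)$ components give $c^{2k-1}s'\,\partial F_{2k}=(-s')^{2k-1}c\,\partial G_{2k}$, and the $(1,2k-1)$ components give $c\,s'^{2k-1}\partial F_{2k}=(-s')c^{2k-1}\partial G_{2k}$; here $\partial$ denotes the polarization that extracts one $\bm{v}$ slot. Dividing the two relations yields $(c/s')^{2k-2}=-(s'/c)^{2k-2}$, that is $(c/s')^{4k-4}=-1$. This is impossible for $k\ge2$ and real $c,s'\neq0$, so $\partial F_{2k}=0$. By homogeneity (Euler's relation) this forces $F_{2k}=0$ for every $k\ge2$, and likewise $G_{2k}=0$. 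Hence $\ln\chi$ is quadratic, which is the Gaussian form~\eqref{eq:chi_gaussian_fermion}.

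\textbf{Main obstacle.} I expect the hardest step to be the global part of the bosonic argument: excluding zeros of $\chi_\rho$ and extending the local quadratic form to all of phase space. I would try the classical Linnik-style argument first, then fall back on citing the quantum Cramér theorem for beam splitters (Refs.~\cite{cushen1971quantummechanical,hahn2025measuring}).
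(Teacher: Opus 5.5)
Your route differs from the paper's, whose proof of this lemma consists of citing Ref.~\cite{cuesta2020stable} for bosons and Prop.~C.6 of Ref.~\cite{lyu2024fermionic} for fermions. Your bosonic half is sound. The four linear forms $\bm{u}$, $\bm{v}$, $c\bm{u}+s'\bm{v}$, $-s'\bm{u}+c\bm{v}$ are pairwise independent. Differences along $(\bm{0},\bm{h}_1)$, $(\bm{h}_2,\bm{0})$ and $(c\bm{h}_3,s'\bm{h}_3)$ therefore kill $A$, $B$ and $G$, leaving $\Delta_{s'\bm{h}_1}\Delta_{c\bm{h}_2}\Delta_{\bm{h}_3}F=0$. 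This gives degree at most $2$ without any Marcinkiewicz-type positivity input. For the global step, setting $\bm{u}=c\bm{w}$, $\bm{v}=s'\bm{w}$ gives $f(\bm{w})=f(c^2\bm{w})g(-cs'\bm{w})f(s'^2\bm{w})g(cs'\bm{w})$, and analogously for $g$. Since $\max(c^2,s'^2)<1$, this propagates nonvanishing outward.

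The fermionic step, however, fails as written. Let $F^{(j)}_{2k}$ denote the bidegree-$(j,2k-j)$ part of $F_{2k}(\bm{u}+\bm{v})$. The mixed components of your identity are $c^js'^{2k-j}F^{(j)}_{2k}+(-s')^jc^{2k-j}G^{(j)}_{2k}=0$ for $1\le j\le 2k-1$. Use $F^{(1)}_{2k}(\bm{u},\bm{v})=F^{(2k-1)}_{2k}(\bm{v},\bm{u})$ (and likewise for $G$), and set $\Phi=F^{(2k-1)}_{2k}$, $\Gamma=G^{(2k-1)}_{2k}$. Your $j=2k-1$ and $j=1$ relations then read $c^{2k-2}\Phi=s'^{2k-2}\Gamma$ and $s'^{2k-2}\Phi=c^{2k-2}\Gamma$. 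Their only joint content is $(c^{4k-4}-s'^{4k-4})\Phi=0$. Even dividing your own two displayed relations gives $(c/s')^{4k-4}=+1$, not $-1$, because $(-s')^{2k-1}/(-s')=s'^{2k-2}$. (Each of those relations also drops the minus sign from moving the $G$ term across, but that cancels in the ratio.)

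So for $|\cos\zeta|=|\sin\zeta|$, these two components are satisfied by any $G_{2k}=F_{2k}$ and give no contradiction. That is the balanced beam splitter, exactly the $n=2$ Helmert case the paper relies on.

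The missing observation is that the sign $(-s')^j$ alternates with $j$, so you must pair an odd component with an even one. The $j=1$ relation gives $G_{2k}=(s'/c)^{2k-2}F_{2k}$. Substituting this into the $j=2$ relation, which exists for $k\ge 2$, gives $s'^{2k-2}(c^2+s'^2)F^{(2)}_{2k}=s'^{2k-2}F^{(2)}_{2k}=0$. The map $P\mapsto P^{(j)}$ is injective for $1\le j\le 2k-1$, because the algebra homomorphism induced by $\bm{v}\mapsto\bm{u}$ sends $P^{(j)}$ to $\binom{2k}{j}P$. Hence $F_{2k}=G_{2k}=0$ for all $k\ge 2$. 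With this repair your argument closes for every nontrivial $\zeta$ and gives a self-contained alternative to the cited results.
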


\begin{proof}
  For bosons, this is proven in Ref.~\cite{cuesta2020stable}.
  For fermions, this is a restatement of Prop.~C.6 in Ref.~\cite{lyu2024fermionic}.
\end{proof}

\begin{lemma}\label{lem:tripartite}
  Consider a tripartite system $ABC$, where $A$ and $C$ are single replicas while $B$ may contain several replicas.
  Suppose the input is a product state $\rho_{AB}\otimes\rho_{C}$ and a nontrivial beam splitter $U_{AC}(\zeta)$ acts between $A$ and $C$.
  If the output is a product state across $A$ and $BC$, then $\rho_{C}$ is Gaussian.
\end{lemma}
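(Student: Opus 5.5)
The plan is to reduce the claim to Lemma~\ref{lem:BS} by eliminating the spectator system $B$. Let $\omega_{ABC}=U_{AC}(\zeta)(\rho_{AB}\otimes\rho_C)U_{AC}^{\dag}(\zeta)$, and assume $\omega_{ABC}=\omega_A\otimes\omega_{BC}$. I would first trace out $B$, which commutes with $U_{AC}$ because the beam splitter acts trivially on $B$. This gives
\begin{equation}
  U_{AC}(\zeta)\,(\rho_A\otimes\rho_C)\,U_{AC}^{\dag}(\zeta)=\omega_A\otimes\omega_C ,
\end{equation}
where $\rho_A=\tr_B\rho_{AB}$ and $\omega_C=\tr_B\omega_{BC}$. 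The left-hand side is exactly the setting of Lemma~\ref{lem:BS}: a nontrivial beam splitter acting on a product input $\rho_A\otimes\rho_C$, with product output. Lemma~\ref{lem:BS} then yields that both $\rho_A$ and $\rho_C$ are Gaussian, and in particular $\rho_C$ is Gaussian, which is the statement.

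The points to check are as follows. First, Lemma~\ref{lem:BS} must hold for mixed inputs, since $\rho_A$ is generally mixed even if $\rho_{AB}$ is pure. Both cited results (the bosonic one of Ref.~\cite{cuesta2020stable} and Prop.~C.6 of Ref.~\cite{lyu2024fermionic}) are characteristic-function statements about arbitrary states, so this is fine. In terms of characteristic functions, they amount to the functional equation
\begin{equation}
  \chi_{\rho_A}(c\bm{u}-s\bm{v})\,\chi_{\rho_C}(s\bm{u}+c\bm{v})=f(\bm{u})g(\bm{v}),
\end{equation}
with $c=\cos\zeta$ and $s=\sin\zeta$. This is a Darmois--Skitovich / Cram\'er-type equation that forces Gaussianity. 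Second, for fermions the partial trace over $B$ and the tensor-product structure require care with the graded tensor product. However, under the parity superselection imposed throughout, all states are even, so $\rho_A\otimes\rho_C$ is well defined. Moreover, the reduced state $\omega_A\otimes\omega_C$ is obtained by setting the Grassmann variables of $B$ to zero in the characteristic function, which is consistent with Eq.~\eqref{eq:conv_char_transform}.

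I expect essentially no obstacle beyond confirming that the cited beam-splitter lemma is valid for mixed states. If that were not available, I would instead prove the functional equation above directly. The first step would be to note that $\chi$ is continuous and equals $1$ at the origin, so $\log\chi$ exists near $0$. In the bosonic case, local solutions of this equation are quadratic polynomials, and a quadratic $\log\chi$ means a Gaussian state by Eq.~\eqref{eq:chi_gaussian_boson}. In the Grassmann case, one would compare monomial coefficients order by order, using the fact that rotated degree-$k$ terms with $k\ge 3$ cannot factorize for $0<|c|,|s|<1$. This forces $\log\chi_{\rho_C}$ to be quadratic, which is the Gaussian form~\eqref{eq:chi_gaussian_fermion_app}.
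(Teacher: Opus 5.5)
Your proof is correct and is the same as the paper's: since $U_{AC}(\zeta)$ acts trivially on $B$, tracing out $B$ gives $U_{AC}(\zeta)(\rho_A\otimes\rho_C)U_{AC}^{\dag}(\zeta)=\omega_A\otimes\omega_C$, and Lemma~\ref{lem:BS} then shows that $\rho_C$ is Gaussian (the paper likewise applies that lemma to the generally mixed marginal $\rho_A$). The functional-equation fallback you sketch is not needed for this step.
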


\begin{proof}
  The output state is $\rho_{\rm out}=(U_{AC}(\zeta)\otimes I_{B})(\rho_{AB}\otimes\rho_{C})(U_{AC}^{\dag}(\zeta)\otimes I_{B})$.
  On the one hand, tracing out $B$ yields
  \begin{align}
    \tr_{B}[\rho_{\rm out}] = U_{AC}(\zeta)(\rho_{A}\otimes\rho_{C})U_{AC}^{\dag}(\zeta),
  \end{align}
  where $\rho_{A}=\tr_{B}[\rho_{AB}]$ and we used that the beam splitter acts only on $AC$.
  On the other hand, the assumption $\rho_{\rm out}=\tau_{A}\otimes\tau_{BC}$ shows that $\tr_{B}[\rho_{\rm out}]=\tau_{A}\otimes\tau_{C}$ with $\tau_{C}=\tr_{B}[\tau_{BC}]$.
  Comparing these expressions, applying the nontrivial beam splitter $U_{AC}(\zeta)$ to the product state $\rho_{A}\otimes\rho_{C}$ again yields a product state $\tau_A \otimes \tau_C$.
  By applying Lemma~\ref{lem:BS}, $\rho_{C}$ is a Gaussian state.
\end{proof}

\begin{theorem}\label{thm:faithful}
  $N_C(\rho)=0$ implies that $\rho$ is Gaussian.
\end{theorem}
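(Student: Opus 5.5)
The proof combines the three preceding lemmas. Assume $N_C(\rho)=0$. By property 1 of the correlation measure $C_{A,B}$, the output $U\rho^{\otimes n}U^\dagger$ is then a product state across replica $1$ and $1^c=\{2,\ldots,n\}$. As noted before Lemma~\ref{lem:factorize}, we relabel the discarded replicas so that $0<|O_{1n}|<1$. Lemma~\ref{lem:factorize} then factorizes the convolution as $U=U_{2,\ldots,n}\,U_{1n}(\zeta)\,U_{1,\ldots,n-1}$, where $-\sin\zeta=O_{1n}$.

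The first step removes the outer factor. $U_{2,\ldots,n}$ acts only inside $1^c$, so it is a local unitary for the bipartition $1|1^c$. Conjugating by its inverse preserves the product structure, hence
\[
U_{1n}(\zeta)\,U_{1,\ldots,n-1}\,\rho^{\otimes n}\,U_{1,\ldots,n-1}^\dagger\,U_{1n}(\zeta)^\dagger
\]
is still a product state across $1|1^c$. Next, $U_{1,\ldots,n-1}$ does not touch replica $n$. The state just before the beam splitter is therefore $\rho_{AB}\otimes\rho_C$ with
\[
\rho_{AB}=U_{1,\ldots,n-1}\rho^{\otimes(n-1)}U_{1,\ldots,n-1}^\dagger,
\]
where $A=$ replica $1$, $B=$ replicas $2,\ldots,n-1$, and $C=$ replica $n$ carries $\rho_C=\rho$.

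The second step applies Lemma~\ref{lem:tripartite}. The beam splitter $U_{1n}(\zeta)=U_{AC}(\zeta)$ is nontrivial, because $0<|\sin\zeta|=|O_{1n}|<1$ forces $0<|\cos\zeta|<1$. The output is a product across $A$ and $BC$, since $BC$ is exactly $1^c$. The lemma then gives that $\rho_C=\rho$ is Gaussian, which is the claim. For $n=2$ the system $B$ is empty and $U_{1,\ldots,n-1}$ acts on replica $1$ alone; the same argument holds, now reducing directly to Lemma~\ref{lem:BS}.

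The step that needs the most care is checking that the factorization keeps the bipartition aligned with the lemma's hypotheses. Conjugating by $U_{2,\ldots,n}^\dagger$ must be local to $1^c$, and replica $n$ must be untouched by $U_{1,\ldots,n-1}$, so that the input has the required form $\rho_{AB}\otimes\rho_C$. For fermions we must also confirm that "product state" is meaningful: the parity superselection (all states even) ensures that the fermionic tensor product and partial traces used in Lemma~\ref{lem:tripartite} behave as in the bosonic case. If the relabeling of replicas is an issue, it is harmless here because the input $\rho^{\otimes n}$ is permutation symmetric. Permuting the columns of $O$ among $2,\ldots,n$ corresponds to a permutation unitary on the input that leaves $\rho^{\otimes n}$ invariant, so it does not change $N_C$.
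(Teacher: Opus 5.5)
Your proof is correct and follows the paper's argument essentially step by step. You factorize $U$ via Lemma~\ref{lem:factorize}, discard $U_{2,\ldots,n}$ as a unitary local to $1^{\rm c}$, note that replica $n$ still holds an uncorrelated copy of $\rho$ when the nontrivial beam splitter $U_{1n}(\zeta)$ acts, and conclude with Lemma~\ref{lem:tripartite}. Your additional remarks are accurate and are left implicit in the paper: the $n=2$ case reduces to Lemma~\ref{lem:BS}, and the relabeling of replicas is justified by the permutation symmetry of $\rho^{\otimes n}$.
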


\begin{proof}
  Consider the factorization of Lemma~\ref{lem:factorize}.
  Since the last unitary $U_{2,\ldots,n}$ generates no correlation between replica $1$ and replicas $2,\ldots,n$, $N_C(\rho)=0$ means that after applying $U_{1,\ldots,n-1}$ and then $U_{1n}(\zeta)$ to $\rho^{\otimes n}$, replica $1$ and replicas $2,\ldots,n$ form a product state.
  Right after $U_{1,\ldots,n-1}$, replica $n$ remains in its original state $\rho$ and uncorrelated with replicas $1,\ldots,n-1$.
  Moreover, the assumption $0<|O_{1n}|<1$ makes the beam splitter $U_{1n}(\zeta)$ nontrivial.
  Applying Lemma~\ref{lem:tripartite} with $A=1$, $B=\{2,\ldots,n-1\}$, and $C=n$ then shows that $\rho$ is Gaussian.
\end{proof}

\subsubsection*{Commutation of the convolution unitary with Gaussian channels}

We next show that the convolution unitary commutes with Gaussian channels acting on each replica. This will be used below to prove the key properties of the MRE.

\begin{definition}
  A bosonic $L$-mode to $L$-mode Gaussian channel $\Phi_{X,Y,\bm{r}_{0}}$ acts on the characteristic function as~\cite{hahn2025measuring}
  \begin{align}
    \chi_{\Phi(\rho)}(\bm{u})
    = e^{-\frac{1}{4}\bm{u}^{\rm T}\bar{Y}\bm{u}-i\bm{r}_{0}^{\rm T}\Omega\bm{u}}\,
    \chi_{\rho}\!\qty(\bar{X}^{\rm T}\bm{u}),
    \label{eq:boson_GC}
  \end{align}
  where $\bm{r}_{0}\in\mathbb{R}^{2L}$ is a displacement, $\Omega=\left(\begin{smallmatrix}0 & I_{L}\\ -I_{L} & 0\end{smallmatrix}\right)$ is the symplectic form, $\bar{Y}=\Omega^{\rm T}Y\Omega$, $\bar{X}^{\rm T}=\Omega^{\rm T}X^{\rm T}\Omega$, and $X,Y$ satisfy $Y+i\Omega\ge iX\Omega X^{\rm T}$.
\end{definition}

To treat a system with $n$ replicas, we write $\vec{\bm{u}}=(\bm{u}_{1},\ldots,\bm{u}_{n})$, denote a $n$-replica state by $\rho_{\rm tot}$, and let $\Phi^{\rm tot}_{X,Y,\vec{\bm{r}}_{0}}$ act independently on each replica with common $X,Y$,
\begin{align}
  \chi_{\Phi^{\rm tot}(\rho_{\rm tot})}(\vec{\bm{u}})
  ={} & e^{-\frac{1}{4}\vec{\bm{u}}^{\rm T}(I_{n}\otimes\bar{Y})\vec{\bm{u}}-i\vec{\bm{r}}_{0}^{\rm T}(I_{n}\otimes\Omega)\vec{\bm{u}}} \nonumber \\
      & \times \chi_{\rho_{\rm tot}}\!\qty((I_{n}\otimes\bar{X}^{\rm T})\vec{\bm{u}}),
\end{align}
where the displacement $\vec{\bm{r}}_{0}=(\bm{r}_{0,1},\ldots,\bm{r}_{0,n})$ may differ between replicas.

\begin{lemma}\label{lem:boson_commute}
  The bosonic Gaussian channel~\eqref{eq:boson_GC} and the convolution unitary $U$ satisfy
  \begin{align}
    U\,\qty[\Phi^{\rm tot}_{X,Y,\vec{\bm{r}}_{0}'}(\rho^{\otimes n})]\,U^{\dag}
    = \Phi^{\rm tot}_{X,Y,\vec{\bm{r}}_{0}}\!\qty(U\rho^{\otimes n}U^{\dag}),
  \end{align}
  with $\vec{\bm{r}}_{0}'=(\bm{r}'_{0},\ldots,\bm{r}'_{0})$ and $\vec{\bm{r}}_{0}=(O\otimes I_{2L})\vec{\bm{r}}_{0}'$.
\end{lemma}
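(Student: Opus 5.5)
We verify the identity at the level of characteristic functions. Both sides are Gaussian-channel images of states on the $n$-replica system, and a state is determined by its characteristic function. It therefore suffices to show that the two characteristic functions agree for every $\vec{\bm{u}}=(\bm{u}_1,\ldots,\bm{u}_n)$.

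We use two ingredients already in the paper. The first is the transformation rule~\eqref{eq:conv_char_transform}. Written for an arbitrary $n$-replica state $\rho_{\rm tot}$, it reads
\begin{equation}
  \chi_{U\rho_{\rm tot}U^{\dag}}(\vec{\bm{u}})=\chi_{\rho_{\rm tot}}\!\left((O^{\rm T}\otimes I_{2L})\vec{\bm{u}}\right).
\end{equation}
This follows from $U^{\dag}D_{\rm b}(\vec{\bm{u}})U=D_{\rm b}((O^{\rm T}\otimes I)\vec{\bm{u}})$, because $U$ acts on the quadratures as $O\otimes I_{2L}$, which is orthogonal and symplectic. The second is the replica-wise channel formula displayed just before the lemma.

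For the left-hand side, we apply the channel formula with displacement $\vec{\bm{r}}_0'$ and then evaluate at $\vec{\bm{w}}=(O^{\rm T}\otimes I)\vec{\bm{u}}$. This gives
\begin{equation}
  e^{-\frac14\vec{\bm{w}}^{\rm T}(I_n\otimes\bar Y)\vec{\bm{w}}-i\vec{\bm{r}}_0'^{\rm T}(I_n\otimes\Omega)\vec{\bm{w}}}\,
  \chi_{\rho^{\otimes n}}\!\left((I_n\otimes\bar X^{\rm T})\vec{\bm{w}}\right).
\end{equation}
We now treat the three factors in turn, using that $O^{\rm T}\otimes I$ commutes with every $I_n\otimes A$ and that $OO^{\rm T}=I$. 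The quadratic form is invariant: $\vec{\bm{w}}^{\rm T}(I_n\otimes\bar Y)\vec{\bm{w}}=\vec{\bm{u}}^{\rm T}(I_n\otimes\bar Y)\vec{\bm{u}}$. The linear term becomes $\left[(O\otimes I)\vec{\bm{r}}_0'\right]^{\rm T}(I_n\otimes\Omega)\vec{\bm{u}}$, which is exactly the displacement $\vec{\bm{r}}_0$ claimed in the lemma. In the state factor, $(I_n\otimes\bar X^{\rm T})(O^{\rm T}\otimes I)\vec{\bm{u}}=(O^{\rm T}\otimes I)(I_n\otimes\bar X^{\rm T})\vec{\bm{u}}$, so it equals $\chi_{U\rho^{\otimes n}U^{\dag}}\!\left((I_n\otimes\bar X^{\rm T})\vec{\bm{u}}\right)$ by the transformation rule.

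Collecting the three factors gives precisely the characteristic function of $\Phi^{\rm tot}_{X,Y,\vec{\bm{r}}_0}(U\rho^{\otimes n}U^{\dag})$, which proves the identity.

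The step that needs the most care is the first ingredient: identifying the characteristic-function transformation of $U$ as the linear map $O^{\rm T}\otimes I_{2L}$, including the transpose convention. This requires checking that the passive rotation~\eqref{eq:U_BF} on the annihilation operators induces the real orthogonal map $O\otimes I$ on $(\bm{q},\bm{p})$. That holds because $O$ is real, so $q$ and $p$ are rotated identically, and $O\otimes I$ commutes with $I_n\otimes\Omega$. The argument uses no property of $\rho^{\otimes n}$ itself, so the identity holds for any input state. The only role of the tensor-power structure is that it justifies using a common $(X,Y)$ on every replica.
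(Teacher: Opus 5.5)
Your proof is correct and follows the same route as the paper's: you compare characteristic functions using $\chi_{U\rho_{\rm tot}U^{\dag}}(\vec{\bm{u}})=\chi_{\rho_{\rm tot}}((O^{\rm T}\otimes I_{2L})\vec{\bm{u}})$, and then use that $O^{\rm T}\otimes I_{2L}$ commutes with $I_n\otimes\bar{Y}$, $I_n\otimes\Omega$, and $I_n\otimes\bar{X}^{\rm T}$, so the quadratic term is invariant and the displacement is carried to $(O\otimes I_{2L})\vec{\bm{r}}_0'$. Your closing remark is slightly misattributed but harmless: the common $(X,Y)$ is a property of $\Phi^{\rm tot}$ acting identically on every replica, and the tensor-power form of the input is never used.
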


\begin{proof}
  The convolution acts on the full characteristic function as $\chi_{U\rho_{\rm tot}U^{\dag}}(\vec{\bm{u}})=\chi_{\rho_{\rm tot}}(\vec{\bm{u}}')$ with $\vec{\bm{u}}'=(O^{\rm T}\otimes I_{2L})\vec{\bm{u}}$.
  Applying the channel before the convolution gives
  \begin{align}
    \chi_{1}(\vec{\bm{u}}) ={} & e^{-\frac{1}{4}\vec{\bm{u}}'^{\rm T}(I_{n}\otimes\bar{Y})\vec{\bm{u}}'-i\vec{\bm{r}}_{0}'^{\rm T}(I_{n}\otimes\Omega)\vec{\bm{u}}'} \nonumber \\
     & \times \chi_{\rho^{\otimes n}}\!\qty((I_{n}\otimes\bar{X}^{\rm T})\vec{\bm{u}}'),
  \end{align}
  and the reverse order gives
  \begin{align}
    \chi_{2}(\vec{\bm{u}}) ={} & e^{-\frac{1}{4}\vec{\bm{u}}^{\rm T}(I_{n}\otimes\bar{Y})\vec{\bm{u}}-i\vec{\bm{r}}_{0}^{\rm T}(I_{n}\otimes\Omega)\vec{\bm{u}}} \nonumber \\
     & \times \chi_{\rho^{\otimes n}}\!\qty((I_{n}\otimes\bar{X}^{\rm T})\vec{\bm{u}}').
  \end{align}
  Since $O$ acts only on the replica index, we have
  \begin{align}
    (O\otimes I_{2L})(I_{n}\otimes \mathcal{M})(O^{\rm T}\otimes I_{2L}) = I_{n}\otimes \mathcal{M},
    \quad \mathcal{M}=\bar{Y},\Omega.
  \end{align}
  Hence $\chi_{1}(\vec{\bm{u}})=\chi_{2}(\vec{\bm{u}})$.
\end{proof}

\begin{definition}
  A fermionic $L$-mode to $L$-mode Gaussian channel $\Phi$ acts on the characteristic function as~\cite{bravyi2004lagrangian}
  \begin{align}
    \chi_{\Phi(\rho)}(\bm{u})
    = C\!\int\! D\bm{w}\,D\bm{v}\,
    \exp\!\qty(S(\bm{u},\bm{w})+i\bm{w}^{\rm T}\bm{v})\,
    \chi_{\rho}(\bm{v}),
    \label{eq:fermion_GC}
  \end{align}
  where $\bm{w},\bm{v},\bm{u}$ are $2L$-component real Grassmann vectors, $D\bm{w}=Dw_{2L}\cdots Dw_{1}$, $D\bm{v}=Dv_{2L}\cdots Dv_{1}$, and
  \begin{align}
    S(\bm{u},\bm{w})
    = \frac{i}{2}\begin{pmatrix}\bm{w}^{\rm T} & \bm{u}^{\rm T}\end{pmatrix}
    M\begin{pmatrix}\bm{w}\\ \bm{u}\end{pmatrix},
    \quad
    M=\begin{pmatrix} A & B\\ -B^{\rm T} & D\end{pmatrix},
  \end{align}
  with complex $2L\times 2L$ matrices $A,B,D$, satisfying $A^{\rm T}=-A$ and $D^{\rm T}=-D$, and a complex number $C$.
  Complete positivity requires $C\ge 0$ together with $M$ real and $M^{\rm T}M\le I$.
\end{definition}

\begin{lemma}\label{lem:fermion_commute}
  The fermionic Gaussian channel~\eqref{eq:fermion_GC} commutes with the convolution unitary $U$:
  \begin{align}
    U\,\qty[\Phi(\rho)]^{\otimes n}\,U^{\dag}
    = \Phi^{\otimes n}\!\qty(U\rho^{\otimes n}U^{\dag}).
  \end{align}
\end{lemma}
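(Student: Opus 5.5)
The plan is to mirror the bosonic argument of Lemma~\ref{lem:boson_commute}, working entirely at the level of characteristic functions. First I will record how the convolution unitary acts on the $n$-replica fermionic characteristic function. Since $U^\dagger\bm{a}_sU=\sum_r O_{sr}\bm{a}_r$ with $O$ real orthogonal, the Majorana vector transforms as $U^\dagger \vec{\bm r}\,U=(O\otimes I_{2L})\vec{\bm r}$. Hence
\begin{equation}
  \chi_{U\rho_{\rm tot}U^\dagger}(\vec{\bm u})=\chi_{\rho_{\rm tot}}\bigl((O^{\rm T}\otimes I_{2L})\vec{\bm u}\bigr),
\end{equation}
exactly as in Eq.~\eqref{eq:conv_char_transform}. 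Here one must check that no reordering signs appear. This holds because $U$ is even (generated by a quadratic Majorana form), so $U^\dagger\exp(\vec{\bm r}^{\rm T}\vec{\bm u})U=\exp(\vec{\bm r}^{\rm T}(O^{\rm T}\otimes I)\vec{\bm u})$. I will also use that the $n$-replica characteristic function of a tensor product of even states factorizes: $\chi_{\rho_1\otimes\cdots\otimes\rho_n}(\vec{\bm u})=\prod_r\chi_{\rho_r}(\bm u_r)$. This requires parity superselection, so that each $\chi_{\rho_r}$ is an even Grassmann element and the factors commute.

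Next I will write $\Phi^{\otimes n}$ as a single Grassmann integral transform on $\vec{\bm u}$. Applying Eq.~\eqref{eq:fermion_GC} replica by replica, and again using evenness of the kernel $\exp(S+i\bm w^{\rm T}\bm v)$ so that the per-replica integrals can be collected, gives
\begin{equation}
  \chi_{\Phi^{\otimes n}(\rho_{\rm tot})}(\vec{\bm u})=C^n\!\int D\vec{\bm w}\,D\vec{\bm v}\,
  e^{S_n(\vec{\bm u},\vec{\bm w})+i\vec{\bm w}^{\rm T}\vec{\bm v}}\chi_{\rho_{\rm tot}}(\vec{\bm v}),
\end{equation}
where $S_n$ is built from $I_n\otimes M$, i.e.\ its blocks are $I_n\otimes A$, $I_n\otimes B$, and $I_n\otimes D$. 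This holds for arbitrary even $\rho_{\rm tot}$, by linearity from product states.

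Now I compare the two orders of operation. Applying $\Phi^{\otimes n}$ to $U\rho^{\otimes n}U^\dagger$ places $\chi_{\rho^{\otimes n}}((O^{\rm T}\otimes I)\vec{\bm v})$ inside the integral. I then change variables to $\vec{\bm v}'=(O^{\rm T}\otimes I)\vec{\bm v}$ and $\vec{\bm w}'=(O^{\rm T}\otimes I)\vec{\bm w}$. The Berezin measure transforms by $\det(O^{\rm T}\otimes I)^{-1}$ for each variable set, so the two factors together give $(\det O)^{-4L}=1$ (each set separately contributes $\pm1$, and the signs cancel). The coupling term is invariant, since $\vec{\bm w}^{\rm T}\vec{\bm v}=\vec{\bm w}'^{\rm T}\vec{\bm v}'$ by orthogonality. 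The quadratic form transforms using $(O\otimes I)(I_n\otimes\mathcal M)(O^{\rm T}\otimes I)=I_n\otimes\mathcal M$ for $\mathcal M=A,B,D$, so that $S_n(\vec{\bm u},\vec{\bm w})=S_n(\vec{\bm u}',\vec{\bm w}')$ with $\vec{\bm u}'=(O^{\rm T}\otimes I)\vec{\bm u}$. The result is exactly $\chi_{\Phi^{\otimes n}(\rho^{\otimes n})}(\vec{\bm u}')$. Since $\Phi^{\otimes n}(\rho^{\otimes n})=[\Phi(\rho)]^{\otimes n}$, this equals $\chi_{U[\Phi(\rho)]^{\otimes n}U^\dagger}(\vec{\bm u})$ by the first step. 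Characteristic functions determine the operator uniquely through the dual expansion, which proves the claim. Unlike the bosonic case, no displacement term arises, because fermionic Gaussian channels are parity-preserving and carry no linear term.

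The main obstacle I expect is the sign bookkeeping in the second step. One must justify that the tensor product of channels corresponds to the product of Grassmann kernels with integration measures reordered into $D\vec{\bm w}\,D\vec{\bm v}$ without stray signs. One must also confirm that the Berezin Jacobians cancel, including when $\det O=-1$. I would handle both points by noting that every kernel and every state function is even, and that reordering the $2L$-component blocks of measures produces signs $(-1)^{(2L)^2}=1$.
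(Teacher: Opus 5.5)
Your proposal is correct and follows the paper's proof essentially step for step: you use the characteristic-function action $\chi_{U\rho_{\rm tot}U^\dagger}(\vec{\bm u})=\chi_{\rho_{\rm tot}}((O^{\rm T}\otimes I_{2L})\vec{\bm u})$, write $\Phi^{\otimes n}$ as a single Grassmann integral with kernel built from $I_n\otimes M$, and change variables $\vec{\bm w},\vec{\bm v}\to(O^{\rm T}\otimes I_{2L})\vec{\bm w},(O^{\rm T}\otimes I_{2L})\vec{\bm v}$, under which the measure, the coupling $\vec{\bm w}^{\rm T}\vec{\bm v}$, and the quadratic form are all invariant. One minor nit: each Berezin Jacobian is already $\det(O^{\rm T}\otimes I_{2L})=(\det O)^{2L}=1$ on its own, so no cancellation of signs between the two variable sets is needed.
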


\begin{proof}
  To describe the $n$ replicas, we define $\vec{\bm{u}}=(\bm{u}_{1},\ldots,\bm{u}_{n})$, and similarly for $\vec{\bm{v}}$ and $\vec{\bm{w}}$.
  Furthermore, we define $\vec{\bm{u}}'=(O^{\rm T}\otimes I_{2L})\vec{\bm{u}}$, and likewise $\vec{\bm{v}}',\vec{\bm{w}}'$.
  The convolution acts on the full characteristic function as $\chi_{U\rho_{\rm tot}U^{\dag}}(\vec{\bm{u}})=\chi_{\rho_{\rm tot}}(\vec{\bm{u}}')$.
  A Gaussian channel $\Phi^{\otimes n}$ acting identically on each replica reads
  \begin{align}
    \chi_{\Phi^{\otimes n}(\rho_{\rm tot})}(\vec{\bm{u}})
    = C^{n}\!\int\! D\vec{\bm{w}}\,D\vec{\bm{v}}\,
    e^{S_{\rm tot}(\vec{\bm{u}},\vec{\bm{w}})+i\vec{\bm{w}}^{\rm T}\vec{\bm{v}}}
    \chi_{\rho_{\rm tot}}(\vec{\bm{v}}),
  \end{align}
  with $S_{\rm tot}(\vec{\bm{u}},\vec{\bm{w}})=\tfrac{i}{2}(\vec{\bm{w}}^{\rm T}\ \vec{\bm{u}}^{\rm T})(I_{n}\otimes M)(\vec{\bm{w}}^{\rm T}\ \vec{\bm{u}}^{\rm T})^{\rm T}$.
  Applying the channel before the convolution gives
  \begin{align}
    \chi_{1}(\vec{\bm{u}}) = C^{n}\!\int\! D\vec{\bm{w}}\,D\vec{\bm{v}}\,
    e^{S_{\rm tot}(\vec{\bm{u}}',\vec{\bm{w}})+i\vec{\bm{w}}^{\rm T}\vec{\bm{v}}}
    \chi_{\rho^{\otimes n}}(\vec{\bm{v}}),
  \end{align}
  and the reverse order gives
  \begin{align}
    \chi_{2}(\vec{\bm{u}}) = C^{n}\!\int\! D\vec{\bm{w}}\,D\vec{\bm{v}}\,
    e^{S_{\rm tot}(\vec{\bm{u}},\vec{\bm{w}})+i\vec{\bm{w}}^{\rm T}\vec{\bm{v}}}
    \chi_{\rho^{\otimes n}}(\vec{\bm{v}}').
  \end{align}
  Changing variables $\vec{\bm{w}},\vec{\bm{v}}\to\vec{\bm{w}}',\vec{\bm{v}}'$ in $\chi_{2}$ leaves the measure invariant (since $\det O=\pm 1$) and gives
  \begin{equation}
    \vec{\bm{w}}^{\rm T}\vec{\bm{v}}=\vec{\bm{w}}'^{\rm T}\vec{\bm{v}}', \; S_{\rm tot}(\vec{\bm{u}},\vec{\bm{w}})=S_{\rm tot}(\vec{\bm{u}}',\vec{\bm{w}}').
  \end{equation}
  Both hold because $O$ acts only on the replica index, exactly as in the bosonic case.
  Hence $\chi_{1}(\vec{\bm{u}})=\chi_{2}(\vec{\bm{u}})$.
\end{proof}

\subsubsection*{Proofs of the basic properties}

We now assemble the results above to prove the Theorem~\ref{thm:NC-properties}.

\begin{proof}
\textit{Faithfulness.---}
We begin by proving the faithfulness.
We first note that $N_C(\rho)\geq0$ follows if the measure of correlation is non-negative.
Next we show that $N_C(\rho_G)=0$ for Gaussian states $\rho_G$.
Because the bosonic and fermionic Gaussian characteristic functions~\eqref{eq:chi_gaussian_boson} and \eqref{eq:chi_gaussian_fermion_app} are exponentials of at most a quadratic form in $\bm{u}$, the orthogonal transformation $\bm{u}_{r}\mapsto\sum_{s}O_{sr}\bm{u}_{s}$ induced by the convolution unitary in Eq.~\eqref{eq:conv_char_transform} keeps its form diagonal in the replica index.
The output characteristic function therefore factorizes over replicas, so the convolved state is a tensor product with no inter-replica correlation, which means $N_C(\rho_G)=0$.
Conversely, $N_C(\rho)=0$ implies that $\rho$ is Gaussian by Theorem~\ref{thm:faithful}.
This establishes the faithfulness.

\textit{Additivity.---}
Let $\rho_{A}$ and $\rho_{B}$ be states on systems $A$ and $B$.
Since the convolution unitary acts uniformly on all sites with the same replica rotation $O$, it factorizes across the two systems, and the displacement label of the joint system splits as $\vec{\bm{u}}=(\vec{\bm{u}}_{A},\vec{\bm{u}}_{B})$, with $\vec{\bm{u}}'=(\vec{\bm{u}}_{A}',\vec{\bm{u}}_{B}')$ under the replica rotation $\vec{\bm{u}}'=(O^{\rm T}\otimes I_{2L})\vec{\bm{u}}$.
Using the transformation $\chi_{U\rho_{\rm tot}U^{\dag}}(\vec{\bm{u}})=\chi_{\rho_{\rm tot}}(\vec{\bm{u}}')$, the characteristic function of the convolved product state factorizes,
\begin{align}
   & \chi_{U(\rho_{A}\otimes\rho_{B})^{\otimes n}U^{\dag}}(\vec{\bm{u}}) \nonumber \\
   & \quad= \chi_{(\rho_{A}\otimes\rho_{B})^{\otimes n}}(\vec{\bm{u}}')
  = \prod_{r=1}^{n}\chi_{\rho_{A}}(\bm{u}_{A,r}')\,\chi_{\rho_{B}}(\bm{u}_{B,r}') \nonumber \\
   & \quad= \chi_{U\rho_{A}^{\otimes n}U^{\dag}}(\vec{\bm{u}}_{A})\,
  \chi_{U\rho_{B}^{\otimes n}U^{\dag}}(\vec{\bm{u}}_{B}),
\end{align}
where $\bm{u}_{A,r}$ ($\bm{u}_{B,r}$) denotes the label of system $A$ ($B$) in replica $r$.
The convolved product state thus factorizes as
\begin{align}
   & U(\rho_{A}\otimes\rho_{B})^{\otimes n}U^{\dag}= \qty(U\rho_{A}^{\otimes n}U^{\dag})
  \otimes\qty(U\rho_{B}^{\otimes n}U^{\dag}).
\end{align}
Then, using the additivity of the measure of correlation $C$, we have
\begin{align}
  N_C(\rho_{A}\otimes\rho_{B}) & = C_{1, 1^c}(U(\rho_{A}\otimes\rho_{B})^{\otimes n}U^{\dag}) \nonumber \\
   & = C_{1, 1^c}\qty(U\rho_{A}^{\otimes n}U^{\dag}) + C_{1, 1^c}\qty(U\rho_{B}^{\otimes n}U^{\dag})\nonumber \\
   & = N_C(\rho_{A}) + N_C(\rho_{B}).
\end{align}

\textit{Monotonicity under a Gaussian channel.---}
We prove monotonicity of the correlation-based measure $N_C$ under Gaussian channels.
Using Stinespring dilation, a general Gaussian channel from an $L$-mode system to an $M$-mode output system $S$ is expressed as
\begin{equation}
  \Phi(\rho) = \tr_E[U_G(\rho\otimes\ket{\psi_G}\bra{\psi_G}^{\otimes N}) U_G^\dagger],
\end{equation}
where $\ket{\psi_G}\bra{\psi_G}^{\otimes N}$ is an $N$-mode Gaussian ancilla, $U_{G}\in\mathcal{G}_{L+N}$ is a Gaussian unitary, and $\tr_E$ traces out the environment $E$ consisting of the $L+N-M$ discarded modes, keeping the output system $S$.
We prove monotonicity under (1) adding ancillas $\ket{\psi_G}\bra{\psi_G}^{\otimes N}$, (2) Gaussian unitary $U_{G}$, (3) tracing out the subsystem, which constitute a class of Gaussian channels.
(1) Invariance under adding ancilla $\ket{\psi_G}\bra{\psi_G}^{\otimes N}$ follows from additivity and faithfulness:
\begin{align}
  N_C(\rho \otimes \ket{\psi_G}\bra{\psi_G}^{\otimes N}) &= N_C(\rho) + N_C(\ket{\psi_G}\bra{\psi_G}^{\otimes N})\nonumber\\
  & = N_C(\rho).
\end{align}
(2) Invariance under Gaussian unitaries follows from Lemmas~\ref{lem:boson_commute} and \ref{lem:fermion_commute}.
Since the $L$-mode to $L$-mode Gaussian channel commutes with the convolution unitary, a Gaussian channel can be regarded as a local operation in each replica, which does not generate correlations across different replicas.
The monotonicity under $L$-mode to $L$-mode Gaussian channel implies invariance under Gaussian unitary, since a Gaussian unitary is reversible.
(3) To prove monotonicity under tracing out a subsystem, we consider a system composed of a retained part $S$ and a discarded part $E$.
We use the notation such as $\vec{\bm{u}}=(\vec{\bm{u}}_{S},\vec{\bm{u}}_{E})$.
Applying convolution unitary to an input state $\sigma = \tr_E \rho$ yields
\begin{align}
  \chi_{U \sigma^{\otimes n}U^\dag} (\vec{\bm{u}}_S) & = \chi_{\sigma^{\otimes n}}(\vec{\bm{u}}_S')\nonumber \\
   & = \chi_{\rho^{\otimes n}} (\vec{\bm{u}}_S', \vec{\bm{u}}_E'=0).
\end{align}
On the other hand, tracing out $E$ in each replica after the convolution unitary yields
\begin{align}
  \chi_{\tr_E[U \rho^{\otimes n} U^\dag]} (\vec{\bm{u}}_S) = \chi_{\rho^{\otimes n}} (\vec{\bm{u}}_S', \vec{\bm{u}}_E'=0).
\end{align}
Therefore, tracing out a subsystem commutes with the convolution unitary.
Since tracing out a subsystem in each replica does not generate correlation, $N_C$ is monotone under partial trace.
We remark that, because the R\'enyi entanglement entropy is a genuine correlation measure for pure states, the MRE \(M_n\) discussed in the main text is a monotone under a pure-to-pure Gaussian channel, but it can increase under a Gaussian channel if mixed outputs are allowed.
\end{proof}

\subsection{Monotonicity under Gaussian protocols}\label{app:monotone}

Here we prove Theorems~\ref{thm:boson-protocol-monotonicity} and \ref{thm:fermion-protocol-monotonicity}, namely monotonicity of the MRE and strong monotonicity of the linear MRE under Gaussian protocols.
Both properties follow once we show that the convolution purity does not decrease on average under the adaptive Gaussian protocols defined in Sec.~\ref{sec:mono}, for bosons with every integer $n\geq2$ and for fermions with $n=2$.
This result extends the pure-to-pure Gaussian-channel monotonicity proved above to adaptive protocols, which can contain measurements and outcome-dependent feedforward.
Throughout this subsection, \(\mathcal C_n\) and \(\nu_n\) denote the convolution and the convolution purity constructed from the Helmert unitary.

The proof proceeds in the same way for bosons and fermions.
The key step is to show that a single measurement does not decrease $\nu_n$ on average, which we establish as Lemma~\ref{lem:homodyne} for bosons and as Lemma~\ref{lem:measurement} for fermions below.
We first prove the purity theorems assuming these two lemmas, then derive Theorems~\ref{thm:boson-protocol-monotonicity} and \ref{thm:fermion-protocol-monotonicity}, and finally prove the lemmas.

\begin{theorem}
\label{thm:boson_adaptive}
Let $n\geq 2$ be an integer, and let \(\mathcal E\) be a bosonic Gaussian protocol that maps a pure input $|\psi\rangle$ to pure outputs $|\varphi_\lambda\rangle$ with probabilities $p_\lambda$.
Then the convolution purity satisfies
\begin{equation}
  \nu_n(\psi)
  \leq \int \; d\lambda \,p_\lambda\nu_n(\varphi_\lambda).
  \label{eq:boson_purity_mono}
\end{equation}
\end{theorem}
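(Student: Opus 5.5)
The plan is to decompose the protocol $\mathcal E$ into its elementary steps (i)--(vi) and show that the averaged convolution purity $\sum_\lambda p_\lambda \nu_n(\cdot)$ never decreases along the protocol. Since intermediate states may be mixed, we work with $\nu_n$ extended to mixed inputs, $\nu_n(\rho)=\tr[\mathcal C_n(\rho)^2]$, given by the closed form \eqref{eq:purity_boson_chi}. We then track a "purity functional" on ensembles $\{p_\mu,\rho_\mu\}$, namely $\sum_\mu p_\mu \nu_n(\rho_\mu)$. At the start this equals $\nu_n(\psi)$, and at the end it equals $\int d\lambda\, p_\lambda \nu_n(\varphi_\lambda)$. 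It therefore suffices to prove that each elementary step does not decrease $\nu_n(\rho)$ for an arbitrary (possibly mixed) state $\rho$, or does not decrease it on average when the step branches.

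The steps are handled as follows.
\begin{enumerate}
\item Gaussian unitaries leave $\nu_n$ invariant. By Lemma~\ref{lem:boson_commute}, $U_G^{\otimes n}$ commutes with the Helmert convolution up to a displacement, so $\mathcal C_n(U_G\rho U_G^\dagger)$ is unitarily equivalent to $\mathcal C_n(\rho)$.
\item Adding a pure Gaussian ancilla leaves $\nu_n$ unchanged by the multiplicativity \eqref{eq:nu_multiplicative}, together with $\nu_n(\psi_G)=1$.
\item For partial trace, $\tr_E$ commutes with the convolution (as shown in the channel-monotonicity proof), so $\mathcal C_n(\tr_E\rho)=\tr_E \mathcal C_n(\rho)$. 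This alone does not give a purity inequality, since the partial trace of a mixed state can be less pure. Instead, I would use the explicit formula $\nu_n(\rho)=n^{L}\int \frac{d\bm u}{(2\pi)^L}|\chi_\rho(\bm u/\sqrt n)|^{2}\cdots$. Tracing out $E$ sets $\bm u_E=0$, so the integrand restricted to $\bm u_E=0$ must be compared with the full integral.
\end{enumerate}
This comparison is delicate, so I would combine partial traces with the final purity instead: a homodyne measurement on $E$ followed by discarding the outcome is a partial trace. The cleaner route is therefore to replace each partial trace by a homodyne measurement on the traced modes and then forget the outcome classically. Lemma~\ref{lem:homodyne} (single homodyne measurement does not decrease $\nu_n$ on average) then covers partial traces as well, because discarding the measured mode after homodyning leaves a state whose $\nu_n$ is unchanged: the post-measurement mode is in an (idealized) Gaussian position eigenstate, which can be handled by multiplicativity as a limit of squeezed states. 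Conditioning steps (v) and (vi) are averages over branches. The averaged functional is linear in the ensemble weights, so conditioned free operations preserve the inequality branch by branch. Classical randomness only enlarges the ensemble, and we must avoid convexity issues by never merging branches; the final pure outputs $\varphi_\lambda$ are indexed including all classical labels.

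The main obstacle is the homodyne step, Lemma~\ref{lem:homodyne}. I would prove it with the characteristic-function formula. The post-measurement state for outcome $x$ on mode $1$ has characteristic function given by a Fourier-type integral of $\chi_\rho$ over the conjugate variable, and $p(x)$ appears as a normalization. Using the Helmert duality \eqref{eq:duality_BF}, I would show that $\int dx\, p(x)\nu_n(\rho_x)$ equals the same phase-space integral as $\nu_n(\rho)$ but with a Gaussian averaging that can only increase it. Concretely, the relevant quantity becomes an $n$-fold convolution whose $x$-integral, via Cauchy--Schwarz or Jensen's inequality applied to $p(x)^{1-n}$-weighted terms, dominates the original moment. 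Verifying this inequality for general $n$ (rather than only $n=2$) is where I expect the real work, and I would first check it on the $\bm u_E=0$ slice structure with Parseval's identity in the measured quadrature.
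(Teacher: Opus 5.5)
Your scaffolding is sound, and it handles partial traces differently from the paper. The paper keeps every traced subsystem as an environment register, so each branch carries a pure global state. At the end it writes $\ket{\Omega_\lambda}=\ket{\varphi_\lambda}\otimes\ket{\xi_\lambda}$ and uses multiplicativity with $\nu_n(\xi_\lambda)\le 1$. You instead replace each partial trace by a fictitious homodyne measurement whose outcome is kept as a branch label. That also works, but two things should be said explicitly:
\begin{itemize}
\item With this replacement every branch state stays pure (pure input, Gaussian unitaries, pure Gaussian ancillas, homodyne conditioning of pure states). So extending $\nu_n$ to mixed states, and claiming that each step does not decrease $\nu_n(\rho)$ for arbitrary mixed $\rho$, is unnecessary. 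You have no argument for that mixed-state claim anyway.
\item At the end, the fine-grained outputs for fixed $\lambda$ all equal $\varphi_\lambda$. This holds because the rank-one operator $p_\lambda\dyad{\varphi_\lambda}$ admits only trivial convex decompositions; indexing by classical labels alone does not give it.
\end{itemize}

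The genuine gap is the homodyne step, Lemma~\ref{lem:homodyne}, which carries the whole content of the theorem. You defer it, and your sketch lacks the mechanism that makes it true.

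The conditional states $\eta_{\bm x}$ only see the slice of $\chi_\Psi$ with zero displacement along the measured quadrature, as partial Fourier transforms in the conjugate variable. By contrast, $\nu_n(\Psi)=n^{L}\int\frac{d\bm u}{(2\pi)^L}|\chi_\Psi(\bm u)|^{2n}$ also integrates over the off-slice directions, which encode coherences between different outcomes. No Jensen or Cauchy--Schwarz manipulation on the slice, with or without $p(\bm x)^{1-n}$ weights, controls those contributions. You must use that $\Psi$ is pure.

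The paper works in the replicated Hilbert space. It writes $\ket{\Psi}^{\otimes n}$ in the homodyne basis and notes that the Helmert unitary maps $\ket{\vec{\bm x}}_Q$ to $\ket{(O_H\otimes I)\vec{\bm x}}_Q$ (up to a phase), with unit Jacobian. It then applies convexity of the purity twice:
\begin{itemize}
\item first it conditions on the rotated outcomes $\vec{\bm x}'_{1^{\rm c}}$;
\item then it uses purity of the conditional global state to trade the reduced state on $R_1Q_1$ for the one on $R_{1^{\rm c}}$, and conditions on $\bm x'_1$.
\end{itemize}
This yields $\nu_n(\Psi)\le\int\prod_r[d\bm x_r\,p(\bm x_r)]\,\nu_n(\eta_{\bm x_1},\ldots,\eta_{\bm x_n})$, an average of \emph{cross}-convolution purities of independently drawn conditional states. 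Applying H\"older to the closed form $\nu_n(\rho_1,\ldots,\rho_n)=\int\frac{d\bm u}{(2\pi)^L}\prod_r|\chi_{\rho_r}(\bm u/\sqrt n)|^2$ gives $\nu_n(\eta_{\bm x_1},\ldots,\eta_{\bm x_n})\le\prod_r\nu_n(\eta_{\bm x_r})^{1/n}$. The integral then factorizes, and Jensen's inequality $\bigl[\int p\,\nu^{1/n}\bigr]^n\le\int p\,\nu$ finishes the argument for every $n\ge2$. Without this, or an equivalent argument exploiting purity, your proposal does not establish the inequality.
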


\begin{theorem}
\label{thm:fermion_adaptive}
Let $\ket{\psi}$ be a normalized pure state of definite fermion parity, and let $\mathcal{E}$ be a fermionic Gaussian protocol that maps $|\psi\rangle$ to pure outputs $|\varphi_\lambda\rangle$ with probabilities $p_\lambda$.
Then the convolution purity satisfies
\begin{equation}
  \nu_{2}(\psi)\leq\sum_\lambda p_\lambda\nu_{2}(\varphi_\lambda).
  \label{eq:fermion_purity_mono}
\end{equation}
\end{theorem}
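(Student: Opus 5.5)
The proof reduces the protocol to its elementary steps and tracks the average convolution purity $\bar\nu\equiv\sum_\lambda p_\lambda\nu_2(\rho_\lambda)$ along them. Intermediate states may be mixed, so we work throughout with $\nu_2(\rho)=\tr[\mathcal{C}_2(\rho)^2]$, defined for arbitrary even-parity density operators through the closed form~\eqref{eq:purity_fermion_app}. The elementary steps are (i) Gaussian unitaries, (ii) partial traces, (iii) occupation-number measurements, (iv) appending $\ket{0}$ or other pure Gaussian ancillas, and (v)--(vi) their classically conditioned versions. The goal is to show that $\bar\nu$ never decreases at any step. At the end the outputs are pure, and $\nu_2(\varphi_\lambda)$ is then exactly the pure-state convolution purity, which gives Eq.~\eqref{eq:fermion_purity_mono}.

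The easy steps are handled as follows.
\begin{enumerate}
\item For (i), $\nu_2$ is invariant. Lemma~\ref{lem:fermion_commute} gives $\mathcal{C}_2(U_G\rho U_G^\dag)=U_G\,\mathcal{C}_2(\rho)\,U_G^\dag$, and this holds for mixed $\rho$ as well.
\item For (iv), multiplicativity~\eqref{eq:nu_multiplicative} together with $\nu_2=1$ on pure Gaussian states shows that appending an ancilla leaves $\nu_2$ unchanged.
\item For (ii), the commutation of the partial trace with the convolution unitary, shown in the proof of Theorem~\ref{thm:NC-properties}, gives $\mathcal{C}_2(\tr_E\rho)=\tr_E\mathcal{C}_2(\rho)$. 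Purity can decrease under a partial trace, so this step cannot be treated alone. I would therefore never use (ii) on its own: a discard is always preceded by a measurement of the discarded modes in the occupation basis, which has the same effect on average over the unobserved outcome. Since the final states are pure, each traced-out subsystem ends up uncorrelated with the kept one in every branch, and so the trace may be replaced by a measurement followed by a discard of a pure product factor. That discard is then covered by the reverse of multiplicativity, $\nu_2(\sigma\otimes\ketbra{m}{m})=\nu_2(\sigma)\nu_2(\ketbra{m}{m})$ with $\nu_2(\ketbra{m}{m})=1$. I need to double-check this reduction: it relies on purity of the outputs and on the fact that the measurement can be deferred past later Gaussian operations.
\item For (v) and (vi), the conditioning only relabels branches and therefore preserves $\bar\nu$.
\end{enumerate}

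The core step, and the main obstacle, is (iii), stated as Lemma~\ref{lem:measurement}. Measuring the occupation $n_1$ of one mode of an even state $\rho$ with outcomes $k=0,1$ should give
\begin{equation}
\nu_2(\rho)\le \sum_{k}p_k\,\nu_2(\rho_k).
\end{equation}
I would prove this with the Majorana-coefficient formula. Write $\chi_\rho(\bm{u})=\sum_A\kappa(A)\bm{u}_A$ and split by whether $A$ contains the Majorana pair $(\gamma_1,\bar\gamma_1)$: $\chi=f_0+\bm{u}_{\gamma_1}\bm{u}_{\bar\gamma_1}f_1+(\text{odd-in-mode-1 terms})$. The post-measurement branches $p_k\rho_k$ then have characteristic functions $\tfrac12(f_0\pm i f_1)$ times the mode-1 projector factor, and the odd terms are killed by dephasing. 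For $n=2$, $\nu_2\propto\|\chi(\bm{u}/\sqrt2)^2\|^2$ is quadratic in products of coefficients. I would compute both sides explicitly and show that their difference is a sum of squares, namely the norms of the cross terms (odd-in-mode-1)$\times$(odd-in-mode-1) and (odd)$\times$(even) that appear in $\nu_2(\rho)$ but are removed by the measurement, plus a term $\sum_k p_k^{-1}\|\cdot\|^2-\|\cdot\|^2\ge0$ coming from Cauchy--Schwarz on the $f_0\pm if_1$ pieces. The difficulty is keeping the Grassmann signs and the $1/\sqrt2$ rescaling under control; I expect the dephasing and the branch-normalization inequality to carry the argument. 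This also explains why only $n=2$ is claimed: for higher $n$ the $n$th power produces indefinite cross terms. Combining (i)--(vi) then yields Theorem~\ref{thm:fermion_adaptive}.
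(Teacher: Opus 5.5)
\textbf{The gap is in step (iii), the one-mode measurement inequality, which is the whole content of the theorem.} Your reduction of the protocol to elementary steps is fine in outline, but the mechanism you propose for (iii) does not work.

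Take the paper's notation, with a pure parent $\ket{\Psi}=\sqrt{p}\ket{\eta_0}\gtensor\ket{0}+\sqrt{q}\ket{\eta_1}\gtensor\ket{1}$. Eq.~\eqref{eq:nu_three_terms} shows how the mode-odd (coherence) blocks enter $\nu_2(\Psi)$:
\begin{itemize}
\item they add the term $8pq\norm{(p\tilde{\chi}_{++}+q\tilde{\chi}_{--})\tilde{\chi}_{+-}}^2$;
\item the odd$\times$odd product $2pq\,\tilde{\chi}_{+-}\tilde{\chi}_{-+}$ lands in the same $u_1u_2$ sector as the dephased piece $p^2\tilde{\chi}_{++}^2-q^2\tilde{\chi}_{--}^2$ and interferes with it.
\end{itemize}
By Lemma~\ref{lem:chi_pairing}, this interference equals $4pq(p^2n_{++,+-}+q^2n_{--,+-})\geq 0$. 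So every coherence contribution is nonnegative, and $\nu_2(\Psi)$ exceeds the convolution purity of the dephased state. These terms therefore enter $p\,\nu_2(\eta_0)+q\,\nu_2(\eta_1)-\nu_2(\Psi)$ with a minus sign. They are not extra squares on your side of the ledger, as your sketch assumes. The inequality holds only because the gain from renormalizing the branches outweighs them.

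Showing that requires exact relations between the coherence blocks and the diagonal blocks:
\begin{itemize}
\item $n_{+-,-+}=n_{++,--}$ together with the other orthogonality relations and equalities of Lemma~\ref{lem:chi_pairing};
\item the four-replica identity of Lemma~\ref{lem:chi_pairing_mixed}.
\end{itemize}
These follow from purity and definite parity of the parent, combined with $USU^{\dagger}=\Pi_2$ for the balanced beam splitter, Eq.~\eqref{eq:swap_to_parity}. With them the paper writes $2^{L+1}\nu_2(\Psi)$ as the Bernstein quartic~\eqref{eq:nu_quartic} and shows its second differences are squared norms. Hence it is convex in $p$ and lies below the chord between its endpoint values $2^{L+1}\nu_2(\eta_1)$ and $2^{L+1}\nu_2(\eta_0)$. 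Cauchy--Schwarz on generic coefficient vectors $f_0\pm i f_1$ cannot produce these identities. For the arbitrary mixed even $\rho$ you are aiming at, they are not available at all.

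\textbf{Two smaller points.} First, the mixed-state generality is unnecessary and makes the task harder. If each partial trace is replaced by an occupation measurement of the discarded modes, or is kept as an environment register as the paper does, the global state stays pure with definite parity along every branch. Then only the pure-parent lemma is needed; an odd parent is handled by appending an occupied ancilla $\ket{1}_a$.

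Second, your justification for replacing traces is off. At the moment of discard, the discarded modes can be entangled with the kept ones. What makes the replacement harmless is different: refining branch $\lambda$ by the outcome on the discarded modes gives final states whose weighted average is the pure $\ket{\varphi_\lambda}$. By extremality of pure states, each refined output then equals $\ket{\varphi_\lambda}$. The paper avoids this bookkeeping entirely. It never traces; at the end it uses $\ket{\Omega_\lambda}=\ket{\varphi_\lambda}\gtensor\ket{\xi_\lambda}$, multiplicativity~\eqref{eq:nu_multiplicative}, and $\nu_2(\xi_\lambda)\leq 1$.
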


\begin{proof}[Proof of Theorems~\ref{thm:boson_adaptive} and \ref{thm:fermion_adaptive}]
We resolve all measurement outcomes and classical random variables of the protocol into branches labeled by $\lambda$, replacing every partial trace by retaining the traced subsystem as an environment register.
Along each branch, the total state then remains pure.
For fermions, the total state of each branch has definite fermion parity at every stage.

Using this dilation, the only nontrivial element of the protocol is a measurement.
A Gaussian unitary preserves $\nu_n$ by the invariance established in Sec.~\ref{app:basic}.
Adding a pure Gaussian ancilla also preserves $\nu_n$ by the multiplicativity~\eqref{eq:nu_multiplicative} together with $\nu_n=1$ on pure Gaussian states.
A partial trace does nothing to the dilated global state, because the discarded mode is merely relabeled as environment.
A classical random choice among such operations leaves the average of $\nu_n$ unchanged, since every selected nonmeasurement operation separately preserves the value.

From Lemmas~\ref{lem:homodyne} and \ref{lem:measurement} that will be shown below, a single measurement does not decrease $\nu_n$ on average.
For bosons, the global state at a homodyne measurement takes the form of Eq.~\eqref{eq:homodyne_decomp}, where the unmeasured register $R$ includes every other physical and environment mode.
Lemma~\ref{lem:homodyne} then applies to the homodyne measurement.
For fermions, the global state at a one-mode occupation measurement takes the form of Eq.~\eqref{eq:Psi_split}, where $\ket{\eta_{0}}$ and $\ket{\eta_{1}}$ include every other physical and environment mode.
Lemma~\ref{lem:measurement} then applies to the occupation measurement.
We note that a simultaneous measurement of commuting occupation numbers can be treated as a sequence of one-mode measurements.
Applying these lemmas at every measurement and averaging over all outcomes, we obtain
\begin{equation}
  \nu_n(\psi)
  \leq
  \sum_\lambda p_\lambda\nu_n(\Omega_\lambda),
  \label{eq:branch_average}
\end{equation}
where $|\Omega_\lambda\rangle$ is the final pure state of branch $\lambda$ on the enlarged Hilbert space, and the sum over $\lambda$ is understood as integrations over continuous outcomes for bosons.

Since the reduced state of $|\Omega_\lambda\rangle$ on the physical Hilbert space is the pure output $|\varphi_\lambda\rangle$, there exists a normalized pure register state $|\xi_\lambda\rangle$ such that
\begin{equation}
  |\Omega_\lambda\rangle
  =
  |\varphi_\lambda\rangle\otimes|\xi_\lambda\rangle,
\end{equation}
where the tensor product is understood as the graded one for fermions.
In the fermionic case, since $|\Omega_{\lambda}\rangle$ has definite parity, both factors also have definite parities.
The multiplicativity of $\nu_n$ and the purity bound
$\nu_n(\xi_\lambda)\leq1$ then imply
\begin{equation}
  \nu_n(\Omega_\lambda)
  =
  \nu_n(\varphi_\lambda)\nu_n(\xi_\lambda)
  \leq
  \nu_n(\varphi_\lambda),
\end{equation}
which proves Eqs.~\eqref{eq:boson_purity_mono} and \eqref{eq:fermion_purity_mono}.
\end{proof}

Theorems~\ref{thm:boson-protocol-monotonicity} and \ref{thm:fermion-protocol-monotonicity} in the main text follow from Theorems~\ref{thm:boson_adaptive} and \ref{thm:fermion_adaptive}.
In the following, $\nu_n$ denotes the bosonic convolution purity with an integer $n\geq2$ or the fermionic one with $n=2$.
First, strong monotonicity of the linear MRE is a restatement of the purity inequalities.
Since $M_n^{\rm lin}=1-\nu_n$ and the probabilities $p_\lambda$ sum to unity, Eqs.~\eqref{eq:boson_purity_mono} and \eqref{eq:fermion_purity_mono} are equivalent to
\begin{equation}
  \sum_\lambda p_\lambda M_n^{\rm lin}(\varphi_\lambda)\leq M_n^{\rm lin}(\psi).
\end{equation}
Second, to show monotonicity of the MRE, let ${\mathcal E}$ be a Gaussian protocol that maps a pure input $|\psi\rangle$ deterministically to a pure output $|\varphi\rangle$, i.e., ${\mathcal E}(\dyad{\psi})=\dyad{\varphi}$.
Since every branch $\lambda$ results in the same output $|\varphi\rangle$, the purity inequalities reduce to
\begin{equation}
  \nu_n(\psi)\leq\nu_n(\varphi).
\end{equation}
Since $M_n=\ln(\nu_n)/(1-n)$ is a monotonically decreasing function of $\nu_n$ for $n\geq2$, this is equivalent to $M_n(\varphi)\leq M_n(\psi)$, which is the claimed monotonicity of the MRE.
We remark that strong monotonicity of the MRE does not follow from this argument, since the logarithm is concave and Jensen's inequality acts in the opposite direction.

\subsubsection*{Bosons}
To complete the proof for the bosonic monotonicity theorem, we now show Lemma~\ref{lem:homodyne}.

\begin{lemma}
\label{lem:homodyne}
Consider a normalized pure state with the decomposition
\begin{equation}
  |\Psi\rangle_{RQ}
  =\int d\bm{x}\,\sqrt{p(\bm{x})}\,
  |\eta_{\bm{x}}\rangle_R\otimes|\bm{x}\rangle_Q,
  \label{eq:homodyne_decomp}
\end{equation}
where \(|\eta_{\bm{x}}\rangle_R\) is a normalized conditional pure state of the unmeasured register \(R\), \(p(\bm{x})\) is a probability density, and \(|\bm{x}\rangle_Q\) is the homodyne eigenbasis of the \(m\)-mode measured register \(Q\).
Then, for every integer \(n\geq2\), the homodyne measurement of $Q$ does not decrease the convolution purity on average:
\begin{equation}
  \nu_n(\Psi)
  \leq
  \int d\bm{x}\,p(\bm{x})\,\nu_n(\eta_{\bm{x}}).
  \label{eq:homodyne_average}
\end{equation}
\end{lemma}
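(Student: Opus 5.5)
We would prove Lemma~\ref{lem:homodyne} directly from the closed form \eqref{eq:purity_boson_chi}--\eqref{eq:MRE_boson_chi}. After rescaling, $\nu_n(\rho)=n^{L}\int\frac{d\bm{u}}{(2\pi)^{L}}|\chi_\rho(\bm{u})|^{2n}$, so both sides of \eqref{eq:homodyne_average} are $2n$-th moments of characteristic functions. We split the phase-space label as $\bm{u}=(\bm{u}_R,\bm{q},\bm{p})$, where $(\bm{q},\bm{p})\in\mathbb{R}^{2m}$ are the labels of the measured register $Q$. The register $R$ has $L_R$ modes, and we use the same letter $p$ for the density $p(\bm{x})$ and for the momentum label $\bm{p}$. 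The prefactors then split as $n^{L}=n^{L_R}n^{m}$, which we track separately.

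\emph{Step 1 (characteristic function in the homodyne basis).} Insert \eqref{eq:homodyne_decomp} and act with $D_{\rm b}(\bm{u}_R,\bm{q},\bm{p})$ in the $\bm{x}$-eigenbasis. The $\bm{q}$-part shifts $\bm{x}$ and the $\bm{p}$-part gives a phase. Up to sign conventions, this yields
\begin{equation*}
\chi_\Psi(\bm{u}_R,\bm{q},\bm{p})=\int d\bm{x}\,e^{i\bm{p}\cdot(\bm{x}+\bm{q}/2)}f_{\bm{u}_R,\bm{q}}(\bm{x}),
\end{equation*}
where $f_{\bm{u}_R,\bm{q}}(\bm{x})=\sqrt{p(\bm{x})p(\bm{x}+\bm{q})}\,\langle\eta_{\bm{x}}|D_R(\bm{u}_R)|\eta_{\bm{x}+\bm{q}}\rangle$. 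For fixed $(\bm{u}_R,\bm{q})$, this is a Fourier transform in $\bm{x}$ evaluated at $\bm{p}$. In particular, $f_{\bm{u}_R,\bm{0}}(\bm{x})=p(\bm{x})\chi_{\eta_{\bm{x}}}(\bm{u}_R)$.

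\emph{Step 2 (Parseval).} Since $|\hat f|^{2n}=|\hat f^{\,n}|^{2}$ and $\hat f^{\,n}$ is the Fourier transform of the $n$-fold convolution $f^{*n}$, Plancherel gives
\begin{equation*}
\int d\bm{p}\,|\chi_\Psi|^{2n}=(2\pi)^{m}\int d\bm{y}\,|f^{*n}_{\bm{u}_R,\bm{q}}(\bm{y})|^{2}.
\end{equation*}
Writing $f^{*n}(\bm{y})=\int_{\sum_r\bm{x}_r=\bm{y}}\prod_r f(\bm{x}_r)$, we apply Cauchy--Schwarz on the constraint surface. This bounds $|f^{*n}|^{2}$ by the product of $\int_{\sum\bm{x}_r=\bm{y}}\prod_r\sqrt{p(\bm{x}_r)p(\bm{x}_r+\bm{q})}$ with $\int_{\sum\bm{x}_r=\bm{y}}\prod_r\sqrt{p(\bm{x}_r)p(\bm{x}_r+\bm{q})}\,|\langle\eta_{\bm{x}_r}|D_R|\eta_{\bm{x}_r+\bm{q}}\rangle|^{2}$. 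The identity $|\langle\eta_{\bm{x}}|D_R(\bm{u}_R)|\eta_{\bm{x}'}\rangle|^{2}\le\frac12\bigl(|\chi_{\eta_{\bm{x}}}|^{2}+|\chi_{\eta_{\bm{x}'}}|^{2}\bigr)$ is not available pointwise. We would instead integrate over $\bm{u}_R$ first and use the Parseval identity for $R$. That identity reads $\int\frac{d\bm{u}_R}{(2\pi)^{L_R}}\langle a|D_R|b\rangle\langle c|D_R|d\rangle^{*}=\langle a|c\rangle\langle d|b\rangle$, and its multilinear extension to $2n$ factors is obtained by viewing $D_R^{\otimes n}$ on $n$ copies of $R$.

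\emph{Step 3 (Helmert reorganization and integration).} The most promising route is to perform the Helmert change of variables on $(\bm{x}_1,\dots,\bm{x}_n)$, which is the same orthogonal $O$ acting on the $\bm{x}$-labels of the $n$ replicas of $Q$. Because the convolution unitary maps the homodyne basis $|\bm{x}_1,\dots,\bm{x}_n\rangle$ to $|O\vec{\bm{x}}\rangle$, the measurement basis of $Q$ on all replicas is permuted rather than mixed. We therefore expect $\nu_n(\Psi)$ to be a weighted sum over replica configurations. The diagonal configurations, with all $\bm{x}_r$ equal and $\bm{q}=0$, should reproduce exactly $\int d\bm{x}\,p(\bm{x})\,\nu_n(\eta_{\bm{x}})$, with the $n^{m}$ factor absorbed by the Jacobian of the constraint $\sum\bm{x}_r=\bm{y}$ combined with the $1/\sqrt n$ rescaling. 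The off-diagonal configurations should then be controlled, via the Cauchy--Schwarz bound of Step 2 and the AM--GM inequality $\prod_r\sqrt{p(\bm{x}_r)p(\bm{x}_r+\bm{q})}\le\frac12\bigl(\prod_r p(\bm{x}_r)+\prod_r p(\bm{x}_r+\bm{q})\bigr)$, by the same diagonal expression.

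The main obstacle is this last step: showing that the mixing between different homodyne outcomes, encoded by $\bm{q}\neq0$ and the coherences $\langle\eta_{\bm{x}}|\cdot|\eta_{\bm{x}'}\rangle$, never adds purity beyond the outcome-averaged value. Concretely, this is a Hölder/Young-type inequality for $n$-fold convolutions with the sharp constant fixed by the Helmert normalization, and it must hold for every $n\ge2$. Our first attempt would be a Brascamp--Lieb/Young inequality with Gaussian-extremal constants, since equality for Gaussian $\Psi$ is required and pins down the constant. If that fails, we would fall back to the operator route. In that route we write $\nu_n=\langle\Psi^{\otimes n}|P\,|\Psi^{\otimes n}\rangle$ and show that $P$ dominates its pinching in the $Q$ homodyne basis, namely $\int d\vec{\bm{x}}\,\Pi_{\vec{\bm{x}}}P\,\Pi_{\vec{\bm{x}}}$ on the relevant subspace.
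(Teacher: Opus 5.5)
Your Steps~1--2 correctly rewrite $\nu_n(\Psi)$ using a Fourier transform in the homodyne variable and Plancherel. However, the inequality itself is never established. Step~3 is only an expectation, you identify it yourself as the main obstacle, and the Brascamp--Lieb and pinching routes are named but not carried out. The heuristic you sketch for Step~3 also has the wrong structure:
\begin{itemize}
\item The ``diagonal'' set $\bm{x}_1=\cdots=\bm{x}_n$, $\bm{q}=\bm{0}$ has measure zero in your $(\bm{u}_R,\bm{q},\bm{y})$ integral and on the surface $\sum_r\bm{x}_r=\bm{y}$. It therefore cannot reproduce $\int d\bm{x}\,p(\bm{x})\,\nu_n(\eta_{\bm{x}})$.
\item For $\Psi=\eta\otimes\phi_G$ with Gaussian $\phi_G$, equality holds and the whole integral is needed to produce $\nu_n(\eta)$. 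So there is no split into a diagonal part equal to the right-hand side plus a controllable remainder.
\item Your Cauchy--Schwarz bound on the constraint surface discards exactly the coherences between different outcomes that must be controlled. That is why you end up needing a pointwise estimate that does not hold.
\item The operator fallback, read as a global inequality $P\ge\int d\vec{\bm x}\,\Pi_{\vec{\bm x}}P\,\Pi_{\vec{\bm x}}$, would bound $\nu_n(\Psi)$ from below rather than above. Restricted to the relevant states, it is just a restatement of the lemma.
\end{itemize}

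The missing idea is that no sharp functional inequality is needed. The work is done by convexity of the purity, $\tr\bigl[(\int d\lambda\,p_\lambda\rho_\lambda)^2\bigr]\le\int d\lambda\,p_\lambda\tr[\rho_\lambda^2]$, applied to the convolved state itself rather than to $|\chi_\Psi|^{2n}$. Your remark that $U_Q$ maps $|\vec{\bm x}\rangle$ to $|(O_H\otimes I_m)\vec{\bm x}\rangle$ is the right starting point. Since $U=U_R\otimes U_Q$ and the rotation has unit Jacobian, the convolved state is $\int d\vec{\bm x}'\sqrt{w(\vec{\bm x}')}\,U_R\bigotimes_r|\eta_{\bm x_r}\rangle\otimes|\vec{\bm x}'\rangle_Q$ with $w=\prod_r p(\bm x_r)$. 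The argument then proceeds in two conditioning steps and a final estimate:
\begin{enumerate}
\item Tracing out $Q_{1^{\rm c}}$ makes the state of $R_1Q_1$ a mixture over $\vec{\bm x}'_{1^{\rm c}}$. Convexity then bounds $\nu_n(\Psi)$ by the average purity of the conditional pure states on $R_1\cdots R_nQ_1$.
\item For each conditional state, the purity on $R_1Q_1$ equals the purity on the complement $R_{1^{\rm c}}$. There $Q_1$ is traced out, so the state is a mixture over $\bm x'_1$ and convexity applies again. This gives $\nu_n(\Psi)\le\int\prod_r[d\bm x_r\,p(\bm x_r)]\,\nu_n(\eta_{\bm x_1},\ldots,\eta_{\bm x_n})$, an average over \emph{independent} outcomes rather than a diagonal piece.
\item The elementary H\"older inequality applied to the single integral \eqref{eq:purity_boson_chi}, with all exponents equal to $n$, gives $\nu_n(\eta_{\bm x_1},\ldots,\eta_{\bm x_n})\le\prod_r\nu_n(\eta_{\bm x_r})^{1/n}$. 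The integral then factorizes, and Jensen's inequality $\bigl[\int d\bm x\,p(\bm x)\,\nu_n(\eta_{\bm x})^{1/n}\bigr]^n\le\int d\bm x\,p(\bm x)\,\nu_n(\eta_{\bm x})$ finishes the proof.
\end{enumerate}
No Young or Brascamp--Lieb constant enters, and the Gaussian equality cases take care of themselves.
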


\begin{proof}
Apply the Helmert convolution to the $n$ replicas.
Since the Helmert unitary factorizes between the mode registers $R$ and $Q$, the convolved state is
\begin{equation}
  |\vec{\Psi}\,'\rangle
  =(U_R\otimes U_Q)|\vec{\Psi}\rangle,
  \qquad
  |\vec{\Psi}\rangle=|\Psi\rangle^{\otimes n},
\end{equation}
where $U_R$ and $U_Q$ are the Helmert Gaussian unitaries acting on the $R$ and $Q$ registers of the $n$ replicas.
We label the output replicas of the registers by $R_r$ and $Q_r$ with $r=1,2,\ldots,n$, and denote the collections of the output replicas $2,\ldots,n$ by $R_{1^{\rm c}}$ and $Q_{1^{\rm c}}$.
By definition, $\nu_n(\Psi)$ is the purity of the reduced state of $|\vec{\Psi}\,'\rangle$ on the first output replica $R_1Q_1$.
We bound this purity in two conditioning steps by the average of the convolution purity $\nu_n(\eta_{\bm{x}_1},\ldots,\eta_{\bm{x}_n})$ of the conditional states, and then reduce the average to the right-hand side of Eq.~\eqref{eq:homodyne_average} by H\"older's and Jensen's inequalities.

We first write $|\vec{\Psi}\,'\rangle$ in the homodyne basis of the output $Q$ registers.
Let $\vec{\bm{x}}=(\bm{x}_{1},\ldots,\bm{x}_{n})$ collect the homodyne variables of the $n$ input $Q$ registers, and let $\vec{\bm{x}}'=(O_{H}\otimes I_{m})\vec{\bm{x}}$ be their replica rotation, where $O_{H}$ is the Helmert matrix~\eqref{eq:O_case2} and $I_{m}$ acts on the mode index within each register.
Since Eq.~\eqref{eq:U_BF} rotates the annihilation operators, and hence the quadratures of the $Q$ registers, by $O_{H}$, the homodyne eigenstates transform as $U_{Q}|\vec{\bm{x}}\rangle_{Q}=|\vec{\bm{x}}'\rangle_{Q}$, up to a phase that we absorb into $|\vec{\eta}\,'_{\vec{\bm{x}}'}\rangle$ below.
Applying the decomposition~\eqref{eq:homodyne_decomp} to each replica and using the unit Jacobian of the rotation, we obtain
\begin{equation}
  |\vec{\Psi}\,'\rangle
  =\int d\vec{\bm{x}}'\,\sqrt{w(\vec{\bm{x}}')}\,
  |\vec{\eta}\,'_{\vec{\bm{x}}'}\rangle_{R}\otimes|\vec{\bm{x}}'\rangle_{Q},
  \label{eq:Psi_rotated}
\end{equation}
where
\begin{equation}
  w(\vec{\bm{x}}')=\prod_{r=1}^{n}p(\bm{x}_{r}),
  \qquad
  |\vec{\eta}\,'_{\vec{\bm{x}}'}\rangle=U_{R}\bigotimes_{r=1}^{n}|\eta_{\bm{x}_{r}}\rangle,
  \label{eq:w_eta_def}
\end{equation}
with $\vec{\bm{x}}=(O_{H}^{\rm T}\otimes I_{m})\vec{\bm{x}}'$ understood on the right-hand sides.

Before deriving the bound on $\nu_n(\Psi)$, we use the convexity of the purity.
For density operators $\rho_{\lambda}$ mixed with a probability density $p_{\lambda}$, the convexity reads
\begin{equation}
  \tr\!\left[\left(\int d\lambda\,p_{\lambda}\rho_{\lambda}\right)^{2}\right]
  \leq
  \int d\lambda\,p_{\lambda}\tr(\rho_{\lambda}^{2}).
  \label{eq:purity_convexity}
\end{equation}
This inequality follows from
\begin{align}
  &\int d\lambda\,p_{\lambda}\tr(\rho_{\lambda}^{2})
  -\tr\!\left[\left(\int d\lambda\,p_{\lambda}\rho_{\lambda}\right)^{2}\right]\nonumber\\
  &\quad=\frac{1}{2}\iint d\lambda\,d\lambda'\,p_{\lambda}\,p_{\lambda'}\,
  \tr\!\left[\left(\rho_{\lambda}-\rho_{\lambda'}\right)^{2}\right]
  \geq0.
\end{align}

We now bound the purity of the reduced state of $|\vec{\Psi}\,'\rangle$ on $R_1Q_1$ in two conditioning steps.
The first conditioning step is on the rotated variables $\vec{\bm{x}}'_{1^{\rm c}}\equiv(\bm{x}'_{2},\ldots,\bm{x}'_{n})$ of the registers $Q_{1^{\rm c}}$.
Tracing out $Q_{1^{\rm c}}$ eliminates the components of Eq.~\eqref{eq:Psi_rotated} that are off-diagonal in these variables, so the reduced state on $R_1Q_1$ is the mixture
\begin{align}
  \tr_{R_{1^{\rm c}}Q_{1^{\rm c}}}\dyad{\vec{\Psi}\,'}
  &=\int d\vec{\bm{x}}'_{1^{\rm c}}\,
  w_{1^{\rm c}}(\vec{\bm{x}}'_{1^{\rm c}})\nonumber\\
  &\qquad\times
  \tr_{R_{1^{\rm c}}}\dyad{\vec{\Psi}\,'_{\vec{\bm{x}}'_{1^{\rm c}}}},
  \label{eq:step1_mixture}
\end{align}
where $w_{1^{\rm c}}(\vec{\bm{x}}'_{1^{\rm c}})=\int d\bm{x}'_{1}\,w(\vec{\bm{x}}')$ is the marginal probability density and
\begin{align}
  |\vec{\Psi}\,'_{\vec{\bm{x}}'_{1^{\rm c}}}\rangle
  &=\frac{1}{\sqrt{w_{1^{\rm c}}(\vec{\bm{x}}'_{1^{\rm c}})}}
  \int d\bm{x}'_{1}\,\sqrt{w(\vec{\bm{x}}')}\nonumber\\
  &\qquad\times
  |\vec{\eta}\,'_{\vec{\bm{x}}'}\rangle\otimes|\bm{x}'_{1}\rangle_{Q_{1}}
  \label{eq:step1_conditional}
\end{align}
is the normalized pure state on $R_{1}\cdots R_{n}Q_{1}$ conditioned on $\vec{\bm{x}}'_{1^{\rm c}}$.
Equation~\eqref{eq:purity_convexity} then gives
\begin{align}
  \nu_{n}(\Psi)
  &\leq
  \int d\vec{\bm{x}}'_{1^{\rm c}}\,
  w_{1^{\rm c}}(\vec{\bm{x}}'_{1^{\rm c}})\nonumber\\
  &\qquad\times
  \tr\!\left[\left(\tr_{R_{1^{\rm c}}}\dyad{\vec{\Psi}\,'_{\vec{\bm{x}}'_{1^{\rm c}}}}\right)^{2}\right].
  \label{eq:step1_bound}
\end{align}

The second conditioning step is on the remaining variable $\bm{x}'_{1}$.
Tracing out $Q_{1}$ in Eq.~\eqref{eq:step1_conditional} leaves the mixture
\begin{align}
  \tr_{R_{1}Q_{1}}\dyad{\vec{\Psi}\,'_{\vec{\bm{x}}'_{1^{\rm c}}}}
  &=\int d\bm{x}'_{1}\,
  \frac{w(\vec{\bm{x}}')}{w_{1^{\rm c}}(\vec{\bm{x}}'_{1^{\rm c}})}\nonumber\\
  &\qquad\times
  \tr_{R_{1}}\dyad{\vec{\eta}\,'_{\vec{\bm{x}}'}},
  \label{eq:step2_mixture}
\end{align}
so that Eq.~\eqref{eq:purity_convexity} gives
\begin{align}
  &\tr\!\left[\left(\tr_{R_{1^{\rm c}}}\dyad{\vec{\Psi}\,'_{\vec{\bm{x}}'_{1^{\rm c}}}}\right)^{2}\right]\nonumber\\
  &\quad=\tr\!\left[\left(\tr_{R_{1}Q_{1}}\dyad{\vec{\Psi}\,'_{\vec{\bm{x}}'_{1^{\rm c}}}}\right)^{2}\right]\nonumber\\
  &\quad\leq
  \int d\bm{x}'_{1}\,
  \frac{w(\vec{\bm{x}}')}{w_{1^{\rm c}}(\vec{\bm{x}}'_{1^{\rm c}})}\,
  \tr\!\left[\left(\tr_{R_{1}}\dyad{\vec{\eta}\,'_{\vec{\bm{x}}'}}\right)^{2}\right]\nonumber\\
  &\quad=\int d\bm{x}'_{1}\,
  \frac{w(\vec{\bm{x}}')}{w_{1^{\rm c}}(\vec{\bm{x}}'_{1^{\rm c}})}\,\nu_{n}(\eta_{\bm{x}_{1}},\ldots,\eta_{\bm{x}_{n}}).
  \label{eq:step2_bound}
\end{align}

Finally, we assemble these steps to prove Lemma~\ref{lem:homodyne}.
Using Eqs.~\eqref{eq:step1_bound} and \eqref{eq:step2_bound}, we obtain
\begin{equation}
  \nu_n(\Psi)
  \leq
  \int\prod_{r=1}^n
  \left[d\bm{x}_r\,p(\bm{x}_r)\right]\,
  \nu_n(\eta_{\bm{x}_1},\ldots,\eta_{\bm{x}_n}).
  \label{eq:cross_average}
\end{equation}
On the other hand, applying H\"older's inequality to Eq.~\eqref{eq:purity_boson_chi} gives
\begin{equation}
  \nu_n(\eta_{\bm{x}_1},\ldots,\eta_{\bm{x}_n})
  \leq
  \prod_{r=1}^n
  \nu_n(\eta_{\bm{x}_r})^{1/n}.
  \label{eq:cross_holder}
\end{equation}
Combining Eqs.~\eqref{eq:cross_average} and \eqref{eq:cross_holder}, the integral factorizes over the replicas, and Jensen's inequality gives
\begin{align}
  \nu_n(\Psi)
  &\leq
  \left[
    \int d\bm{x}\,p(\bm{x})\,
    \nu_n(\eta_{\bm{x}})^{1/n}
  \right]^n
  \nonumber\\
  &\leq
  \int d\bm{x}\,p(\bm{x})\,\nu_n(\eta_{\bm{x}}),
\end{align}
which proves Eq.~\eqref{eq:homodyne_average}.
\end{proof}

\subsubsection*{Fermions}

We next prove Lemma~\ref{lem:measurement} that is essential to show the fermionic monotonicity theorem.
To this end, we first introduce the notation used in this part.
We use the Grassmann monomials $\bm{u}_{A}$, the Majorana monomials $\Gamma_{A}$, and the coefficients $\kappa_{X}(A)$ of Eq.~\eqref{eq:chi_fermion_expansion}, and introduce the rescaled polynomial
\begin{align}
  \tilde{\chi}_{X}(\bm{u})&\equiv\chi_{X}(\bm{u}/\sqrt{2})\nonumber\\
  &=\sum_{A}2^{-|A|/2}\kappa_{X}(A)\bm{u}_{A}.
  \label{eq:chi_rescaled}
\end{align}
The coefficient norm introduced in Sec.~\ref{subsec:MRE_BF} comes from the inner product $\braket{f}{g}=\sum_{A}f_{A}^{*}g_{A}$ of Grassmann polynomials $f=\sum_{A}f_{A}\bm{u}_{A}$ and $g=\sum_{A}g_{A}\bm{u}_{A}$.
The orthogonality of the Majorana monomials converts it into a Hilbert--Schmidt one,
\begin{equation}
  \braket{\chi_{X}}{\chi_{Y}}=2^{L}\tr\!\left[X^{\dagger}Y\right].
  \label{eq:HS_pairing}
\end{equation}
Finally, we write $\Pi_{s}$ for the fermion parity of replica $s$, and $p(\cdot)\in\{0,1\}$ for the parity of a definite-parity state or operator.
Fermionic modes carried by different replicas are combined by the graded tensor product
\begin{equation}
  (X\gtensor Y)(\ket{c}\gtensor\ket{d})=(-1)^{p(Y)p(c)}X\ket{c}\gtensor Y\ket{d},
  \label{eq:graded_tensor}
\end{equation}
to which all mode orderings, swaps, and partial traces refer.
Setting $X=\ket{a}\!\bra{c}$ and $Y=\ket{b}\!\bra{d}$ in Eq.~\eqref{eq:graded_tensor}, we obtain the sign rule
\begin{equation}
  \ket{a,b}\!\bra{c,d}=(-1)^{p(c)\qty[p(b)+p(d)]}\qty(\ket{a}\!\bra{c})\gtensor\qty(\ket{b}\!\bra{d})
  \label{eq:sign_rule}
\end{equation}
for any states $a,b,c,d$ of definite parity.

\begin{lemma} \label{lem:chi_product} For operators $X,Y$
with definite parities,
\begin{equation}
  \tilde{\chi}_{X}\tilde{\chi}_{Y}
  =\chi_{\tr_{2}\left[U(X\gtensor Y)U^{\dagger}\right]}.
  \label{eq:chi_product}
\end{equation}
\end{lemma}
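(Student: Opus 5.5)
The plan is to compute both sides of Eq.~\eqref{eq:chi_product} as Grassmann polynomials and compare them, using the appendix convention $\chi_X(\bm{u})=\tr[e^{\bm{r}^{\rm T}\bm{u}}X]$. Here $U$ is the $n=2$ Helmert (balanced) rotation, so that $U^\dagger\bm{r}_1U=(\bm{r}_1+\bm{r}_2)/\sqrt{2}$ on the Majoranas, and $\tr_2$ is the graded partial trace. The argument is the operator-level version of the duality~\eqref{eq:duality_BF}. The only new content is to check that it survives when $X,Y$ are arbitrary definite-parity operators rather than even states, with signs from the graded tensor product.

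\emph{Step 1: the right-hand side.} For an operator $Z$ on replica $1$ and an operator $W$ on both replicas, I use the defining property of the partial trace, $\tr[Z\,\tr_2 W]=\tr[(Z\gtensor I)W]$. Taking $Z=e^{\bm{r}_1^{\rm T}\bm{u}}$ and using cyclicity gives
\begin{equation}
\chi_{\tr_2[U(X\gtensor Y)U^\dagger]}(\bm{u})=\tr\!\left[U^\dagger e^{\bm{r}_1^{\rm T}\bm{u}}U\,(X\gtensor Y)\right].
\end{equation}
Since $U$ is Gaussian, conjugation acts linearly on the Majoranas. This yields
\begin{equation}
U^\dagger e^{\bm{r}_1^{\rm T}\bm{u}}U=e^{(\bm{r}_1+\bm{r}_2)^{\rm T}\bm{u}/\sqrt{2}}.
\end{equation}
The bilinears $\bm{r}_1^{\rm T}\bm{u}$ and $\bm{r}_2^{\rm T}\bm{u}$ are even, because each term is a Majorana times a Grassmann variable. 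They also commute with each other, since moving one past the other involves an even number of transpositions of odd objects. The exponential therefore factorizes as
\begin{equation}
e^{\bm{r}_1^{\rm T}\bm{u}/\sqrt{2}}\,e^{\bm{r}_2^{\rm T}\bm{u}/\sqrt{2}}=D_1\gtensor D_2 .
\end{equation}

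\emph{Step 2: factorize the trace.} Next I evaluate $\tr[(D_1\gtensor D_2)(X\gtensor Y)]$. The graded product rule~\eqref{eq:graded_tensor} gives
\begin{equation}
(D_1\gtensor D_2)(X\gtensor Y)=(-1)^{p(D_2)p(X)}(D_1X)\gtensor(D_2Y).
\end{equation}
$D_2$ is even once Grassmanns are counted, so this sign is trivial. The graded trace factorizes as $\tr[A\gtensor B]=\tr A\,\tr B$. The result is
\begin{equation}
\tr[D_1X]\,\tr[D_2Y]=\chi_X(\bm{u}/\sqrt{2})\,\chi_Y(\bm{u}/\sqrt{2})=\tilde\chi_X(\bm{u})\tilde\chi_Y(\bm{u}).
\end{equation}
Here Grassmann factors must be pulled out of traces in a consistent order. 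When $X$ is odd, $\chi_X$ is an odd polynomial, and the ordering convention (Grassmanns to the right of Majoranas) decides whether an extra sign appears.

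\emph{Main obstacle.} The delicate point is exactly this sign bookkeeping when both $X$ and $Y$ are odd. I would handle it by expanding in monomials, $X=\sum_A 2^{-L}\kappa_X(A)\Gamma_A$, and verifying the identity on basis elements $\Gamma_A\gtensor\Gamma_B$. On such elements, $\tr[\Gamma_{A'}\Gamma_A]$ is nonzero only for $A'=A$. The claim then reduces to matching the sign of reordering $\bm{u}_A\bm{u}_B$ against the sign $(-1)^{|B|\,|A|}$ produced by commuting $\Gamma_B$ past Grassmanns attached to replica $1$. These signs coincide with the graded convention~\eqref{eq:sign_rule}.

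As a sanity check, I would take the single-mode case $X=\gamma_1$, $Y=\gamma_1$ and confirm that no stray minus sign appears. If the convention yields $(-1)^{p(X)p(Y)}$, I would absorb it by the ordering choice implicit in $\tr_2$, since the lemma fixes this convention.
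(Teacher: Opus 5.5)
Your argument is correct, and it is organized differently from the paper's.

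\textbf{How the routes differ.} The paper never puts Grassmann variables inside a trace. It reads off the coefficient of $\bm{u}_A$ on the right-hand side as $\tr[(\Gamma_A^\dagger\gtensor I)\,U(X\gtensor Y)U^\dagger]$ and conjugates to get $\bigl(\prod_k (r_{1,l_k}+r_{2,l_k})/\sqrt{2}\bigr)^\dagger$. It then expands into $\sum_{P\sqcup Q=A}$ and matches the sign of reordering each mixed Majorana product into $\Gamma^{(1)}_P\Gamma^{(2)}_Q$ against $\varepsilon(P,Q)$, defined by $\bm{u}_P\bm{u}_Q=\varepsilon(P,Q)\bm{u}_A$.

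You instead take the whole generating function $e^{\bm{r}_1^{\rm T}\bm{u}}$ as the test operator in the partial-trace identity. The splitting $e^{(\bm r_1+\bm r_2)^{\rm T}\bm u/\sqrt2}=e^{\bm r_1^{\rm T}\bm u/\sqrt2}e^{\bm r_2^{\rm T}\bm u/\sqrt2}$ then does all the $\varepsilon(P,Q)$ bookkeeping automatically inside the Grassmann algebra. This is the operator-level analogue of the bosonic proof of Eq.~\eqref{eq:duality_BF}. The price is that you must fix super-sign conventions, which the paper's coefficient-wise route never needs:
\begin{itemize}
\item Grassmann variables anticommute with odd operators.
\item Graded signs are computed with total parity.
\item One fixed rule is used for extracting Grassmann factors from a trace.
\end{itemize}

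\textbf{The sign you flag is already settled.} Within those conventions, your Steps 1--2 resolve the ``main obstacle.'' The extraction rule contributes at most a factor $(-1)^{p(Z)}$ per trace. This appears as $(-1)^{p(X)}(-1)^{p(Y)}$ on the left and $(-1)^{p(X)+p(Y)}$ on the right, so it cancels. The only possible source of a $(-1)^{p(X)p(Y)}$ is the order of the two Grassmann-valued factors $\tr[D_1X]$ and $\tr[D_2Y]$. The product $(D_1\gtensor I)(I\gtensor D_2)$ fixes replica $1$ on the left, which matches $\tilde\chi_X\tilde\chi_Y$.

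\textbf{Drop the last part of your proposal.} Two points there are wrong.
\begin{itemize}
\item You cannot absorb a stray $(-1)^{p(X)p(Y)}$ into $\tr_2$. The rule $\tr_2(A\gtensor B)=A\,\tr B$ leaves no freedom, since odd $B$ are traceless. The exact odd--odd sign is also used later, e.g.\ in $\tilde\chi_{+-}\tilde\chi_{-+}=\chi_{\rho_{+-}^-}-\chi_{\rho_{+-}^+}$ in Lemma~\ref{lem:chi_pairing}. Had such a sign appeared, the lemma would simply be false.
\item Your sanity check $X=Y=\gamma_1$ is vacuous, because both sides vanish: $u_1^2=0$, and $U(\gamma\gtensor\gamma)U^\dagger=-r_{1,1}r_{2,1}$ has zero partial trace.
\end{itemize}
A check that actually probes the sign uses one mode with $X=\gamma$ and $Y=\bar\gamma$. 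Both sides then equal $2u_1u_2$, and reversing the order gives $-2u_1u_2$ on both sides.
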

\begin{proof}
Expanding $\tilde{\chi}_{X}$ and $\tilde{\chi}_{Y}$ as in Eq.~\eqref{eq:chi_rescaled}, with the index sets denoted by $P$ and $Q$, the left-hand side of Eq.~\eqref{eq:chi_product} is
\begin{align}
  \tilde{\chi}_{X}\tilde{\chi}_{Y}
   & =\sum_{A}2^{-|A|/2}\sum_{P\sqcup Q=A}\nonumber                     \\
   & \quad\times\varepsilon(P,Q)\kappa_{X}(P)\kappa_{Y}(Q)\bm{u}_{A},
\end{align}
where $\varepsilon(P,Q)$ is the sign factor defined as $\bm{u}_{P}\bm{u}_{Q}=\varepsilon(P,Q)\bm{u}_{A}$ and $\sqcup$ is the disjoint union.
On the other hand, the coefficient of $\bm{u}_{A}$ in the right-hand side of Eq.~\eqref{eq:chi_product} is the Majorana coefficient
\begin{align}
   & \tr\!\left[(\Gamma_{A}^{\dagger}\gtensor I)U(X\gtensor Y)U^{\dagger}\right]\nonumber                                                      \\
   & \quad=\tr\!\left[U^{\dagger}(\Gamma_{A}^{\dagger}\gtensor I)U(X\gtensor Y)\right]\nonumber                                                \\
   & \quad=\tr\!\left[\left(\prod_{k=1}^{|A|}\frac{r_{1,l_{k}}+r_{2,l_{k}}}{\sqrt{2}}\right)^{\dagger}(X\gtensor Y)\right]\nonumber            \\
   & \quad=\sum_{P\sqcup Q=A}\varepsilon(P,Q)2^{-|A|/2}\kappa_{X}(P)\kappa_{Y}(Q),
\end{align}
where $A=\{l_{1}<\cdots<l_{|A|}\}$ and $r_{s,j}$ is the $j$-th Majorana component of replica $s$.
This coincides with the coefficient obtained above.
\end{proof}

Next, we introduce several key identities.
To this end, let $\ket{\psi_{+}}$ be an $L$-mode normalized even-parity state and $\ket{\psi_{-}}$ an $L$-mode normalized odd-parity state.
We abbreviate the Grassmann polynomials
\begin{equation}
  \tilde{\chi}_{st}=\tilde{\chi}_{\ket{\psi_{s}}\!\bra{\psi_{t}}},\qquad s,t\in\{+,-\},
  \label{eq:chi_st_def}
\end{equation}
and also define the nonnegative numbers
\begin{equation}
  n_{st,s't'}\equiv\norm{\tilde{\chi}_{st}\tilde{\chi}_{s't'}}^{2},\quad s,t,s',t'\in\{+,-\},
  \label{eq:n_st_def}
\end{equation}
We prove two technical lemmas.

\begin{lemma} \label{lem:chi_pairing} The following identities hold:
\begin{align}
  n_{+-,-+}                      & =n_{++,--}, \quad \braket{\tilde{\chi}_{++}\tilde{\chi}_{--}}{\tilde{\chi}_{+-}\tilde{\chi}_{-+}} =0,\label{eq:chi_pairing1}     \\
  \braket{\tilde{\chi}_{++}^{2}}{\tilde{\chi}_{++}\tilde{\chi}_{--}} & =\braket{\tilde{\chi}_{++}^{2}}{\tilde{\chi}_{+-}\tilde{\chi}_{-+}}=n_{++,+-},\label{eq:chi_pairing2}                            \\
  \braket{\tilde{\chi}_{--}^{2}}{\tilde{\chi}_{++}\tilde{\chi}_{--}} & =-\braket{\tilde{\chi}_{--}^{2}}{\tilde{\chi}_{+-}\tilde{\chi}_{-+}}=n_{--,+-}.\label{eq:chi_pairing3}
\end{align}
\end{lemma}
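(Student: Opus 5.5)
The plan is to turn every Grassmann pairing in Lemma~\ref{lem:chi_pairing} into a Hilbert--Schmidt trace of reduced two-replica operators, using Lemma~\ref{lem:chi_product} together with Eq.~\eqref{eq:HS_pairing}. Concretely, for definite-parity $X,Y,X',Y'$ I will write
\begin{equation*}
  \braket{\tilde{\chi}_{X}\tilde{\chi}_{Y}}{\tilde{\chi}_{X'}\tilde{\chi}_{Y'}}
  =2^{L}\tr\!\left[\tr_{2}\!\left[U(X\gtensor Y)U^{\dagger}\right]^{\dagger}\tr_{2}\!\left[U(X'\gtensor Y')U^{\dagger}\right]\right].
\end{equation*}
With $X=\ket{\psi_s}\!\bra{\psi_t}$ and so on, each operator $U(X\gtensor Y)U^\dagger$ becomes a (possibly off-diagonal) dyad built from the pure two-replica states $U\ket{\psi_a,\psi_b}$ via the sign rule~\eqref{eq:sign_rule}. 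Every $n$ and every pairing in the lemma is then an overlap of partial traces of such dyads.

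The key structural step is the replica swap. $\ket{\psi_+}\!\bra{\psi_+}\gtensor\ket{\psi_-}\!\bra{\psi_-}$ and $\ket{\psi_+}\!\bra{\psi_-}\gtensor\ket{\psi_-}\!\bra{\psi_+}$ differ, up to a sign from Eq.~\eqref{eq:sign_rule}, only by right multiplication with the graded swap $S_{12}$. For the $n=2$ Helmert matrix, conjugation gives $U S_{12}U^\dagger$ acting trivially on output replica~1 and as $a_2\to-a_2$ on output replica~2, i.e.\ as the parity $\Pi_2$ up to a global phase. Setting $\ket{\Phi}=U\ket{\psi_+,\psi_-}$, this should yield $\tr_2[U(\ket{\psi_+}\!\bra{\psi_-}\gtensor\ket{\psi_-}\!\bra{\psi_+})U^\dagger]=\pm\tr_2[\ket{\Phi}\!\bra{\Phi}\Pi_2]$, to be compared with $\tr_2\ket{\Phi}\!\bra{\Phi}$.

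To evaluate these, I will decompose $\ket{\Phi}=\ket{\Phi_e}+\ket{\Phi_o}$ by the parity of replica~2. Since $\ket{\Phi}$ has odd total parity, the replica-1 factors of the two pieces lie in opposite parity sectors. Hence $\rho_1=\rho_e+\rho_o$ and $\sigma_1=\pm(\rho_e-\rho_o)$, where $\rho_e$ and $\rho_o$ have orthogonal supports. Two consequences follow:
\begin{itemize}
  \item $\|\sigma_1\|_2=\|\rho_1\|_2$, which gives $n_{+-,-+}=n_{++,--}$;
  \item $\braket{\rho_1}{\sigma_1}\propto\tr\rho_e^2-\tr\rho_o^2$.
\end{itemize}
For the vanishing of the latter I will look for an additional symmetry exchanging the two sectors. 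The natural candidate is the reflection of the Helmert matrix combined with the swap, which exchanges the roles of output replicas~1 and~2. I will then show $\tr\rho_e^2=\tr\rho_o^2$ using the fact that the two reduced states of a pure state have equal purity, restricted sector by sector.

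For Eqs.~\eqref{eq:chi_pairing2} and \eqref{eq:chi_pairing3} the same swap relation is applied to the second factor of the pairing. Pairing with $\tilde\chi_{++}^2$ picks out only the replica-2 parity sector compatible with $\psi_+\gtensor\psi_+$. In the $++$ case the $\Pi_2$ insertion therefore acts trivially, whereas in the $--$ case it contributes $-1$, which explains the relative sign. The remaining task is to identify the common value with $n_{++,+-}$ (respectively $n_{--,+-}$). I plan to move one $\psi_\pm$ across the pairing by cyclicity of the trace, rewriting e.g.\ $\tr[\tr_2(\ket{\Psi_{++}}\!\bra{\Psi_{++}})\,\tr_2(\ket{\Phi}\!\bra{\Phi})]$ as a norm of a product involving the off-diagonal dyad $\ket{\psi_+}\!\bra{\psi_-}$.

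I expect the main obstacle to be the sign and grading bookkeeping: tracking the phases of $US_{12}U^\dagger$, the signs from Eq.~\eqref{eq:sign_rule}, and the odd-parity operators under partial traces. A single misplaced sign would spoil the relative minus in Eq.~\eqref{eq:chi_pairing3} or the vanishing in Eq.~\eqref{eq:chi_pairing1}. If the symmetry argument for the cross term proves awkward, I would instead verify the identities at the level of Majorana coefficients. Using the explicit convolution formula from the proof of Lemma~\ref{lem:chi_product}, the swap acts on $\varepsilon(P,Q)$ as $(-1)^{|Q|}$ times a reordering sign. Parity constraints on $\kappa_{\pm\mp}$ (odd $|A|$) versus $\kappa_{\pm\pm}$ (even $|A|$) then fix the signs directly.
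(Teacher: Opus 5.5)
Your treatment of Eq.~\eqref{eq:chi_pairing1} is the paper's argument. Your $\Phi,\Phi_e,\Phi_o,\rho_e,\rho_o$ are its $\Xi_{+-},\Xi^{+}_{+-},\Xi^{-}_{+-},\rho^{+}_{+-},\rho^{-}_{+-}$. Disjoint supports give $n_{+-,-+}=n_{++,--}$, and the cross term vanishes because the output swap exchanges the two sectors. Concretely, the $n=2$ Helmert matrix is a symmetric involution, so $U^{\dagger}SU=\Pi_2$ as well as $USU^{\dagger}=\Pi_2$. Hence $S\ket{\Phi}=-\ket{\Phi}$, and with $S\Pi_2=\Pi_1S$ and $\Pi_1\Pi_2=-1$ on $\ket{\Phi}$ this gives $S\ket{\Phi_e}=-\ket{\Phi_o}$. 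Equality of the two marginal purities of $\ket{\Phi_e}$ then gives $\tr\rho_o^2=\tr\rho_e^2$. The gap is in Eqs.~\eqref{eq:chi_pairing2} and \eqref{eq:chi_pairing3}, and it has two parts.

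\emph{Sector bookkeeping.} The pairing contracts replica 1 only, and $\rho_{++}=\tr_2\dyad{\Xi_{++}}$ is supported on the even sector of replica 1. Since $\ket{\Phi}$ is odd, this selects $\rho_o$, the replica-2-odd piece on which $\Pi_2=-1$; in fact $\tr(\rho_{++}\rho_e)=0$. So $\Pi_2$ does not act trivially in the $++$ case. Restricting to the replica-2-even sector, as you propose, would make both pairings with $\tilde\chi_{++}^2$ vanish. The plus sign in Eq.~\eqref{eq:chi_pairing2} comes from combining $\Pi_2=-1$ with the sign rule, which fixes your undetermined sign: $\tilde\chi_{+-}\tilde\chi_{-+}=-\chi_{\tr_2[\dyad{\Phi}\Pi_2]}=\chi_{\rho_o-\rho_e}$.

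\emph{Identification with $n_{++,+-}$.} This does not follow from cyclicity of the trace. Index realignment gives $\tr[\tr_2\dyad{a}\,\tr_2\dyad{b}]=\|\tr_1\dyad{a}{b}\|_2^2$, with the partial trace over the other replica. In general this differs from $\|\tr_2\dyad{a}{b}\|_2^2$: for $\ket{a}=\ket{0}\ket{0}$ and $\ket{b}=\ket{0}\ket{1}$ the former is $1$ and the latter is $0$. Moreover, $n_{++,+-}=2^L\|\tr_2\dyad{\Xi_{++}}{\Phi_e}\|_2^2$ involves $\Phi_e$, not the component $\Phi_o$ that the pairing selects. What closes this step is the output swap again. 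The relations $S\ket{\Xi_{++}}=\ket{\Xi_{++}}$, $S\ket{\Xi_{--}}=-\ket{\Xi_{--}}$, and $S\ket{\Phi_e}=-\ket{\Phi_o}$ convert $\tr_1$ into $\tr_2$ and exchange the sectors. This gives $2^{-L}n_{++,+-}=\tr(\rho_{++}\rho_o)$ and $2^{-L}n_{--,+-}=\tr(\rho_{--}\rho_e)$, as in Eq.~\eqref{eq:n_reduction}, from which both identities follow.
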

\begin{proof}
To prove these identities, we first define the four two-copy beam-splitter
outputs:
\begin{equation}
  \ket{\Xi_{st}}=U\ket{\psi_{s},\psi_{t}},\qquad s,t\in\{+,-\}.
  \label{eq:Xi_def}
\end{equation}
Let $S$ be the graded swap of the two replicas, which is defined as
\begin{equation}
  S\ket{c,d}=(-1)^{p(c)p(d)}\ket{d,c}.
\end{equation}
Conjugation by the convolution unitary turns the swap into a replica parity,
\begin{equation}
  USU^{\dagger}=U^{\dagger}SU=\Pi_{2}.
  \label{eq:swap_to_parity}
\end{equation}
Then, these relations give
\begin{align}
  \Pi_{2}\ket{\Xi_{++}} & =\ket{\Xi_{++}},\quad
  \Pi_{2}\ket{\Xi_{--}}=-\ket{\Xi_{--}},\nonumber \\
  \ket{\Xi_{-+}}        & =\Pi_{2}\ket{\Xi_{+-}},
  \label{eq:Xi_parity}                             \\
  S\ket{\Xi_{++}}       & =\ket{\Xi_{++}},\quad
  S\ket{\Xi_{--}}=-\ket{\Xi_{--}},\nonumber       \\
  S\ket{\Xi_{+-}}       & =-\ket{\Xi_{+-}}.
  \label{eq:Xi_swap}
\end{align}
Next, let us decompose $\ket{\Xi_{+-}}$ according to the parity of replica $2$:
\begin{equation}
  \ket{\Xi_{+-}^{\pm}}=\frac{I\pm\Pi_{2}}{2}\ket{\Xi_{+-}},
  \label{eq:Xi_pm_def}
\end{equation}
and define the (unnormalized) reduced density operators on the retained replica 1:
\begin{align}
  \rho_{++}       & =\tr_{2}\dyad{\Xi_{++}},        &
  \rho_{--}       & =\tr_{2}\dyad{\Xi_{--}},\nonumber \\
  \rho_{+-}^{\pm} & =\tr_{2}\dyad{\Xi_{+-}^{\pm}}.
  \label{eq:rho_def}
\end{align}
The operators $\rho_{++},\rho_{+-}^{-}$ are supported on the even-parity sector, while $\rho_{--},\rho_{+-}^{+}$ are supported on the odd-parity sector.
From Lemma~\ref{lem:chi_product} and the sign rule~\eqref{eq:sign_rule}, we have
\begin{equation}
  \tilde{\chi}_{st}\tilde{\chi}_{s't'}
  =(-1)^{p(\psi_{t})\qty[p(\psi_{s'})+p(\psi_{t'})]}\chi_{\tr_{2}\dyad{\Xi_{ss'}}{\Xi_{tt'}}},
  \label{eq:chi_product_general}
\end{equation}
leading to
\begin{equation}
  \begin{aligned}
    \tilde{\chi}_{++}^{2}& =\chi_{\rho_{++}},\quad
    \tilde{\chi}_{--}^{2}=\chi_{\rho_{--}},\\
    \tilde{\chi}_{++}\tilde{\chi}_{--} & =\chi_{\rho_{+-}^{+}}+\chi_{\rho_{+-}^{-}},            \\
    \tilde{\chi}_{+-}\tilde{\chi}_{-+} & =\chi_{\rho_{+-}^{-}}-\chi_{\rho_{+-}^{+}},            \\
    \tilde{\chi}_{++}\tilde{\chi}_{+-} & =\chi_{\tr_{2}\dyad{\Xi_{++}}{\Xi_{+-}^{+}}}, \\
    \tilde{\chi}_{--}\tilde{\chi}_{+-} & =\chi_{\tr_{2}\dyad{\Xi_{+-}^{-}}{\Xi_{--}}}.
  \end{aligned}
  \label{eq:chi_product_list}
\end{equation}
To compute the norm of these characteristic functions, we also use
\begin{equation}
  \rho_{+-}^{+}\rho_{+-}^{-}=0.
  \label{eq:rho_orthogonal}
\end{equation}
Moreover, $S\Pi_{2}=\Pi_{1}S$ and $\Pi_{1}\Pi_{2}\ket{\Xi_{+-}}=-\ket{\Xi_{+-}}$, so \eqref{eq:Xi_swap} implies
\begin{equation}
\label{eq:Xi_pm_swap}
  S\ket{\Xi_{+-}^{+}}=-\ket{\Xi_{+-}^{-}},\qquad S\ket{\Xi_{+-}^{-}}=-\ket{\Xi_{+-}^{+}}.
\end{equation}
Therefore $\Xi_{+-}^{+}$ and $\Xi_{+-}^{-}$ have the same Schmidt coefficients,
which implies
\begin{equation}
  \tr[(\rho_{+-}^{+})^{2}]=\tr[(\rho_{+-}^{-})^{2}].
  \label{eq:rho_equal_purity}
\end{equation}
Using Eqs.~\eqref{eq:HS_pairing}, \eqref{eq:rho_orthogonal}, and \eqref{eq:rho_equal_purity}, we have
\begin{align}
  \norm{\tilde{\chi}_{++}\tilde{\chi}_{--}}^{2} &= \norm{\tilde{\chi}_{+-}\tilde{\chi}_{-+}}^{2} \nonumber\\
  &=2^{L}\qty[\tr((\rho_{+-}^{+})^{2})+\tr((\rho_{+-}^{-})^{2})],\nonumber \\
  \braket{\tilde{\chi}_{++}\tilde{\chi}_{--}}{\tilde{\chi}_{+-}\tilde{\chi}_{-+}}
                                                                     & =2^{L}\qty[\tr((\rho_{+-}^{-})^{2})-\tr((\rho_{+-}^{+})^{2})]\nonumber \\
                                                                     & =0,
\end{align}
which proves Eq.~\eqref{eq:chi_pairing1}.

Next, we prove Eqs.~\eqref{eq:chi_pairing2} and \eqref{eq:chi_pairing3} using the identity
\begin{equation}
  \norm{\tr_{2}\dyad{\Xi}{\Xi'}}^{2}=\tr\!\left[(\tr_{1}\dyad{\Xi})(\tr_{1}\dyad{\Xi'})\right],
  \label{eq:partial_trace_HS}
\end{equation}
which holds for arbitrary two-replica states $\ket{\Xi}$ and $\ket{\Xi'}$.
Here, we apply it to the pairs $(\Xi_{++},\Xi_{+-}^{+})$ and $(\Xi_{+-}^{-},\Xi_{--})$.
The swap relations~\eqref{eq:Xi_swap} and \eqref{eq:Xi_pm_swap} convert the partial trace over replica $1$ into the partial trace over replica $2$, giving
\begin{align}
  2^{-L}n_{++,+-} & =\tr(\rho_{++}\rho_{+-}^{-}),\nonumber \\
  2^{-L}n_{--,+-} & =\tr(\rho_{--}\rho_{+-}^{+}).
  \label{eq:n_reduction}
\end{align}
Considering the parity, we have $\rho_{++}\rho_{+-}^{+}=0$
and $\rho_{--}\rho_{+-}^{-}=0$, leading to
\begin{align}
  \braket{\tilde{\chi}_{++}^{2}}{\tilde{\chi}_{++}\tilde{\chi}_{--}} & =2^{L}\tr[\rho_{++}(\rho_{+-}^{+}+\rho_{+-}^{-})]=n_{++,+-},\nonumber  \\
  \braket{\tilde{\chi}_{++}^{2}}{\tilde{\chi}_{+-}\tilde{\chi}_{-+}} & =2^{L}\tr[\rho_{++}(\rho_{+-}^{-}-\rho_{+-}^{+})]=n_{++,+-},\nonumber  \\
  \braket{\tilde{\chi}_{--}^{2}}{\tilde{\chi}_{++}\tilde{\chi}_{--}} & =2^{L}\tr[\rho_{--}(\rho_{+-}^{+}+\rho_{+-}^{-})]=n_{--,+-},\nonumber  \\
  \braket{\tilde{\chi}_{--}^{2}}{\tilde{\chi}_{+-}\tilde{\chi}_{-+}} & =2^{L}\tr[\rho_{--}(\rho_{+-}^{-}-\rho_{+-}^{+})]=-n_{--,+-}.
\end{align}
This proves Eqs.~\eqref{eq:chi_pairing2} and \eqref{eq:chi_pairing3}, which completes the proof of Lemma~\ref{lem:chi_pairing}.
\end{proof}

\begin{lemma} \label{lem:chi_pairing_mixed} The following identity holds:
\begin{equation}
  \braket{\tilde{\chi}_{++}\tilde{\chi}_{+-}}{\tilde{\chi}_{--}\tilde{\chi}_{+-}}=n_{++,--}.
  \label{eq:chi_pairing_mixed}
\end{equation}
\end{lemma}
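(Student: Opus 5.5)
We follow the strategy of Lemma~\ref{lem:chi_pairing}. Using Eq.~\eqref{eq:chi_product_list}, we convert both Grassmann products into characteristic functions of partial-trace operators on replica $1$. We have
\begin{equation}
  \tilde{\chi}_{++}\tilde{\chi}_{+-}=\chi_{\tr_{2}\dyad{\Xi_{++}}{\Xi_{+-}^{+}}},\qquad
  \tilde{\chi}_{--}\tilde{\chi}_{+-}=\chi_{\tr_{2}\dyad{\Xi_{+-}^{-}}{\Xi_{--}}}.
\end{equation}
The pairing~\eqref{eq:HS_pairing} then turns the left-hand side of Eq.~\eqref{eq:chi_pairing_mixed} into a Hilbert--Schmidt product:
\begin{equation}
  2^{L}\tr\!\left[\left(\tr_{2}\dyad{\Xi_{+-}^{+}}{\Xi_{++}}\right)\left(\tr_{2}\dyad{\Xi_{+-}^{-}}{\Xi_{--}}\right)\right].
\end{equation}
The target $n_{++,--}$ is by Lemma~\ref{lem:chi_pairing} equal to $2^{L}[\tr((\rho_{+-}^{+})^{2})+\tr((\rho_{+-}^{-})^{2})]=2^{L+1}\tr((\rho_{+-}^{+})^{2})$, using Eq.~\eqref{eq:rho_equal_purity}. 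So the goal is to rewrite the four-state trace as twice the purity of $\rho_{+-}^{+}$.

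The key tool is the generalization of Eq.~\eqref{eq:partial_trace_HS} to four states. Writing all four states in a Schmidt-type basis shows that
\begin{equation}
  \tr\!\left[(\tr_{2}\dyad{a}{b})(\tr_{2}\dyad{c}{d})\right]
\end{equation}
equals an expectation of the graded swap acting on two copies. Equivalently, it can be written as $\tr[(\tr_{1}\dyad{c}{b})(\tr_{1}\dyad{a}{d})]$ up to parity signs coming from the sign rule~\eqref{eq:sign_rule}. We would then use the swap relations~\eqref{eq:Xi_swap} and~\eqref{eq:Xi_pm_swap}, namely $S\ket{\Xi_{++}}=\ket{\Xi_{++}}$, $S\ket{\Xi_{--}}=-\ket{\Xi_{--}}$, and $S\ket{\Xi_{+-}^{\pm}}=-\ket{\Xi_{+-}^{\mp}}$. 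These convert $\tr_{1}$ back to $\tr_{2}$ while exchanging $\Xi_{+-}^{+}\leftrightarrow\Xi_{+-}^{-}$.

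A more direct route, which we would try first, is to express $\ket{\Xi_{++}}$ and $\ket{\Xi_{--}}$ in terms of $\ket{\Xi_{+-}^{\pm}}$. The inputs satisfy $\ket{\psi_+,\psi_+}\otimes\ket{\psi_-,\psi_-}$ versus $\ket{\psi_+,\psi_-}\otimes\ket{\psi_-,\psi_+}$, so the four-copy states are related by a permutation of replicas. Concretely, the operator $\dyad{\Xi_{++}}{\Xi_{+-}}\otimes\dyad{\Xi_{-+}}{\Xi_{--}}$ is obtained from $\dyad{\Xi_{+-}}{\Xi_{+-}}\otimes\dyad{\Xi_{-+}}{\Xi_{-+}}$ by permuting the input slots. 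Since $U\otimes U$ commutes with the exchange of whole copies and has the Helmert form, this permutation can be pushed through the beam splitters. We would then take $\tr_{2}$ on each copy and the Hilbert--Schmidt pairing over the two retained replicas. Finally, we would project with $(I\pm\Pi_{2})/2$, using Eq.~\eqref{eq:Xi_parity} and $\ket{\Xi_{-+}}=\Pi_{2}\ket{\Xi_{+-}}$, to reduce the expression to $\tr(\rho_{+-}^{+}\rho_{+-}^{+})+\tr(\rho_{+-}^{-}\rho_{+-}^{-})$. The cross terms vanish by the parity-sector orthogonality~\eqref{eq:rho_orthogonal}.

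The main obstacle is the fermionic sign bookkeeping in the four-state version of Eq.~\eqref{eq:partial_trace_HS}. The graded partial trace and graded swap produce factors $(-1)^{p(\cdot)p(\cdot)}$ that depend on the parities of $\Xi_{+-}^{\pm}$ restricted to each replica, and a wrong sign would give $0$ instead of $n_{++,--}$, in the same way as the cancellation in Eq.~\eqref{eq:chi_pairing1}. To control this, we would track the replica-resolved parities throughout: $\Xi_{++}$ is even/even or odd/odd, $\Xi_{+-}^{+}$ has replica~$2$ even and replica~$1$ odd, and so on. We would also check the final identity on a minimal example, one mode with $\psi_+=\ket{0}$ and $\psi_-=\ket{1}$, to fix the overall sign.
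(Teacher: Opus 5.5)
\paragraph*{The gap.} Your reductions of the left-hand side to $2^{L}\tr[(\tr_{2}\dyad{\Xi_{+-}^{+}}{\Xi_{++}})(\tr_{2}\dyad{\Xi_{+-}^{-}}{\Xi_{--}})]$ and of $n_{++,--}$ to $2^{L}[\tr((\rho_{+-}^{+})^{2})+\tr((\rho_{+-}^{-})^{2})]$ are fine. The gap is that neither route you sketch connects the two. The first route is circular. The four-state analogue of Eq.~\eqref{eq:partial_trace_HS}, followed by $S\ket{\Xi_{++}}=\ket{\Xi_{++}}$, $S\ket{\Xi_{--}}=-\ket{\Xi_{--}}$ and $S\ket{\Xi_{+-}^{\pm}}=-\ket{\Xi_{+-}^{\mp}}$, only relabels the two replicas inside each two-replica state. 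It therefore returns exactly the trace you started from, up to a sign. No step ever turns the pair $(\Xi_{++},\Xi_{--})$ into the pair $(\Xi_{+-},\Xi_{+-})$, which is what $n_{++,--}$ is built from. (Also, $\Xi_{++}$ is even on each replica separately, since $\Pi_{2}\ket{\Xi_{++}}=\ket{\Xi_{++}}$ and the total parity is even.)

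\paragraph*{The second route.} Your intuition is right: the input configurations differ by a permutation across copies. But the step that would carry the proof fails as stated.

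First, $\dyad{\Xi_{++}}{\Xi_{+-}}\otimes\dyad{\Xi_{-+}}{\Xi_{--}}$ is not a slot permutation of $\dyad{\Xi_{+-}}{\Xi_{+-}}\otimes\dyad{\Xi_{-+}}{\Xi_{-+}}$. Its ket comes from $\ket{\psi_{+},\psi_{+},\psi_{-},\psi_{+}}$, which has three even factors. A permutation only makes sense once the whole contraction is written as a single four-replica matrix element.

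Second, $U_{12}U_{34}$ commutes with the whole-copy exchange $S_{13}S_{24}$, but that exchange maps $\ket{\tau}=\ket{\psi_{+},\psi_{-},\psi_{+},\psi_{-}}$ to $\pm\ket{\tau}$. It can never produce $\ket{\sigma}=\ket{\psi_{-},\psi_{-},\psi_{+},\psi_{+}}$. The permutation you actually need is $S_{14}$, with $S_{14}\ket{\tau}=-\ket{\sigma}$. Under conjugation by $U_{12}U_{34}$ it becomes a genuinely mixing Gaussian unitary (entries $\pm 1/2$), so it cannot simply be pushed through the beam splitters.

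\paragraph*{What is missing.} The argument needs two steps.
\begin{itemize}
\item Show that $W=\Pi_{4}S_{24}$ acts as the bare label exchange $\ket{r,e,r',e'}\mapsto\ket{r,e',r',e}$ on the support of $\ket{\Xi_{+-},\Xi_{+-}}$. This is where the graded signs you worry about are settled: $\Pi_{4}$ cancels the sign of the nonadjacent graded swap because the second $\Xi_{+-}$ is odd. As a result, the left-hand side equals $2^{L}\mel{\sigma}{W'}{\tau}$ and $n_{++,--}$ equals $2^{L}\mel{\tau}{W'}{\tau}$, with $W'=U_{1234}^{\dagger}WU_{1234}$.
\item Prove the intertwining relation $S_{14}W'=W'S_{24}$. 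It follows from the replica matrix $M=O_{1234}^{\rm T}P_{4}T_{24}O_{1234}$ through $T_{14}M=MT_{24}$, with the phase fixed by vacuum invariance. Combined with $S_{24}\ket{\tau}=-\ket{\tau}$, this immediately gives $\mel{\sigma}{W'}{\tau}=\mel{\tau}{W'}{\tau}$.
\end{itemize}

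This identity is not a consequence of copy-exchange symmetry. It holds only for the balanced splitter: for $O=\left(\begin{smallmatrix}c & s\\ s & -c\end{smallmatrix}\right)$, the relation $T_{14}M=MT_{24}$ forces $c=s$. Without it, or an equivalent explicit computation, your sketch does not establish the lemma. Once it is in place, your final $(I\pm\Pi_{2})/2$ projection is unnecessary, because the bras $\bra{\Xi_{++}}$ and $\bra{\Xi_{--}}$ already select the correct parity components of $\ket{\Xi_{+-},\Xi_{+-}}$.
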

\begin{proof}
By Eq.~\eqref{eq:HS_pairing}, the left-hand side of Eq.~\eqref{eq:chi_pairing_mixed} is the Hilbert--Schmidt product of $\tr_{2}\dyad{\Xi_{++}}{\Xi_{+-}^{+}}$ and $\tr_{2}\dyad{\Xi_{+-}^{-}}{\Xi_{--}}$, which Eq.~\eqref{eq:chi_product_list} gives directly.
The right-hand side is $2^{L}$ times the squared Hilbert--Schmidt norm of the operator whose characteristic function is $\tilde{\chi}_{++}\tilde{\chi}_{--}$, for which Eq.~\eqref{eq:chi_product_general} gives
\begin{equation}
  \tilde{\chi}_{++}\tilde{\chi}_{--}=\chi_{\tr_{2}\dyad{\Xi_{+-}}}.
  \label{eq:chi_pp_mm}
\end{equation}
Let us choose occupation bases and write
\begin{align}
  \Xi^{st}_{ab} & =\braket{a,b}{\Xi_{st}}.
\end{align}
We consider a system with 4 replicas, and write $\ket{r,e,r',e'}$ for the occupation basis, where $r,r'$ label the retained replicas $1,3$ and $e,e'$ label the traced replicas $2,4$.
Direct expansion gives
\begin{equation}
  \begin{aligned}
     & \tr\!\left[\qty(\tr_{2}\dyad{\Xi_{++}}{\Xi_{+-}^{+}})^{\dagger}\tr_{2}\dyad{\Xi_{+-}^{-}}{\Xi_{--}}\right]                          \\
     & \quad=\sum_{r,e,r',e'}\Xi^{+-}_{re}\qty(\Xi^{++}_{r'e})^{*}\Xi^{+-}_{r'e'}\qty(\Xi^{--}_{re'})^{*},                                                   \\
     & \tr\!\left[\qty(\tr_{2}\dyad{\Xi_{+-}})^{2}\right]                                                                                                 \\
     & \quad=\sum_{r,e,r',e'}\Xi^{+-}_{re}\Xi^{+-}_{r'e'}\qty(\Xi^{+-}_{re'})^{*}\qty(\Xi^{+-}_{r'e})^{*}.
  \end{aligned}
  \label{eq:index_sums}
\end{equation}
In the same basis,
\begin{equation}
  \begin{aligned}
    \ket{\Xi_{+-},\Xi_{+-}}               & =\sum_{r,e,r',e'}\Xi^{+-}_{re}\Xi^{+-}_{r'e'}\ket{r,e,r',e'}, \\
    \braket{\Xi_{--},\Xi_{++}}{r,e',r',e} & =\qty(\Xi^{--}_{re'})^{*}\qty(\Xi^{++}_{r'e})^{*},            \\
    \braket{\Xi_{+-},\Xi_{+-}}{r,e',r',e} & =\qty(\Xi^{+-}_{re'})^{*}\qty(\Xi^{+-}_{r'e})^{*},
  \end{aligned}
  \label{eq:four_replica_components}
\end{equation}
so the bra labels in Eq.~\eqref{eq:index_sums} are $(r,e',r',e)$ and the ket labels are $(r,e,r',e')$.

We next show that $W=\Pi_{4}S_{24}$ satisfies
\begin{equation}
  W\ket{r,e,r',e'}=\ket{r,e',r',e}
  \label{eq:W_action}
\end{equation}
for every occupation basis state $\ket{r,e,r',e'}$ appearing in $\ket{\Xi_{+-},\Xi_{+-}}$.
The graded swap $S_{24}$ of the replicas $2$ and $4$ exchanges these two labels.
Decomposing the nonadjacent swap $S_{24}$ into adjacent graded swaps shows that its sign is
\begin{equation}
  p(e)p(e')+[p(e)+p(e')]\,p(r').
\end{equation}
After the swap, $\Pi_4$ contributes $p(e)$.
Since the second $\Xi_{+-}$ has odd total parity, $p(r')\equiv1+p(e')\pmod{2}$.
Hence the total sign is
\begin{equation}
  p(e)p(e')+[p(e)+p(e')][1+p(e')]+p(e)\equiv0\pmod{2},
  \label{eq:swap_sign}
\end{equation}
so that Eq.~\eqref{eq:W_action} follows.

With Eqs.~\eqref{eq:four_replica_components} and \eqref{eq:W_action}, Eq.~\eqref{eq:index_sums} becomes
\begin{equation}
  \begin{aligned}
     & \tr\!\left[\qty(\tr_{2}\dyad{\Xi_{++}}{\Xi_{+-}^{+}})^{\dagger}\tr_{2}\dyad{\Xi_{+-}^{-}}{\Xi_{--}}\right]                          \\
     & \quad=\mel{\Xi_{--},\Xi_{++}}{W}{\Xi_{+-},\Xi_{+-}},                                                                                       \\
     & \tr\!\left[\qty(\tr_{2}\dyad{\Xi_{+-}})^{2}\right]                                                                                                 \\
     & \quad=\mel{\Xi_{+-},\Xi_{+-}}{W}{\Xi_{+-},\Xi_{+-}}.
  \end{aligned}
  \label{eq:trace_contraction}
\end{equation}
Here we define
\begin{equation}
  U_{1234}:=U_{12}U_{34},\quad
  W':=U_{1234}^{\dagger}WU_{1234},
\end{equation}
where $U_{12}$ and $U_{34}$ are copies of $U$, and
\begin{equation}
  \ket{\tau}=\ket{\psi_+,\psi_-,\psi_+,\psi_-},
  \qquad
  \ket{\sigma}=\ket{\psi_-,\psi_-,\psi_+,\psi_+}.
\end{equation}
Since $U_{1234}\ket{\tau}=\ket{\Xi_{+-},\Xi_{+-}}$ and $U_{1234}\ket{\sigma}=\ket{\Xi_{--},\Xi_{++}}$, Eqs.~\eqref{eq:HS_pairing}, \eqref{eq:chi_pp_mm}, and \eqref{eq:trace_contraction} yield
\begin{align}
  2^{-L}\braket{\tilde{\chi}_{++}\tilde{\chi}_{+-}}{\tilde{\chi}_{--}\tilde{\chi}_{+-}}
   & =\mel{\sigma}{W'}{\tau},\nonumber \\
  2^{-L}n_{++,--}
   & =\mel{\tau}{W'}{\tau}.
  \label{eq:W_matrix_elements}
\end{align}
Therefore, it remains to prove $\mel{\sigma}{W'}{\tau}=\mel{\tau}{W'}{\tau}$.

To proceed, let $O$, $O_{1234}$, $P_{4}$, and $T_{ij}$ be the orthogonal matrices that $U$, $U_{1234}$, $\Pi_{4}$, and $S_{ij}$ induce on the replica index of the annihilation operators.
Their explicit forms are
\begin{align}
  O      & =\frac{1}{\sqrt{2}}\begin{pmatrix}1 & 1 \\ 1 & -1\end{pmatrix},\quad
  O_{1234}=O\oplus O,\nonumber                                                  \\
  P_{4}  & =\operatorname{diag}(1,1,1,-1),\nonumber                             \\
  T_{14} & =\begin{pmatrix}
              0 & 0 & 0 & 1 \\
              0 & 1 & 0 & 0 \\
              0 & 0 & 1 & 0 \\
              1 & 0 & 0 & 0
            \end{pmatrix},\quad
  T_{24}   =\begin{pmatrix}
              1 & 0 & 0 & 0 \\
              0 & 0 & 0 & 1 \\
              0 & 0 & 1 & 0 \\
              0 & 1 & 0 & 0
            \end{pmatrix}.
\end{align}
Then, $W'$ transforms the annihilation operators as
\begin{equation}
  (W')^{\dagger}
  \begin{pmatrix} \bm{a}_1 \\ \bm{a}_2 \\ \bm{a}_3 \\ \bm{a}_4\end{pmatrix}
  W'
  =M
  \begin{pmatrix}\bm{a}_1 \\ \bm{a}_2 \\ \bm{a}_3 \\ \bm{a}_4\end{pmatrix},
\end{equation}
where
\begin{align}
  M & =O_{1234}^{\rm T}P_{4}T_{24}O_{1234}\nonumber \\
    & =\frac{1}{2}\begin{pmatrix}
                    1  & 1  & 1  & -1 \\
                    1  & 1  & -1 & 1  \\
                    -1 & 1  & 1  & 1  \\
                    1  & -1 & 1  & 1
                  \end{pmatrix}.
\end{align}
Direct multiplication gives
\begin{equation}
  T_{14}M=M T_{24}
  =\frac{1}{2}
  \begin{pmatrix}
    1  & -1 & 1  & 1  \\
    1  & 1  & -1 & 1  \\
    -1 & 1  & 1  & 1  \\
    1  & 1  & 1  & -1
  \end{pmatrix},
\end{equation}
so that $S_{14}W'$ and $W'S_{24}$ agree up to a phase.
Both leave the vacuum invariant, which fixes the phase to unity,
\begin{equation}
  S_{14}W'=W'S_{24}.
  \label{eq:swap_commute}
\end{equation}

Finally, the action of swap on $\ket{\tau}$, whose parity string is $(0,1,0,1)$, reads
\begin{equation}
  S_{14}\ket{\tau}=-\ket{\sigma},\qquad
  S_{24}\ket{\tau}=-\ket{\tau}.
  \label{eq:swap_action}
\end{equation}
Using Eqs.~\eqref{eq:swap_commute} and \eqref{eq:swap_action}, we have
\begin{equation}
  \mel{\sigma}{W'}{\tau}
  =-\mel{\tau}{S_{14}W'}{\tau}
  =-\mel{\tau}{W'S_{24}}{\tau}
  =\mel{\tau}{W'}{\tau}.
\end{equation}
Together with Eq.~\eqref{eq:W_matrix_elements}, this proves Eq.~\eqref{eq:chi_pairing_mixed}.
\end{proof}

\begin{lemma} \label{lem:measurement} Let $\ket{\Psi}$ be a
normalized pure state of definite parity. Place the measured mode
last in the fermionic ordering and write 
\begin{equation}
  \ket{\Psi}=\sqrt{p}\,\ket{\eta_{0}}\gtensor\ket{0}+\sqrt{q}\,\ket{\eta_{1}}\gtensor\ket{1},\quad0\leq q=1-p\leq1,
  \label{eq:Psi_split}
\end{equation}
where the conditional vectors are normalized. Then
\begin{equation}
  \nu_{2}(\Psi)\leq p\nu_{2}(\eta_{0})+q\nu_{2}(\eta_{1}).
  \label{eq:measurement_ineq}
\end{equation}
\end{lemma}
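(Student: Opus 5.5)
We prove Lemma~\ref{lem:measurement} by writing the characteristic function of $\ket{\Psi}$ explicitly in terms of the conditional states and the measured mode. We then evaluate $\nu_2(\Psi)=2^{-L-1}\norm{\tilde\chi_{\Psi}^{2}}^{2}$ using the closed form~\eqref{eq:purity_fermion_app} and Lemma~\ref{lem:chi_product}, and finally reduce the inequality to a positivity statement controlled by Lemmas~\ref{lem:chi_pairing} and \ref{lem:chi_pairing_mixed}.

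Since $\ket{\Psi}$ has definite parity, $\ket{\eta_0}$ and $\ket{\eta_1}$ have opposite definite parities. Assume $\Psi$ is even, so that we may identify $\ket{\psi_+}=\ket{\eta_0}$ and $\ket{\psi_-}=\ket{\eta_1}$; the odd case follows by exchanging the roles of $0$ and $1$, and the argument is symmetric under $p\leftrightarrow q$.

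\textbf{Step 1 (expansion).} Using the sign rule~\eqref{eq:sign_rule}, we decompose
\begin{equation*}
\dyad{\Psi}=p\,\dyad{\psi_+}\gtensor\dyad{0}+q\,\dyad{\psi_-}\gtensor\dyad{1}\pm\sqrt{pq}\,\bigl(\ket{\psi_+}\!\bra{\psi_-}\gtensor\ket{0}\!\bra{1}+\mathrm{h.c.}\bigr).
\end{equation*}
Because the Majorana monomials factorize across the graded tensor product, $\tilde\chi$ of a graded product is the product of the individual $\tilde\chi$'s, up to parity signs. Hence $\tilde\chi_\Psi=p\,\tilde\chi_{++}\alpha_0+q\,\tilde\chi_{--}\alpha_1+\sqrt{pq}\,(\tilde\chi_{+-}\beta+\tilde\chi_{-+}\beta')$. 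Here $\alpha_{0,1}=(1\pm c\,w\bar w)/2$ for a suitable constant $c$, and $\beta,\beta'$ are linear in the Grassmann pair $(w,\bar w)$ of the measured mode.

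\textbf{Step 2 (squaring and norm).} We square this expression and group the terms by the monomials $1,w,\bar w,w\bar w$ of the last mode. These monomials are orthogonal under the coefficient inner product, so the norm splits into sector contributions. The even sector involves $p^{2}\tilde\chi_{++}^{2}$, $q^{2}\tilde\chi_{--}^{2}$, and the mixed terms $pq\,\tilde\chi_{++}\tilde\chi_{--}$ and $pq\,\tilde\chi_{+-}\tilde\chi_{-+}$. The odd sector involves $\sqrt{pq}\,(p\,\tilde\chi_{++}\tilde\chi_{+-}\pm q\,\tilde\chi_{--}\tilde\chi_{+-})$ and the analogous terms with $\tilde\chi_{-+}$.

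Lemma~\ref{lem:chi_pairing} evaluates all pairings in the even sector in terms of the numbers $n_{++,++}$, $n_{--,--}$, $n_{++,--}$, $n_{++,+-}$, and $n_{--,+-}$. In particular, the $\tilde\chi_{++}\tilde\chi_{--}$ and $\tilde\chi_{+-}\tilde\chi_{-+}$ contributions combine either into $2n_{++,--}$ or cancel, depending on the sector. Lemma~\ref{lem:chi_pairing_mixed} fixes the remaining cross term in the odd sector to $n_{++,--}$. For the odd-sector terms built from $\tilde\chi_{-+}$, we relate their norms to those built from $\tilde\chi_{+-}$ via the swap relation $\ket{\Xi_{-+}}=\Pi_2\ket{\Xi_{+-}}$ of Eq.~\eqref{eq:Xi_parity}. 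The outcome is an explicit quadratic form in $(p,q)$:
\begin{equation*}
2^{L}\nu_2(\Psi)=p^{2}n_{++,++}+q^{2}n_{--,--}+pq\,\bigl[a\,n_{++,--}+b\,n_{++,+-}+c\,n_{--,+-}\bigr],
\end{equation*}
with small rational constants $a,b,c$ to be determined. For comparison, $2^{L}\bigl(p\nu_2(\eta_0)+q\nu_2(\eta_1)\bigr)=p\,n_{++,++}+q\,n_{--,--}$, up to the common normalization.

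\textbf{Step 3 (positivity, the main obstacle).} Using $p+q=1$, the difference RHS$-$LHS equals $pq\,\bigl[n_{++,++}+n_{--,--}-a\,n_{++,--}-b\,n_{++,+-}-c\,n_{--,+-}\bigr]$. We must show that the bracket is nonnegative. The approach is to recognize the bracket as a sum of squared norms, for instance of the form
\begin{equation*}
\norm{\tilde\chi_{++}^{2}-\tilde\chi_{++}\tilde\chi_{--}}^{2}+\norm{\tilde\chi_{--}^{2}-\tilde\chi_{+-}\tilde\chi_{-+}}^{2}+\cdots,
\end{equation*}
expanded with the identities of Lemma~\ref{lem:chi_pairing}. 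If this fails, we would fall back on Cauchy--Schwarz together with AM--GM, for example $2n_{++,+-}\le n_{++,++}+n_{++,--}$ obtained from the pairing identities. Getting the signs and constants of Step 2 exactly right, so that such a completion of squares closes, is where we expect the real difficulty. This sensitivity is presumably also why the statement is restricted to $n=2$.
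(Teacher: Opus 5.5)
Your starting point matches the paper's: expand $\tilde\chi_{\Psi}$ in the monomials $1,u_1,u_2,u_1u_2$ of the measured mode, square, and evaluate norms with Lemmas~\ref{lem:chi_pairing} and \ref{lem:chi_pairing_mixed}. There is a genuine gap, however. Step~2 arrives at the wrong structure, and Step~3, which carries the entire content of the lemma, is left open. Since $\tilde\chi_{\Psi}$ is linear in $p$, $q$, $\sqrt{pq}$, the squared coefficient norm of $\tilde\chi_{\Psi}^{2}$ is homogeneous of degree four in $(p,q)$, not a quadratic form. Carrying out your Step~2 correctly gives
\[
2^{L+1}\nu_{2}(\Psi)=\sum_{k=0}^{4}\binom{4}{k}c_{k}\,p^{k}q^{4-k},
\]
with $c_0=2n_{--,--}$, $c_1=4n_{--,+-}$, $c_2=4n_{++,--}$, $c_3=4n_{++,+-}$, $c_4=2n_{++,++}$. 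The overlap $\Real\braket{\tilde\chi_{++}^{2}}{\tilde\chi_{--}^{2}}$, which no lemma controls, drops out between the $1$ and $u_1u_2$ coefficients. Consequently RHS$-$LHS equals $pq$ times a quadratic form in $(p,q)$, not $pq$ times a constant bracket, so the inequality you set out to prove in Step~3 is not the relevant one. Also, the constant term of the measured-mode factor is $1$, not $1/2$, since $\tilde\chi_{\ket{0}\!\bra{0}}=1+\tfrac{i}{2}u_1u_2$.

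The missing idea is convexity in Bernstein form. The endpoint values are $c_4=2^{L+1}\nu_2(\eta_0)$ and $c_0=2^{L+1}\nu_2(\eta_1)$, so the chord bound follows from $\frac{d^2}{dp^2}\bigl[2^{L+1}\nu_2\bigr]=12\sum_{k=0}^{2}\binom{2}{k}(c_{k+2}-2c_{k+1}+c_k)p^kq^{2-k}\ge0$. The paper shows that each second difference is an explicit squared norm:
\[
c_2-2c_1+c_0=2\norm{\tilde\chi_{--}^{2}-\tilde\chi_{++}\tilde\chi_{--}+\tilde\chi_{+-}\tilde\chi_{-+}}^{2},
\]
\[
c_3-2c_2+c_1=4\norm{(\tilde\chi_{++}-\tilde\chi_{--})\tilde\chi_{+-}}^{2},
\]
\[
c_4-2c_3+c_2=2\norm{\tilde\chi_{++}^{2}-\tilde\chi_{++}\tilde\chi_{--}-\tilde\chi_{+-}\tilde\chi_{-+}}^{2}.
\]
The outer two use $\tilde\chi_{++}\tilde\chi_{--}$ and $\tilde\chi_{+-}\tilde\chi_{-+}$ jointly, as an orthogonal pair of equal norm (Lemma~\ref{lem:chi_pairing}). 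The middle one is where Lemma~\ref{lem:chi_pairing_mixed} enters, giving $n_{++,+-}+n_{--,+-}\ge2n_{++,--}$.

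Your fallback of single-vector Cauchy--Schwarz plus AM--GM cannot replace this. Your sample bound $2n_{++,+-}\le n_{++,++}+n_{++,--}$ only yields $4n_{++,+-}\le2n_{++,++}+2n_{++,--}$. The condition $c_4-2c_3+c_2\ge0$ is the stronger $4n_{++,+-}\le n_{++,++}+2n_{++,--}$. That requires Bessel's inequality over the orthogonal pair, $2n_{++,+-}^{2}\le n_{++,++}n_{++,--}$, whereas plain Cauchy--Schwarz gives only $n_{++,+-}^{2}\le n_{++,++}n_{++,--}$, which is a factor of $2$ too weak.

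Finally, ``exchanging $0$ and $1$'' in the odd case is a parity-changing reflection of one Grassmann generator, not a Gaussian unitary in $\mathcal{G}_{L}$. It does preserve the coefficient norm, but you need to say so; the paper instead prepends an occupied ancilla $\ket{1}_a$ and uses multiplicativity of $\nu_2$.
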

\begin{proof}
We first consider the case where $\ket{\Psi}$ is even.
Then $\ket{\eta_{0}}$ is even and $\ket{\eta_{1}}$ is odd.
We therefore identify $\ket{\psi_{+}}=\ket{\eta_{0}}$ and $\ket{\psi_{-}}=\ket{\eta_{1}}$.
Let $u_{1},u_{2}$ be the real Grassmann variables associated with the measured mode. We have
\begin{equation}
  \begin{aligned}
    \tilde{\chi}_{\ket{0}\!\bra{0}} & =1+\frac{i}{2}u_{1}u_{2},          &
    \tilde{\chi}_{\ket{0}\!\bra{1}} & =\frac{u_{1}-iu_{2}}{\sqrt{2}},      \\
    \tilde{\chi}_{\ket{1}\!\bra{0}} & =\frac{u_{1}+iu_{2}}{\sqrt{2}},    &
    \tilde{\chi}_{\ket{1}\!\bra{1}} & =1-\frac{i}{2}u_{1}u_{2}.
  \end{aligned}
  \label{eq:chi_one_mode}
\end{equation}
Using Eq.~\eqref{eq:chi_one_mode}, we have
\begin{equation}
  \begin{aligned}
    \tilde{\chi}_{\dyad{\Psi}}
     & =p\tilde{\chi}_{++}+q\tilde{\chi}_{--}                                                             \\
     & \quad+\frac{i}{2}\qty(p\tilde{\chi}_{++}-q\tilde{\chi}_{--})u_{1}u_{2}                             \\
     & \quad+\sqrt{\frac{pq}{2}}[(\tilde{\chi}_{-+}-\tilde{\chi}_{+-})u_{1}                                           \\
     & \qquad\qquad\quad+i(\tilde{\chi}_{+-}+\tilde{\chi}_{-+})u_{2}],
  \end{aligned}
  \label{eq:chi_Psi}
\end{equation}
where we used the sign rule~\eqref{eq:sign_rule}.
Since $\tilde{\chi}_{+-},\tilde{\chi}_{-+}$ are odd, we have
\begin{equation}
  \tilde{\chi}_{+-}^{2}=\tilde{\chi}_{-+}^{2}=0,\quad
  \tilde{\chi}_{-+}\tilde{\chi}_{+-}=-\tilde{\chi}_{+-}\tilde{\chi}_{-+}.
  \label{eq:chi_odd_relations}
\end{equation}
A direct multiplication of Eq.~\eqref{eq:chi_Psi}, using Eq.~\eqref{eq:chi_odd_relations},
gives
\begin{equation}
  \begin{aligned}
    \tilde{\chi}_{\dyad{\Psi}}^{2}
     & =\qty(p\tilde{\chi}_{++}+q\tilde{\chi}_{--})^{2}                                                        \\
     & \quad+\sqrt{2pq}\,\qty(p\tilde{\chi}_{++}+q\tilde{\chi}_{--})                                            \\
     & \qquad\quad\times[(\tilde{\chi}_{-+}-\tilde{\chi}_{+-})u_{1}                                             \\
     & \qquad\qquad\qquad+i(\tilde{\chi}_{+-}+\tilde{\chi}_{-+})u_{2}]                                                \\
     & \quad+i\qty(p^{2}\tilde{\chi}_{++}^{2}-q^{2}\tilde{\chi}_{--}^{2}+2pq\,\tilde{\chi}_{+-}\tilde{\chi}_{-+})u_{1}u_{2},
  \end{aligned}
  \label{eq:chi_Psi_square}
\end{equation}
and Eq.~\eqref{eq:purity_fermion_app} at $n=2$ for the $L+1$ modes leads to
\begin{equation}
  \begin{aligned}
    2^{L+1}\nu_{2}(\Psi)
     & =\norm{\qty(p\tilde{\chi}_{++}+q\tilde{\chi}_{--})^{2}}^{2}                                          \\
     & \quad+8pq\,\norm{\qty(p\tilde{\chi}_{++}+q\tilde{\chi}_{--})\tilde{\chi}_{+-}}^{2}                   \\
     & \quad+\norm{p^{2}\tilde{\chi}_{++}^{2}-q^{2}\tilde{\chi}_{--}^{2}+2pq\,\tilde{\chi}_{+-}\tilde{\chi}_{-+}}^{2},
  \end{aligned}
  \label{eq:nu_three_terms}
\end{equation}
where we used
\begin{equation}
  \norm{\qty(p\tilde{\chi}_{++}+q\tilde{\chi}_{--})\tilde{\chi}_{+-}}^{2} = \norm{\qty(p\tilde{\chi}_{++}+q\tilde{\chi}_{--})\tilde{\chi}_{-+}}^{2}.
\end{equation}
Using Lemmas~\ref{lem:chi_pairing} and \ref{lem:chi_pairing_mixed}, we get
\begin{align}
   & \norm{\qty(p\tilde{\chi}_{++}+q\tilde{\chi}_{--})^{2}}^{2}\nonumber                     \\
   & \quad=p^{4}n_{++,++}+q^{4}n_{--,--}+4p^{3}qn_{++,+-}\nonumber                                 \\
   & \qquad+4pq^{3}n_{--,+-}+4p^{2}q^{2}n_{++,--}\nonumber                                         \\
   & \qquad+2p^{2}q^{2}\Real\braket{\tilde{\chi}_{++}^{2}}{\tilde{\chi}_{--}^{2}},\nonumber        \\
   & 8pq\norm{\qty(p\tilde{\chi}_{++}+q\tilde{\chi}_{--})\tilde{\chi}_{+-}}^{2}\nonumber     \\
   & \quad=8p^{3}qn_{++,+-}+8pq^{3}n_{--,+-}+16p^{2}q^{2}n_{++,--},\nonumber                       \\
   & \norm{p^{2}\tilde{\chi}_{++}^{2}-q^{2}\tilde{\chi}_{--}^{2}+2pq\tilde{\chi}_{+-}\tilde{\chi}_{-+}}^{2}\nonumber \\
   & \quad=p^{4}n_{++,++}+q^{4}n_{--,--}+4p^{3}qn_{++,+-}\nonumber                                 \\
   & \qquad+4pq^{3}n_{--,+-}+4p^{2}q^{2}n_{++,--}\nonumber                                         \\
   & \qquad-2p^{2}q^{2}\Real\braket{\tilde{\chi}_{++}^{2}}{\tilde{\chi}_{--}^{2}}.
  \label{eq:three_norms}
\end{align}
Summing up, we have
\begin{align}
  2^{L+1}\nu_{2}(\Psi)
   & =2n_{++,++}p^{4}+16n_{++,+-}p^{3}q\nonumber                  \\
   & \quad+24n_{++,--}p^{2}q^{2}+16n_{--,+-}pq^{3}\nonumber       \\
   & \quad+2n_{--,--}q^{4}\nonumber                               \\
   & =\sum_{k=0}^{4}\binom{4}{k}c_{k}p^{k}q^{4-k},
  \label{eq:nu_quartic}
\end{align}
where
\begin{equation}
  \begin{aligned}
    c_{0} & =2n_{--,--}, & c_{1} & =4n_{--,+-}, \\
    c_{2} & =4n_{++,--}, & c_{3} & =4n_{++,+-}, \\
    c_{4} & =2n_{++,++}.
  \end{aligned}
  \label{eq:ck_def}
\end{equation}
Then, using Lemmas~\ref{lem:chi_pairing} and \ref{lem:chi_pairing_mixed}, we prove the following relations:
\begin{align}
  c_{2}-2c_{1}+c_{0} & =2\norm{\tilde{\chi}_{--}^{2}-\tilde{\chi}_{++}\tilde{\chi}_{--}+\tilde{\chi}_{+-}\tilde{\chi}_{-+}}^{2},
  \label{eq:second_difference0}                                                                                                 \\
  c_{3}-2c_{2}+c_{1} & =4\norm{\tilde{\chi}_{++}\tilde{\chi}_{+-}-\tilde{\chi}_{--}\tilde{\chi}_{+-}}^{2},
  \label{eq:second_difference1}                                                                                                 \\
  c_{4}-2c_{3}+c_{2} & =2\norm{\tilde{\chi}_{++}^{2}-\tilde{\chi}_{++}\tilde{\chi}_{--}-\tilde{\chi}_{+-}\tilde{\chi}_{-+}}^{2}.
  \label{eq:second_difference2}
\end{align}
Since the right-hand sides of these equalities are nonnegative, we prove the convexity
\begin{equation}
  \begin{aligned}
     & \frac{d^{2}}{dp^{2}}\left[2^{L+1}\nu_{2}(\Psi)\right]                              \\
     & \quad=12\sum_{k=0}^{2}\binom{2}{k}(c_{k+2}-2c_{k+1}+c_{k})p^{k}q^{2-k}\geq0,
  \end{aligned}
  \label{eq:nu_convex}
\end{equation}
leading to
\begin{equation}
  \nu_{2}(\Psi)\leq p\nu_{2}(\eta_{0})+q\nu_{2}(\eta_{1}).
\end{equation}

Now suppose that $\ket{\Psi}$ is odd.
Then $\ket{\eta_{0}}$ is odd and $\ket{\eta_{1}}$ is even.
Let us introduce one ancilla mode $a$ in the occupied state $\ket{1}_{a}$, placed before all unmeasured modes, and consider the state $\ket{1}_{a}\gtensor\ket{\Psi}$.
The enlarged state has even total parity. Its two conditional states
are
\begin{equation}
  \ket{1}_{a}\gtensor\ket{\eta_{0}}\quad\text{and}\quad\ket{1}_{a}\gtensor\ket{\eta_{1}},
\end{equation}
with even and odd parity, respectively. The even-parent result thus
gives
\begin{equation}
  \nu_{2}(\Psi)\leq p\nu_{2}(\eta_{0})+q\nu_{2}(\eta_{1}),
\end{equation}
where we use the multiplicativity and $\nu_{2}(\ket{1}_{a})=1$.
This proves Eq.~\eqref{eq:measurement_ineq} for an odd parent as well.
\end{proof}

\section{Qudit}\label{app:qudit}

In this appendix, we discuss the qudit counterparts of the qubit definitions of Sec.~\ref{subsec:qubit} and show how the MRE reduces to the qudit SRE, generalizing the construction of Sec.~\ref{subsec:MRE_qubit}.
Throughout this section, $d$ denotes the local Hilbert-space dimension, and $\omega_{d}=e^{2\pi i/d}$ is a primitive $d$-th root of unity.

The single-qudit shift and clock operators are defined by
\begin{align}
  X=\sum_{q=0}^{d-1}\ket{q+1}\bra{q},\qquad
  Z=\sum_{q=0}^{d-1}\omega_{d}^{\,q}\ket{q}\bra{q},
  \label{eq:qudit_XZ}
\end{align}
where the labels are understood modulo $d$.
The Heisenberg--Weyl operators generalize the Pauli operators~\eqref{eq:Pauli_single} and are defined by
\begin{align}
  D(q,p)=\tau^{\,qp}X^{q}Z^{p},\qquad
  \tau=\begin{cases} -e^{\pi i/d} & (d\ \text{odd})\\ e^{\pi i/d} & (d\ \text{even})\end{cases}
  \label{eq:qudit_HW}
\end{align}
with $q,p\in\mathbb{Z}_{d}$.
For an $L$-qudit system we write $\bm{u}=(\bm{q},\bm{p})$ with $\bm{q},\bm{p}\in\mathbb{Z}_{d}^{L}$ and define the $L$-qudit Heisenberg--Weyl operator
\begin{align}
  D(\bm{u})=\bigotimes_{i=1}^{L}D(q_{i},p_{i}).
  \label{eq:qudit_HW_string}
\end{align}
The characteristic function of a state $\rho$ is
\begin{align}
  \chi_{\rho}(\bm{u})=\tr[D(\bm{u})\rho],
  \label{eq:qudit_chi}
\end{align}
and the strings $D(\bm{u})$ form an orthogonal operator basis, $\tr[D(\bm{u})^{\dag}D(\bm{v})]=d^{L}\delta_{\bm{u},\bm{v}}$.
The $L$-qudit Heisenberg--Weyl group is defined as
\begin{align}
  \mathcal{D}_{L}=\qty{\tau^{\,j}\,D(\bm{u})\mid
  j\in\mathbb{Z}_{2d},\,\bm{u}\in\mathbb{Z}_{d}^{2L}},
  \label{eq:qudit_HW_group}
\end{align}
which generalizes the Pauli group~\eqref{eq:Pauli_group_qubit}.

The free states of the qudit magic resource theory are again the stabilizer states, defined as the Clifford orbit of the computational basis.
The Clifford group is the set of unitaries leaving $\mathcal{D}_{L}$ invariant under conjugation,
\begin{align}
  \mathcal{C}_{L}=\qty{U\mid
  UDU^{\dag}\in\mathcal{D}_{L}
  \text{ for all }D\in\mathcal{D}_{L}},
  \label{eq:qudit_Clifford}
\end{align}
and the stabilizer states are the Clifford orbit of the computational basis,
\begin{align}
  \mathrm{Stab}_{L}=\qty{U_{C}\ket{0}^{\otimes L}\mid
  U_{C}\in\mathcal{C}_{L}}.
  \label{eq:qudit_Stab}
\end{align}

The stabilizer R\'enyi entropy has been generalized to qudits as~\cite{wang2023stabilizer}
\begin{align}
  \mathrm{SRE}_{n}(\rho)
  =\frac{1}{1-n}\ln\sum_{\bm{u}}
  \left[\frac{\lvert\chi_{\rho}(\bm{u})\rvert^{2}}{d^{L}}\right]^{n}
  -L\ln d.
  \label{eq:qudit_SRE}
\end{align}
For pure states, in parallel with the qubit properties listed in Sec.~\ref{subsec:qubit}, one can prove faithfulness, invariance under Clifford unitaries, and additivity for any $d$~\cite{wang2023stabilizer}.
Moreover, monotonicity under stabilizer protocols has been proved for odd prime $d$~\cite{turkeshi2025magic}.

We are now in a position to prove that the MRE with replica number \(n\geq 2\) reduces to the qudit SRE$_n$ whenever $n$ and $d$ are coprime.
To this end, we introduce $ n^{-1}\in \mathbb Z_d$ to represent the multiplicative inverse of \(n\) modulo \(d\), i.e.,
\begin{equation}
  n n^{-1} \equiv 1 \pmod{d} .
\end{equation}
For \(n\) replicas of an \(L\)-qudit state, we can define a convolution unitary
\(U\) by its action on the computational basis as follows:
\begin{align}
  U
  \bigotimes_{r=1}^n |\bm{q}_r\rangle
  &=
  \left|
  n^{-1}\sum_{s=1}^n \bm{q}_s
  \right\rangle
  \bigotimes_{r=2}^n
  \left|
  n^{-1}(\bm{ q}_1-\bm{ q}_r)
  \right\rangle,\nonumber\\
  U^\dagger
  \bigotimes_{r=1}^n |\bm{q}_r\rangle
  &=
  \left|
  \sum_{s=1}^n \bm{q}_s
  \right\rangle
  \bigotimes_{r=2}^n
  \left|
  \sum_{s=1}^n \bm{q}_s-n\bm{q}_r
  \right\rangle,
  \label{eq:qudit_nd_convolution_unitary}
\end{align}
where all additions, subtractions, and scalar multiplications are understood to be performed
componentwise in \(\mathbb Z_d^L\).
This map is a permutation of the computational basis and hence defines
a unitary.
To our knowledge, this qudit convolution has not been known before except for the case \(n=2\) with an odd prime $d$, in which Eq.~\eqref{eq:qudit_nd_convolution_unitary} reproduces the results in Ref.~\cite{bu2025stabilizer}:
\begin{align}
  U|\bm{q}_1\rangle|\bm{ q}_2\rangle
  &=
  \left|\frac{\bm{ q}_1+\bm{ q}_2}{2}\right\rangle
  \left|\frac{\bm{ q}_1-\bm{ q}_2}{2}\right\rangle,\nonumber\\
  U^\dagger|\bm{q}_1\rangle|\bm{ q}_2\rangle
  &=
  \left|{\bm{ q}_1+\bm{ q}_2}\right\rangle
  \left|{\bm{ q}_1-\bm{ q}_2}\right\rangle,
\end{align}
where \(1/2\) means the inverse of \(2\) in \(\mathbb Z_d\).
Note that this transformation may be regarded as a discrete analog of the $n=2$ Helmert transformation used for bosons and fermions in Sec.~\ref{subsec:MRE_BF}.
We also remark that Eq.~\eqref{eq:qudit_nd_convolution_unitary} reproduces the qubit ($d=2$) convolution~\eqref{eq:Udag_qubit} for an odd $n\geq 3$, since $n^{-1}\equiv 1$ and $-{\bm q}_r\equiv{\bm q}_r$ on ${\mathbb Z}_2$.

We next compute the induced transformation on the Heisenberg--Weyl
operators of the retained output replica.
Introducing
\begin{equation}
  X_r(\bm{ q})=\bigotimes_{j=1}^L X_{r,j}^{q_j},
  \qquad
  Z_r(\bm{ p})=\bigotimes_{j=1}^L Z_{r,j}^{p_j},
\end{equation}
where the subscript \(r\) labels the replica, and using Eq.~\eqref{eq:qudit_nd_convolution_unitary},
one obtains
\begin{align}
  U^\dagger
  \left[
    X_1(\bm{ q})\otimes I^{\otimes(n-1)}
    \right]
  U
   & =
  \bigotimes_{r=1}^n X_r(\bm{ q}),
  \label{eq:X_conjugation_qudit_nd}
  \\
  U^\dagger
  \left[
    Z_1(\bm{ p})\otimes I^{\otimes(n-1)}
    \right]
  U
   & =
  \bigotimes_{r=1}^n Z_r( n^{-1}\,\bm{ p}).
  \label{eq:Z_conjugation_qudit_nd}
\end{align}
Consequently, we have
\begin{align}
  & U^\dagger
  \left[
    D_1(\bm{u})\otimes I^{\otimes(n-1)}
    \right]
  U\nonumber\\
  &\;\;\;\;\;\;\;\;\;\;\;\;= \tau^{{\bm q}\cdot{\bm p}-n{\bm q}\cdot(n^{-1}{\bm p})}
  \bigotimes_{r=1}^n
  D_r(\bm{u}'),
  \label{eq:D_conjugation_qudit_nd}
\end{align}
where $D_r$ denotes the displacement operator acting on replica $r$, and $\bm{u}'\equiv(\bm{q}, n^{-1} \bm{p})$.
Similar to the qubit case, we can express the characteristic function of the convolved state as
\begin{align}
  \chi_{{\cal C}_n(\rho_1,\ldots,\rho_n)}({\bm u})= \tau^{{\bm q}\cdot{\bm p}-n{\bm q}\cdot(n^{-1}{\bm p})} \prod_{r=1}^n
  \chi_{\rho_r}({\bm u}'),
\end{align}
leading to the following expression of the absolute value for the self-convolution:
\begin{equation}
  \left|
  \chi_{C_n(\rho)}(\bm{u})
  \right|^2
  =
  \left|
  \chi_\rho({\bm u}')
  \right|^{2n}.
  \label{eq:self_convolution_abs_qudit_nd}
\end{equation}
Since multiplication by \( n^{-1}\) is a bijection of
\(\mathbb Z_d^L\), we can relabel
\( n^{-1}\,\bm{p}\mapsto \bm{ p}\) and obtain
\begin{equation}
  \nu_n(\rho)
  =
  \frac{1}{d^L}
  \sum_{{\bm u}\in\mathbb Z_d^{2L}}
  \left|
  \chi_\rho({\bm u})
  \right|^{2n}.
  \label{eq:purity_qudit_nd_final}
\end{equation}
This proves the equivalence relation $M_{n}=\text{SRE}_{n}$ for qudits.

We also remark that there is no unitary that satisfies the convolution--multiplication duality when $n$ and $d$ are not coprime, generalizing the qubit obstruction at even $n$ discussed in Sec.~\ref{subsec:MRE_qubit} to arbitrary qudits.
The Heisenberg--Weyl operators obey the commutation relation
\begin{equation}
  D(\bm{u})D(\bm{v})
  =\omega_{d}^{-\langle\bm{u},\bm{v}\rangle}\,D(\bm{v})D(\bm{u}),
  \qquad
  \langle\bm{u},\bm{v}\rangle\equiv\bm{u}^{\rm T}\Omega\,\bm{v},
  \label{eq:HW_commutation_qudit}
\end{equation}
where $\Omega=\left(\begin{smallmatrix}0 & I_{L}\\ -I_{L} & 0\end{smallmatrix}\right)$ is the symplectic matrix and $\langle\cdot,\cdot\rangle$ is the induced symplectic form on $\mathbb{Z}_{d}^{2L}$.
The duality requires
\begin{equation}
  U^{\dag}\qty[D_{1}(\bm{u})\otimes I^{\otimes(n-1)}]U
  = c(\bm{u})
  \bigotimes_{r=1}^{n}D_{r}(\bm{u}'),
  \label{eq:duality_requirement_qudit}
\end{equation}
with some $\bm{u}'$ and a phase factor $c(\bm{u})$.
Applying this unitary to the commutation relation~\eqref{eq:HW_commutation_qudit}, we have
\begin{equation}
  \langle\bm{u},\bm{v}\rangle
  \equiv n\,\langle\bm{u}',\bm{v}'\rangle
  \pmod{d}
  \label{eq:duality_congruence_qudit}
\end{equation}
for all $\bm{u}$ and $\bm{v}$.
Suppose that $n$ and $d$ are not coprime, i.e., $\gcd(n,d)>1$, and set $m=d/\gcd(n,d)$, an integer with $0<m<d$.
Since $mn=d\,[n/\gcd(n,d)]$ is a multiple of $d$, multiplying Eq.~\eqref{eq:duality_congruence_qudit} by $m$ gives $m\langle\bm{u},\bm{v}\rangle\equiv0\pmod{d}$ for all $\bm{u}$ and $\bm{v}$.
Choosing $\bm{u}$ and $\bm{v}$ with $\langle\bm{u},\bm{v}\rangle=1$, such as a single qudit with $\bm{u}=(1,0)$ and $\bm{v}=(0,1)$, forces $m\equiv0\pmod{d}$, which contradicts $0<m<d$.
Hence, no such $U$ exists, recovering the qubit obstruction for even $n$ as the special case $d=2$.

\section{Boundary condition imposed by the maximally entangled state}\label{app:bc_MES}

In Sec.~\ref{subsec:path_integral} we expressed the Liouville-space
partition function as a path integral on the imaginary-time interval
$\tau\in[0,\beta/2]$ with $\sket{I}$ at both ends.
In this appendix, we determine the boundary condition that $\sket{I}$ imposes on the field
configurations.
The bosonic and fermionic cases are treated
simultaneously through the parameter $\eta$ of Eq.~\eqref{eq:CCR_CAR_prelim},
with $\eta=+1$ for bosons and $\eta=-1$ for fermions.

\subsection{Liouville coherent states}
\label{subsec:Liouville_coherent}
To begin with, we define the coherent states for both bosons and fermions as the displacement of the vacuum,
\begin{align}
  \ket{\bm{\alpha}} = D(\bm{\alpha})\ket{0}^{\otimes L}.
  \label{eq:coh_BF}
\end{align}
Here, $D(\bm{\alpha})$ is either $D_{\mathrm{b}}(\bm{\alpha})$ with complex vector $\bm{\alpha}$ or $D_{\mathrm{f}}(\bm{\alpha})$ with complex Grassmann vector $\bm{\alpha}$.
The inner product between coherent states is
\begin{align}
  \braket{\bm{\alpha}_{2}}{\bm{\alpha}_{1}}
  = \exp\!\qty[
   -\tfrac{1}{2}\bm{\alpha}_{2}^{*}\!\cdot\!\bm{\alpha}_{2}
   -\tfrac{1}{2}\bm{\alpha}_{1}^{*}\!\cdot\!\bm{\alpha}_{1}
   +\bm{\alpha}_{2}^{*}\!\cdot\!\bm{\alpha}_{1}
   ],
  \label{eq:coherent_overlap}
\end{align}
where $\bm{\alpha}^{*}\!\cdot\!\bm{\beta}\equiv\sum_{i}\alpha_{i}^{*}\beta_{i}$.
The completeness relation reads
$\int d\mu(\bm{\alpha})\ket{\bm{\alpha}}\bra{\bm{\alpha}}=I$,
with the canonical measure $d\mu(\bm{\alpha})=\pi^{-L}d^{2L}\bm{\alpha}$
for bosons and the Berezin measure $d\mu(\bm{\alpha})=d\bm{\alpha}$
with
$d\bm{\alpha}=d\alpha_{L}^{*}\,d\alpha_{L}\cdots
  d\alpha_{1}^{*}\,d\alpha_{1}$ for fermions.

We extend this construction to $\bar{\mathscr{H}}$ by introducing a
second copy of coherent-state labels
$\tilde{\bm{\alpha}}=(\tilde\alpha_{1},\ldots,\tilde\alpha_{L})^{\rm T}$
for the auxiliary sector and defining the Liouville coherent state
$\sket{\bm{\alpha},\tilde{\bm{\alpha}}}$ by the eigenvalue equations
\begin{align}
  a_{i}\sket{\bm{\alpha},\tilde{\bm{\alpha}}}
   & = \alpha_{i}\sket{\bm{\alpha},\tilde{\bm{\alpha}}},
  \nonumber \\
  \tilde{a}_{i}\sket{\bm{\alpha},\tilde{\bm{\alpha}}}
   & = \tilde\alpha_{i}\sket{\bm{\alpha},\tilde{\bm{\alpha}}},
  \label{eq:Liouville_CS_eigen}
\end{align}
with the normalization fixed by
$\sbk{\bm{\alpha},\tilde{\bm{\alpha}}}{\bm{\alpha},\tilde{\bm{\alpha}}}=1$.
A direct calculation, using
Eqs.~\eqref{eq:superop_a}--\eqref{eq:superop_atilde}, gives the inner
product
\begin{align}
   & \sbk{\bm{\alpha}_{2},\tilde{\bm{\alpha}}_{2}}{\bm{\alpha}_{1},\tilde{\bm{\alpha}}_{1}}
  \nonumber \\
   & = \exp\!\Bigl[
   -\tfrac{1}{2}|\bm{\alpha}_{2}|^{2}
   -\tfrac{1}{2}|\bm{\alpha}_{1}|^{2}
   +\bm{\alpha}_{2}^{*}\!\cdot\!\bm{\alpha}_{1}
   \nonumber\\
   & \quad
   -\tfrac{1}{2}|\tilde{\bm{\alpha}}_{2}|^{2}
   -\tfrac{1}{2}|\tilde{\bm{\alpha}}_{1}|^{2}
   +\tilde{\bm{\alpha}}_{2}^{*}\!\cdot\!\tilde{\bm{\alpha}}_{1}
   \Bigr]
  \label{eq:Liouville_CS_inner}
\end{align}
and the corresponding completeness relation
\begin{align}
  \int\!d\mu(\bm{\alpha})\,d\mu(\tilde{\bm{\alpha}})\,
  \sket{\bm{\alpha},\tilde{\bm{\alpha}}}
  \sbra{\bm{\alpha},\tilde{\bm{\alpha}}}
  = I_{\bar{\mathscr{H}}}.
  \label{eq:Liouville_CS_completeness}
\end{align}
Here $|\bm{\alpha}|^{2}\equiv\bm{\alpha}^{*}\!\cdot\!\bm{\alpha}$.

The overlaps with the MES~\eqref{eq:MES} are obtained from the
eigenvalue properties~\eqref{eq:Liouville_CS_eigen},
\begin{align}
  \sbk{\bm{\alpha},\tilde{\bm{\alpha}}}{I}
   & = \exp\!\qty[
   -\tfrac{1}{2}|\bm{\alpha}|^{2}
   -\tfrac{1}{2}|\tilde{\bm{\alpha}}|^{2}
   +\bm{\alpha}^{*}\!\cdot\!\tilde{\bm{\alpha}}^{*}
   ],
  \label{eq:CS_I_right} \\
  \sbk{I}{\bm{\alpha},\tilde{\bm{\alpha}}}
   & = \exp\!\qty[
   -\tfrac{1}{2}|\bm{\alpha}|^{2}
   -\tfrac{1}{2}|\tilde{\bm{\alpha}}|^{2}
   + \eta\,\bm{\alpha}\!\cdot\!\tilde{\bm{\alpha}}
   ].
  \label{eq:CS_I_left}
\end{align}
The sign $\eta$ in Eq.~\eqref{eq:CS_I_left} originates from the
Grassmann reversal that takes
$\bm{\alpha}^{*}\!\cdot\!\tilde{\bm{\alpha}}^{*}$ to its complex
conjugate; it leads to the antiperiodic boundary condition for
fermions.

\subsection{Trotter decomposition}
\label{subsec:Trotter}

We discretize the imaginary-time interval $[0,\beta/2]$ into $N$ slices
of width $\varepsilon=\beta/(2N)$ and insert the
completeness relation~\eqref{eq:Liouville_CS_completeness} between
successive Trotter factors. Writing $H$ as a normal-ordered
function of the elementary operators,
$H = H(\bm a^{\dag},\bm a)$, the standard
coherent-state matrix element gives
\begin{align}
   & \sbra{\bm{\alpha}_{k},\tilde{\bm{\alpha}}_{k}}\,
  e^{-\varepsilon H}\otimes e^{-\varepsilon\tilde{H}}\,
  \sket{\bm{\alpha}_{k-1},\tilde{\bm{\alpha}}_{k-1}}
  \nonumber \\
   & = e^{-\varepsilon[
   H(\bm{\alpha}_{k}^{*},\bm{\alpha}_{k-1})
   +\tilde H(\tilde{\bm{\alpha}}_{k}^{*},\tilde{\bm{\alpha}}_{k-1})
   ]}\,
  \sbk{\bm{\alpha}_{k},\tilde{\bm{\alpha}}_{k}}{\bm{\alpha}_{k-1},\tilde{\bm{\alpha}}_{k-1}}.
  \label{eq:Trotter_step}
\end{align}
We note that the two appearances of the symbol $H$ and $\tilde{H}$ in
Eq.~\eqref{eq:Trotter_step} have different meanings, as $H = H(\bm{a}^{\dag},\bm{a})$ on the left-hand side
is the operator on $\mathscr{H}$, while
$H(\bm{\alpha}_{k}^{*},\bm{\alpha}_{k-1})$ on the right-hand side is
the c-number obtained by replacing
$\bm{a}^{\dag}\to\bm{\alpha}_{k}^{*}$ and
$\bm{a}\to\bm{\alpha}_{k-1}$.

Combining Eqs.~\eqref{eq:Liouville_CS_inner},
\eqref{eq:CS_I_right}, \eqref{eq:CS_I_left}, and \eqref{eq:Trotter_step}, the discretized
partition function takes the form
\begin{align}
  Z = \int\!\prod_{k=0}^{N}d\mu(\bm{\alpha}_{k})\,d\mu(\tilde{\bm{\alpha}}_{k})\;
  e^{-S},
  \label{eq:Z_discrete}
\end{align}
with the action
\begin{align}
  S
   & =- \eta\,\bm{\alpha}_{N}\!\cdot\!\tilde{\bm{\alpha}}_{N}
  -\bm{\alpha}_{0}^{*}\!\cdot\!\tilde{\bm{\alpha}}_{0}^{*}
  \nonumber \\
   & \quad
  -\sum_{k=1}^{N}\qty(
  \bm{\alpha}_{k}^{*}\!\cdot\!\bm{\alpha}_{k-1}
  -\bm{\alpha}_{k}^{*}\!\cdot\!\bm{\alpha}_{k})
  +\bm{\alpha}_{0}^{*}\!\cdot\!\bm{\alpha}_{0}
  \nonumber \\
   & \quad
  -\sum_{k=1}^{N}\qty(
  \tilde{\bm{\alpha}}_{k}^{*}\!\cdot\!\tilde{\bm{\alpha}}_{k-1}
  -\tilde{\bm{\alpha}}_{k}^{*}\!\cdot\!\tilde{\bm{\alpha}}_{k})
  +\tilde{\bm{\alpha}}_{0}^{*}\!\cdot\!\tilde{\bm{\alpha}}_{0}
  \nonumber \\
   & \quad
  +\varepsilon\sum_{k=1}^{N}\qty[
   H(\bm{\alpha}_{k}^{*},\bm{\alpha}_{k-1})
   +\tilde H(\tilde{\bm{\alpha}}_{k}^{*},\tilde{\bm{\alpha}}_{k-1})
   ].
  \label{eq:S_unfolded}
\end{align}
The first two terms come from
$\sbk{I}{\bm{\alpha}_{N},\tilde{\bm{\alpha}}_{N}}$ and
$\sbk{\bm{\alpha}_{0},\tilde{\bm{\alpha}}_{0}}{I}$, respectively, and
encode the boundary conditions imposed by the MES.

\subsection{Folding to a single time axis}
\label{subsec:folding}

To compare Eq.~\eqref{eq:S_unfolded} with the conventional
single-line path integral, we relabel the auxiliary fields as a
continuation of the original ones along an extended time axis
$k=0,1,\ldots,2N+1$. For both statistics, define
\begin{align}
  (\bm{\alpha}_{N+k},\bm{\alpha}_{N+k}^{*})
  \equiv (\eta\,\tilde{\bm{\alpha}}_{N+1-k}^{*},\,\tilde{\bm{\alpha}}_{N+1-k}),
  \label{eq:fold_relabel}
\end{align}
for $k=1,2,\ldots,N+1$. The sign $\eta$ is the unique choice that
matches the form of the bilinear $\bm{\alpha}^{*}\!\cdot\!\bm{\alpha}$
appearing in $S$: under
Eq.~\eqref{eq:fold_relabel}, the auxiliary blocks of $S$ become
\begin{align}
   & -\sum_{k=1}^{N}\qty(
  \tilde{\bm{\alpha}}_{k}^{*}\!\cdot\!\tilde{\bm{\alpha}}_{k-1}
  -\tilde{\bm{\alpha}}_{k}^{*}\!\cdot\!\tilde{\bm{\alpha}}_{k})
  \nonumber \\
   & = -\sum_{k=1}^{N}\qty(
  \bm{\alpha}_{N+k+1}^{*}\!\cdot\!\bm{\alpha}_{N+k}
  -\bm{\alpha}_{N+k}^{*}\!\cdot\!\bm{\alpha}_{N+k})
  \nonumber \\
   & \quad
  \eta\,\bm{\alpha}_{2N+1}\!\cdot\!\bm{\alpha}_{2N+1}^{*},
  \label{eq:aux_relabeled}
\end{align}
while the seam terms transform as
\begin{align}
  \eta\,\bm{\alpha}_{N}\!\cdot\!\tilde{\bm{\alpha}}_{N}
   & = \bm{\alpha}_{N+1}^{*}\!\cdot\!\bm{\alpha}_{N},
  \\
  \bm{\alpha}_{0}^{*}\!\cdot\!\tilde{\bm{\alpha}}_{0}^{*}
   & = \eta\,\bm{\alpha}_{0}^{*}\!\cdot\!\bm{\alpha}_{2N+1}.
  \label{eq:seam_relabeled}
\end{align}
We further assume that the auxiliary Hamiltonian symbol satisfies
$\tilde H(\eta\,\bm{\alpha}_{N+k},\bm{\alpha}_{N+k+1}^{*})
  =H(\bm{\alpha}_{N+k+1}^{*},\bm{\alpha}_{N+k})$ as a c-number identity
(this holds true for, e.g., a time-reversal-symmetric system $H^{*}=H$).
Substituting Eqs.~\eqref{eq:aux_relabeled}--\eqref{eq:seam_relabeled}
into Eq.~\eqref{eq:S_unfolded} and reorganizing,
\begin{align}
  S
   & = -\sum_{k=1}^{2N+1}\qty(
  \bm{\alpha}_{k}^{*}\!\cdot\!\bm{\alpha}_{k-1}
  -\bm{\alpha}_{k}^{*}\!\cdot\!\bm{\alpha}_{k})
  -\eta\,\bm{\alpha}_{0}^{*}\!\cdot\!\bm{\alpha}_{2N+1}
  +\bm{\alpha}_{0}^{*}\!\cdot\!\bm{\alpha}_{0}
  \nonumber \\
   & \quad
  +\varepsilon\sum_{k=1}^{N}\qty[
   H(\bm{\alpha}_{k}^{*},\bm{\alpha}_{k-1})
   +H(\bm{\alpha}_{N+k+1}^{*},\bm{\alpha}_{N+k})
   ].
  \label{eq:S_folded}
\end{align}

The right-hand side of Eq.~\eqref{eq:S_folded} coincides exactly with
the action obtained by directly Trotterizing the original-space
representation $Z=\tr[(e^{-\varepsilon H})^{N}I(e^{-\varepsilon H})^{N}]$
in coherent states $\ket{\bm{\alpha}_{k}}$, provided one imposes the
boundary identification
\begin{align}
  \bm{\alpha}_{2N+1} = \eta\,\bm{\alpha}_{0}.
  \label{eq:PBC_APBC}
\end{align}
With Eq.~\eqref{eq:PBC_APBC} the boundary term
$-\eta\,\bm{\alpha}_{0}^{*}\!\cdot\!\bm{\alpha}_{2N+1}
  =-\bm{\alpha}_{0}^{*}\!\cdot\!\bm{\alpha}_{0}$ cancels against the
remaining $+\bm{\alpha}_{0}^{*}\!\cdot\!\bm{\alpha}_{0}$.
Equation~\eqref{eq:PBC_APBC} reproduces the periodic boundary
condition for bosons ($\eta=+1$) and the antiperiodic one for
fermions ($\eta=-1$).

\subsection{Continuum limit and the sewing condition}
\label{subsec:sewing}

After taking the limit $N\to\infty$ with $k\varepsilon\to\tau\in[0,\beta]$,
Eqs.~\eqref{eq:S_folded} and~\eqref{eq:PBC_APBC} become the continuum
action and (anti)periodic boundary condition of the ordinary
single-replica partition function on $\tau\in[0,\beta]$, while the
relabeling~\eqref{eq:fold_relabel} encodes how the auxiliary fields
$(\tilde{\bm{\alpha}},\tilde{\bm{\alpha}}^{*})(\tau)$ are stitched to
the original ones $(\bm{\alpha},\bm{\alpha}^{*})(\tau)$.
Concretely, $\tau\in[0,\beta/2]$ corresponds to the original branch
and $\tau\in[\beta/2,\beta]$ to the auxiliary branch, with
\begin{align}
  (\bm{\alpha},\bm{\alpha}^{*})(\tau)
  = (\eta\,\tilde{\bm{\alpha}}^{*},\,\tilde{\bm{\alpha}})(\beta-\tau),
  \quad \tau\in[\beta/2,\beta].
  \label{eq:fold_continuum}
\end{align}
Continuity of Eq.~\eqref{eq:fold_continuum} at $\tau=\beta/2$
(corresponding to the location of $\sket{I}$) and at
$\tau=0\sim\beta$ (the location of $\sbra{I}$, after using the
boundary identification~\eqref{eq:PBC_APBC}) gives the sewing
conditions
\begin{align}
  \sket{I}\!:\;
   & (\bm{\alpha},\bm{\alpha}^{*})(\beta/2)
  = (\eta\,\tilde{\bm{\alpha}}^{*},\,\tilde{\bm{\alpha}})(\beta/2),
  \nonumber \\
  \sbra{I}\!:\;
   & (\bm{\alpha},\bm{\alpha}^{*})(0)
  = (\tilde{\bm{\alpha}}^{*},\,\eta\,\tilde{\bm{\alpha}})(0).
  \label{eq:sewing}
\end{align}

For bosons ($\eta=+1$) the two conditions
in Eq.~\eqref{eq:sewing} coincide,
\begin{align}
  (\bm{\alpha},\bm{\alpha}^{*}) = (\tilde{\bm{\alpha}}^{*},\,\tilde{\bm{\alpha}}),
  \label{eq:sewing_boson}
\end{align}
which is equivalent to $\bm{q}-\tilde{\bm{q}}=\bm{0}$ and
$\bm{p}+\tilde{\bm{p}}=\bm{0}$ on the canonical
quadratures~\eqref{eq:boson_ops}.
The MES therefore implements a mixed Dirichlet-Neumann boundary
condition that identifies the original coordinate with the auxiliary
coordinate while reversing the auxiliary momentum.

For fermions ($\eta=-1$) the two conditions
in Eq.~\eqref{eq:sewing} differ by a sign,
\begin{align}
  \sket{I}\!:\;
   & (\bm{\alpha},\bm{\alpha}^{*})
  = (-\tilde{\bm{\alpha}}^{*},\,\tilde{\bm{\alpha}}),
  \nonumber \\
  \sbra{I}\!:\;
   & (\bm{\alpha},\bm{\alpha}^{*})
  = (\tilde{\bm{\alpha}}^{*},\,-\tilde{\bm{\alpha}}),
  \label{eq:sewing_fermion}
\end{align}
i.e., the sewings imposed by $\sket{I}$ and $\sbra{I}$ are
  {distinct}. Although $\sket{I}\sbra{I}$ remains a (non-trivial)
projector onto the maximally entangled subspace, the field
configurations stabilized by it just before and just after the
projection insertion are not the same.
This may be regarded as a manifestation of the
antiperiodic boundary condition~\eqref{eq:PBC_APBC} or the canonical anticommutation relations.

\section{Bulk-induced boundary RG flows}\label{app:bulk_boundary}
To examine the effect of the exactly marginal bulk perturbation ${\cal S}_{\lambda}$
on the conformal invariance of the boundary state $\sket{\mathcal{J}}$, we need to determine whether
this bulk perturbation, when brought close to the boundary, can
induce any relevant boundary operator in the boundary RG sense. To this end, we check whether the mismatch boundary operator
\begin{equation}
  \delta T(x)\equiv\lim_{y\to0}\left(T(x+iy)-\bar{T}(x-iy)\right)
\end{equation}
vanishes for every operator ${\cal X}$ inserted away from $x$:
\begin{equation}
  \langle\delta T(x)\mathcal{X}\rangle_\lambda=0,
\end{equation}
where $T(z),{\bar T}({\bar z})$ are the stress tensors of the unperturbed theory ${\cal S}_0$, the expectation value within the full theory ${\cal S}={\cal S}_{0}+{\cal S}_{\lambda}$ is denoted by
\begin{equation}
  \langle\cdots\rangle_\lambda \equiv\frac{1}{Z_{\mathcal{J}\mathcal{J}}}\langle e^{-{\cal S}}\cdots\rangle
\end{equation}
with $Z_{\mathcal{J}\mathcal{J}}=\langle e^{-{\cal S}}\rangle$,
and the geometry is the cylinder with the boundary states $\sket{\mathcal{J}}$ at both ends.
The mismatch term can be expanded as
\begin{widetext}
  \begin{align}
    \langle\delta T(x){\cal X}\rangle_\lambda & =\frac{\langle e^{-{\cal S}_{\lambda}}\delta T(x){\cal X}\rangle_{0}}{\langle e^{-{\cal S}_{\lambda}}\rangle_{0}}=\sum_{p=1}^{\infty}\frac{\lambda^{p}}{p!}\int_{{\cal D}_{\epsilon}}\prod_{i=1}^{p}d^{2}w_{i}\langle\prod_{i=1}^{p}{\cal O}(w_{i},\bar{w}_{i})\delta T(x){\cal X}\rangle_{,{\rm c}}\label{eq:bb-1} \\
     & =\sum_{p=1}^{\infty}\frac{-i\pi\lambda^{p}}{(p-1)!}\int dx'\,\epsilon\partial_{x'}\delta(x'-x)\int_{{\cal D}_{\epsilon}}\prod_{i=1}^{p-1}d^{2}w_{i}\langle{\cal O}(x'+i\epsilon/2,x'-i\epsilon/2)\prod_{i=1}^{p-1}{\cal O}(w_{i},\bar{w}_{i}){\cal X}\rangle_{,{\rm c}},
  \end{align}
\end{widetext}
where we define
\begin{equation}
  {\cal D}_{\epsilon}:\;{\rm Im}w_{i}>\epsilon/2,\;\;|w_{i}-w_{j}|>\epsilon,\;\;|w_{i}-(x'+i\epsilon/2)|>\epsilon.
\end{equation}
Here, $\epsilon$ is the length cutoff, and the unperturbed expectation
value is defined as
\begin{equation}
  \langle\cdots\rangle_{0}\equiv\frac{1}{Z_{\mathcal{J}\mathcal{J}}^{(0)}}\langle e^{-{\cal S}_{0}}\cdots\rangle
\end{equation}
with $Z_{\mathcal{J}\mathcal{J}}^{(0)}=\langle e^{-{\cal S}_{0}}\rangle$,
and the last line can be shown by using the Ward identity:
\begin{align}
  T(z){\cal O}(w,\bar{w})&\simeq\frac{{\cal O}(w,\bar{w})}{(z-w)^{2}}+\frac{\partial_{w}{\cal O}(w,\bar{w})}{z-w},\\
  \bar{T}(\bar{z}){\cal O}(w,\bar{w})&\simeq\frac{{\cal O}(w,\bar{w})}{(\bar{z}-\bar{w})^{2}}+\frac{\partial_{\bar{w}}{\cal O}(w,\bar{w})}{\bar{z}-\bar{w}}.
\end{align}
The leading UV singularity comes from the contribution due to the
bulk-boundary OPE of the full short-distance cluster where all the
insertion points approach the boundary point $x'$:
\begin{equation}
  \ w_{i}=x'+\epsilon\zeta_{i},\;\;\bar{w}_{i}=x'+\epsilon\bar{\zeta}_{i}
\end{equation}
with $\epsilon\to0$ and $i=1,2,\ldots,p-1$.
We may then expand as
\begin{align}
  &{\cal O}(x'+i\epsilon/2,x'-i\epsilon/2)\prod_{i=1}^{p-1}{\cal O}(w_{i},\bar{w}_{i})\nonumber\\
  &\simeq\sum_{a}\epsilon^{h_{a}-2p}C_{a}^{(p)}(\boldsymbol{\zeta})\psi_{a}(x').\label{eq:ope}
\end{align}
Using $\prod_{i=1}^{p-1}d^{2}w_{i}=\epsilon^{2p-2}d^{2p-2}\zeta$
and defining
\begin{equation}
  B_{a}^{(p)}\equiv\frac{1}{(p-1)!}\int_{{\cal D}_{\zeta}}d^{2p-2}\zeta\;C_{a}^{(p)}(\boldsymbol{\zeta})
\end{equation}
with the domain
\begin{align}
  {\cal D}_{\zeta}:\;&{\rm Im}\zeta_{i}>1/2,\;\;\;\;\;|\zeta_{i}-\zeta_{j}|>1,\nonumber\\
  &|\zeta_{i}-i/2|>1,\;\;\;|\zeta_{i}|<l(\epsilon)/\epsilon,
\end{align}
we have
\begin{equation}
  \langle\delta T(x){\cal X}\rangle_\lambda\sim\sum_{p=1}^{\infty}i\pi\lambda^{p}\sum_{a}\epsilon^{h_{a}-1}B_{a}^{(p)}\langle\partial_{x}\psi_{a}(x){\cal X}\rangle_{,{\rm c}}.
  \label{eq:mismatch}
\end{equation}
We note that $l(\epsilon)$ must be chosen such that the local OPE can
be justified, i.e., $\epsilon\ll l(\epsilon)\ll l^{*}$ with $l^{*}$ being,
e.g., a characteristic length scale for the distance between $x$
and the insertion point of ${\cal X}$, while it should also satisfy $l(\epsilon)\to0$ in the limit $\epsilon\to0$; a practical choice
for this is $l(\epsilon)=\sqrt{\epsilon l^{*}}$. Also, $B_{a}^{(p)}$
is the effective coefficient characterizing the strength of the boundary
operator $\psi_{a}$ in the $p$th order perturbative correction.
If all the boundary operators are irrelevant, i.e., $h_{a}>1$ $\forall a$,
the mismatch term vanishes in $\epsilon\to0$ and thus the boundary
state $\sket{\mathcal{J}}$ remains conformally invariant.
We note that the identity channel should be treated separately. Although it has $h_{\boldsymbol{1}}=0$, it does not contribute to the stress-tensor mismatch in Eq.~\eqref{eq:mismatch} because $\partial_x \boldsymbol{1}=0$. It can, however, renormalize the $g$ factor and hence the universal contribution $s_n$ to the MRE.

We can also consider varying a cutoff scale $\epsilon\sim\Lambda^{-1}_{\text{c}}$
as $\Lambda_{\text{c}}\to\Lambda_{\text{c}}(1-dl)$. Following calculations similar to those above, integrating out the momentum shell $dl$
induces the boundary action
\begin{equation}
  d\delta{\cal S}_{\partial}=dl\sum_{a}C_{a}\Lambda_{\text{c}}^{1-h_{a}}\int dx\;\psi_{a}(x),\;\;C_{a}\equiv\sum_{p}\lambda^{p}B_{a}^{(p)}.
\end{equation}
Comparing it with the parametrization in Eq. (\ref{eq:boundary_action_general}),
we have
\begin{equation}
  \frac{d(\mu_{a}(l)\Lambda_{\text{c}}^{1-h_{a}})}{dl}=C_{a}\Lambda_{\text{c}}^{1-h_{a}},
\end{equation}
which gives the boundary RG equation (\ref{eq:boundary_RG_general}).

\section{Derivation of the effective theory of the rotated Hamiltonian}\label{app:current_rotation}
To obtain the bosonized expression of the rotated Hamiltonian, we can use the replica pseudospin current algebra. To this end, we introduce the current density operators by
\begin{align}
  \bm{J}_{\chi}(x)
  = \frac{1}{2}
  \begin{pmatrix}
    a_{1\chi}^{\dag}(x) & a_{2\chi}^{\dag}(x)
  \end{pmatrix}
  \bm{\sigma}
  \begin{pmatrix}
    a_{1\chi}(x) \\
    a_{2\chi}(x)
  \end{pmatrix},
  \label{eq:fermion_currents}
\end{align}
with $ \chi=R,L$. Their bosonized expressions are given by
\begin{align}
  J_{R/L}^{x}
   & = \frac{1}{2\pi L}
  :\cos[\sqrt{2}(\phi_{-}\mp\theta_{-})]:,
  \nonumber \\
  J_{R/L}^{y}
   & = \frac{1}{2\pi L}
  :\sin[\sqrt{2}(\phi_{-}\mp\theta_{-})]:,
  \nonumber \\
  J_{R/L}^{z}
   & = \frac{1}{2\sqrt{2}\pi}
  (\partial_{x}\phi_{-}\mp\partial_{x}\theta_{-}).
  \label{eq:current_bosonization}
\end{align}
The balanced convolution acts as a $\pi/2$ rotation of the pseudospin around the $y$ axis, and thus the bosonized form of the convolution unitary is given by
\begin{equation}
  U=e^{-i\frac{\pi}{2}\int dx\left(J_{R}^{y}+J_{L}^{y}\right)}.
\end{equation}

We recall that the total replicated Hamiltonian before the convolution unitary is given by the two decoupled TLLs in the symmetric ($+$) and antisymmetric ($-$) sectors:
\begin{align}
  H_{{\rm tot}}&=H_{+}+H_{-},\\
  H_{\pm}&\simeq\frac{v}{2\pi}\int dx\left[K\left(\partial_{x}\theta_{\pm}\right)^{2}+\frac{1}{K}\left(\partial_{x}\phi_{\pm}\right)^{2}\right].
\end{align}
On the one hand, since the unitary $U$ only involves the antisymmetric bosonic fields, the symmetric sector is invariant under the convolution:
\begin{equation}
  UH_{+}U^{\dagger}=H_{+}.
\end{equation}
On the other hand, the antisymmetric sector changes under the unitary, which sends the unrotated interaction $J_{R}^{z}J_{L}^{z}$ to $J_{R}^{x}J_{L}^{x}$. More specifically, we have
\begin{widetext}
  \begin{align}
    UH_{-}U^{\dagger}&=U\left[\frac{\pi v}{3}\left(K+\frac{1}{K}\right)\int dx\left[:\boldsymbol{J}_{R}\cdot\boldsymbol{J}_{R}:+:\boldsymbol{J}_{L}\cdot\boldsymbol{J}_{L}:\right]+2\pi v\left(\frac{1}{K}-K\right)\int dxJ_{R}^{z}J_{L}^{z}\right]U^\dagger\\
    &=\frac{v_{-}}{2\pi}\int dx\left[\left(\partial_{x}\theta_{-}\right)^{2}+\left(\partial_{x}\phi_{-}\right)^{2}\right]-\frac{\lambda}{L^{2}}\int dx\left[:\left(\cos\left(2\sqrt{2}\phi_{-}\right):+:\cos\left(2\sqrt{2}\theta_{-}\right)\right):\right].
  \end{align}
\end{widetext}
Taken together, this gives the derivation of Eqs.~\eqref{eq:TLL_H0_bosonized}, \eqref{eq:TLL_parameters_rotated}, and \eqref{eq:TLL_HDelta_bosonized}.

\section{Gluing matrix and compactification lattice}\label{app:compact}
We provide here the expressions of the gluing orthogonal matrix $G$ and the compactification lattice $\Lambda$. To this end, we note that the boundary conditions~(\ref{eq:J_gluing_O}) are satisfied provided that the fields are compactified as
\begin{equation}
  \boldsymbol{\Phi}\sim\boldsymbol{\Phi}+2\pi\boldsymbol{T},\;\;
  \boldsymbol{\Theta}\sim\boldsymbol{\Theta}+2\pi\boldsymbol{W},
\end{equation}
with the vectors of the winding numbers satisfying
\begin{equation}\label{eq:tw}
  G\boldsymbol{T}=\boldsymbol{T},\;\;G\boldsymbol{W}=-\boldsymbol{W}.
\end{equation}
The orthogonal matrix is then given by
\begin{equation}\label{eq:Omatrix}
  G=\frac{1}{2}\left(\begin{array}{cccccccc}
    -\gamma & 0       & \gamma  & 0       & 1       & -\mu    & 1       & \mu     \\
    0       & \gamma  & 0       & -\gamma & -\mu    & 1       & \mu     & 1       \\
    \gamma  & 0       & -\gamma & 0       & 1       & \mu     & 1       & -\mu    \\
    0       & -\gamma & 0       & \gamma  & \mu     & 1       & -\mu    & 1       \\
    1       & -\mu    & 1       & \mu     & -\gamma & 0       & \gamma  & 0       \\
    -\mu    & 1       & \mu     & 1       & 0       & \gamma  & 0       & -\gamma \\
    1       & \mu     & 1       & -\mu    & \gamma  & 0       & -\gamma & 0       \\
    \mu     & 1       & -\mu    & 1       & 0       & -\gamma & 0       & \gamma
  \end{array}\right),
\end{equation}
where
\begin{equation}
  \gamma=\frac{K-1}{K+1},\;\;\; \mu=\frac{2\sqrt{K}}{K+1}.
\end{equation}
The compactification lattice is defined by combining Eqs.~\eqref{eq:compact_cond} and \eqref{eq:tw}.
Specifically, we obtain
\begin{align}
  {\cal T}&=\left\{ \boldsymbol{T}|\boldsymbol{T}=\sum_{i=1}^{4}n_{i}\boldsymbol{t}_{i},\;n_{i}\in\mathbb{Z}\right\},\\
  {\cal W}&=\left\{ \boldsymbol{W}|\boldsymbol{W}=\sum_{i=1}^{4}m_{i}\boldsymbol{w}_{i},\;m_{i}\in\mathbb{Z}\right\},
\end{align}
which are spanned by the following primitive vectors:
\begin{align}
  \boldsymbol{t}_{1}&=\frac{1}{2\sqrt{2}}\left(\frac{1}{\sqrt{K}},1,0,0,0,0,\frac{1}{\sqrt{K}},1\right)^{{\rm T}}\\
  \boldsymbol{t}_{2}&=\frac{1}{2\sqrt{2}}\left(0,0,\frac{1}{\sqrt{K}},1,\frac{1}{\sqrt{K}},1,0,0\right)^{{\rm T}}\\
  \boldsymbol{t}_{3}&=\frac{1}{2\sqrt{2}}\left(\frac{1}{\sqrt{K}},-1,0,0,\frac{1}{\sqrt{K}},-1,0,0\right)^{{\rm T}}\\
  \boldsymbol{t}_{4}&=\frac{1}{2\sqrt{2}}\left(0,0,\frac{1}{\sqrt{K}},-1,0,0,\frac{1}{\sqrt{K}},-1\right)^{{\rm T}}
\end{align}
and
\begin{align}
  \boldsymbol{w}_{1}&=\frac{1}{2\sqrt{2}}\left(\sqrt{K},1,0,0,0,0,-\sqrt{K},-1\right)^{{\rm T}}
  \\
  \boldsymbol{w}_{2}&=\frac{1}{2\sqrt{2}}\left(0,0,-\sqrt{K},-1,\sqrt{K},1,0,0\right)^{{\rm T}}
  \\
  \boldsymbol{w}_{3}&=\frac{1}{2\sqrt{2}}\left(\sqrt{K},-1,0,0,-\sqrt{K},1,0,0\right)^{{\rm T}}
  \\
  \boldsymbol{w}_{4}&=\frac{1}{2\sqrt{2}}\left(0,0,\sqrt{K},-1,0,0,-\sqrt{K},1\right)^{{\rm T}}.
\end{align}
In addition, the selection rule~\eqref{eq:TLL_compactification} leads to the constraint
\begin{equation}
  n_{i}\equiv m_{i}\pmod{2}.
\end{equation}
Taken together, the composite, total winding number vectors $\boldsymbol{\lambda}=(\boldsymbol{T},\boldsymbol{W})^{{\rm T}}$ must belong to the following compactification lattice $\Lambda$:
\begin{equation}
  \Lambda=\left\{ {\cal T}\oplus{\cal W}|n_{i}\equiv m_{i}\pmod{2}\right\}.
\end{equation}
Its unit-cell volume can be readily obtained as
\begin{align}
  \mathcal{V}_{\Lambda}
  = \frac{1}{1-\gamma^{2}}.
\end{align}

\section{Perturbative calculations}\label{app:perturbation}
We here derive Eq.~\eqref{eq:deltaZ_second_order_TLL} by unfolding the $\mathcal{J}\mathcal{J}$ cylinder to a torus and evaluating the current correlator of $\sum_{m}J_{mR}^{x}J_{mL}^{x}$.
The unfolding replaces the left-moving current at the boundary by an image right-moving current with charge rotated by $G$.
The charge-neutrality condition then forces the two perturbing operators to have the same sector index, and the four-point function reduces to the product of two chiral current two-point functions on the torus.

To begin with, we note that it suffices to keep the $\lambda^2$ term while setting $K=1$ within the expectation value to obtain the leading $O(\gamma^2)$ contribution and write the correction term~\eqref{eq:delta_Z} by
\begin{widetext}
  \begin{align}
    \frac{\delta Z_{\mathcal{J}\mathcal{J}}}{Z_{\mathcal{J}\mathcal{J}}^{(0)}}=\left(2\pi\right)^{4}2^{2}\lambda^{2}\int_{0}^{\beta/2}d\tau_{1}\int_{0}^{\tau_{1}}d\tau_{2}\int_{0}^{L}dx_{1}dx_{2}\;C^{(4)}_{J}(x_1,\tau_1;x_2,\tau_2)+O(\gamma^{4}),
    \label{appeq:Z2}\\
    C^{(4)}_{J}(x_1,\tau_1;x_2,\tau_2)\equiv \sum_{m,n=1}^{4}\langle J_{mL}^{x}(x_{1},\tau_{1})J_{mR}^{x}(x_{1},\tau_{1})J_{nL}^{x}(x_{2},\tau_{2})J_{nR}^{x}(x_{2},\tau_{2})\rangle_{{\rm cyl}}.
  \end{align}
\end{widetext}
Here, the expectation value is evaluated on the cylinder geometry of the unperturbed $K=1$ Euclidean action with the boundary condition $\mathcal{J}$ at $\tau=0,\beta/2$,
\begin{align}
  {\cal S}_{0}^{(K=1)}&=\frac{1}{4\pi}\int d\tau dx\Bigl[i\left(\partial_{x}\boldsymbol{\Phi}_{L}\cdot\partial_{\tau}\boldsymbol{\Phi}_{L}-\partial_{x}\boldsymbol{\Phi}_{R}\cdot\partial_{\tau}\boldsymbol{\Phi}_{R}\right)\nonumber\\
    &\quad+\left(\partial_{x}\boldsymbol{\Phi}_{L}\right)^{2}+\left(\partial_{x}\boldsymbol{\Phi}_{R}\right)^{2}\Bigr],\\
  \langle\cdots\rangle_{{\rm cyl}}&\equiv\frac{1}{Z_{\mathcal{J}\mathcal{J}}^{(0,K=1)}}\langle e^{-{\cal S}_{0}^{(K=1)}}\cdots\rangle,
\end{align}
where
$Z_{\mathcal{J}\mathcal{J}}^{(0,K=1)}=\langle e^{-{\cal S}_{0}^{(K=1)}}\rangle$.
Since the same boundary conditions are
imposed at both ends, we can unfold the theory so that the cylinder
doubles to a torus of Euclidean-time period $\beta$:
\begin{equation}
  {\cal D}=\{(x,\tau)|x\sim x+L,\;\tau\sim\tau+\beta\},
\end{equation}
on which the field is defined as
\begin{equation}
  \boldsymbol{\Phi}(x,\tau)=\begin{cases}
    \boldsymbol{\Phi}_{R}(x,\tau)             & 0\leq\tau\leq\beta/2  \\
    G^{{\rm T}}\boldsymbol{\Phi}_{L}(x,-\tau) & -\beta/2\leq\tau\leq0
  \end{cases},
\end{equation}
where we note that the orthogonal matrix at $K=1$ reduces to (see Eq.~\eqref{eq:Omatrix})
\begin{equation}
  G=\frac{1}{2}\left(\begin{array}{cccccccc}
    0  & 0  & 0  & 0  & 1  & -1 & 1  & 1  \\
    0  & 0  & 0  & 0  & -1 & 1  & 1  & 1  \\
    0  & 0  & 0  & 0  & 1  & 1  & 1  & -1 \\
    0  & 0  & 0  & 0  & 1  & 1  & -1 & 1  \\
    1  & -1 & 1  & 1  & 0  & 0  & 0  & 0  \\
    -1 & 1  & 1  & 1  & 0  & 0  & 0  & 0  \\
    1  & 1  & 1  & -1 & 0  & 0  & 0  & 0  \\
    1  & 1  & -1 & 1  & 0  & 0  & 0  & 0
  \end{array}\right).
\end{equation}
After unfolding, this defines the 8-component holomorphic field
on the torus of size $L\times\beta$. The current operator can be
expressed by using the vertex operator as
\begin{align}
  J_{mR}^{x}(x,\tau)&=\frac{1}{4\pi L}\sum_{\sigma=\pm1}V_{\sigma\boldsymbol{e}_{m}}(z),\;\;z=x+i\tau,\\
  V_{\sigma\boldsymbol{v}}(z)&\equiv e^{i\sigma\sqrt{2}\boldsymbol{v}\cdot\boldsymbol{\Phi}(z)},
\end{align}
where $\boldsymbol{e}_{m}\in\mathbb{R}^{8}$ is the unit vector of
the mode $(m,-)$. The effect of the defect is encoded by the imaging
rule, where the left-moving vertex operator is represented by the
image charge of the right-moving vertex:
\begin{align}
  J_{mL}^{x}(x,\tau)&=\frac{1}{4\pi L}\sum_{\sigma=\pm1}V_{\sigma\tilde{\boldsymbol{e}}_{m}}(\bar{z}),\;\;\bar{z}=x-i\tau,\\
  \tilde{\boldsymbol{e}}_{m}&\equiv \boldsymbol{e}_{m}.
\end{align}
The correlation function we have to evaluate is
\begin{align}
  &C_J^{(4)}(z_1,\bar{z}_1,z_2,\bar{z}_2)
  =\frac{1}{(4\pi L)^{4}}\times\nonumber\\
  &\sum_{\sigma_{1,2,3,4}=\pm1}\langle V_{\sigma_{1}\boldsymbol{e}_{m}}(z_{1})V_{\sigma_{2}\tilde{\boldsymbol{e}}_{m}}(\bar{z}_{1})V_{\sigma_{3}\boldsymbol{e}_{n}}(z_{2})V_{\sigma_{4}\tilde{\boldsymbol{e}}_{n}}(\bar{z}_{2})\rangle_{{\cal D}}
  \label{appeq:JxJx}
\end{align}
where $z_{1,2}$ ($\bar{z}_{1,2}$) runs over the upper (lower) half plane of the torus.

We now observe that the vertex correlator can be nonzero only if the
zero-mode part satisfies the charge-neutrality condition:
\begin{equation}
  \sigma_{1}\boldsymbol{e}_{m}+\sigma_{2}\tilde{\boldsymbol{e}}_{m}+\sigma_{3}\boldsymbol{e}_{n}+\sigma_{4}\tilde{\boldsymbol{e}}_{n}=0.
\end{equation}
One can easily check that this condition can be met only if
\begin{equation}
  \sigma_{1}=-\sigma_{3},\;\;\sigma_{2}=-\sigma_{4},\;\;m=n.
\end{equation}
Since the diagonal elements of $O$ vanish, we also have
\begin{equation}
  \boldsymbol{e}_{m}\cdot\tilde{\boldsymbol{e}}_{m}=0.
\end{equation}
This means that the bosonic-field components $\boldsymbol{e}_{m}\cdot\boldsymbol{\Phi}$
and $\tilde{\boldsymbol{e}}_{m}\cdot\boldsymbol{\Phi}$ are
independent, which allows us to factorize the correlation function:
\begin{align}
  &C_J^{(4)}(z_1,\bar{z}_1,z_2,\bar{z}_2)
  =\frac{1}{(4\pi L)^{4}}\times\nonumber\\
  &\!\!
  \sum_{\sigma=\pm1}\langle V_{\sigma\boldsymbol{e}_{m}}(z_{1})V_{-\sigma\boldsymbol{e}_{m}}(z_{2})\rangle_{{\cal D}}\!\sum_{\tilde{\sigma}=\pm1}\langle V_{\tilde{\sigma}\tilde{\boldsymbol{e}}_{m}}(\bar{z}_{1})V_{-\tilde{\sigma}\tilde{\boldsymbol{e}}_{m}}(\bar{z}_{2})\rangle_{{\cal D}}.
\end{align}
Now each expectation value in the right-hand side is a standard chiral torus two-point function, i.e.,
a trace over the unfolded chiral Hilbert space. We can thus rewrite the
right-moving vertex as
\begin{align}
  &\frac{1}{(4\pi L)^{2}}\sum_{\sigma=\pm1}\langle V_{\sigma\boldsymbol{e}_{m}}(z_{1})V_{-\sigma\boldsymbol{e}_{m}}(z_{2})\rangle_{{\cal D}}\\
  &=\langle J_{mR}^{x}(z_{1})J_{mR}^{x}(z_{2})\rangle_{{\cal D}}\label{appeq:Jx}\\
  &=\langle J_{mR}^{z}(z_{1})J_{mR}^{z}(z_{2})\rangle_{{\cal D}}\label{appeq:Jz}\\
  &\equiv C_J^{(2)}(z_{1},z_{2}),
\end{align}
where we use the ${\rm SU(2)}$ symmetry of the $K=1$ chiral theory
of the doubled holomorphic field $\boldsymbol{\Phi}$ to rotate $J^{x}$
into $J^{z}$ from Eq.~\eqref{appeq:Jx} to Eq.~\eqref{appeq:Jz} \footnote{We note that originally the boundary $\sket{\mathcal{J}}$
  does not satisfy this symmetry, but once we factorize the correlation
  function into a product of the two-point correlation functions of
  the independent single-component doubled holomorphic fields ($\boldsymbol{e}_{m}\cdot\boldsymbol{\Phi}$
  and $\tilde{\boldsymbol{e}}_{m}\cdot\boldsymbol{\Phi}$), we can use
  the ${\rm SU(2)}$ symmetry of the $K=1$ chiral Hamiltonian of $\boldsymbol{\Phi}$
  for each of the factors.}. We here also define the correlation function
${\cal C}(z_{1},z_{2})$ of the current density $J^{z}$ of the standard
single component $K=1$ chiral boson theory on the torus, which is independent of a choice of the component of $\boldsymbol{\Phi}$.
Specifically, its expression can be given by
\begin{align}
  C_J^{(2)}(z_{1},z_{2})&=\frac{1}{2L^{2}}\frac{\sum_{m\in\mathbb{Z}}m^{2}q^{m^{2}/2}}{\sum_{m\in\mathbb{Z}}q^{m^{2}/2}}\nonumber\\
  &+\frac{1}{2L^{2}}\sum_{n=1}^{\infty}n\left[\frac{e^{-inu}}{1-q^{n}}+\frac{q^{n}e^{inu}}{1-q^{n}}\right],
\end{align}
where the first (second) term is the zero-mode (oscillator) contribution, and we define
\begin{equation}
  q=e^{-2\pi\beta/L},\;\;u=\frac{2\pi}{L}(z_{1}-z_{2}).
\end{equation}
Similarly, we can also obtain
\begin{equation}
  \frac{1}{(4\pi L)^{2}}\sum_{\tilde{\sigma}=\pm1}\langle V_{\tilde{\sigma}\tilde{\boldsymbol{e}}_{m}}(\bar{z}_{1})V_{-\tilde{\sigma}\tilde{\boldsymbol{e}}_{m}}(\bar{z}_{2})\rangle_{{\cal D}}=C_J^{(2)}(\bar{z}_{1},\bar{z}_{2}).
\end{equation}
Equation~\eqref{appeq:JxJx} thus simplifies to $C_J^{(4)}=4\left|C_J^{(2)}\right|^{2}$, where the factor of $4$ comes from the summation over replica indices with the constraint $m=n$.
Substituting this into Eq.~\eqref{appeq:Z2}, we obtain
\begin{equation}
  \frac{\delta Z_{\mathcal{J}\mathcal{J}}}{Z_{\mathcal{J}\mathcal{J}}^{(0)}}\simeq256\pi^{4}\lambda^{2}\!\!\int_{0}^{\beta/2}d\tau_{1}\int_{0}^{\tau_{1}}d\tau_{2}\int_{0}^{L}dx_{1}dx_{2}\,\left|C_J^{(2)}\right|^{2}.
\end{equation}
To perform the integration over the upper half plane of the torus, we note that the cross term between the zero modes and oscillator modes in $\left|C_J^{(2)}\right|^{2}$
vanishes. The zero-mode contribution does not depend on $x,\tau$, and it simply gives
\begin{align}
  &256\pi^{4}\lambda^{2}\times\frac{\beta^{2}L^{2}}{8}\times\left(\frac{1}{2L^{2}}\frac{\sum_{m\in\mathbb{Z}}m^{2}q^{m^{2}/2}}{\sum_{m\in\mathbb{Z}}q^{m^{2}/2}}\right)^{2}\nonumber\\
  &=\pi^{2}\lambda^{2}\times\frac{8\pi^{2}\beta^{2}}{L^{2}}\left(2q\frac{d}{dq}\ln\vartheta_{3}(q)\right)^{2}.
\end{align}
The oscillator part gives the universal part
\begin{equation}
  256\pi^{4}\lambda^{2}\times\frac{\beta}{32\pi L}\sum_{n=1}^{\infty}n\frac{1+q^{n}}{1-q^{n}},
\end{equation}
which can be evaluated as
\begin{equation}
  \sum_{n=1}^{\infty}n\frac{1+q^{n}}{1-q^{n}}=\sum_{n=1}^{\infty}n+2\sum_{n=1}^{\infty}\frac{nq^{n}}{1-q^{n}}\to-\frac{1}{12}E_{2}(q),
\end{equation}
where we use the zeta regularization $\zeta(-1)=-\frac{1}{12}$.
Combining these results with $\lambda=\frac{K-1/K}{4\pi}\simeq\frac{\gamma}{\pi}$,
we finally obtain Eq.~(\ref{eq:deltaZ_second_order_TLL}).

\section{Extension to arbitrary rotation angle}\label{app:general_angle}

The perturbative analysis can be extended to the case of a general mixing angle $\zeta$. Under the unitary, the current densities rotate as
\begin{equation}
  J_{\chi}^{z}\to\cos\left(2\zeta\right)J_{\chi}^{z}+\sin\left(2\zeta\right)J_{\chi}^{x}.
\end{equation}
At leading order in $\gamma$, the effects of this change are (1) changing the value of the TLL parameter $K_{-}$ for the Gaussian part of the antisymmetric sector from $1$ to $K_{\zeta}$ specified below and (2) changing the perturbation term $H_{\Delta}$ such that
it now includes contributions from $J^{z}J^{x}$ cross term in addition
to $J^{x}J^{x}$ term considered above. More explicitly, regarding (1), the unitary leads to the change
\begin{equation}
  K_-=1\to K_{\zeta}=\frac{K+\frac{1}{K}+\left(K-\frac{1}{K}\right)\cos^{2}2\zeta}{\sqrt{\left(K+\frac{1}{K}\right)^{2}-\left(K-\frac{1}{K}\right)^{2}\cos^{4}2\zeta}},
\end{equation}
which modifies the unit-cell volume of the compactification lattice as
\begin{equation}
  {\cal V}_{\Lambda}\to\left[\frac{1}{2}\left(\sqrt{\frac{K}{K_{\zeta}}}+\sqrt{\frac{K_{\zeta}}{K}}\right)\right]^{2}\simeq1+\gamma^{2}\sin^{4}2\zeta.
\end{equation}
Regarding (2), the contribution from the $J^{x}J^{x}$ coupling can
be obtained by simply replacing the perturbation coefficient as
\begin{equation}
  \lambda\to\lambda\sin^{2}\left(2\zeta\right).
\end{equation}
The contribution from the $J^{z}J^{x}$ cross term gives the same
contribution as in Eq.~(\ref{eq:deltaZ_second_order_TLL}), up to the overall constant factor $2\sin^{2}(2\zeta)\cos^{2}(2\zeta)$.
As a consequence, the Cardy's consistency condition can be read as
\begin{widetext}
  \begin{equation}
    \frac{g_{\mathcal{J}}^{2}}{\left(1+\gamma^{2}\sin^{4}2\zeta\right)}\left(1-2\gamma^{2}\sin^{4}\left(2\zeta\right)-4\gamma^{2}\sin^{2}\left(2\zeta\right)\cos^{2}\left(2\zeta\right)\right)=1\leftrightarrow g_{\mathcal{J}}=1+\gamma^{2}\left(2\sin^{2}\left(2\zeta\right)-\frac{1}{2}\sin^{4}2\zeta\right),
  \end{equation}
\end{widetext}
which gives Eq.~\eqref{eq:gGamma_angle_TLL}.

\bibliography{biblio}
\end{document}